 \providecommand{\tabularnewline}{\\}
\newcommand{\greedy}{{\sc Greedy}\xspace} 
\newcommand{\sgreedy}{{\sc SGreedy}\xspace} 
\newcommand{\greedyleft}{{\sc GreedyLeft}\xspace} 
\newcommand{\greedyright}{{\sc GreedyRight}\xspace} 
\newcommand{\stair}{{\sf stair}\xspace} 
\newcommand{\stairleft}{{\textsf{stairLeft}}\xspace} 
\newcommand{\stairright}{{\textsf{stairRight}}\xspace} 
\newcommand{\agreedy}{{\sc RGreedy}\xspace}
\newenvironment{proofof}[1]{\noindent{\bf Proof of #1.}}%
        {\hspace*{\fill}$\Box$\par\vspace{4mm}}
\def\thmt@refnamewithcomma #1#2#3,#4,#5\@nil{%
  \@xa\def\csname\thmt@envname #1utorefname\endcsname{#3}%
  \ifcsname #2refname\endcsname
    \csname #2refname\expandafter\endcsname\expandafter{\thmt@envname}{#3}{#4}%
  \fi
}
\declaretheorem[numberwithin=section,refname={Theorem,Theorems},Refname={Theorem,Theorems}]{theorem}
\declaretheorem[numberlike=theorem,refname={Lemma,Lemmas},Refname={Lemma,Lemmas}]{lemma}
\declaretheorem[numberlike=theorem,refname={Corollary,Corollaries},Refname={Corollary,Corollaries}]{corollary}
\declaretheorem[numberlike=theorem,refname={Proposition,Propositions},Refname={Proposition,Propositions}]{proposition}
\declaretheorem[numberlike=theorem,refname={observation,observations},Refname={Observation,Observations}]{observation}
\declaretheorem[numberlike=theorem,refname={Claim, Claims},Refname={Claim, Claims}]{claim}
\declaretheorem[numberlike=theorem]{definition}
\newtheorem{conjecture}{Conjecture}
\newcommand{\abs}[1]{\lvert #1\rvert}
\newcommand{\OPT}{\mbox{\sf OPT}}
\newcommand{\opt}{\mbox{\sf OPT}}
\newcommand{\set}[1]{\left\{ #1 \right\}}
\newcommand{\tset}{{\mathcal T}}
\newcommand{\pset}{{\mathcal{P}}}
\newcommand{\bset}{{\mathcal{B}}}
\newcommand{\aset}{{\mathcal{A}}} \newcommand{\cA}{{\mathcal{A}}} \newcommand{\cB}{{\mathcal{B}}}
\newcommand{\cset}{{\mathcal{C}}}
\newcommand{\sset}{{\mathcal{S}}}
\newcommand{\be}{\begin{enumerate}}
\newcommand{\ee}{\end{enumerate}}
\newcommand{\bd}{\begin{description}}
\newcommand{\ed}{\end{description}}
\newcommand{\bi}{\begin{itemize}}
\newcommand{\ei}{\end{itemize}}
\renewcommand{\phi}{\varphi}
\newcommand{\N}{\ensuremath{\mathbb N}}
\newcommand{\wing}{\mathit{wing}}
\newcommand{\inc}{\texttt{Inc}}
\newcommand{\dec}{\texttt{Dec}}
\newcommand{\capture}{\texttt{Cap}}
\newcommand{\alt}{\texttt{Alt}}
\newcommand{\Greedy}{\textsc{Greedy}} 
 \newcommand{\GreedyR}{\textsc{GreedyRight}}
\newcommand{\merge}{\mathit{merge}}
\newcommand{\Split}{\mathit{split}}
\newcommand{\tP}{\tilde{P}}
\newcommand{\tY}{Y_{\tP}} 
\newcommand{\topwing}{\mathit{topwing}}
\newcommand{\ymin}{\mathit{ymin}}
\newcommand{\ymax}{\mathit{ymax}}
\newcommand{\xmin}{\mathit{xmin}}
\newcommand{\xmax}{\mathit{xmax}}
\newcommand{\junk}{\mathit{junk}}
\newcommand{\mleft}{\mathit{left}}
\newcommand{\mright}{\mathit{right}}
\newcommand{\degree}{\mathit{degree}}
\newcommand{\parent}{\mathit{parent}}
\def\markatright#1{\leavevmode\unskip\nobreak\quad\hspace*{\fill}{#1}}
\renewenvironment{proof}
  {\begin{trivlist}\item[\hskip\labelsep{\emph{Proof}.}]}
  {\markatright{\qed}\end{trivlist}}
\def\ShowComment{True}
\def\parinya#1{\marginpar{$\leftarrow$\fbox{P}}\footnote{$\Rightarrow$~{\sf #1 --Parinya}}}
\def\parinya#1{}
\def\kurt#1{\marginpar{$\leftarrow$\fbox{K}}\footnote{$\Rightarrow$~{\sf #1 --Kurt}}}
\def\kurt#1{}
\def\thatchaphol#1{\marginpar{$\leftarrow$\fbox{T}}\footnote{$\Rightarrow$~{\sf #1 --Thatchaphol}}}
\def\thatchaphol#1{}
\def\laszlo#1{\marginpar{$\leftarrow$\fbox{L}}\footnote{$\Rightarrow$~{\sf #1 --LK}}}
\def\laszlo#1{}
\def\laszlo#1{}
\newcommand{\comment}[1]{\textcolor{blue}{\textbf{#1}}}
\newcommand{\lk}[1]{\comment{\textcolor{blue}{[LK:] #1}}}
\title{Pattern-avoiding access in binary search trees} 
\date{}
\author[1]{Parinya Chalermsook} 
\author[1]{Mayank Goswami}  
\author[2]{L\'{a}szl\'{o} Kozma} 
\author[1]{Kurt Mehlhorn} 
\author[3]{Thatchaphol Saranurak\thanks{Work partly done while at Saarland University.}}  
\affil[1]{{\footnotesize Max-Planck Institute for Informatics, Germany.}  
{\footnotesize \tt \{parinya,gmayank,mehlhorn\}@mpi-inf.mpg.de} }
\affil[2]{{\footnotesize Saarland University, Germany. } {\footnotesize \tt kozma@cs.uni-saarland.de}} 
\affil[3]{{\footnotesize KTH Royal Institute of Technology, Sweden.} {\footnotesize \tt thasar@kth.se} } 
\begin{document}

\begin{titlepage}
\maketitle
\pagenumbering{roman}

\begin{abstract}
The {\em dynamic optimality conjecture} is perhaps the most fundamental open question about binary search trees (BST). It postulates the existence of an asymptotically optimal online BST, i.e.\ one that is {\em constant factor competitive} with any BST on any input access sequence. The two main candidates for dynamic optimality in the literature are {\em splay trees} [Sleator and Tarjan, 1985], and \greedy [Lucas, 1988; Munro, 2000; Demaine et al.\ 2009]. Despite BSTs being among the simplest data structures in computer science, and despite extensive effort over the past three decades, the conjecture remains elusive. Dynamic optimality is trivial for \emph{almost all} sequences: the optimum access cost of most 
length-$n$ sequences is $\Theta(n \log n)$, 
achievable by any balanced BST. 
Thus, the obvious missing step towards the conjecture is an understanding of the ``easy'' access sequences, and indeed the most fruitful research direction so far has been the study of specific sequences, whose ``easiness'' is captured by a parameter of interest.
For instance, splay provably achieves the bound of $O(n d)$ when $d$ roughly measures the distances between consecutive accesses (dynamic finger), the average entropy (static optimality), or the delays between multiple accesses of an element (working set). 
The difficulty of proving dynamic optimality is witnessed by other highly restricted special cases that remain unresolved; one prominent example is the {\em traversal conjecture} [Sleator and Tarjan, 1985], which states that {\em preorder sequences} (whose optimum is linear) are linear-time accessed by splay trees; no online BST is known to satisfy this conjecture.  

In this paper, we prove two different relaxations of the traversal conjecture for \greedy: (i) {\sc Greedy} is almost linear for preorder traversal, 
(ii) if a linear-time preprocessing\footnote{The purpose of preprocessing is to bring the data structure into a state of our choice. This state is independent
of the actual input. This kind of preprocessing is implicit in the Demaine et al.\ definition of \greedy.} is allowed, {\sc Greedy} is in fact linear.
These statements are corollaries of our more general results that express the complexity of access sequences in terms of a {\em pattern avoidance} parameter $k$. 
Pattern avoidance is a well-established concept in combinatorics, and the classes of input sequences thus defined are rich, e.g.\ the $k=3$ case includes preorder sequences. For any sequence $X$ with parameter $k$, our most general result shows that \greedy achieves the cost $n 2^{\alpha(n)^{O(k)}}$ where $\alpha$ is the inverse Ackermann function. 
Furthermore, a broad subclass of parameter-$k$ sequences has a natural combinatorial interpretation as $k$-{\em decomposable sequences}. For this class of inputs, we obtain an $n 2^{O(k^2)}$ bound for \greedy when preprocessing is allowed. 
For $k=3$, these results imply (i) and (ii). 
To our knowledge, these are the first upper bounds for \greedy that are not known to hold for any other online BST.
To obtain these results we identify an {\em input-revealing} property of \greedy. Informally, this means that the execution log partially reveals the structure of the access sequence. This property facilitates the use of rich technical tools from {\em forbidden submatrix theory}.

Further studying the intrinsic complexity of $k$-decomposable sequences, we make several observations. 
First, in order to obtain an offline optimal BST, it is enough to bound \greedy on non-decomposable access sequences. 
Furthermore, we show that the optimal cost for $k$-decomposable sequences is $\Theta(n \log k)$, which is well below the proven performance of all known BST algorithms. Hence, sequences in this class can be seen as a ``candidate counterexample'' to dynamic optimality.

\end{abstract} 

\newpage
\end{titlepage}

\newpage
\pagenumbering{arabic}

\section{Introduction}
The binary search tree (BST) model is one of the most fundamental and thoroughly studied data access models in computer science. 
In this model, given a sequence $X \in [n]^m$ of keys, we are interested in $\OPT(X)$, the optimum cost of accessing $X$ by a binary search tree. When the tree does not change between accesses, $\OPT$ is well understood: both efficient exact algorithms (Knuth~\cite{knuth_optimum}) and a linear time approximation (Mehlhorn~\cite{mehlhorn1975nearly}) have long been known. By contrast, in the dynamic BST model (where restructuring of the tree is allowed between accesses) our understanding of $\OPT$ is very limited. No polynomial time exact or constant-approximation algorithms are known for computing $\OPT$ in the dynamic BST model.

Theoretical interest in the dynamic BST model is partly due to the fact that it holds the possibility of a meaningful ``instance optimality'', i.e.\ of a BST algorithm that is constant factor competitive (in the amortized sense) with any other BST algorithm, even with those that are allowed to see into the future. 
The existence of such a ``dynamically optimal'' algorithm has been postulated in 1985 by Sleator and Tarjan~\cite{ST85}; they conjectured their splay tree to have this property. 
Later, Lucas~\cite{Luc88} and independently Munro~\cite{Mun00} proposed a greedy offline algorithm as another candidate for dynamic optimality. 
In 2009 Demaine, Harmon, Iacono, Kane, and P\u{a}tra\c{s}cu (DHIKP)~\cite{DHIKP09} presented a geometric view of the BST model, in which the Lucas-Munro offline algorithm naturally turns into an online one (simply called {\sc Greedy}).  
Both splay and \greedy are considered plausible candidates for achieving optimality. However, despite decades of research, neither of the two algorithms are known to be $o(\log n)$-competitive (the best known approximation ratio for any BST is $O(\log \log n)$~\cite{tango}, but the algorithms achieving this bound are less natural). Thus, the dynamic optimality conjecture remains wide open.

Why is dynamic optimality difficult to prove (or disprove)? For a vast majority of access sequences, the optimal access cost by any dynamic BST (online or offline) is\footnote{This observation appears to be folklore. We include a proof in Appendix~\ref{sec:random-hard}.} $\Theta(m \log n)$ 
and any static balanced tree can achieve this bound.\footnote{As customary, we only consider successful accesses, we ignore inserts and deletes, and we assume that $m \geq n$. }
Hence, dynamic optimality is almost trivial on most sequences. However, when an input sequence has complexity $o(m \log n)$, a candidate BST must ``learn'' and ``exploit'' the specific structure of the sequence, in order to match the optimum. This observation motivates the popular research direction that aims at understanding ``easy'' sequences, i.e.\ those with $\opt = O(m d)$ when $d$ is a parameter that depends on the structure of the sequence; typically $d=o(\log n)$. 
Parameter $d$ can be seen as a measure of ``easiness'', and the objective is to prove that a candidate algorithm achieves the absolute bound of $O(m d)$; note that this does not immediately imply that the algorithm matches the optimum.

The splay tree is known to perform very well on many ``easy'' sequences: It achieves the access time of $O(m d)$ when $d$ is a parameter that roughly captures the average distances between consecutive accesses~\cite{finger1,finger2}, the average entropy, or the recentness of the previous accesses~\cite{ST85}; these properties are called {\em dynamic finger}, {\em static optimality} and {\em working set} bounds respectively (the latter two bounds are also satisfied by \greedy~\cite{Fox11}). 
The notorious difficulty of dynamic optimality is perhaps witnessed by the fact that many other highly restricted special cases have resisted attempts and remain unresolved. These special cases stand as a roadblock to proving (or disproving) the optimality of any algorithm: proving that $\mathcal{A}$ is optimal requires proving that it performs well on the easy inputs; refuting the optimality of $\mathcal{A}$ requires a special ``easy'' subclass 
on which $\mathcal{A}$ is provably inefficient. In any case, a better understanding of easy sequences is needed, but currently lacking, as observed by several authors~\cite{DHIKP09,deque_Pet08}.

One famous example of such roadblocks is the {\em traversal conjecture}, proposed by Sleator and Tarjan in 1985.  
This conjecture states that starting with an arbitrary tree $T$ (called {\em initial tree}) and accessing its keys using the splay algorithm in the order given by the preorder sequence of another tree $T'$ takes linear time.
Since the optimum of such preorder sequences is linear, any dynamically optimal algorithm must match this bound.  
After three decades, only very special cases of this conjecture are known to hold, namely when $T'$ is a monotone path (the so-called \emph{sequential access}~\cite{tarjan_sequential})
or when $T'= T$~\cite{chaudhuri}.\footnote{This result implies an offline BST with $O(n)$ cost for accessing a preorder sequence, by first rotating $T$ to $T'$.}
Prior to this work, no online BST algorithm was known to be linear on such access sequences.

\begin{table} 
\scriptsize\centering 
\begin{tabular}{|>{\raggedright}p{0.08\paperwidth}|>{\raggedright}p{0.15\paperwidth}|>{\centering}p{0.2\paperwidth}|>{\centering}p{0.16\paperwidth}|>{\centering}p{0.13\paperwidth}|}
\hline 
 & Structure  & splay bound & \greedy bound & Remark \tabularnewline
\hline 
\hline 
Static optimality & low entropy & $O(\sum_{i=1}^{n}f_{i}(1+\log\frac{m}{f_{i}}))$ \\~\cite{ST85}& same as splay ~\cite{Fox11, our_icalp} & \multirow{2}{0.1\paperwidth}{corollaries of Access~Lemma \cite{ST85}} \tabularnewline
\cline{1-4} 
Working set  & temporal locality & $O(m+n\log n+\sum_{t=1}^{m}\log(\tau_{t}+1))$\\ ~\cite{ST85}&  same as splay ~\cite{Fox11, our_icalp}& \tabularnewline
\hline 
Dynamic finger  & key-space locality & $O(m+\sum_{t=2}^{m}\log|x_{t}-x_{t-1}|)$\\~\cite{finger1,finger2} & ? & \tabularnewline
\hline
Deque  & $n$ updates at min/max elements & $O(n\alpha^{*}(n))$ ~\cite{deque_Pet08} & $O(n2^{\alpha(n)})$ ~\cite{our_wads}& $O(n)$ for multi-splay~\cite{multisplay} from empty tree \tabularnewline
\hline 
Sequential  & avoid $(2,1)$ & $O(n)$~\cite{tarjan_sequential, pettie_DS} & $O(n)$~\cite{Fox11, our_wads} & \tabularnewline
\hline 
Traversal  & avoid $(2,3,1)$ & ? & $n2^{\alpha(n)^{O(1)}}$ [* Cor~\ref{cor:trav1}] & $\opt = O(n)$ \tabularnewline
\hline 
\multirow{4}{0.07\paperwidth}{Pattern-avoiding (new)} & avoid $(2,3,1)$ with preprocessing & ? & $O(n)$ [* Cor~\ref{cor:trav2}] & $\opt = O(n)$ \tabularnewline
\cline{2-5} 
 & avoid $(k, \dots, 1)$ ($k$-increasing)  & ? & $n2^{O(k^2)}$ [* Thm~\ref{thm:monotone}] & \tabularnewline
\cline{2-5} 
 & avoid all simple perm.\ of size $> k$ ($k$-decomposable) \\ with preprocessing & ? & $n2^{O(k^{2})}$ [* Thm~\ref{thm:main-noinitial}] & $\opt=O(n\log k)$ \ \ \ [*~Cor~\ref{cor:rgreedy-good}] \tabularnewline
\cline{2-5} 
 & avoid an arbitrary perm.\ of size $k$ & ? & $n2^{\alpha(n)^{O(k)}}$ [* Thm~\ref{thm:pattern-avoidance}] & Best known lower bound is at most $n2^{O(k)}$ [* Thm~\ref{thm:sgreedy}] \tabularnewline
\hline 
\end{tabular}\protect\caption{Easy sequences for BST and best known upper bounds. Results referenced as $[*]$ are new. The symbol ``?'' means that no nontrivial bound is known. In the first rows, $f_{i}$ is the number of times that $i$ is accessed, $x_{t}$ is the element accessed at time $t$, and $\tau_{t}$ is the number of distinct accesses since the last access of $x_{t}$). In the "Sequential", "Traversal", and "Pattern-avoiding" rows it is assumed that the access sequence is a permutation, i.e.\ $m=n$.}

\label{table-results}
\end{table}

Motivated by well-established concepts in combinatorics~\cite{knuth68,Vatter,kitaev}, in this paper  
we initiate the study of the complexity of access sequences in terms of pattern avoidance parameters.  
Our main contributions are two-fold (the precise statement of the results is in \S\,\ref{sec:results}, and the results are summarized in Table~\ref{table-results}). 

\begin{compactenum}[(i)] 
\item We study access sequences parametrized by pattern avoidance and analyze the access time of \greedy in terms of this parameter. 
As a by-product, we almost settle the traversal conjecture for \greedy in two orthogonal directions: (a) \greedy is ``almost linear'' for the traversal conjecture, and (b) if a linear-cost preprocessing is allowed, \greedy is in fact linear.  
This is perhaps the first evidence in support of \greedy as a candidate for dynamic optimality (all previously known properties of \greedy were subsumed by splay). 
These results are derived via an {\em input-revealing property} of \greedy, in the sense that the execution log of \greedy partially reveals the structure of the input sequence.  

\item We study a decomposability property of sequences and show that a wide subclass of the pattern-avoiding class satisfies this property. 
This allows us to (a) identify the core difficulty of designing an offline algorithm for dynamic optimality, (b) obtain a tight bound for the optimum of this input class, and (c) derive a simple proof that Cole's showcase sequence~\cite{finger1} is linear.    
 
\end{compactenum} 

\subsection{Our results }
\label{sec:results}

We first define the pattern avoidance parameters. 
Let $[n] = \{1,\dots,n\}$ be the set of keys. 
We represent a length-$m$ sequence by an $m$-tuple $X = (x_1,\ldots, x_m) \in [n]^m$. 
Throughout the paper, we assume that 
$X$ is a permutation, i.e.\ $m=n$ and $x_i \neq x_j$ for all $i \neq j$. This is justified by Theorem~\ref{thm:perm-all}, but we remark that many of our results can be extended to non-permutation access sequences.

Let $S_k$ denote the set of all permutations in $[k]^k$.  
Given $X \in S_n$, we say that $X$ \emph{contains} a permutation pattern $\pi \in S_k$, if there are indices $1 \leq i_1 < i_2 < \cdots i_k \leq n$, such that $(x_{i_1}, \dots, x_{i_k})$ is order-isomorphic to $\pi = (\pi_1, \dots, \pi_k)$. 
Otherwise we say that $X$ \emph{avoids} $\pi$, or that $X$ is $\pi$-\emph{free}. 
Two sequences of the same length are \emph{order-isomorphic} if they have the same pairwise comparisons, e.g.\ $(5,8,2)$ is order-isomorphic to $(2,3,1)$ (Figure~\ref{fig_intro}). As examples, we mention that $(2,1)$-free and $(2,3,1)$-free sequences are exactly the sequential, respectively, preorder access sequences.

For each permutation $X$, a {\em pattern avoidance characteristic} of $X$, denoted by $\Sigma_X$, is the set of all patterns not contained in $X$, i.e.\ $\Sigma_X = \set{\pi \in S_{\ell}:  \mbox{ $\ell \geq 1$ and $X$ is $\pi$-free}}$. 
The {\em pattern avoidance parameter} of $X$, denoted by $k(X)$, is the minimum $k \in \N$ for which $S_k \cap \Sigma_X \neq \emptyset$.  

For each $k \in \N$, the pattern avoidance class $\cset_k$ contains all sequences $X$ with pattern avoidance parameter $k$.  
This characterization is powerful and universal: $\cset_i \subseteq \cset_{i+1}$ for all $i$, and $\cset_{n+1}$ contains all sequences. 

\paragraph{Bounds for \greedy in terms of $k$.} 
We study the access cost of \greedy in terms of the pattern avoidance parameter. 
Our most general upper bound of $n 2^{\alpha(n)^{O(k)}}$ holds for any access sequence in $\cset_k$ (Theorem~\ref{thm:pattern-avoidance}). 
Later, we show a stronger bound of $n2^{O(k^2)}$ for two broad subclasses of $\cset_k$ that have intuitive geometric interpretation: $k$-decomposable sequences (Theorem~\ref{thm:main-noinitial}), which generalize preorder sequences and $k$-increasing sequences (Theorem~\ref{thm:monotone}), which generalize sequential access.   

\begin{theorem} 
\label{thm:pattern-avoidance}
Let $X \in \cset_k$ be an access sequence of length $n$.  
The cost of accessing $X$ using \greedy is at most $n 2^{\alpha(n)^{O(k)}}$. 
\end{theorem}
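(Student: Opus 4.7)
The plan is to work in the geometric view of the BST model due to Demaine, Harmon, Iacono, Kane and P\u{a}tra\c{s}cu, in which an access sequence $X \in S_n$ is identified with $n$ points in $[n]\times[n]$, and the cost of \greedy on $X$ equals the cardinality of the touched set $\mathrm{GR}(X) \supseteq X$ obtained by closing $X$ under the arboreal-satisfaction rule. Let $M_X$ be the $0/1$ matrix whose $1$-entries are exactly $\mathrm{GR}(X)$. The goal is to upper bound the number of $1$-entries of $M_X$ under the assumption that $X$ is $\pi$-free for some $\pi \in S_k$.

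The first step is to establish the \emph{input-revealing} property of \greedy advertised in the introduction: the structure of $\mathrm{GR}(X)$ is far from arbitrary, because every extra touched point is forced by a small combinatorial ``witness'' in $X$ (a pair of access points in a specific geometric configuration). Iterating this certification, one shows that any sufficiently complex submatrix of $M_X$ can be pulled back to a permutation pattern actually sitting inside $X$, of size bounded by a function $f(k)$ of $k$. Contrapositively, if $X$ avoids every element of $S_k$, then $M_X$ avoids an entire family $\mathcal{F}_k$ of $0/1$-matrix patterns whose complexity is controlled by $f(k)$. The members of $\mathcal{F}_k$ are not plain permutation matrices but ``thickened'' configurations inherited from the \greedy unfolding step, which is why the resulting bound will be slightly superlinear.

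Given $\mathcal{F}_k$-freeness of $M_X$, the second step is to invoke extremal bounds from the theory of forbidden $0/1$ matrices and generalized Davenport--Schinzel sequences. A Marcus--Tardos-type bound gives only $O(n)$ for matrices avoiding a permutation, but the patterns in $\mathcal{F}_k$ are of higher order; for such generalized patterns the extremal functions take the Pettie/Nivasch shape $n\cdot 2^{\alpha(n)^{O(s)}}$, where $s$ is the ``order'' of the pattern. Since the order of the patterns in $\mathcal{F}_k$ depends linearly on $k$, this plugs in to give exactly $n\cdot 2^{\alpha(n)^{O(k)}}$. The passage between matrix patterns and sequence patterns is obtained by reading $M_X$ in time order.

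The hard part is the input-revealing lemma and, in particular, controlling $f(k)$. A naive amplification that pays a large factor of $k$ each time a \greedy-touched point is reduced to its input witnesses would blow up the exponent from $\alpha(n)^{O(k)}$ to something much larger. One therefore has to decompose a large submatrix pattern of $M_X$ into witnesses \emph{and} assemble them into a single coherent $k$-pattern of $X$ while paying only a constant blow-up per \greedy step. This combinatorial heart of the argument, essentially a quantitative version of ``\greedy reveals its input,'' is where the real work lies; the subsequent appeal to extremal bounds for forbidden patterns is then mechanical.
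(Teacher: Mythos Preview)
Your high-level plan---input-revealing property of \greedy plus forbidden-submatrix extremal bounds---is exactly the paper's approach, but your description of the input-revealing step is both vaguer and more complicated than what is actually needed, and the ``hard part'' you identify is a phantom.

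The paper does \emph{not} iterate any certification or worry about amplification blow-up. The entire input-revealing argument is a single lemma about one fixed $2\times 3$ gadget
\[
\capture=\left(\begin{smallmatrix}
 & \bullet & \\
\bullet & & \bullet
\end{smallmatrix}\right),
\]
namely: whenever $\Greedy_T(X)$ contains $\capture$, the bounding box of that occurrence must contain an access point of $X$. The proof is three lines using the ``hidden'' lemma: if $a,c$ are the two bottom points and $b$ the top point of an occurrence of $\capture$, then after time $a.y$ the element $b.x$ is hidden in the interval $(a.x,c.x)$, so it cannot be touched unless there is an access inside the box. From this, if $X$ avoids a permutation $P\in S_k$, then $\Greedy_T(X)$ avoids the single matrix $P\otimes\capture$: otherwise the $k$ disjoint bounding boxes of the $k$ copies of $\capture$ each contain an access point, and these $k$ access points form a copy of $P$ in $X$. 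No iteration, no family $\mathcal F_k$, no $f(k)$ to control.

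The matrix $P\otimes\capture$ is a light matrix (one $1$ per column) of dimensions $3k\times 2k$, and the Nivasch--Pettie bound for $n\times n$ matrices avoiding a light pattern of size $O(k)$ gives $n\,2^{\alpha(n)^{O(k)}}$ directly. So the second step is as you say, but the first step is essentially trivial once you have the right gadget. Your outline would become a proof if you replaced the paragraph about iterating witnesses and controlling $f(k)$ with the one-line capture-gadget observation above.
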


The following corollary simply follows from the fact that all preorder sequences $X$ belong to $\cset_3$ (plugging $k=3$ into the theorem). 

\begin{corollary}[``Almost'' traversal conjecture for \greedy] 
\label{cor:trav1}
The cost of accessing a preorder sequence of length $n$ using \greedy, with arbitrary initial tree is at most $n2^{\alpha(n)^{O(1)}}$.
\end{corollary}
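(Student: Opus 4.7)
The plan is to derive the corollary as an immediate specialization of Theorem~\ref{thm:pattern-avoidance} to $k=3$. The only auxiliary fact needed is the classical combinatorial characterization that every preorder sequence of a binary tree is $(2,3,1)$-free.

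To establish this, let $X=(x_1,\ldots,x_n)$ be the preorder of a BST $T'$, and suppose for contradiction that indices $i<j<k$ realize a $(2,3,1)$ pattern in $X$, i.e.\ $x_k<x_i<x_j$. The invariant I would use is that, after preorder visits $x_i$, every subsequently visited node $y$ either lies in the subtree of $x_i$ in $T'$, or else lies in the right subtree of some strict ancestor $a$ of $x_i$ (with $x_i$ in $a$'s left subtree), in which case $y>a>x_i$. A brief case analysis finishes the argument. If $x_k$ lies outside the subtree of $x_i$, then by the invariant $x_k>x_i$, contradicting $x_k<x_i$. If $x_j$ lies outside this subtree while $x_k$ lies inside, then $x_k$ is visited before preorder leaves the subtree, hence before $x_j$, contradicting $j<k$. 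Finally, if both $x_j$ and $x_k$ lie inside the subtree of $x_i$, then $x_k<x_i<x_j$ places $x_k$ in the left subtree and $x_j$ in the right subtree of $x_i$; preorder visits the left subtree first, so $x_k$ precedes $x_j$, again contradicting $j<k$.

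Consequently $(2,3,1)\in S_3\cap\Sigma_X$ for every preorder sequence $X$, whence $k(X)\le 3$ and $X\in\cset_3$. Applying Theorem~\ref{thm:pattern-avoidance} with $k=3$ yields \greedy cost at most $n\cdot 2^{\alpha(n)^{O(3)}}=n\cdot 2^{\alpha(n)^{O(1)}}$, valid for any initial tree since Theorem~\ref{thm:pattern-avoidance} makes no restriction on the initial state.

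There is no significant obstacle beyond invoking the main theorem: the identification of preorder sequences with $(2,3,1)$-avoiding permutations is essentially Knuth's observation on stack-sortable permutations, and the remainder is a direct substitution. All of the real work lies inside Theorem~\ref{thm:pattern-avoidance}, whose proof is presumably where the forbidden-submatrix machinery and the input-revealing property of \greedy are deployed.
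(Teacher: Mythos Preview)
Your proof is correct and matches the paper's own argument: the paper simply notes that preorder sequences avoid $(2,3,1)$, hence belong to $\cset_3$, and plugs $k=3$ into Theorem~\ref{thm:pattern-avoidance}. Your additional verification that preorder sequences are $(2,3,1)$-free is fine, though the paper treats this as a well-known fact without proof.
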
 

Next, we show sharper bounds for \greedy when some structural property of $\Sigma_X$ is assumed. 
First, we prove a bound when $\Sigma_X$ contains all ``non-decomposable'' permutations of length at least $k+1$. We show that this property leads to a decomposition that has intuitive geometric meaning, which we describe below.  

Given a permutation $\sigma =(\sigma_1,\ldots, \sigma_{\ell}) \in S_{\ell}$ we call a set $[a,b]$ a \emph{block} of $\sigma$, if $\{\sigma_a, \sigma_{a+1}, \dots, \sigma_b\} = \set{c,c+1,\ldots, d}$ for some integers $c, d \in [\ell]$.
In words, a block corresponds to a contiguous interval that is mapped to a contiguous interval.
We say that a permutation $\sigma \in S_\ell$ is decomposable if there is a block of $\sigma$ of size strictly betweeen $1$ and $\ell$; otherwise, we say that it is \emph{non-decomposable} (a.k.a.\ \emph{simple}\footnote{See \cite{brignall} for a survey of this well-studied concept. The decomposition described here appears in the literature as \emph{substitution-decomposition}.}).

We say that an input sequence $X$ is decomposable into $k$ blocks if there exist disjoint $[a_1,b_1],\ldots,[a_k, b_k]$ such that each $[a_i, b_i]$ is a block for $X$ and $\left( \bigcup_i [a_i, b_i]\right) \cap \N = [n]$. 
The {\em recursive decomposition complexity} of a sequence $X$ is at most $d$ if one can recursively decompose $X$ into subblocks until singletons are obtained, and each step creates at most $d$ blocks (equivalently we say that $X$ is $d$-recursively decomposable, or simply $d$\emph{-decomposable}). See Figure~\ref{fig_intro}.
It is easy to see that preorder sequences are $2$-decomposable.\footnote{The entire set of $2$-decomposable permutations is known as the set of {\em separable permutations}~\cite{separable}.}
 
The following lemma (proof in \Cref{lem:connection-ext}) connects pattern avoidance and recursive decomposition complexity of a sequence.

\begin{lemma}
\label{lem:connection}
Let $X \in S_n$. Then $\Sigma_X$ contains all simple permutations of length at least $k+1$ if and only if $X$ is $k$-decomposable. 
\end{lemma}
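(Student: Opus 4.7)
The plan is to prove both directions by induction on $n = |X|$, invoking the classical substitution decomposition theorem for permutations: every permutation of length $n \geq 2$ admits a representation $X = \pi[\alpha_1, \ldots, \alpha_m]$ with $m \geq 2$, where each $\alpha_i$ is a permutation on a block of $X$, and the \emph{skeleton} $\pi \in S_m$ is either a simple permutation, the identity $12\cdots m$, or the reverse $m(m{-}1)\cdots 21$. The $\alpha_i$'s are exactly the maximal proper blocks of $X$ in such a decomposition, and $\pi$ records their relative positions and value-ranges.

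For the backward direction, assume $X$ is $k$-decomposable, with top-level blocks $B_1, \ldots, B_\ell$ ($\ell \leq k$), and assume inductively that every proper block of $X$ avoids every simple pattern of length $> k$. If $X$ contained a simple pattern $\pi$ of length $p > k$, the positions of $\pi$ could not all fall inside a single $B_i$ (that would contradict induction), so they would straddle at least two blocks. The key observation is that each $B_i$ spans a contiguous range of positions \emph{and} a contiguous range of values in $[n]$, with the value-ranges of distinct $B_i$'s disjoint; hence for each $B_i$ meeting $\pi$, the positions of $\pi$ inside $B_i$ form a contiguous index-range \emph{and} a contiguous rank-range in $\pi$, which is exactly the definition of a block of $\pi$. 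A pigeonhole argument using $p > k \geq \ell$ produces such a block of size at least $2$, and having at least two nonempty intersections makes it strictly smaller than $p$, contradicting simplicity of $\pi$.

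For the forward direction, assume $X$ avoids every simple permutation of length $> k$ and apply the substitution decomposition $X = \pi[\alpha_1, \ldots, \alpha_m]$ with $m \geq 2$. If $\pi$ is simple, then $\pi$ itself is a pattern of $X$ (take one element from each $\alpha_i$), so the hypothesis forces $m \leq k$; decompose $X$ into $\alpha_1, \ldots, \alpha_m$ and recurse on each. If $\pi$ is monotone, then any union of consecutive $\alpha_i$'s is again a block of $X$, so regroup the $\alpha_i$'s into $\min(m,k)$ consecutive super-blocks and recurse. In either case the recursive pieces are strictly shorter than $X$ and inherit the pattern-avoidance hypothesis (any pattern of a block of $X$ is a pattern of $X$), so induction closes.

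The main technical subtlety I anticipate is cleanly handling the nonuniqueness of the substitution decomposition in the monotone case (which is precisely what lets me merge consecutive $\alpha_i$'s when $m > k$) and the small edge cases such as $m = 2$, where $\pi \in \{12,21\}$ is simultaneously simple and monotone—both clauses yield a valid two-block split, consistent with the observation that $k = 1$ forces $X$ to be a singleton. Beyond that bookkeeping, the argument rests entirely on the two structural facts highlighted above: blocks of $X$ inherit the pattern-avoidance hypothesis, and the trace of a pattern $\pi$ inside a block of $X$ is itself a block of $\pi$.
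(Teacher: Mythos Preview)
Your proof is correct and follows essentially the same approach as the paper's. For the backward direction you spell out in detail the one-line observation the paper makes (a $k$-decomposition of $X$ induces a nontrivial block decomposition of any contained simple pattern of size $>k$); for the forward direction the paper simply cites the known fact \cite{brignall} that every permutation admits a block decomposition tree with simple skeletons, whereas you re-derive the needed $k$-ary tree directly from the substitution decomposition theorem, handling the monotone-skeleton case by regrouping---these are the same argument, yours being more self-contained.
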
  

This lemma implies in particular that if $X$ is $k$-decomposable, then $X \in \cset_{k+1}$ and \Cref{thm:pattern-avoidance} can be applied. The following theorem gives a sharper bound, when a preprocessing (independent of $X$) is performed. 
This yields another relaxed version of the traversal conjecture. 

\begin{theorem}
\label{thm:main-noinitial}  
Let $X$ be a $k$-decomposable access sequence.  
The cost of accessing $X$ using \greedy, with a linear-cost preprocessing, is at most $n 2^{O(k^2)}$. 
\end{theorem}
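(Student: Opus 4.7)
The plan is to combine the geometric view of \greedy with the recursive block structure of $X$ and the input-revealing property of \greedy emphasized earlier in the paper. The preprocessing step installs a specific initial tree $T_0$ on $[n]$ --- a balanced BST is the natural candidate --- in $O(n)$ operations, within the allowed budget. In the geometric picture the access sequence is a point set $P_X \subseteq [n]\times [n]$, \greedy augments $P_X$ to an arborally satisfied superset $P^*_X$, $T_0$ contributes a row of marker points at the bottom, and the access cost equals $|P^*_X|$.

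The first key step is that the top-level decomposition $X = B_1 B_2 \cdots B_j$ with $j \leq k$ induces $j$ pairwise-disjoint axis-aligned rectangles $R_i = I_i \times T_i$, where $I_i$ and $T_i$ are the key and time intervals of $B_i$. I would invoke the input-revealing property to argue that, up to a controlled set of boundary points, the restriction of $P^*_X$ to each $R_i$ coincides with the output of \greedy run on $B_i$ starting from $T_0$ restricted to $I_i$. An induction on $|X|$ then bounds the contribution of each sub-instance by $|B_i|\cdot 2^{O(k^2)}$, since each $B_i$ is itself $k$-decomposable and the restricted initial tree serves as a valid preprocessing for the recursive sub-problem.

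To close the recursion without accumulating a $\log n$ factor, the residual top-level cost --- points of $P^*_X$ outside every $R_i$ together with boundary points at block transitions --- must be $O(n)$ regardless of recursion depth. The natural tool is an extremal forbidden-submatrix bound. By Lemma \ref{lem:connection}, $X$ avoids every simple permutation of length greater than $k$, and the input-revealing property propagates a similar avoidance, with at most a constant blow-up in $k$, to $P^*_X$. A Marcus--Tardos-style extremal bound for 0-1 matrices avoiding all simple permutation patterns of bounded length then yields $|P^*_X| = O(n\cdot 2^{O(k^2)})$; the $k^2$ in the exponent reflects the combinatorial overhead of ranging over arrangements of up to $k$ blocks at each scale.

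The main obstacle is formulating the input-revealing property sharply enough that the boundary effects between blocks do not accumulate across recursion levels, while keeping the preprocessing oblivious to $X$. A balanced $T_0$ is plausible because any contiguous key-interval $I_i$ projects to an almost-balanced subtree, so the recursive invariant survives the restriction. The second delicate point, and where the precise shape of $T_0$ is likely to matter, is verifying that the pattern-avoidance propagated to $P^*_X$ is the stronger ``avoids every simple permutation of length $>O(k)$'' condition rather than some weaker variant, so that the extremal bound actually applies with exponent $k^2$ and not the Ackermann-flavored exponent of \Cref{thm:pattern-avoidance}.
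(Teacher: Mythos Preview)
Your proposal has a genuine gap, and it also misidentifies both the role of the preprocessing and the overall architecture of the argument.

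\textbf{What the paper actually does.} The proof is not recursive. The preprocessing is not a balanced BST; it is the split-tree preprocessing of DHIKP, whose sole effect in geometric view is to \emph{erase the initial tree} entirely. With no initial tree, the paper constructs a \emph{permutation} capture gadget $\alt_{k+4}$ (\Cref{lem:input-revealing}(vi)): whenever the pattern $\alt_{k+4}$ appears in $\Greedy(X)$, its bounding box must contain an access point. Hence if $P$ is any simple permutation of size $k+1$, then $\Greedy(X)$ avoids the permutation matrix $P\otimes\alt_{k+4}$ of size $(k+1)(k+4)=O(k^2)$, because $X$ avoids $P$ by \Cref{lem:connection}. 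A single application of Marcus--Tardos--Fox (\Cref{lem:forbidden}(i)) to the entire output matrix then gives $|\Greedy(X)|\le n\,2^{O(k^2)}$. That is the whole proof; there is no induction over the block tree.

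\textbf{Where your plan breaks.} Your central recursive step asserts that the restriction of the \greedy output to a block $R_i$ agrees (up to boundary points) with \greedy run on $B_i$ from the restricted initial tree. This is exactly the ``\greedy is decomposable'' statement, and the paper gives an explicit counterexample to it (Appendix~\ref{sec:bad-example1}). It is precisely because this fails that the paper introduces the separate offline algorithm \agreedy for the decomposition theorem; that theorem does not bound \greedy and is not used here. A related obstruction: with a nontrivial initial tree there is \emph{no} permutation capture gadget for \greedy (Appendix~\ref{sec:bad-example2}), so your choice of a balanced $T_0$ cannot lead to the ``stronger avoidance'' you hope for in your last paragraph. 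The correct preprocessing is the one that removes the initial tree, after which the permutation gadget $\alt_{k+4}$ becomes available. Your final paragraph is in fact pointing at the right mechanism---propagate avoidance to $\Greedy(X)$ and apply Marcus--Tardos globally---but it should be the entire argument, not a residual-cost bookkeeping device layered on top of a recursion that does not hold.
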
 

\begin{corollary}[Traversal conjecture for \greedy with preprocessing]
\label{cor:trav2}
The cost of accessing a preorder sequence using \greedy, with preprocessing, is linear. 
\end{corollary}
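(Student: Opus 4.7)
The plan is essentially one step: I would apply \Cref{thm:main-noinitial} with parameter $k=2$, after confirming that every preorder sequence is $2$-decomposable (a fact the paper labels as easy).

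For the $2$-decomposability claim, I would induct on the tree size. Given a BST $T'$ on the key set $[n]$ with root $r$, its preorder sequence is $X = (r, X_L, X_R)$, where $X_L$ is the preorder of the left subtree (whose keys are exactly $\{1,\dots,r-1\}$ because $T'$ is a BST) and $X_R$ is the preorder of the right subtree (with keys exactly $\{r+1,\dots,n\}$). First, split $X$ into two top-level blocks at positions $[1,r]$ and $[r+1,n]$; their value sets are the contiguous intervals $\{1,\dots,r\}$ and $\{r+1,\dots,n\}$, so both are genuine blocks. The right block is $X_R$, the preorder of a smaller BST (on a shifted key range), and is $2$-decomposable by induction. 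The left block of $r$ values decomposes one level further into the singleton $\{r\}$ at position $1$ and the block at positions $[2,r]$, which is $X_L$ --- again a preorder sequence on fewer keys, handled by the induction hypothesis. Thus at every level of recursion at most two sub-blocks are produced, witnessing that $X$ is $2$-decomposable.

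With $2$-decomposability in hand, plugging $k = 2$ into \Cref{thm:main-noinitial} yields an access cost of $n \cdot 2^{O(k^2)} = n \cdot 2^{O(4)} = O(n)$ for \greedy with preprocessing, which is exactly the statement of the corollary.

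Since the corollary is framed as an instantiation, there is no meaningful obstacle in the deduction itself; all of the conceptual and technical difficulty is concentrated in \Cref{thm:main-noinitial} (in particular, identifying a preprocessing state from which the recursive block structure of a $k$-decomposable sequence can be exploited by \greedy), and is not revisited here.
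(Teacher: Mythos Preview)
Your proposal is correct and follows exactly the paper's intended route: the corollary is meant as an immediate instantiation of \Cref{thm:main-noinitial} with $k=2$, using the already-stated fact that preorder sequences are $2$-decomposable. Your inductive verification of $2$-decomposability is sound (and more explicit than the paper, which simply asserts it as ``easy to see''); the two-stage splitting you describe---first $[1,r]$ versus $[r+1,n]$, then peeling off the root as a singleton inside the left block---indeed keeps every step to at most two subblocks.
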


Our preprocessing step is done in the geometric view (explained in \S\,\ref{sec:prelim-short}) in order to ``hide'' the initial tree $T$ from \greedy. In the corresponding tree-view, our algorithm preprocesses the initial tree $T$, turning it into another tree $T'$, independent of $X$, and then starts accessing the keys in $X$ using the tree-view variant of \greedy (as defined in~\cite{DHIKP09}). We mention that DHIKP~\cite{DHIKP09} define \greedy without initial tree, and thus their algorithm (implicitly) performs this preprocessing.

As another natural special case, we look at $\Sigma_X \supseteq \{(k,\dots,1)$\}, for some value $k$. In words, $X$ has no decreasing subsequence of length $k$ (alternatively, $X$ can be partitioned into at most $k-1$ increasing subsequences\footnote{This equivalence is proven in Lemma~\ref{lem:decomposition_monotone}.}). Observe that the $k=2$ case corresponds to sequential access. 
We call this the $k$-\emph{increasing} property of $X$. Similarly, if $\Sigma_X \supseteq \{(1,\dots,k)$\}, we say that $X$ is $k$-\emph{decreasing}. 
We obtain the following generalization of the sequential access theorem (this result holds for arbitrary initial tree). 
\begin{theorem}
\label{thm:monotone}  
Let $X$ be a $k$-increasing or $k$-decreasing sequence. The cost of accessing $X$ using \greedy is at most $n 2^{O(k^2)}$. 
\end{theorem}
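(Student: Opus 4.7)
The plan is to move to the geometric/matrix view of \greedy and reduce the analysis to a forbidden-submatrix density problem that can be handled by the Marcus--Tardos theorem. Represent the permutation $X$ as an $n \times n$ $0/1$ matrix $M_X$ (with a $1$ in position $(t, x_t)$), and let $G(X) \supseteq M_X$ be the arborally-satisfied superset produced by \greedy. The cost of \greedy on $X$ equals $|G(X)|$, so it suffices to show $|G(X)| \le n \cdot 2^{O(k^2)}$. Since $X$ is $k$-increasing, $M_X$ avoids the $k \times k$ anti-identity permutation matrix; the $k$-decreasing case reduces to this via the left-right reflection of the key axis, which is a symmetry of arboral satisfaction and of \greedy's execution.

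The heart of the argument is the \emph{input-revealing property} of \greedy (the central tool advertised in the paper's introduction): whenever $M_X$ avoids a $k \times k$ permutation pattern $\pi$, the output $G(X)$ avoids a related pattern $\pi'$ whose size is still linear in $k$. For the monotone case I would prove a sharpened version of this, showing that any $K \times K$ anti-identity submatrix occurring in $G(X)$ can be traced, via \greedy's construction rule, back to a decreasing subsequence of length $\Omega(K)$ among the actual accesses in $X$. Hence if $M_X$ avoids the size-$k$ anti-identity, then $G(X)$ avoids the size-$K$ anti-identity for some $K = O(k)$. Intuitively, each descending touch point in the candidate staircase is caused by an earlier geometric configuration containing a true input access, so a long descending staircase of touches is witnessed by a long descending pattern of genuine accesses. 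This quantitative tracing is the main obstacle: one has to carefully analyze how the causal chain from touch points to input accesses preserves the monotone order without blowing up $K/k$ beyond a constant factor.

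Granting the avoidance of a size-$K$ permutation pattern by $G(X)$, the bound follows from the Marcus--Tardos theorem: any $n \times n$ $0/1$ matrix avoiding a fixed $K \times K$ permutation matrix has at most $c_K \cdot n$ ones, with $c_K = 2^{O(K^2)}$. Substituting $K = O(k)$ yields $|G(X)| \le n \cdot 2^{O(k^2)}$, as claimed. Nothing in the argument references the initial tree, so the bound holds for arbitrary initial trees, consistent with the theorem's statement. A cruder tracing giving $K = \poly(k)$ would still yield a $k^{O(1)}$ in the exponent; keeping $K$ linear in $k$ is what delivers the clean $O(k^2)$.
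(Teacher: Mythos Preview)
Your high-level plan---prove that the \greedy output matrix avoids some permutation pattern and then invoke Marcus--Tardos/Fox---is exactly the paper's framework. But the specific avoidance claim you state is not the one the paper proves, and you do not prove it either; that is the gap.

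You assert that if $X$ avoids $(k,\dots,1)$ then $\Greedy_T(X)$ avoids $(K,\dots,1)$ for some $K=O(k)$, i.e.\ a long decreasing staircase in the output can be traced back to a long decreasing staircase among the accesses. The paper does \emph{not} establish this. In fact, the paper explicitly notes that the output can contain $(3,2,1)$ even when the input is $(3,2,1)$-free, so the tracing you envision cannot be the naive one, and you give no argument for why the blow-up is only a constant factor. What the paper actually shows is quite different: the pattern playing the role of an ``input-revealing'' gadget on $k$-increasing inputs is the \emph{identity} $(1,\dots,k+1)$, not the anti-identity. The key lemma (\Cref{lem:input-revealing}(iii),(iv)) proves that any occurrence of $(1,\dots,k+1)$ in $\Greedy_T(X)$ either has an access point in its bounding box or forces $k$ access points forming $(k,\dots,1)$ to one side; since $X$ avoids $(k,\dots,1)$, the second alternative is impossible, so $(1,\dots,k+1)$ is a capture gadget. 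From this one concludes that $\Greedy_T(X)$ avoids the tensor product $(k,\dots,1)\otimes(1,\dots,k+1)$, a permutation matrix of size $k(k+1)=\Theta(k^2)$, and then the Fox bound $n\,2^{O(K)}$ with $K=\Theta(k^2)$ gives $n\,2^{O(k^2)}$.

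Two smaller remarks. First, the Marcus--Tardos constant is $2^{O(K)}$ (Fox), not $2^{O(K^2)}$; so if your linear-in-$k$ avoidance claim were true it would actually yield $n\,2^{O(k)}$, strictly stronger than the theorem---a hint that the claim is either false or at least not known. Second, the paper's argument uses the ``hidden element'' machinery (\Cref{prop:hidden}) to run the induction that extracts the $k$ forced access points; this is the concrete mechanism you would need, and it naturally produces a gadget of size $k{+}1$ rather than $O(1)$, which is why the avoided pattern ends up quadratic in $k$.
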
 
 
All the aforementioned results are obtained via the {\em input-revealing} property of \greedy (the ideas are sketched in \S\,\ref{sec:tech}).

\paragraph{Bound for signed {\sc Greedy} in terms of $k$.} We further show the applicability of the input-revealing technique.  
Signed \greedy~\cite{DHIKP09} (denoted \sgreedy) is an algorithm similar to \greedy. It does not always produce a feasible BST solution, but its cost serves as a lower bound for the optimum. Let us denote by $\cA(X)$ and $\opt(X)$ the cost of an algorithm $\cA$ on $X$, respectively the optimum cost of $X$. Then we have $\cA(X) \geq \opt(X)$, and $\opt(X) = \Omega(\textsc{SGreedy}(X))$.

\begin{conjecture}[Wilber~\cite{wilber}, using~\cite{DHIKP09}]
\label{conj:sgreedy}  
$\opt(X) = \Theta(\textsc{SGreedy}(X))$ for all input sequences $X$. 
\end{conjecture}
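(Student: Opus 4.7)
The plan is to work entirely in the geometric view of DHIKP: an access sequence $X$ corresponds to a set $P$ of $n$ points in the plane, and a valid BST execution corresponds to a superset $P' \supseteq P$ that is \emph{arboreally satisfied}, in the sense that every axis-aligned rectangle with two opposite corners in $P$ has another point of $P'$ on one of its remaining corners or in its interior. Here $\opt(X)$ is the minimum such $|P'|$. The algorithm \sgreedy produces a set of touches counting essentially only one sign of unsatisfied rectangles, and the inequality $\opt(X) \geq \Omega(\sgreedy(X))$ is the easy direction; the content of the conjecture is the matching upper bound $\opt(X) = O(\sgreedy(X))$.

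The strategy I would attempt is to construct an explicit offline BST execution whose size is charged, point by point, to the signed rectangles that \sgreedy processes. Concretely, fix the output $Q$ of \sgreedy and build a valid superset $P' \supseteq P$ row by row. At time $t$, before committing to the row-$t$ points of $P'$, scan the points of $Q$ at rows $\leq t$ and, for each key $x$ that would otherwise form an unsatisfied rectangle with $x_t$, assign it a \emph{justification}: a nearby signed rectangle that \sgreedy has already paid for. The core structural claim to establish is an \emph{exchange lemma}: whenever the offline algorithm is forced to touch a key $x$ at time $t$, either $(x,x_t)$ is itself a rectangle of the sign \sgreedy processes, or it can be exchanged for a bounded number of such rectangles that are otherwise unaccounted for. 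Summing these charges over all $t$ would yield a feasible execution of cost $O(\sgreedy(X))$.

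The step I expect to be the main obstacle --- and the reason this conjecture has remained open for decades, essentially subsuming the dynamic optimality question for a large class of inputs --- is controlling the opposite-sign rectangles that \sgreedy discards. On concrete inputs the two signs can be wildly unbalanced in any local neighborhood, so an exchange lemma cannot be purely local; it must leverage global structure of the \sgreedy log. One avenue suggested by the input-revealing property developed earlier in this paper is that \sgreedy's log already reveals enough of the structure of $X$ to permit a targeted offline reconstruction, at least when $X$ is pattern-avoiding. A plausible attack route is therefore to first prove the conjecture for $X \in \cset_k$ for fixed $k$ using a sharpened input-revealing argument, then to bootstrap via a decomposition that reduces a worst-case $X$ to pattern-avoiding pieces. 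I do not expect a short proof; the main technical burden will be producing an exchange/charging scheme that is robust enough to operate against adversarial inputs for which the sign-asymmetry is maximal.
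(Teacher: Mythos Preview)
The statement you are trying to prove is stated in the paper as a \emph{conjecture}, not a theorem. The paper offers no proof of it; indeed, it is a well-known open problem (equivalent to the tightness of Wilber's bound / the independent-rectangle bound). The paper merely states the conjecture and uses it hypothetically to derive Corollary~\ref{cor:lower bound tight}.

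Your proposal is accordingly not a proof but a research plan, and you yourself identify the gap: the ``exchange lemma'' you need --- charging every touch of an offline-optimal execution to a bounded number of same-sign rectangles that \sgreedy has paid for --- is precisely the missing piece that has kept this conjecture open. The difficulty is not a technicality: on adversarial inputs the two signs can be arbitrarily unbalanced locally, and no known charging or decomposition scheme (including the input-revealing machinery of this paper, which only yields upper bounds on \sgreedy, not lower bounds on \opt in terms of \sgreedy) bridges this. Your suggested bootstrap via pattern-avoiding classes also does not close the gap, since arbitrary permutations are not reducible to bounded-$k$ pattern-avoiding pieces in any way that controls \opt. In short, there is no proof here to compare against, and your outline correctly locates --- but does not resolve --- the central obstruction.
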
 
 
We show that our techniques can be used to upper bound the cost of \sgreedy.  

\begin{theorem}
\label{thm:sgreedy}
The cost of \sgreedy on an access sequence $X \in \cset_k$ is at most $n 2^{O(k)}$. 

\end{theorem}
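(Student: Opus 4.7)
The plan is to combine an input-revealing property of \sgreedy with the sharpest known quantitative form of the Marcus--Tardos theorem. Passing to the geometric (DHIKP) view, $X$ becomes a set of points in $[n]\times[n]$, and \sgreedy produces a superset $\sset \supseteq X$ so that $\textsc{SGreedy}(X) = |\sset|$. Unlike full \greedy, \sgreedy is a one-pass, non-iterative procedure: every cell it marks can be blamed directly on a small configuration already present in $X$, with no cascading effect. This non-iteration is exactly the feature that will let us beat the $n\,2^{\alpha(n)^{O(k)}}$ bound of \Cref{thm:pattern-avoidance}.

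The first step is to formalize the input-revealing property for \sgreedy. Concretely, I would show that there is an absolute constant $c$ such that whenever a cell $(t,i)$ appears in $\sset$, the input $X$ must contain a specific constant-size ``witness'' configuration involving the access at time $t$ and one or two earlier accesses. Lifting this local statement to patterns, one obtains: if the $0/1$ matrix $M_{\sset}$ representing $\sset$ contains a permutation pattern of size $s$, then $X$ itself contains a permutation pattern of size at most $c\cdot s$. Contrapositively, if $X\in\cset_k$ then $M_{\sset}$ avoids some permutation pattern $\tau$ of size $O(k)$.

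The second step is a direct appeal to forbidden-submatrix theory. By the quantitative form of the Marcus--Tardos theorem due to Fox, any $n\times n$ $0/1$ matrix that avoids a fixed permutation pattern of size $s$ has at most $n\cdot 2^{O(s)}$ one-entries. Applied to $M_{\sset}$ with $s=O(k)$, this gives
\[ \textsc{SGreedy}(X) \;=\; |\sset| \;\leq\; n \cdot 2^{O(k)}, \]
which is the claimed bound. If \sgreedy is naturally split into two sub-processes (a same-sign and a different-sign pass), the argument is applied to each and the two contributions are summed.

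I expect the principal obstacle to lie in the first step: making the witness-configuration argument precise for the signed variant, and verifying that the witness has truly constant size so that the pattern blow-up is linear in $k$. A polynomial blow-up would degrade the exponent to $k^{O(1)}$, eroding the contrast with \greedy. A subsidiary but non-trivial point is to ensure that permutation patterns in $M_{\sset}$ lift to \emph{permutation} patterns in $X$ (rather than to more general $0/1$-matrix patterns), since Fox's sharper $2^{O(s)}$ bound is specific to permutation avoidance; a coarser lifting would force us to fall back on weaker bounds for general forbidden submatrices.
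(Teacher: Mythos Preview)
Your overall strategy matches the paper's: find a permutation capture gadget for each half of \sgreedy, tensor it with the pattern $\pi$ of size $k$ that $X$ avoids, and apply the Marcus--Tardos--Fox bound (\Cref{lem:forbidden}(i)). Concretely, the paper shows in \Cref{lem:input-revealing}(i) that $(1,2)$ is a capture gadget for \greedyright and $(2,1)$ for \greedyleft; hence $\textsc{GreedyRight}_T(X)$ avoids the $2k\times 2k$ permutation matrix $\pi\otimes(1,2)$ and similarly for the left half, and each half has weight at most $n2^{O(k)}$.

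Two points in your write-up need correcting. First, your explanation of \emph{why} the gadget is a permutation is wrong: \greedyleft and \greedyright are \emph{not} non-iterative---touched points feed into the stairs of later accesses exactly as in full \greedy. The real reason is one-sidedness: in \greedyright, an element can only be touched by an access to its left, so a touch point $a$ to the lower-left of a later touch point $b$ (the pattern $(1,2)$) hides $b.x$ from every access left of $a.x$ (\Cref{prop:hidden}(i)), forcing an access inside $\square_{ab}$. For two-sided \greedy this argument fails, which is why its capture gadget $\capture$ is only light. Second, your lifting statement (``if $M_{\sset}$ contains a size-$s$ permutation pattern then $X$ contains some size-$cs$ one'') is not the right formulation and does not follow from a per-cell witness; what you actually need---and what the tensor construction delivers directly---is that for the \emph{specific} $\pi$ avoided by $X$, the output avoids the specific permutation $\pi\otimes G$.
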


\begin{corollary} 
\label{cor:lower bound tight}
Let $\aset$ be an algorithm. If Conjecture~\ref{conj:sgreedy} is true, then one of the following must hold:  
\begin{compactenum}[(i)]
\item The cost of accessing $X$ using $\cA$ is at most $n 2^{O(k)}$, for all $X \in \cset_k$, for all $k$. 
In particular, the traversal conjecture holds for $\cA$.  
\item $\cA$ is not dynamically optimal.  
\end{compactenum} 
\end{corollary}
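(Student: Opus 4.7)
The plan is to show the contrapositive of the disjunction: assuming Conjecture~\ref{conj:sgreedy} and that $\cA$ \emph{is} dynamically optimal, derive conclusion (i). The argument is a short chaining of three ingredients, all already available.

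First, I would invoke \Cref{thm:sgreedy}, which gives $\textsc{SGreedy}(X) \leq n\cdot 2^{O(k)}$ for every $X \in \cset_k$. Then, under the assumption that Conjecture~\ref{conj:sgreedy} holds, $\opt(X) = \Theta(\textsc{SGreedy}(X))$, so $\opt(X) \leq n\cdot 2^{O(k)}$ for every $X \in \cset_k$. Finally, dynamic optimality of $\cA$ means there is an absolute constant $c$ with $\cA(X) \leq c\cdot \opt(X)$ for every sequence $X$; composing gives $\cA(X) \leq n\cdot 2^{O(k)}$ for every $X \in \cset_k$, which is exactly statement~(i).

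For the traversal-conjecture consequence, I would specialize $k=3$: preorder sequences are precisely the $(2,3,1)$-avoiding permutations, hence lie in $\cset_3$. The bound $n \cdot 2^{O(3)} = O(n)$ is linear, so $\cA$ accesses any preorder sequence in linear time, which is the traversal conjecture for $\cA$.

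There is no real obstacle here: the corollary is a logical combination of \Cref{thm:sgreedy}, the hypothesized tightness of \sgreedy, and the definition of dynamic optimality. The only points needing care are writing the disjunction as the contrapositive cleanly, and explicitly citing that preorder sequences belong to $\cset_3$ (so that the $k=3$ specialization yields linearity). The constants hidden in the $O(k)$ exponent do not matter for the structural statement, since for fixed $k$ they collapse to an absolute constant.
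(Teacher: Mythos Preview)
Your proposal is correct and matches the paper's intended reasoning: the corollary is stated without proof, as it follows immediately by chaining \Cref{thm:sgreedy}, Conjecture~\ref{conj:sgreedy}, and the definition of dynamic optimality, with the traversal case being the specialization $k=3$ since preorder sequences are $(2,3,1)$-free. There is nothing to add.
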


\paragraph{Decomposability.}
Next, we prove a general theorem on the complexity of sequences with respect to the recursive decomposition.
Let $X \in S_n$ be an arbitrary access sequence. We represent the process of decomposing $X$ into blocks until singletons are obtained by a tree $\tset$, where each node $v$ in $\tset$ is associated with a sequence $X_v$ (see Figure~\ref{fig_intro}).

\begin{theorem}[Decomposition theorem, informal]
\label{thm:intro-decomposition} 
$\opt(X) \leq \sum_{v \in \tset} \textsc{Greedy}(X_v) + O(n) $.
\end{theorem}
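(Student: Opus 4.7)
The plan is to work in the geometric (DHIKP) view, where the sequence $X$ corresponds to the point set $P_X = \{(x_i,i) : i \in [n]\}$ and any valid BST execution corresponds to an arborally-satisfied superset $Q \supseteq P_X$ of cost $|Q|$. I would construct such a $Q$ with $|Q| \leq \sum_{v \in \tset} \greedy(X_v) + O(n)$, thereby bounding $\opt(X)$ from above.

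For each node $v \in \tset$, associate an axis-aligned rectangle $R_v$ spanned by the keys and times of its block. By the definition of a block, the children rectangles $R_{u_1},\dots,R_{u_\ell}$ of an internal node $v$ are pairwise disjoint and tile a ``staircase'' pattern inside $R_v$; shrinking each child rectangle to a point yields an $\ell\times \ell$ grid whose occupied cells form exactly the skeleton permutation $X_v$. The construction of $Q$ is recursive on $\tset$: at each internal node $v$, run \greedy on the skeleton $X_v$ to obtain an arborally-satisfied set $Q_v^{\mathrm{sk}}$ of size $\greedy(X_v)$ in the $\ell \times \ell$ grid; then \emph{materialize} each grid point $(i,j) \in Q_v^{\mathrm{sk}}$ into a small constant number of concrete points inside $R_v$, placed at designated corners of the child rectangles $R_{u_i}$ and $R_{u_j}$. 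At each leaf we keep the original access point. Finally set $Q$ to be the union of all materialized points over all $v \in \tset$, together with $P_X$.

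The main technical step is to verify that $Q$ is arborally satisfied. Given two points $p,q \in Q$, let $v$ be the lowest common ancestor, in $\tset$, of the two blocks containing them. If $p$ and $q$ lie in the same child of $v$, recursion takes care of the rectangle $[p,q]$. Otherwise $p$ lies in some $R_{u_i}$ and $q$ in a distinct $R_{u_j}$; the arboral satisfaction of $Q_v^{\mathrm{sk}}$ supplies a witness grid point for the pair $(i,j)$, and by the materialization rule this corresponds to a concrete point inside $R_v$. The geometric claim that has to be established is that this materialized point always lands inside the \emph{real} rectangle $[p,q]$; this follows from the staircase structure of the children of $v$, which guarantees that a grid witness in column $k$ strictly between $i$ and $j$ maps to a point inside $R_{u_k}$ that separates $R_{u_i}$ from $R_{u_j}$ in both coordinates, provided the materialization is chosen near the ``shared corner'' of $R_{u_i}$ and $R_{u_j}$. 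A short case analysis on the four relative orientations of $R_{u_i}$ and $R_{u_j}$ (NW/NE/SW/SE) fixes which corner of the child rectangles must be populated.

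For the size bound, the original access points contribute $|P_X|=n$; at each internal node $v$ the materialization introduces at most a constant number of concrete points per skeleton point, so $O(\greedy(X_v))$ points are added; summing yields $|Q| \leq \sum_v \greedy(X_v) + O(n)$. The hard part will be nailing down the materialization rule so that the arboral condition is preserved \emph{simultaneously} for all cross-block pairs without inflating the point count. A safe fallback, should the sharpest rule prove fiddly, is to materialize each skeleton point at all four corners of both involved child rectangles: this loses only a universal constant factor, which can be absorbed either into the $\sum_v \greedy(X_v)$ term or into the $O(n)$ slack already accounted for by the leaf access points.
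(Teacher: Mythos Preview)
Your approach differs substantively from the paper's. The paper does \emph{not} build the satisfied set by materializing the output of \greedy on each skeleton. Instead it defines an offline, line-by-line algorithm \agreedy that runs on the actual instance $X$: at each time step it performs the usual \greedy touches, and whenever a block $P_j$ ends, it additionally projects the \emph{topwing} of each region $R(i,j)$ aligned with $P_j$ onto the current timeline. Feasibility is then proved incrementally (their Lemma on feasibility), and the cost bound follows from two structural lemmas: (a) a non-block region $R(i,j)$ is touched by \agreedy iff \greedy on the skeleton $\tilde P$ touches the grid point $(i,j)$, and (b) within such a region only the leftmost and rightmost columns are ever touched. So the skeleton-\greedy connection appears in the \emph{cost analysis}, not in the construction.

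Your construction has a genuine gap. The fallback of materializing a skeleton point at the corners of the corresponding child rectangle creates unsatisfied pairs with interior (recursively produced) points. Concretely: let a child block be the identity permutation on $\{1,2,3\}$; the recursive \greedy output is $\{(1,1),(2,2),(3,3),(1,2),(2,3)\}$. Adding the bottom-right corner $(3,1)$ (as your fallback prescribes) yields the pair $(3,1),(2,2)$, whose rectangle contains only $(2,1)$ and $(3,2)$, neither present. So the union is not arborally satisfied. More generally, the interaction between points you place at level $v$ and points produced inside the children of $v$ is exactly what the paper's topwing augmentation is engineered to control, and it is not handled by corner placement.

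There is a second, independent gap in your satisfiability argument. You claim the skeleton witness lies in a column \emph{strictly between} the two endpoint columns. But arboral satisfaction only gives a witness in the closed grid rectangle; it may share a row or a column with one of the two skeleton points. When the witness shares, say, the row of $(\,\cdot\,,i)$, the materialized point lives in a region $R(k,i)$ with the same time-band as $P_i$; its $y$-coordinate need not exceed $p.y$ (for $p\in P_i$), so it can fall outside the real rectangle $[p,q]$. Your four-orientation case analysis does not cover this. In the paper this is handled automatically because \agreedy only adds points on the current timeline (hence always at or above any previously placed point in that time-band) and because the topwing projection places the sealing points at the top of each region when the block closes.

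In short, the ``stitch local skeleton solutions together'' plan needs a materialization rule that simultaneously (i) avoids creating unsatisfied pairs with recursively placed points and (ii) guarantees the witness lands inside the real rectangle even when the skeleton witness is on the boundary of the grid rectangle. The paper's \agreedy is precisely such a rule, but it is a line-by-line algorithm, not a static corner placement; recovering it from your outline would essentially reinvent the paper's proof.
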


\begin{corollary} 
\label{cor:rgreedy-good}
Let $X$ be a $k$-decomposable sequence of length $n$ (with $k$ possibly depending on $n$). Then $X$ has optimum cost $\opt(X) \leq O(n \log k)$.
\end{corollary}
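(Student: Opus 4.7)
The plan is to apply the Decomposition Theorem (\Cref{thm:intro-decomposition}) directly to the recursive decomposition tree $\tset$ of $X$, and bound each summand $\textsc{Greedy}(X_v)$ by a crude estimate that exploits the restricted branching $d_v \leq k$ at every internal node $v$.

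Concretely, I would proceed as follows. Since $X$ is $k$-decomposable, it admits a recursive decomposition tree $\tset$ in which every internal node $v$ has some number $d_v \leq k$ of children; the associated sequence $X_v$ is a permutation of length $d_v$ encoding the order in which the children's blocks appear (for leaves, $X_v$ is a singleton). For such a short input, \greedy satisfies the static-optimality bound of~\cite{Fox11, our_icalp}, so
$$\textsc{Greedy}(X_v) \;=\; O(d_v \log d_v) \;=\; O(d_v \log k).$$
Leaves of $\tset$ contribute only $O(1)$ each, and there are $n$ of them. Since every non-root node of $\tset$ is a child of exactly one internal node, and every internal node has at least $2$ children, we have $\sum_{v \text{ internal}} d_v \leq 2n-2$. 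Summing over all nodes of $\tset$,
$$\sum_{v \in \tset} \textsc{Greedy}(X_v) \;\leq\; O(\log k) \cdot \sum_{v \text{ internal}} d_v + O(n) \;=\; O(n \log k),$$
and \Cref{thm:intro-decomposition} then yields $\opt(X) \leq O(n \log k) + O(n) = O(n \log k)$, which is the claim.

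The only step with any substance is the per-node estimate $\textsc{Greedy}(X_v) = O(d_v \log d_v)$, which is a routine consequence of the static optimality of \greedy applied to the $d_v$ distinct keys appearing in $X_v$; no special structure of $X$ beyond $d_v \leq k$ is invoked. Everything else is straightforward bookkeeping on the $n$-leaf tree $\tset$, so I do not anticipate any real obstacle.
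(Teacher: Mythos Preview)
Your proposal is correct and is essentially the same argument as the paper's: apply the decomposition theorem to a $k$-ary decomposition tree, bound each $\textsc{Greedy}(X_v)$ by $O(d_v\log d_v)=O(d_v\log k)$ using the trivial logarithmic amortized bound for \greedy, and then observe that $\sum_v d_v = O(n)$ by counting children in the tree. The only cosmetic difference is that the paper allows leaf blocks of size up to $k$ (rather than requiring singletons) and invokes the plain $O(\log n)$ amortized cost of \greedy rather than static optimality---but since $X_v$ is a permutation of $d_v$ keys, either justification yields the same $O(d_v\log d_v)$ bound.
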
 

This result is tight for all values of $k$, since for all $k$, there is a $k$-decomposable sequence whose complexity is $\Omega(n \log k)$ (Proof in Appendix~\ref{sec:tight-block}). 
This result gives a broad range of input sequences whose optimum is well below the performance of all known BST algorithms. The class can thus be useful as ``candidate counterexample'' for dynamic optimality. (Since we know $\opt$, we only need to show that no online algorithm can match it.) 
We remark that \greedy asymptotically matches this bound when $k$ is constant (Theorem~\ref{thm:main-noinitial}).

The following observation follows from the fact that the bound in Theorem~\ref{thm:intro-decomposition} is achieved by our proposed offline algorithm, and the fact that $\opt(X) \geq \sum_{v \in \tset} \opt(X_v)$ (see Lemma~\ref{lem:decomp opt lower-bound}).

\begin{corollary}
If \greedy is constant competitive on non-decomposable sequences, then there is an offline $O(1)$-approximate algorithm on any sequence $X$.
\end{corollary}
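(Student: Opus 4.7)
The plan is to exhibit the offline algorithm explicitly as the one implied by \Cref{thm:intro-decomposition} and then use the hypothesis together with a known special-case bound for \greedy to turn the additive decomposition into a multiplicative approximation. Concretely, given an access sequence $X$, the algorithm would first compute a recursive block decomposition tree $\tset$ of $X$ (the substitution/Gallai decomposition), then simulate \greedy separately on each ``local'' pattern $X_v$ attached to a node $v$ of $\tset$. By \Cref{thm:intro-decomposition}, the total cost of this offline scheme is at most
\[
\sum_{v \in \tset} \textsc{Greedy}(X_v) + O(n).
\]

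The first step of the analysis is to show that in the chosen decomposition every internal node $v$ carries a pattern $X_v$ on which \greedy is already known to be $O(1)$-competitive. By the uniqueness of substitution decomposition, each internal skeleton $X_v$ is of one of three types: (a) a simple (non-decomposable) permutation, (b) an increasing skeleton $1\,2\cdots k$ (a sum node), or (c) a decreasing skeleton $k\,(k-1)\cdots 1$ (a skew-sum node). Case (a) is covered directly by the hypothesis that \greedy is $O(1)$-competitive on non-decomposable sequences. Cases (b) and (c) are monotone patterns, i.e.\ sequential and reverse-sequential accesses, on which \greedy is already known to run in linear time (the sequential access bound in Table~\ref{table-results}); since $\opt(X_v) \geq |X_v|$ trivially, this also yields $\textsc{Greedy}(X_v) = O(\opt(X_v))$ in these cases.

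Combining these per-node bounds and invoking \Cref{lem:decomp opt lower-bound}, which states that $\opt(X) \geq \sum_{v \in \tset} \opt(X_v)$, gives
\[
\sum_{v \in \tset} \textsc{Greedy}(X_v) \;\leq\; O\!\left(\sum_{v \in \tset} \opt(X_v)\right) \;\leq\; O(\opt(X)).
\]
Since $\opt(X) = \Omega(n)$ for any permutation access sequence (each key must be touched at least once), the additive $O(n)$ term from \Cref{thm:intro-decomposition} is absorbed, and the total offline cost is $O(\opt(X))$, as required.

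The only subtle point — and the main obstacle I would spend care on — is making sure that the decomposition used by the offline algorithm genuinely has the property that every internal skeleton is either simple or monotone. This is exactly what the substitution decomposition provides, so the argument reduces to invoking its standard structural theorem and verifying that the decomposition tree used in \Cref{thm:intro-decomposition} can be taken to coincide with it (or at least refined to it without increasing the bound). Once this structural choice is pinned down, the rest of the proof is the short chain of inequalities above.
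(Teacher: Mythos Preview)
Your proposal is correct and follows essentially the same route as the paper: upper-bound the offline algorithm via the decomposition theorem, apply the hypothesis node by node, and close with the lower-bound decomposition \Cref{lem:decomp opt lower-bound} plus $\opt(X)=\Omega(n)$.

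The one noteworthy difference is in the choice of decomposition tree. You take the canonical substitution decomposition, which forces a three-way case split (simple skeleton, increasing skeleton, decreasing skeleton) and then invokes the sequential-access bound to dispatch the monotone cases. The paper instead uses the fact that every permutation admits a block decomposition tree in which \emph{every} skeleton is simple (monotone runs are recursively split into binary $(1,2)$ or $(2,1)$ nodes, which are themselves simple). With that choice the hypothesis applies uniformly at every node and no separate treatment of sums and skew-sums is needed. Both arguments are valid; the paper's is a little cleaner since it avoids the case analysis, while yours keeps the decomposition canonical at the price of one extra ingredient. A small omission in your write-up: the corollary speaks of ``any sequence $X$'', not just permutations; the paper closes that gap by invoking \Cref{thm:perm-all}, and you should do the same.
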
 

This corollary shows that, in some sense, non-decomposable sequences serve as the ``core difficulty'' of proving dynamic optimality, and understanding the behaviour of \greedy on them might be sufficient.

Finally, Cole et al.~\cite{finger1} developed sophisticated technical tools for proving the dynamic finger theorem for splay trees. 
They introduced a ``showcase'' sequence to illustrate the power of their tools, and showed that splay has linear cost on such a sequence. 
Theorem~\ref{thm:intro-decomposition} immediately gives an upper bound on the optimum access cost of this sequence. 

\begin{corollary} [informal] 
Let $X$ be Cole's showcase sequence. Then $\opt(X) = O(n)$. 
\end{corollary}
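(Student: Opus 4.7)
The plan is to apply Corollary~\ref{cor:rgreedy-good}, which states that every $k$-decomposable sequence of length $n$ has optimum access cost $O(n\log k)$. It therefore suffices to verify that Cole's showcase sequence is $O(1)$-decomposable, in which case the corollary collapses to the desired $O(n)$ bound. In particular, this route avoids invoking Theorem~\ref{thm:intro-decomposition} directly; instead we piggyback on the cleaner quantitative consequence already packaged for us.

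First I would recall the precise recursive construction of Cole's showcase sequence from~\cite{finger1}. The sequence is defined inductively: a constant-size base instance is specified, and a length-$n$ instance is obtained by gluing a constant number of length-$(n/c)$ sub-instances, each acting on a disjoint contiguous range of keys, according to a fixed constant-size template. This rigid self-similar shape is precisely what makes the sequence a convenient stress test for the dynamic finger bound in~\cite{finger1}.

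Second, I would check that this construction matches the paper's notion of decomposability. Each sub-instance in the recursion occupies a contiguous range of positions in $X$ and maps it to a contiguous range of keys, which is exactly the definition of a \emph{block} given in the paragraph preceding Lemma~\ref{lem:connection}. The constant-size top-level template thus exhibits $X$ as a union of $O(1)$ blocks, and the same recursive description applied inside each sub-instance, all the way down to singletons, certifies that $X$ is $O(1)$-recursively-decomposable in the sense of the paper.

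Finally, plugging $k = O(1)$ into Corollary~\ref{cor:rgreedy-good} yields $\opt(X) \leq O(n\log k) = O(n)$, as claimed. The main obstacle is the second step: one must state Cole's construction precisely and verify that the recursive split genuinely respects \emph{both} contiguity of positions and contiguity of key ranges at every level — a subtlety that is easy to miss if the construction is paraphrased loosely. Once this is established, the bound follows with no further work, obtaining in a few lines a result that required the sophisticated splay-tree machinery of~\cite{finger1} to match.
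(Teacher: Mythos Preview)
Your approach has a genuine gap: the description of Cole's showcase sequence is incorrect, and as a consequence the sequence is \emph{not} $O(1)$-decomposable, so Corollary~\ref{cor:rgreedy-good} cannot give the $O(n)$ bound.

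As defined in the paper (and in~\cite{finger1}), the showcase sequence has the form $P = \tilde P[P_1,\ldots,P_{n/\log n}]$, where each $P_j$ is the sequential access $(1,\ldots,\log n)$ and the skeleton $\tilde P$ is an \emph{arbitrary} permutation of size $n/\log n$. There is no recursive ``constant-size template'' structure; the top level already splits into $n/\log n$ blocks, and $\tilde P$ may well be a simple permutation of that size. Thus the best decomposability parameter one can claim in general is $k = n/\log n$, and plugging this into Corollary~\ref{cor:rgreedy-good} only yields $\opt(P) = O(n\log(n/\log n)) = O(n\log n)$.

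The paper instead applies the Decomposition Theorem (Theorem~\ref{thm:intro-decomposition}) at this single level and bounds each piece separately: the skeleton contributes $\textsc{Greedy}(\tilde P) = O\!\bigl(\tfrac{n}{\log n}\log\tfrac{n}{\log n}\bigr) = O(n)$ via the generic logarithmic amortized bound, while the leaves contribute $\sum_j \textsc{Greedy}(P_j) = \tfrac{n}{\log n}\cdot O(\log n) = O(n)$ because \greedy is linear on sequential access. The point of the example is precisely that the savings come from the \emph{leaves} being easy, not from the block-tree having small arity, which is why the route through Corollary~\ref{cor:rgreedy-good} cannot work here.
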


\begin{figure}[h]
\centering
\includegraphics[scale=1.1]{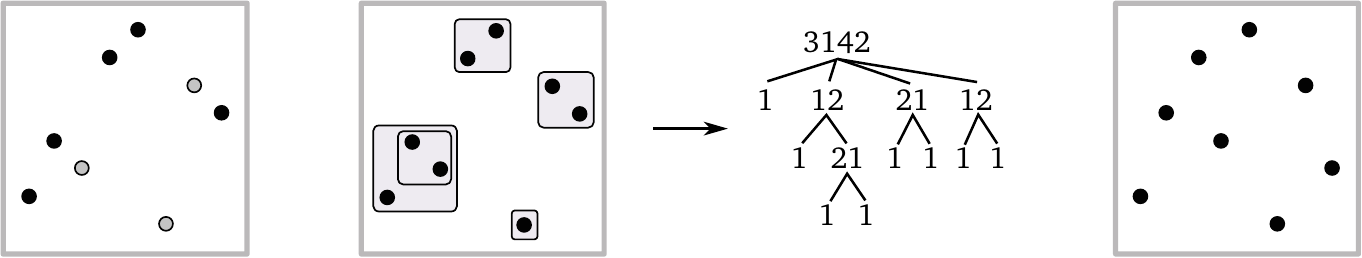}
\caption{From left to right: (i) plot of permutation $\sigma= (6,1,3,2,8,7,4,5)$ containing $(2,1,3)$ (highlighted) and avoiding $(4,3,2,1)$, (ii) permutation $\sigma$, and (iii) its block decomposition tree; permutation $(3,1,4,2)$ at the root is obtained by contracting the four blocks into points, (iv) simple permutation $(6,1,8,4,2,7,3,5)$.}
\label{fig_intro}
\end{figure}

\subsection{Overview of techniques}
\label{sec:tech}
We sketch the main technical ideas of our proofs.
First, we define roughly the execution log matrix of \greedy. 
The log of executing \greedy on $X \in [n]^m$ is an $n$-by-$m$ matrix $G_X$ such that $G_X(a,t) = 1$ if and only if element $a$ is touched by \greedy at time $t$. The cost of this execution is exactly the number of ones in $G_X$, denoted by $w(G_X)$ (we make this claim precise in \S\,\ref{sec:prelim-short}). 

A submatrix of $G_X$ is obtained by removing some rows and columns of $G_X$.
We say that $G_X$ \emph{contains} a matrix $B$ if there is a submatrix $B'$ of $G_X$ such that $B(i,j) \leq B'(i,j)$ for all $i,j$ (in words, any `one' entry in $B$ must also be a one in $B'$).
 
We say that an algorithm $\aset$ is {\em input-revealing} if there is a constant-size matrix $B$ such that any submatrix of an execution matrix of $\aset$ containing $B$ must contain an access point. 
When such a matrix $B$ exists for an algorithm, we say that $B$ is a {\em capture gadget}.  
The existence for \greedy of a capture gadget $B$ allows us to observe that $G_X$ contains the pattern\footnote{$\otimes$ denotes the tensor (Kronecker) product, to be defined later.} $Q \otimes B$ only if the input sequence contains the pattern defined by $Q$. 
So, if we execute \greedy on an input sequence $X \in \cset_k$ that is $Q$-free for $Q \in S_k$, then $G_X$ cannot contain $Q \otimes B$.  
Now, we can use results from forbidden submatrix theory for bounding $w(G_X)$. 
In short, forbidden submatrix theory is a collection of theorems of the following type: 

\begin{theorem} 
Let $M$ and $P$ be $n$-by-$n$ and $k$-by-$k$ matrices respectively, such that $k \leq n$ (mostly $k$ should be thought of as constant).  
If $M$ does not contain $P$, then $w(M) \leq n f_P(n,k)$.
\end{theorem}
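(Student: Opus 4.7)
The statement is best read as a template: it asserts that any forbidden-submatrix extremal bound can be written in the normalized form $w(M) \le n\, f_P(n,k)$, and its content lies entirely in how the function $f_P$ is defined. My plan is therefore to prove the inequality by definition, and to note separately which concrete instantiations populate the template in later sections.

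Concretely, I would define the standard extremal function $\mathrm{ex}(n,P) := \max\{w(M) : M \in \{0,1\}^{n \times n} \text{ and } M \text{ does not contain } P\}$, and then set $f_P(n,k) := \mathrm{ex}(n,P)/n$. With this choice, if $M$ is any $n$-by-$n$ matrix that does not contain $P$, then by construction $w(M) \le \mathrm{ex}(n,P) = n\, f_P(n,k)$, which is exactly the stated inequality. Nothing further is required for the theorem as stated; a single sentence identifying $f_P$ with $\mathrm{ex}(n,P)/n$ discharges it.

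The only subtlety I would flag in the write-up is to make explicit that the theorem is a schema whose strength is inherited from whatever extremal bound on $f_P$ one plugs in. For the patterns that arise in our later arguments, namely permutation matrices $Q$, tensor products $Q \otimes B$ with a fixed capture gadget $B$, and certain non-permutation patterns coming from forbidden-sequence considerations, the relevant upper bounds on $f_P$ are supplied by known results in forbidden submatrix theory and will be cited at the point of use. The present theorem serves solely as a uniform interface, and the normalization by $n$ is chosen so that, for the patterns of interest, $f_P(n,k)$ is either a function of $k$ alone or grows very slowly in $n$.

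The main ``obstacle'' is accordingly presentational rather than mathematical: I have to state the theorem crisply enough that the reader perceives it as a clean interface between the input-revealing structure of \greedy, which constrains which patterns can appear in the execution log $G_X$, and the body of extremal results, which constrain how dense such pattern-avoiding matrices can be. I would also add a remark verifying that $f_P$ as defined is monotone in $P$ (if $P$ is contained in $P'$, then $f_{P'} \ge f_P$), since that monotonicity is what will let us replace a complicated forbidden pattern by a simpler captured pattern $Q \otimes B$ when invoking the schema downstream.
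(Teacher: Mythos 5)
Your reading is exactly the paper's: the displayed statement is a descriptive schema introducing the shape of forbidden-submatrix extremal results, not a claim the paper proves, and your definition $f_P(n,k) := \mathrm{ex}(n,P)/n$ makes the inequality hold by construction. The real content lives in the concrete instantiations cited in Lemma~\ref{lem:forbidden} (Marcus--Tardos and Fox for permutation patterns, Nivasch and Pettie for light patterns), which is precisely how the paper deploys the schema.
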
 

The function $f_P$ depends on the matrix $P$ that is forbidden in $M$.  
For instance, if $P$ contains at most one $1$-entry per row and column, then $f_P(n,k)$ is only a function of $k$.  
If $P$ contains at most one $1$-entry per column but possibly multiple $1$s per row, then $f_P(n,k) = 2^{\alpha(n)^{O(k)}}$. 
The tightness of the bounds we obtain depends on the structure of the capture gadget $B$. 
This is the reason we have a weaker bound for any $X \in \cset_k$ and stronger bounds when $X$ is more restricted.    
 
We also develop a different set of tools to study the intrinsic complexity of decomposable sequences. Here, in order to compute bounds on \opt, we decompose the execution matrix $M_X$ of an algorithm in a way that mirrors the recursive decomposition of the input sequence $X$. Unfortunately, the behaviour of \greedy is too capricious to afford such a clean decomposition. We therefore modify \greedy to reduce the interference between blocks in the input sequence. The modified algorithm might perform more work locally than \greedy, but globally it is more ``robust''. It is however, an offline algorithm; whether there exists an online algorithm competitive with our robust \greedy, is an interesting open question.

\paragraph{Related work.}
For an overview of results related to dynamic optimality, we refer to the survey of Iacono~\cite{in_pursuit}. 
Besides Tango trees, other $O(\log\log n)$-competitive BST algorithms have been proposed, similarly using Wilber's first bound (multi-splay~\cite{multisplay} and chain-splay~\cite{chain_splay}). Using~\cite{DemaineILO13}, all the known bounds can be simultaneously achieved by combining several BST algorithms. Our study of the execution log of BST algorithms makes use of the geometric view of BST introduced by DHIKP~\cite{DHIKP09}. 

Pattern avoidance in sequences has a large literature in combinatorics~\cite{Vatter,kitaev}, as well as in computer science~\cite{BrunerL13,knuth68, tarjan_sorting, pratt_queues, separable}. The theory of forbidden submatrices can be traced back to a 1951 problem of Zarankiewicz, and its systematic study has been initiated by F\"{u}redi~\cite{furedi}, and Bienstock and Gy\H{o}ri~\cite{bienstock_gyori}. Our use of this theory is inspired by work of Pettie~\cite{pettie_DS, deque_Pet08}, who applied forbidden pattern techniques to prove bounds about data structures (in particular he proves that splay trees achieve $O(n\alpha^{\ast}(n))$ cost on deque-sequences, and he reproves the sequential access theorem for splay trees). There are marked differences in our use of the techniques compared with the work of Pettie. In particular, we look for patterns directly in the execution log of a BST algorithm, without any additional encoding step. Furthermore, instead of using fixed forbidden patterns, we make use of patterns that depend on the input. 

\paragraph{Organization of the paper.} In \S\,\ref{sec:prelim-short} we give a short description of the geometric view of the BST model (following DHIKP~\cite{DHIKP09}), and we introduce concepts used in the subsequent sections. In \S\,\ref{sec:input-rev} we study the input-revealing property of \greedy, and prove our bounds for pattern-avoiding sequences. In \S\,\ref{sec:dec-short} we introduce a robust \greedy, state the decomposition theorem and its applications. Finally, \S\,\ref{sec:conclusions} is reserved for open questions. We defer the proofs of several results, as well as some auxilliary material to the Appendix.

\section{Preliminaries}
\label{sec:prelim-short}

The main object of study in this paper are $\{0,1\}$-matrices (corresponding to the execution matrix of an algorithm). 
For geometric reasons, we write matrix entries bottom-to-top, i.e.\ the first row is the bottom-most. Thus, $M(i,j)$ is the value in the $i$-th column (left-to-right) and the $j$-th row (bottom-to-top) of $M$. We simultaneously view $M$ as a set of integral points in the plane. For a point $p=(x,y)$, we say that $p \in M$, if $M(x,y) = 1$.
We denote by $p.x$ and $p.y$ the $x$- and $y$-coordinates of point $p$ respectively.

Given a point set (matrix) $M$, let $p$ and $q$ be points not necessarily
in $M$. Let $\square_{pq}$ be the closed rectangle whose two corners
are $p$ and $q$. Rectangle $\square_{pq}$ is \emph{empty} if
$M \cap \square_{pq} \setminus \{p,q\} = \emptyset$. Rectangle $\square_{pq}$
is \emph{satisfied} if $p,q$ have the same $x$ or $y$ coordinate,
or there is another point $r\in M \cap \square_{pq} \setminus\{p,q\}$.
A point set $M$ is \emph{(arboreally) satisfied} if  
$\square_{pq}$ is satisfied for any two points $p,q \in M$. 

\paragraph{Geometric setting for online BST.}


We describe the geometric model for analyzing online BST algorithms (or simply online BST). 
This model is essentially the same as~\cite{DHIKP09}, except for the fact that our model allows cost analysis of  BST algorithms that start accessing an input sequence from arbitrary initial tree.  

For any access sequence $X\in [n]^m$, we also refer to $X$ as an
$n$-by-$m$ matrix, called \emph{access matrix}, where there is a
point $(x,t)\in X$ iff an element $x$ is accessed at time $t$.
We call $(x,t)$ the access point (or input point) at time $t$.
We also refer to the $x$-th column of $X$ as \emph{element} $x$,
and to the $t$-th row of $X$ as \emph{time} $t$. 

For any tree $T$ of depth $d$, let $d(x)$ be the depth of element
$x$ in $T$. We also refer to $T$ as an $n$-by-$d$ matrix, called
\emph{initial tree matrix}, where in each column $x$, there is a
stack of points $(x,1),\dots,(x,d-d(x)+1)\in T$. See \Cref{fig:geo init tree} for an illustration. 
Observe that matrix $T$ is satisfied.

An \emph{online geometric} BST $\cA$ accessing sequence $X$ starting
with initial tree $T$ works as follows. Given a matrix
$\bigl[ \begin{smallmatrix} 
  X\\
  T 
\end{smallmatrix} \bigr]$ 
as input, $\cA$ outputs a satisfied matrix
$\bigl[ \begin{smallmatrix} 
  \cA_{T}(X)\\
  T 
\end{smallmatrix} \bigr]$,
where $\cA_{T}(X)\supseteq X$ is an $n$-by-$m$ matrix called
the \emph{touch matrix} of $\cA$ on $X$ with initial tree $T$.
More precisely, $\aset$ proceeds in an online fashion: At time $t=1,\ldots, n$, the algorithm $\aset$ has access to $T$ as well as to the first $t$ rows of matrix $X$, and it must decide on the $t$-th row of the output $\aset_T(X)$ which cannot be changed afterwards. 
Algorithm $\aset$ is {\em feasible} if  
$\bigl[ \begin{smallmatrix} 
\cA_{T}(X)\\
  T 
\end{smallmatrix} \bigr]$ is satisfied.  

The matrix $\cA_{T}(X)$ can be thought of as the execution log of $\cA$, and points in $\cA_{T}(X)$ are called \emph{touch points}; the \emph{cost} of
$\cA$ is simply $w(\aset_T(X))$, i.e.\ the number of points (ones) in $\cA_{T}(X)$.

An online algorithm with {\em preprocessing} is an online algorithm that is allowed to perform differently at time $t=0$.  
The linear-cost preprocessing used in this paper, as well as in DHIKP~\cite{DHIKP09}, is to put all elements into a \emph{split tree} (see~\cite{DHIKP09} for the details of this data structure). The consequences of this preprocessing are that (i) in the geometric view the initial tree matrix can be removed, i.e.\ an algorithm $\aset$ only gets $X$ as input and outputs $\aset(X)$, and (ii) in the tree-view, an input initial tree $T$ is turned into another tree $T'$ independent of the input $X$, and then keys in $X$ are accessed using the tree-view variant of \greedy. 

DHIKP~\cite{DHIKP09} %
showed\footnote{with some minor additional argument about initial trees from \Cref{sec:initial-tree}.} that online \emph{geometric} BSTs are essentially equivalent
to online BSTs (in tree-view). In particular, given any online geometric BST 
$\cA$, there is an online BST $\cA'$ where the cost of running
$\cA'$ on any sequence $X$ with any initial tree $T$ is a constant
factor away from $w(\cA_{T}(X))$.
Therefore, in this paper, we focus on online geometric BSTs and refer to them from now on simply as online BSTs. 

Unless explicitly stated otherwise, we assume that the access
sequence $X$ is a permutation, i.e.\ each element is accessed only
once, and thus $X$ and $\cA_T(X)$ are $n$-by-$n$ matrices.
This is justified by the following theorem (proof in Appendix~\ref{sec:perm-enough}). 
\begin{theorem}
	\label{thm:perm-all}
	Assume that there is a $c$-competitive geometric BST $\cA^{p}$ for
	all permutations. Then there is a $4c$-competitive geometric
	BST $\cA$ for all access sequences.\footnote{The statement holds unconditionally for offline algorithms, and under a mild assumption if we require $\cA^{p}$ and $\cA$ to be online.}\end{theorem}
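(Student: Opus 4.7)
The plan is to reduce arbitrary access sequences to permutations by ``spreading out'' repeated accesses into distinct virtual columns. Given $X \in [n]^m$ in which element $x$ is accessed $f_x$ times (so $\sum_x f_x = m$), I would replace column $x$ by a contiguous block of $f_x$ fresh virtual columns $x^{(1)},\ldots,x^{(f_x)}$, laid out consecutively in $[m]$ so that the global order on $[n]$ is preserved, and route the $i$-th access to $x$ in $X$ to column $x^{(i)}$. The resulting sequence $X' \in [m]^m$ is a permutation; the initial tree matrix $T$ would lift to $T'$ by expanding each column into a small balanced subtree over its virtual copies.

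Next, I would define $\cA$ on $(X,T)$ as the online simulation of $\cA^{p}$ on $(X',T')$: at time $t$, when access $x$ arrives and it is the $i$-th occurrence of $x$, feed $x^{(i)}$ to $\cA^{p}$. The access-counter $i$ depends only on the prefix already revealed, so the simulation is online (this is the mild assumption hinted at in the footnote). Let $\cA_T(X)$ be obtained from $\cA^{p}_{T'}(X')$ by collapsing each virtual block $x^{(1)},\ldots,x^{(f_x)}$ back to column $x$. Clearly $w(\cA_T(X)) \le w(\cA^{p}_{T'}(X'))$, and $\cA_T(X)$ is arboreally satisfied for $X$: any rectangle of access points in $X$ lifts to a rectangle in $X'$ that $\cA^{p}$ satisfies, whose satisfying point projects down into the original rectangle.

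The crux is to show $\opt(X') \le 4 \cdot \opt(X)$. Given an optimal satisfied $M^{*}$ for $X$, I would construct a satisfied $M'$ for $X'$ by placing each touch $(x,t)\in M^{*}$ on one or two virtual copies of $x$ according to the access pattern of column $x$: on $(x^{(i)},t)$ when $t$ is the $i$-th access time of $x$, on both $(x^{(i)},t)$ and $(x^{(i+1)},t)$ when $t$ strictly separates the $i$-th and $(i{+}1)$-st accesses of $x$, and on a single boundary copy when $t$ lies outside the access range of $x$. Additionally, for each pair of consecutive accesses inside the same virtual block I would insert one helper touch to satisfy the intra-block rectangle they define (such rectangles have no analog in $X$). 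For a cross-block rectangle $\square_{(x^{(i)},t_1)(y^{(j)},t_2)}$, the underlying rectangle in $X$ is satisfied by some $(z,u)\in M^{*}$, and the duplication rule ensures that the correct lifted copy of $(z,u)$ lies inside the virtual rectangle --- even in the delicate cases $z\in\{x,y\}$, where the chosen virtual index must ``lean toward'' the corner. Intra-block rectangles are satisfied either by an intermediate access point (since the access times within a block appear in order) or by an inserted helper touch. A careful accounting yields $w(M') \le 4\,w(M^{*})$, so that
\[
w(\cA_T(X)) \;\le\; w(\cA^{p}_{T'}(X')) \;\le\; c\cdot\opt(X') \;\le\; 4c\cdot\opt(X),
\]
which is exactly the $4c$-competitiveness claim.

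The hard part will be this third step: the lift must simultaneously remain arboreally satisfied for $X'$ and inflate the cost by at most a constant. A single-copy lift fails for cross-block rectangles whose satisfier shares a column with a corner, while lifting to too many copies risks an $\Omega(f_x)$ blow-up per touch. The two-copy rule combined with the block-local helper insertions is a minimal fix, and the factor $4$ absorbs both the doubling of non-access touches and the boundary bookkeeping. For offline $\cA^{p}$ and $\cA$ the online-simulation caveat disappears entirely, and the same construction delivers the factor $4$ unconditionally.
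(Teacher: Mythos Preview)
Your high-level strategy matches the paper's: expand repeated accesses into distinct virtual columns to obtain a permutation $X'$, run $\cA^{p}$ on $X'$, collapse the output back to $[n]$, and show $\opt(X')\le 4\opt(X)$. Two implementation choices differ in ways worth flagging. First, the paper realizes the column-spreading as an $\epsilon$-tilt, sending the access $(x,t)$ to $(x+\epsilon t,t)$, and treats all non-first occurrences as \emph{insertions} rather than accesses; this is precisely what keeps the reduction online, since the position $x+\epsilon t$ is known at time $t$ without any knowledge of the frequencies $f_x$ or of $m$. Your layout of $x^{(1)},\dots,x^{(f_x)}$ inside $[m]$ and your lifted tree $T'$ both require knowing all $f_x$ in advance, so as written the simulation is offline; the ``mild assumption'' in the paper is not about access counters but about $\cA^{p}$ never touching initial-tree elements that have not yet appeared on a search path, which is what makes the insertion-based simulation coincide with an honest run of $\cA^{p}$ on a larger fixed tree. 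Second, for the key bound $\opt(X')\le 4\opt(X)$ the paper's construction is blunter and easier to verify than your two-copy rule: for a column with $m_i$ accesses and $p_i$ touch points in $\opt(X)$, it places all $p_i$ points on both the first and the last virtual column, and only two points on each of the $m_i-2$ interior virtual columns, for a total of $2p_i+2(m_i-2)$; summing gives $2\opt(X)+2m\le 4\opt(X)$ with no case analysis. One small slip: your feasibility argument for the collapsed output must cover every pair of points in $\cA_T(X)$, not just access points; the paper states and proves this as ``$\merge$ preserves satisfiability''.
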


\paragraph{\greedy and signed \greedy.}

Next, we describe the algorithms \greedy and signed \greedy~\cite{DHIKP09} (denoted as \sgreedy). 

Consider the execution of an online BST $\cA$ on sequence $X$ with initial
tree $T$ just before time $t$. 
Let $O_{t-1}$ denote the output at time $t-1$ (i.e.\ the initial tree $T$ and the first $t-1$ rows of $\cA_{T}(X)$). 
Let $\tau(b,t-1)=\max\left\{ i\mid(b,i)\in O_{t-1}\right\} $ for any
$b\in[n]$. In words, $\tau(b,t-1)$ is the last time \emph{strictly before} $t$ when $b$ is touched, otherwise it is a non-positive number defined by the initial tree.
Let $(a,t)\in X$ be the access point at time $t$. We define
$\stair_{t}(a)$ as the set of elements $b\in[n]$ for which $\square_{(a,t),(b,\tau(b,t-1))}$
is not satisfied.
\greedy touches element $x \in [n]$ if and only if $x \in \stair_t(a)$.   
See \Cref{fig:geo init tree} for an illustration of stairs.

\begin{figure}[h]
\begin{center}  
\includegraphics[width=0.65\textwidth]{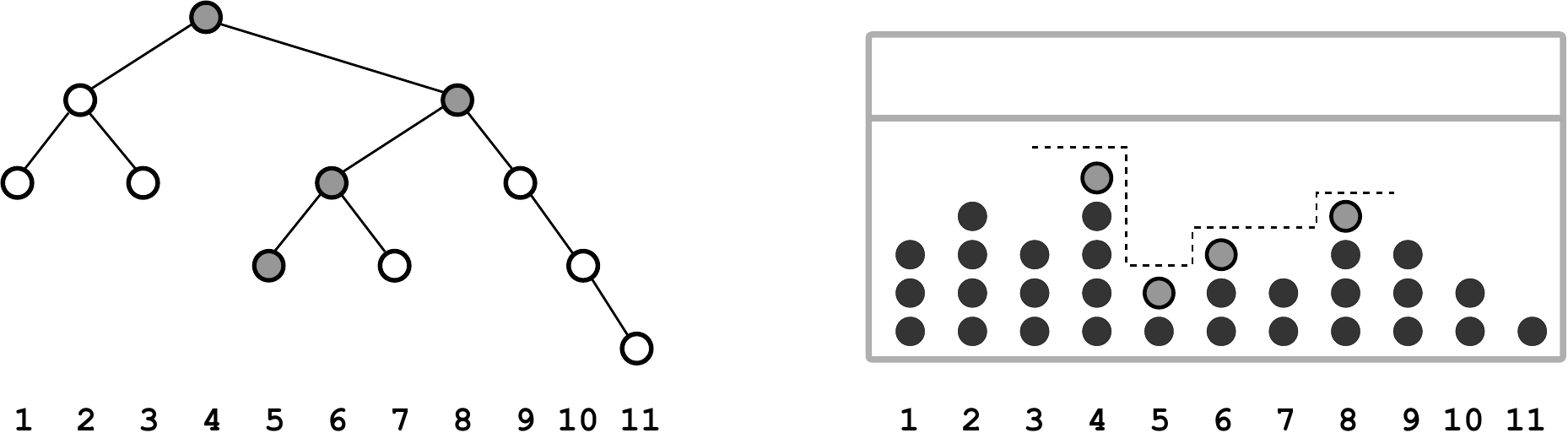}
\end{center}
\caption{\label{fig:geo init tree}Tree-view and geometric view of initial tree. Search path and stair (at time 1) of element 5.}
\end{figure}

We define $\stairleft_{t}(a)=\{b\in\stair_{t}(a)\mid b<a\}$
and $\stairright_{t}(a)=\{b\in\stair_{t}(a)\mid b>a\}$. The algorithms \greedyleft
and \greedyright are analogous to \greedy, but they only touch $\stairleft_{t}(a)$, respectively 
$\stairright_{t}(a)$ for every time $t$; \sgreedy simply refers to the union of the \greedyleft and \greedyright outputs, which is not necessarily satisfied. It is known~\cite{DHIKP09, Harmon} that \sgreedy constant-approximates the \emph{independent rectangle bound} which subsumes the two previously known lower bounds on \opt, called Wilber's first and second bounds~\cite{wilber}.
\begin{theorem} [\!\cite{wilber, DHIKP09, Harmon}]
For any BST $\cA$ (online or offline) that runs on sequence $X$ with arbitrary initial tree $T$, $w(\cA_{T}(X))=\Omega(\textsc{SGreedy}(X))$. %
\end{theorem}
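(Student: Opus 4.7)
My plan is to proceed via the \emph{independent rectangle bound} (IRB) framework of DHIKP~\cite{DHIKP09}. Call a rectangle $\square_{pq}$ with $p,q \in X$ (and distinct $x$- and $y$-coordinates) an \emph{unsatisfied rectangle} if $X \cap \square_{pq} = \{p,q\}$. A family $\mathcal{R}$ of unsatisfied rectangles is \emph{independent} if no corner of any $R' \in \mathcal{R}$ lies in the open interior of a different $R \in \mathcal{R}$. Let $\mathrm{IRB}(X)$ denote the maximum size of such an independent family. I will prove (i) $\mathrm{IRB}(X) = O(w(\cA_T(X)))$ for every feasible BST $\cA$, and (ii) $w(\textsc{SGreedy}(X)) = O(\mathrm{IRB}(X))$; chaining these yields the theorem.

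For part (i), fix a maximum independent family $\mathcal{R}$. Since the combined output (touch matrix together with initial tree) is satisfied and each $R \in \mathcal{R}$ has corners with distinct coordinates, every $R$ must contain some witness point in $\cA_T(X) \cup T$ distinct from its corners. Rectangles whose only witnesses lie in the initial tree $T$ contribute at most $O(n)$ in total and can be handled separately. The independence property bounds how many rectangles a single touch point $r$ can witness: if $r$ is interior to both $R$ and $R'$, independence forces their corners into specific quadrants around $r$, and combined with the emptiness of each rectangle this yields at most constant multiplicity. A standard charging then gives $|\mathcal{R}| = O(w(\cA_T(X)))$.

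For part (ii), every \greedyleft touch at a point $(b,t)$ with $b \in \stairleft_t(a)$ is assigned the rectangle $R_{b,t}$ with corners $(a,t)$ and $(b, \sigma(b,t))$, where $\sigma(b,t)$ is the latest time at most $t$ at which $b$ appears in $X$; the $O(n)$ touches for which no prior access exists are absorbed into an additive term. The definition of $\stairleft$ guarantees that $R_{b,t}$ is an unsatisfied rectangle of $X$. The family of rectangles generated by \greedyleft is independent: if two of them, $R_{b,t}$ and $R_{b',t'}$ with $t<t'$, were non-independent, the staircase structure at time $t'$ would fail---either the interior of one rectangle would contain an input point other than its corners (contradicting that it is unsatisfied), or one of $b, b'$ would not lie in $\stairleft$ at its assigned time. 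The same argument applies to \greedyright, yielding $w(\textsc{SGreedy}(X)) \leq 2\,\mathrm{IRB}(X) + O(n)$.

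The main obstacle is the independence verification in part (ii): one must enumerate the possible geometric configurations of two \greedyleft rectangles that would violate the independence condition and derive a contradiction in each case from the stair definition and from the monotone staircase property of \greedyleft's touches. Once this case analysis is complete, the remainder of the argument is a mechanical charging of touch points to rectangles via the constant-multiplicity bound from part (i).
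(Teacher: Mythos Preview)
The paper does not prove this theorem; it simply quotes it from \cite{wilber,DHIKP09,Harmon}. Your two-step plan through the independent rectangle bound is precisely the route taken in \cite{DHIKP09}, so there is no alternative argument in the present paper to compare against.

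That said, your sketch of part~(i) has a genuine gap. The assertion that a single point $r$ can lie in the open interior of only $O(1)$ members of an independent unsatisfied family is false. Take $n=2k+2$ and let $X$ consist of $p_i=(k{+}2{-}i,\,i)$ and $q_i=(k{+}1{+}i,\,2k{+}3{-}i)$ for $i=1,\dots,k$, together with $(1,k{+}1)$ and $(2k{+}2,k{+}2)$; this is a permutation of $[2k{+}2]$. Each $\square_{p_iq_i}$ is unsatisfied in $X$, the family $\{\square_{p_iq_i}\}_i$ is pairwise independent, yet the single point $(k{+}1,k{+}1)$ lies in the open interior of $\square_{p_iq_i}$ for every $i\in\{2,\dots,k\}$. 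Thus a multiplicity bound on an arbitrarily chosen witness cannot work. The argument in \cite{DHIKP09} avoids this by a sweep: restrict to one orientation of rectangles, process them in a carefully chosen extremal order, and for each one select a \emph{specific} point of the satisfied set on a designated edge; the order together with the tie-breaking rule is what forces distinct rectangles to receive distinct witnesses. Your plan skips exactly this step. (Your side remark about witnesses in $T$ is moot: both corners of every rectangle in $\mathcal R$ lie at positive times, while $T$ lives at non-positive times, so no point of $T$ can lie inside any such rectangle.)

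Part~(ii) is essentially correct, but the justifications are misplaced. That $R_{b,t}=\square_{(a,t),(b,\sigma(b,t))}$ is unsatisfied in $X$ does \emph{not} follow directly from the definition of $\stairleft$: the stair rectangle has its lower corner at $(b,\tau(b,t{-}1))$, which may sit strictly above $(b,\sigma(b,t))$, so an access point could a priori live in the gap. One needs a short induction on $t$ (if $(c,t')$ were such a point and $t^*:=\tau(b,t{-}1)$, then $t'<t^*$; the induction hypothesis at $t^*$ forces $c>a_{t^*}$, and then the access point $(a_{t^*},t^*)$ lies on the bottom edge of the stair rectangle at time $t$, contradicting $b\in\stairleft_t(a)$). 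Conversely, once ``unsatisfied'' is established, independence is immediate and needs no case analysis: every corner of every $R_{b',t'}$ is an access point, and an unsatisfied rectangle contains no access point of $X$ besides its own two corners, so no such corner can lie even in the closed region of another $R_{b,t}$ (let alone its open interior).
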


A nice property of \greedy and \sgreedy is that an element $x$ can become ``hidden'' in some range in the sense that 
if we do not access in $x$'s hidden range, then $x$ will not be touched. Next, we formally define this concept, and list some cases when elements become hidden during the execution of \greedy and \sgreedy. The concept is illustrated in Figure~\ref{fig:hidden-show}.

\begin{figure}[h]
\begin{center}  
\includegraphics[width=0.35\textwidth]{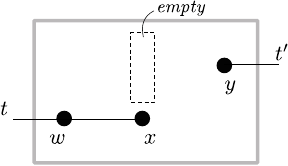}
\end{center}
\caption{\label{fig:hidden-show} After time $t$, element $x$ is hidden in $(w,n]$ for \greedy, and in $(w,x]$ for \greedyright. 
After time $t'$, element $x$ is hidden in $(w,y)$ for \greedy; it is easy to verify that any access outside of $[w,y]$ will never touch $x$.}
\end{figure}

\begin{definition}[Hidden]\label{def:hidden}
	For any algorithm $\cA$, an element $x$ is \emph{hidden in} range $[w,y]$ after $t$ if,
	given that there is no access point $p \in [w,y]\times(t,t']$ for any $t'>t$, then
	 $x$ will not be touched by $\cA$ in time $(t,t']$.
\end{definition}
\begin{lemma} [Proof in Appendix~\ref{sec:wing-hidden}]
\label{prop:hidden}Let $x$ be some element.
\begin{compactenum}[(i)]
\item If there is an element $w<x$ where $\tau(w,t)\ge\tau(x,t)$, then
$x$ is hidden in $(w,n]$ and $(w,x]$ after $t$ for \greedy and \greedyright respectively.
\item If there is an element $y>x$ where $\tau(y,t)\ge\tau(x,t)$, then
$x$ is hidden in $[1,y)$ and $[x,y)$ after $t$ for \greedy and \greedyleft respectively.
\item If there are elements $w,y$ where $w<x<y$, and $\tau(w,t),\tau(y,t)\ge\tau(x,t)$, then
$x$ is hidden in $(w,y)$ after $t$ for \greedy.
\end{compactenum}
\end{lemma}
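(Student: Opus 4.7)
The plan is to prove all three parts by a single induction on time $s \in \{t+1, t+2, \dots, t'\}$, showing that $x$ is not touched at time $s$ by the relevant algorithm; the conditional assumption of Definition~\ref{def:hidden}, namely that no access point lies in the hidden range during $(t, t']$, is kept in force throughout. The inductive hypothesis that $x$ is untouched in $(t, s-1]$ upgrades to $\tau(x, s-1) = \tau(x, t)$, while the monotonicity of $\tau$ (it can only increase when its argument grows) yields $\tau(w, s-1) \geq \tau(w, t)$ and, symmetrically, $\tau(y, s-1) \geq \tau(y, t)$ whenever a protecting element $w$ or $y$ is provided by the hypothesis.

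I will then let $(a, s)$ denote the access at time $s$ and read off the possible locations of $a$ from the hidden-range assumption: in (i) it gives $a \leq w$; in (ii) it gives $a \geq y$; in (iii) it gives $a \leq w$ or $a \geq y$. For \greedyright in case (i), the algorithm can only touch $x$ when $a < x$, and combined with the hidden range $(w, x]$ this again forces $a \leq w$; the analogous situation for \greedyleft in case (ii) is symmetric. Thus case (iii) splits cleanly into the two asymmetric cases (i) and (ii) depending on the side of the hidden interval to which $a$ belongs, and it suffices to handle those.

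The geometric core of the argument is to exhibit a point of $O_{s-1}$ inside the closed rectangle $\square_{(a,s),(x,\tau(x,s-1))}$ that is distinct from both corners, for this certifies the rectangle is satisfied and hence $x \notin \stair_s(a)$, so $x$ is not touched. When $a \leq w < x$, the candidate is $(w, \tau(w, s-1))$: its $x$-coordinate $w$ lies in $[a, x]$; its $y$-coordinate satisfies $\tau(x, s-1) = \tau(x, t) \leq \tau(w, t) \leq \tau(w, s-1) \leq s-1 < s$, placing it in the closed rectangle; and it is distinct from $(a,s)$ by the strict $y$-inequality and from $(x, \tau(x, s-1))$ because $w < x$. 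The case $a \geq y > x$ is entirely symmetric via the point $(y, \tau(y, s-1))$. This closes the induction and gives the lemma for \greedy; the restrictions to \greedyleft and \greedyright then follow because the side-of-$a$ condition used above is exactly what their restricted stairs enforce.

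The main subtlety, rather than a genuine obstacle, is bookkeeping. One must check that the protecting point $(w, \tau(w, s-1))$ (or $(y, \tau(y, s-1))$) remains legitimate when $\tau$ refers to a row of the initial tree matrix $T$ rather than to an actual previous touch, and also in the boundary case $a = w$ where the protecting point shares the $x$-coordinate of the access; in both regimes the chain of inequalities $\tau(x, s-1) \leq \tau(w, s-1) < s$ and the separation $w < x$ keep the point inside the rectangle and distinct from both corners, so the argument goes through uniformly.
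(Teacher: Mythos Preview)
Your proof is correct and follows essentially the same approach as the paper: both arguments locate a protecting point in column $w$ (or $y$) inside the rectangle $\square_{(a,s),(x,\tau(x,s-1))}$ to certify that $x\notin\stair_s(a)$. The only cosmetic difference is that the paper phrases the argument as a contradiction (take the \emph{first} time $x$ is touched) and uses the specific point $(w,\tau(w,t))$, whereas you unroll this into an explicit induction on $s$ and use the possibly higher point $(w,\tau(w,s-1))$; both choices work for the same reason.
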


\paragraph{Forbidden submatrix theory.}

Let $M$ be an $n$-by-$m$ matrix. Given $I \subseteq [n]$ and $J \subseteq [m]$, a \emph{submatrix} $M|_{I,J}$ is obtained by removing columns in $[n] \setminus I$ and rows in $[m] \setminus J$ from $M$.

\begin{definition}
[Forbidden submatrix]Given a matrix $M$ and another matrix
$P$ of size $c\times d$, we say that $M$ \emph{contains}
$P$ if there is a $c\times d$ submatrix $M'$ of $M$ such that
$M'(i,j)=1$ whenever $P(i,j)=1$. We say that $M$ \emph{avoids} (forbids) $P$ if $M$
does not contain $P$. 
\end{definition}

We often call an avoided (forbidden) matrix a ``pattern''. 
For any permutation $\pi = (\pi_1,\dots,\pi_k) \in S_k$, we also refer to $\pi$ as a $k$-by-$k$ matrix where $(\pi_i,i) \in \pi$ for all $1\le i\le k$. 
Permutation matrices have a single one-entry in every row and in every column. Relaxing this condition slightly, we call a matrix that has a single one in every column a \emph{light} matrix. We make use of the following results.
\begin{lemma}
\label{lem:forbidden}
	Let $M$ and $P$ be $n$-by-$n$ and $k$-by-$k$ matrices such that $M$ avoids $P$.  
	\begin{compactenum}[(i)]
		\item (Marcus, Tardos~\cite{marcus_tardos}, Fox~\cite{jfox})\ \  If $P$ is a permutation, the number
		of ones in $M$ is at most $n2^{O(k)}$.
		\item (Nivasch~\cite{nivasch}, Pettie~\cite{pettie_nonlinear})\ \ If $P$ is light, the number of ones in $M$ is at most $n2^{\alpha(n)^{O(k)}}$.
	\end{compactenum}
\end{lemma}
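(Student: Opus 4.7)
The plan is to invoke the two cited theorems directly as black boxes, since both are well-established results in forbidden submatrix theory. For the paper's purposes no novel argument is required; the statement is essentially a packaging of results from Marcus--Tardos, Fox, Nivasch, and Pettie in a form convenient for bounding $w(G_X)$ via the input-revealing property. Nonetheless, I sketch the principal ideas one would use to give self-contained proofs.

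For part (i), I would follow the Marcus--Tardos framework: partition the $n$-by-$n$ matrix $M$ into blocks of size $s \times s$ for a suitably chosen $s = s(k)$. Call a block of columns \emph{wide} if its columns (restricted to the block row) collectively contain ones in at least $k$ distinct positions, and call a block of rows \emph{tall} analogously. A pigeonhole argument shows that if too many blocks in a single strip are wide, then one can select $k$ blocks and, from each, a distinct column/row carrying a one, thereby embedding the forbidden permutation $P$. Hence the number of heavy blocks per strip is bounded by a function $c_k$ of $k$ alone. Setting up the resulting recurrence on the number of ones and unfolding it gives the bound $c_k n$, which Fox's refinement sharpens to $c_k = 2^{O(k)}$ by a more careful counting of configurations.

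For part (ii), the restriction is much weaker (one entry per column but rows unrestricted), so the bound degrades to involve the inverse Ackermann function. The strategy of Nivasch and Pettie is a Davenport--Schinzel-style recursion: one partitions the columns into contiguous groups, recursively bounds the number of ones within each group, and separately bounds the interaction between groups by charging each cross-group incidence to a forbidden configuration. The hierarchy of recursions, each reducing a parameter by one Ackermann level, yields the tower of $\alpha(n)$'s in the exponent.

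The main obstacle to reproving either statement from scratch is constant tracking: the Marcus--Tardos argument requires care in choosing $s(k)$ to get a linear rather than superlinear bound, and the Nivasch--Pettie argument requires delicate bookkeeping across the Ackermann hierarchy. Since we apply the results only as a tool, I would simply cite them and proceed to use the bounds in the subsequent forbidden-pattern arguments for $G_X$.
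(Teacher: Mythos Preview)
Your proposal is correct and matches the paper's treatment: the lemma is stated purely as a citation of external results (Marcus--Tardos/Fox for (i), Nivasch/Pettie for (ii)) with no proof given in the paper itself. Your additional sketch of the underlying arguments is accurate bonus exposition, but the paper simply invokes these as black boxes, exactly as you propose.
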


We define the \emph{tensor product} between matrices as follows: $M \otimes G$ is the matrix obtained by replacing each one-entry of $M$ by a copy of $G$, and replacing each zero-entry of $M$ by an all-zero matrix equal in size with $G$. Suppose that $M$ contains $P$. A \emph{bounding
box} $B$ of $P$ is a minimal submatrix $M|_{I,J}$ such that $P$ is contained in $B$, and $I$ and $J$ are sets of consecutive columns, resp.\ rows. We denote $\min(I)$, $\max(I)$, $\min(J)$, $\max(J)$, as $B.\xmin$, $B.\xmax$, $B.\ymin$, $B.\ymax$, respectively.

\section{\greedy is input-revealing}
\label{sec:input-rev}
In this section, we prove \Cref{thm:pattern-avoidance,thm:main-noinitial,thm:monotone,thm:sgreedy}.
We refer to \S\,\ref{sec:results} for definitions of pattern-avoiding, $k$-decomposable and $k$-increasing permutations.

To bound the cost of algorithm $\cA$, we show that $\aset_T(X)$ (or $\aset(X)$ if preprocessing is allowed) avoids some pattern.  
In the following, $\inc_k$, $\dec_k$, and $\alt_k$ are permutation matrices of size $k$, and $\capture$ is a light matrix. These matrices will be defined later.
\begin{lemma}
	\label{lem:touch avoid}Let $X$ be an access sequence.
\begin{compactenum}[(i)]
		\item If $X$ avoids a permutation $P$ of size $k$, then for any initial
		tree $T$, $\textsc{GreedyRight}_{T}(X)$, $\textsc{GreedyLeft}_{T}(X)$ and $\Greedy_{T}(X)$
		avoid $P\otimes(1,2)$, $P\otimes(2,1)$ and $P\otimes\capture$
		respectively.
		\item If $X$ is $k$-increasing ($k$-decreasing), then for any initial
		tree $T$, $\Greedy_{T}(X)$ avoids $(k,\dots,1)\otimes\dec_{k+1}$
		($(1,\dots,k)\otimes\inc_{k+1}$).
		\item If $X$ is $k$-decomposable, then $\Greedy(X)$ avoids $P\otimes\alt_{k+4}$
		for all simple permutations $P$ of size at least $k+1$.
	\end{compactenum}
\end{lemma}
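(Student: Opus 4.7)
The plan is to reduce each of the three statements to an \emph{input-revealing} or \emph{capture} property of the associated gadget: for each algorithm $\cA$ and gadget $G$ listed in the statement, I will prove that any occurrence of $G$ as a submatrix of $\cA_T(X)$, with bounding box $B$, forces at least one access point of $X$ to lie inside $B$. Once this capture property is in place, the three avoidance claims follow by a uniform tensor-product argument: if $\cA_T(X)$ contained $P \otimes G$, the $k$ copies of $G$ inside the containment witness would sit in bounding boxes $B_1,\ldots,B_k$ whose column/row ordering is order-isomorphic to $P$; by the capture property each $B_i$ holds an input point $p_i$, and the $p_i$ would form an order-isomorphic copy of $P$ inside $X$, contradicting the pattern-avoidance (resp.\ $k$-increasing, $k$-decomposable) hypothesis on $X$.

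For part (i), the capture property is obtained directly from \Cref{prop:hidden}. For \textsc{GreedyRight} with gadget $(1,2)$, suppose we have touches $(a,t_1),(b,t_2)$ with $a<b$ and $t_1<t_2$ and that the rectangle $[a,b]\times[t_1,t_2]$ contains no access. After time $t_1$ we have $\tau(a,t_1)\geq \tau(b,t_1)$ and $a<b$, so by \Cref{prop:hidden}(i), $b$ becomes hidden in $(a,b]$ for \textsc{GreedyRight}; the absence of input inside the bounding box preserves this hidden status through time $t_2$, contradicting $(b,t_2)$ being a touch. The \textsc{GreedyLeft}/$(2,1)$ case is symmetric via \Cref{prop:hidden}(ii). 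For \textsc{Greedy} itself, hiding requires two touched neighbors (\Cref{prop:hidden}(iii)), so the light gadget $\capture$ must be shaped so that inside any one of its copies there is always an intermediate time at which some column $x$ is sandwiched by two columns $w<x<y$ with $\tau(w),\tau(y)\geq\tau(x)$; a small light matrix of appropriate shape (to be defined explicitly and verified case by case) achieves this, after which the same hiding argument closes the case.

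For parts (ii) and (iii), the template is identical but the capture gadget is enlarged to reflect the richer structure being ruled out. In part (ii), I will show via the two-sided hiding argument that a copy of $\dec_{k+1}$ in $\Greedy_T(X)$ cannot have an empty bounding box; iterating this capture over the $k$ tensor copies of $(k,\dots,1)\otimes\dec_{k+1}$ yields a decreasing subsequence of length $k$ in $X$, violating the $k$-increasing assumption (the $k$-decreasing case with $(1,\dots,k)\otimes\inc_{k+1}$ is symmetric). Part (iii) uses the same idea but with the larger gadget $\alt_{k+4}$; the four extra alternating rows/columns are needed to guarantee that the two-sided hiding lemma can be invoked inside \emph{every} tensor copy, so that the captured access points inside the $k+1$ bounding boxes form, together, an order-isomorphic copy of the simple permutation $P$ of size $\geq k+1$ in $X$, contradicting $k$-decomposability via \Cref{lem:connection}.

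The main obstacle will be designing and verifying the larger gadgets, particularly $\capture$ and $\alt_{k+4}$. Unlike the one-sided stairs of \textsc{GreedyLeft}/\textsc{GreedyRight}, where a single past touch suffices to hide a target immediately, \textsc{Greedy}'s symmetric behavior requires synthesizing a simultaneous two-sided sandwich at some intermediate time, and this sandwich must be robust against every schedule Greedy could plausibly produce under the online rule. In part (iii) the capture argument must additionally handle the statement being for $\Greedy(X)$ rather than $\Greedy_T(X)$ — presumably reflecting the preprocessing that eliminates the dependence on the initial tree $T$ — and must interface cleanly with the characterization of simple permutations, which is why the gadget size is bumped by the constant $+4$ rather than matching $k+1$ directly.
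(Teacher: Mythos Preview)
Your high-level template—prove each gadget is a capture gadget, then run a uniform tensor argument—is the paper's approach, and your treatment of part~(i) is essentially the paper's proof. The gap is in parts~(ii) and~(iii): the gadgets $\dec_{k+1}$ and $\alt_{k+4}$ are \emph{not} capture gadgets for \greedy on arbitrary inputs, and the two-sided hiding lemma alone cannot make them so (indeed, \Cref{sec:bad-example2} shows that with initial trees no permutation pattern is a universal capture gadget for \greedy). They become capture gadgets only after the structural hypothesis on $X$ is invoked, and this invocation is \emph{not} the tensor step—it must happen one level earlier, inside the proof that a single copy of the gadget captures.

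The paper inserts an intermediate notion (\Cref{def:gadget}): a \emph{monotone} (resp.\ \emph{alternating}) gadget is one whose access-free bounding box forces $k$ access points \emph{outside} the box, lying to one side in an increasing/decreasing pattern (resp.\ alternating left and right of the box). By iterating \Cref{prop:hidden}(iii) one shows that $\dec_{k+1}$ is a $k$-decreasing gadget and $\alt_{k+1}$ is a $k$-alternating gadget (\Cref{lem:input-revealing}(iii),(v)). Only \emph{then} does the hypothesis on $X$ enter: if $X$ is $k$-increasing, the $k$ revealed access points forming $(k,\dots,1)$ are impossible, so $\dec_{k+1}$ is promoted to a capture gadget; if $X$ is $k$-decomposable and there is no initial tree, a fairly involved block-structure argument (\Cref{lem:input-revealing}(vi))—this is where the ``$+4$'' slack is actually consumed, not in ``making hiding work inside every tensor copy'' as you suggest—shows that the $k+3$ alternating access points cannot coexist, promoting $\alt_{k+4}$ to a capture gadget. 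Your proposal collapses this two-stage argument into one (``hiding $\Rightarrow$ capture''), which is where it would fail if carried out.
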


The application of Lemma~\ref{lem:forbidden} is sufficient to complete all the proofs, together with the observation that, for any permutation matrix $P$ of size $k \times k$, if $G$ is a permutation (resp.\ light) matrix of size $\ell \times \ell$, then $P \otimes G$ is a permutation (resp.\ light) matrix of size $k \ell \times k \ell$.

The rest of this section is devoted to the proof of \Cref{lem:touch avoid}.
We need the definition of the \emph{input-revealing gadget} which is a pattern
whose existence in the touch matrix (where access points are indistinguishable from touch
points) gives some information about the locations of access points.
\begin{definition}
	[Input-revealing gadgets]\label{def:gadget}Let $X$ be an access matrix, $\cA$ be
	an algorithm and $\cA(X)$ be the touch matrix of $\cA$ running on
	$X$. 
	\begin{compactenum}[--]
		\item {\bf Capture Gadget.} A pattern $P$ is a \emph{capture gadget }for algorithm
		$\cA$ if, for any bounding box $B$ of $P$ in $\cA(X)$, there is
		an access point $p\in B$.
		\item {\bf Monotone Gadget.} A pattern $P$ is a \emph{$k$-increasing ($k$-decreasing)
			gadget }for algorithm $\cA$ if, for any bounding box $B$ of $P$
		in $\cA(X)$, either (i) there is an access point $p\in B$ or, (ii)
		there are $k$ access points $p_{1},\dots,p_{k}$ such that $B.\ymin \le p_{1}.y<\dots<p_{k}.y\le B.\ymax$
		and $p_{1}.x<\dots<p_{k}.x<B.\xmin$ \emph{(}$p_{1}.x>\dots>p_{k}.x>B.\xmax$\emph{)}.
		In words, $p_{1},\dots,p_{k}$ form the pattern $(1,\dots,k)$ (resp.\ $(k,\dots,1)$).
		\item {\bf Alternating Gadget.} A pattern $P$ is a \emph{$k$-alternating gadget
		}for algorithm $\cA$ if, for any bounding box $B$ of $P$ in $\cA(X)$,
		either (i) there is an access point $p\in B$ or, (ii) there are $k$
		access points $p_{1},\dots,p_{k}$ such that $B.\ymin \le p_{1}.y<\dots<p_{k}.y\le B.\ymax$
		and $B.\xmax<p_{i}.x$ for all odd $i$ and $p_{i}.x<B.\xmin$ for all
		even $i$.
	\end{compactenum}
\end{definition}

Notice that if we had a permutation capture gadget $G$ for algorithm $\aset$, we would be done: If $X$ avoids pattern $Q$, then $\aset_T(X)$ avoids $Q \otimes G$ and forbidden submatrix theory applies. With arbitrary initial tree $T$, it is impossible to construct a permutation capture gadget for \greedy even for $2$-decomposable sequences (see Appendix~\ref{sec:bad-example2}). Instead, we define a \emph{light} capture gadget for \greedy for any sequence $X$ and any initial tree $T$. Additionally, we construct permutation monotone and alternating gadgets. After a preprocessing step, we can use the alternating gadget to construct a permutation capture gadget for $k$-decomposable sequences.

\begin{lemma}
\label{lem:input-revealing}
\begin{compactenum}[(i)]
		\item $(12)$ is a capture gadget for \greedyright and $(21)$ is a capture
		gadget for \greedyleft.
		\item $\capture$ is a capture gadget for \greedy where $\capture=\left(\begin{smallmatrix}
		& \bullet\\
		\bullet &  & \bullet
		\end{smallmatrix}\right)$.
		\item $\inc_{k+1}$ ($\dec_{k+1})$ is a $k$-increasing ($k$-decreasing)
		gadget for \greedy where $\inc_{k}=(k,\dots,1)$ and \textup{$\dec_{k}=(1,2,\dots,k)$}. 
		\item $\dec_{k+1}$ ($\inc_{k+1})$ is a capture gadget for \greedy that
		runs on $k$-increasing $(k$-decreasing$)$ sequence.
		\item $\alt_{k+1}$ is a $k$-alternating gadget for \greedy where $\alt_{k}=(\left\lfloor (k+1)/2\right\rfloor,k,1,k-1,2,\dots)$.
		\item $\alt_{k+4}$ is a capture gadget for \greedy that runs on $k$-decomposable
		sequence without initial tree.
	\end{compactenum}
\end{lemma}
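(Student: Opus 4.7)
The plan is to prove all six parts by a common blueprint: given a witness submatrix realizing the claimed pattern in $\cA(X)$, assume the conclusion fails and combine the hidden-range lemma (\Cref{prop:hidden}) with the stair definition to reach a contradiction via \emph{stair-blocking}, i.e., by producing a point that lies inside the stair rectangle of a touch that is supposed to be empty.

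Part (i) is obtained by a minimality argument. Take two \greedyright-touch points $(x_1,t_1),(x_2,t_2)$ with $x_1<x_2$ and $t_1<t_2$ realizing $(1,2)$, and choose $t_2$ minimal subject to ``no access lies in $[x_1,x_2]\times[t_1,t_2]$''. Then $a_{t_2}\le x_2$ (for \greedyright) together with $a_{t_2}\notin[x_1,x_2]$ gives $a_{t_2}<x_1$, while minimality ensures $x_2$ is not touched in $(t_1,t_2)$, so $\tau(x_2,t_2-1)\le t_1\le\tau(x_1,t_2-1)$. Hence $(x_1,\tau(x_1,t_2-1))$ lies in the stair rectangle that must be empty for $x_2\in\stairright_{t_2}(a_{t_2})$---contradiction. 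The $(2,1)$/\greedyleft case is symmetric. Part (ii) is immediate from \Cref{prop:hidden}(iii): for a witness $(c_1,t_1),(c_3,t_1),(c_2,t_2)$ with $c_1<c_2<c_3$ and $t_1<t_2$, the element $c_2$ becomes hidden in $(c_1,c_3)$ after $t_1$ for \greedy, so its later touch at $t_2$ forces an access in $(c_1,c_3)\times(t_1,t_2]\subseteq B$.

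Part (iii) is the main technical step. For $\inc_{k+1}$, write a witness $(d_1,r_1),\ldots,(d_{k+1},r_{k+1})$ with $d_1>\cdots>d_{k+1}$ and $r_1<\cdots<r_{k+1}$, and refine it so that each $r_j$ is the first touch of $d_j$ strictly after $r_{j-1}$; set $q_j:=a_{r_j}$. A first blocking step uses $\tau(d_{j-1},r_j-1)\ge r_{j-1}>\tau(d_j,r_j-1)$: the point $(d_{j-1},\tau(d_{j-1},r_j-1))$ would satisfy the stair rectangle of $d_j$ at time $r_j$ unless $q_j<d_{j-1}$; assuming no access in $B$ this further gives $q_j<d_{k+1}=B.\xmin$ for each $j\ge 2$. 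A second blocking step shows $q_{j-1}<q_j$: otherwise $q_{j-1}\in(q_j,d_j]$ is itself a touch at row $\ge r_{j-1}>\tau(d_j,r_j-1)$ which satisfies the same stair rectangle. Since $X$ is a permutation, the $q_j$'s are pairwise distinct, so $q_2<\cdots<q_{k+1}$ with increasing rows $r_2<\cdots<r_{k+1}$ give $k$ accesses realizing option (ii) of a $k$-increasing gadget. The $\dec_{k+1}$ case is symmetric, and part (v) follows the same template with \Cref{prop:hidden}(iii): the interior witnesses of $\alt_{k+1}$ are two-sidedly hidden by their nearest left and right neighbours, so the unhiding accesses are forced to alternate between the two flanks of $B$, and the analogous blocking argument on each flank produces the alternating chain demanded by option (ii).

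Parts (iv) and (vi) follow by combining (iii)/(v) with pattern avoidance. In (iv), a $k$-increasing $X$ avoids $(k,\ldots,1)$, so option (ii) of the $k$-decreasing gadget $\dec_{k+1}$ would exhibit a forbidden $(k,\ldots,1)$ in $X$; hence option (i) must hold and $\dec_{k+1}$ is a capture gadget. The $k$-decreasing case with $\inc_{k+1}$ is symmetric. For (vi), preprocessing eliminates the initial tree so that $\cA(X)$ has no initial-tree artifacts, and the $(k+4)$-alternating accesses from option (ii) of $\alt_{k+4}$ can be extracted as a simple sub-permutation of $X$ of length strictly greater than $k$, which is forbidden for a $k$-decomposable $X$ by \Cref{lem:connection}; hence option (i) holds. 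The main obstacle is the bookkeeping in (iii) and (v): one must choose the witness carefully so that the $\tau$-comparisons line up for the stair-blocking arguments, and must verify distinctness of the produced accesses (using the permutation assumption); in (vi), the ``$+4$'' slack is needed to guarantee that the extracted alternating sub-pattern is genuinely simple of length $>k$, rather than degenerating through boundary effects at the corners of $B$.
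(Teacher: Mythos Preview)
Your treatment of parts (i)--(v) is essentially in line with the paper's own arguments (the paper uses one-sided hiding via \Cref{prop:hidden}(i)/(ii) rather than (iii) for part~(v), which is slightly simpler since $q_1$ and $q_2$ are extremal and only one-sidedly hidden, but your two-sided variant can be patched). The genuine problem is part~(vi).

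Your plan for (vi) is to use part~(v) to extract $k+3$ alternating access points and then argue that this alternating sub-permutation of $X$ must contain a simple permutation of size exceeding $k$, contradicting $k$-decomposability via \Cref{lem:connection}. This step fails: an alternating pattern (odd-time accesses right of $B$, even-time accesses left of $B$) places \emph{no} constraint on the relative order within each side, and such patterns can be $2$-decomposable regardless of length. Concretely, for $k=3$ the six alternating accesses could be order-isomorphic to $(4,3,5,2,6,1)$, which decomposes as iterated $2$-blocks (first contract positions $1,2$ with values $\{3,4\}$, then repeat) and hence avoids every simple permutation of size $\ge 4$. So no simple sub-permutation of size $>k$ can be extracted, and adding more slack (``$+4$'' or ``$+100$'') does not help---the obstruction is structural, not a boundary effect.

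The paper's proof of~(vi) is accordingly much more delicate and does not go through pattern avoidance of $X$ at all. It takes the smallest block $P^*$ in the decomposition tree of $X$ that contains alternating accesses on both sides of $B$, uses the ``no initial tree'' hypothesis to force the bottom of $B$ inside $P^*$, and then shows via a sequence of geometric emptiness lemmas that (a) $P^*$ can contain at most the lowest $k+1$ points of $\alt_{k+4}$ (because each of its $\le k$ sub-blocks can host at most one of the $p_i$), and (b) above $P^*$ at most two further columns inside $B$ can ever be touched. This yields at most $k+3$ realized points of $\alt_{k+4}$, one short of the $k+4$ required. You will need this kind of direct interaction between the block structure and the \greedy\ dynamics; the purely combinatorial extraction you propose cannot close the gap.
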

For example, 
$
\alt_{5}=\left(\begin{smallmatrix}
& \bullet\\
&  &  & \bullet\\
\bullet\\
&  &  &  & \bullet\\
&  & \bullet
\end{smallmatrix}\right)
$. 
We emphasize that only result (vi) requires a preprocessing. 

Given Lemma~\ref{lem:input-revealing}, and the fact that $k$-increasing, $k$-decreasing, and $k$-decomposable sequences avoid 
$(k,\dots,1)$, $(1,\dots,k)$, resp.\ all simple permutations of size at least $k+1$, \Cref{lem:touch avoid} follows immediately.

\begin{proof}
We only show the proofs for (i) and (ii) in order to convey the flavor of the arguments. 
The other proofs are more involved. See \Cref{sub:k inc seq} for proofs of (iii),(iv), and \Cref{sub:k dec seq} for proofs of (v),(vi).

	(i) We only prove for \GreedyR. Let $a,b$ be touch points
	in $\GreedyR_{T}(X)$ such that $a.y<b.y$ and form $(1,2)$ with
	a bounding box $B$. As $a.y=\tau(a.x,a.y)\ge\tau(b.x,a.y)$, by \Cref{prop:hidden}~(i),
	$b.x$ is hidden in $(a.x,b.x]$ after $a.y$. 
	Suppose	that there is no access point in $B = \square_{ab}$, then $b$ cannot be touched, which is a contradiction. 
	
	(ii) Let $a,b,c$ be touch points in $\Greedy_{T}(X)$ such that $a.x<b.x<c.x$
	and $a,b,c$ form $\capture$ with bounding box $B$. As $\tau(a.x,a.y),\tau(c.x,a.y)\ge\tau(b.x,a.y)$, by \Cref{prop:hidden}~(iii), 
	$b.x$ is hidden in $(a.x,c.x)$ after $a.y$.
	Suppose that there is no access point in $B$, then $b$ cannot be touched, which is a contradiction. 
\end{proof}

\section{Decomposition theorem and applications}
\label{sec:dec-short}
In this section, we take a deeper look at decomposable permutations and investigate whether \greedy is ``decomposable'', in the sense that one can bound the cost of \greedy on input sequences by summing over local costs on blocks. Proving such a property for \greedy seems to be prohibitively difficult (we illustrate this in Appendix~\ref{sec:bad-example1}). Therefore, we propose a more ``robust'' but offline variant of \greedy which turns out to be decomposable. We briefly describe the new algorithm, and state the main theorems and three applications, leaving the details to Appendix~\ref{sec:decomposition}. 

\vspace{-0.06in}
\paragraph*{Recursive decomposition.}

We refer the reader to \S\,\ref{sec:results} for the definitions of blocks and recursive decomposition. Let $S$ be a $k$-decomposable permutation access sequence, and let $\tset$ be its (not necessarily unique) decomposition tree (of arity at most $k$).
Let $P$ be a block in the decomposition tree $\tset$ of $S$. Consider the subblocks $P_1, P_2, \dots, P_k$ of $P$, and let $\tilde{P}$ denote the unique permutation of size $[k]$ that is order-isomorphic to a sequence of arbitrary representative elements from the blocks $P_1, P_2, \dots, P_k$. We denote $P = \tilde{P}[P_1, P_2, \dots, P_k]$ a \emph{deflation} of $P$, and we call $\tilde{P}$ the \emph{skeleton} of $P$. Visually, $\tilde{P}$ is obtained from $P$ by contracting each block $P_i$ to a single point. We associate with each block $P$ (i.e.\ each node in $\tset$) both the corresponding rectangular region (in fact, a square), and the corresponding skeleton $\tilde{P}$ (Figure~\ref{fig_intro}). We denote the set of leaves and non-leaves of the tree $\tset$ by $L(\tset)$ and $N(\tset)$ respectively. 

Consider the $k$ subblocks $P_1,\ldots,P_k$ of a block $P$ in this decomposition, numbered by increasing $y$-coordinate. We partition the rectangular region of $P$ into $k^2$ rectangular regions, whose boundaries are aligned with those of $P_1,\ldots,P_k$. We index these regions as $R(i,j)$ with $i$ going left to right and $j$ going bottom to top. In particular, the region $R(i,j)$ is aligned (same $y$-coordinates) with $P_j$.

\vspace{-0.06in}
\paragraph*{A robust version of \greedy.}
The main difficulty in analyzing the cost of \greedy on decomposable access sequences is the ``interference'' between different blocks of the decomposition. Our \agreedy algorithm is an extension of \greedy. Locally it touches at least those points that \greedy would touch; additionally it may touch other points, depending on the recursive decomposition of the access sequence (this a priori knowledge of the decomposition tree is what makes \agreedy an offline algorithm). The purpose of touching extra points is to ``seal off'' the blocks, limiting their effect to the outside.

We now define the concept of \emph{topwing}. The \textit{topwing} of a rectangular region $R$ at a certain time in the execution of \agreedy contains a subset of the points in $R$ (at most one in each column). A point is in the topwing if it forms an empty rectangle with the top left or the top right corner of $R$ (this definition includes the topmost touch point of the leftmost and rightmost columns of $R$, if such points exist in $R$). 
When accessing an element at time $t$, \agreedy touches the projections to the timeline $t$ of the topwings of all regions $R(i,j)$ aligned with a block $P_j$ ending at time $t$ (this includes the region of $P_j$ itself). Observe that multiple nested blocks might end at the same time, in this case we process them in increasing order of their size. 
See Figure~\ref{fig_rgreedy} for an illustration of \agreedy.

\begin{figure}[h]
\centering
\includegraphics[scale=1.4]{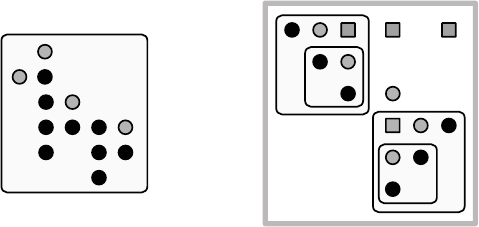}
\caption{Illustration of \agreedy. Left: region with points and \textit{topwing} highlighted. Right: a sample run of \agreedy on a $2$-decomposable input. Access points are dark circles, points touched by \greedy are gray circles, points touched by the \agreedy augmentation are gray squares. The access point in the topmost line completes the black square and also the enclosing gray square. The topwing of the black square consists of the black circle at position (1,6) and the gray circle at (3,5). Therefore the augmentation of the black square adds the gray square at (3,6). The topwing of $R(2,2)$ consists of the gray circle at (4,4) and hence the gray square at (4,6) is added. The topwing of the gray square consists of points (1,6) and (6,3); the point (4,4) does not belong to the topwing because (4,6) has already been added. Therefore, the gray square at (6,6) is added.}
\label{fig_rgreedy}
\end{figure}

\begin{theorem}[Decomposition theorem, Theorem~\ref{lem: main}]
\label{thm:main-sh} 

For any decomposition tree $\tset$ of a permutation $P$, the total cost of \agreedy is bounded as 
\vspace{-0.05in}
\[ \textsc{RGreedy}(P) \leq 4 \cdot \sum_{\tilde P \in N(\tset) }  \textsc{Greedy}(\tilde P) + \sum_{P \in L(\tset) } \textsc{Greedy}(P)+ 3n.\]
\end{theorem}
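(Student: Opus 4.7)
The plan is to establish the bound by induction on the decomposition tree $\tset$. For each internal node $P$ of $\tset$ with children $P_1, \ldots, P_k$ and skeleton $\tilde P$, I will argue that the cost of \agreedy splits into (a) an ``internal'' cost charged to each subblock $P_j$, handled recursively, and (b) a ``skeleton-level'' cost contributed by $P$ itself. By induction, internal cost in each subblock is absorbed into the term for the subtree rooted at $P_j$. It therefore suffices to show that the skeleton-level cost at node $P$ is at most $4 \cdot \textsc{Greedy}(\tilde P)$, with a total additive $O(n)$ accumulated across the whole tree.

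First I would partition \agreedy's touches at any time $t$ into (i) those coming from the base \greedy rule inside the innermost block containing the current access point, and (ii) the augmentation touches added when one or more subblocks $P_j$ end at time $t$ and topwing projections are placed on each aligned region $R(i,j)$. Touches of type (i), restricted to a leaf $P \in L(\tset)$, are charged directly to $\textsc{Greedy}(P)$; this step only requires checking that \agreedy restricted to a singleton leaf coincides with \greedy on that single access.

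The heart of the proof is handling the augmentation touches of type (ii) at an internal node $P$. The key claim is that the collection of topwing-projection touches added during the execution inside $P$'s region can be placed in correspondence with at most $4 \cdot \textsc{Greedy}(\tilde P)$ touches. To prove this, I would track, for each closing subblock $P_j$, which columns receive a new topwing-projected touch inside region $R(i,j)$, and show that a new projection in row $t$ restricted to the columns of $R(i,j)$ appears precisely when the ``$(i,j)$-point'' of the skeleton $\tilde P$ would be touched by \greedy run on $\tilde P$. The factor of $4$ accounts for the two top corners of $R(i,j)$ (top-left and top-right) that may each contribute, times two because the corresponding skeleton touch can be the access itself or a \greedy-induced touch. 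Combined with the sealing guarantee that makes these the only columns of $R(i,j)$ that can ever be touched again from outside, this gives the $4 \cdot \textsc{Greedy}(\tilde P)$ bound cleanly.

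The third ingredient is a \emph{sealing lemma}: once a subblock $P_j$ is sealed by the topwing projections added when $P_j$ ends, any future access in a sibling region $R(i',j')$ only interacts with $P_j$ through the sealed top row, which means points inside $P_j$ that are hidden (in the sense of Lemma~\ref{prop:hidden}) remain hidden. This justifies applying the induction hypothesis to each $P_i$ independently and then summing. The additive $3n$ is the sum of three $O(n)$ contributions: the $n$ access points themselves, boundary topwing points along the outermost region, and the one-off \greedy touches at the root level not absorbed into any deeper skeleton. The main obstacle I foresee is making the skeleton correspondence rigorous: topwings are defined relative to the running state of \agreedy, which already contains many augmentation touches, whereas $\textsc{Greedy}(\tilde P)$ is a static object; showing that the augmentation touches do not fabricate extra topwing points and that the charge to $\textsc{Greedy}(\tilde P)$ is accurate (and not dependent on the specific deflation chosen) is the technical crux, and will rely crucially on the empty-rectangle property in the definition of topwings.
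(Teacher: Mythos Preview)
Your partition of \agreedy's touches into (i) base \greedy touches inside the innermost block and (ii) augmentation touches misses a whole category of touches, and this is where the argument breaks. When \agreedy accesses a point inside a child block $P_j$, the \emph{base} \greedy step is not confined to $P_j$: it can and does touch points in sibling regions $R(i,j)$ of the parent (and of ancestors), outside every leaf block. These touches are not augmentation touches, yet they are exactly the skeleton-level cost you need to bound. Your proposal only charges augmentation (topwing-projection) touches to $\textsc{Greedy}(\tilde P)$, so at best you account for half of the factor $4$ and none of the residual $O(n)$ ``junk''.

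The paper does not proceed by induction on $\tset$; it does a direct global accounting. For a non-leaf node $\tilde P$ with children $P_1,\dots,P_k$, the points of $Y$ lying in $\tilde P$ but outside all $P_j$ are split into three parts: $Y^1$ (points in the \emph{first} row of some $R(i,j)$, produced by base \greedy when the first element of $P_j$ is accessed), $Y^3$ (points in the \emph{last} row, produced by the augmentation step), and $Y^2$ (points in intermediate rows, produced by base \greedy when non-first elements of $P_j$ are accessed). The bounds $|Y^1|\le 2\,\textsc{Greedy}(\tilde P)$ and $|Y^3|\le 2\,\textsc{Greedy}(\tilde P)+|\topwing(\tilde P)|$ come from two lemmas: one showing that after augmentation only the leftmost and rightmost columns of each $R(i,j)$ can be touched (your ``sealing'' idea), and a correspondence lemma stating that $R(i,j)$ is touched by \agreedy iff $(i,j)$ is touched by \greedy on the contracted instance $\tilde P$. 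The factor $4$ is $2+2$ from $Y^1$ and $Y^3$, not ``two corners times two'' as you suggest. The $Y^2$ term is handled by an entirely separate global argument: for each access point $(x,t_x)$, take the smallest block $P^*$ in which it is not the first access, and show that at time $t_x$ base \greedy touches at most two points outside $P^*$; this yields $\sum_{\tilde P}|Y^2_{\tilde P}|\le 2n$. None of this appears in your outline, and it is not recoverable from the sealing lemma alone, since $Y^1$ and $Y^2$ are generated \emph{before} the relevant sealing step fires.
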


\paragraph{Application 1: Cole et al.'s ``showcase'' sequences.} 
This sequence can be defined as a permutation $P = \tilde P[P_1,\ldots, P_{n/\log n}]$ where each $P_j$ is the permutation $(1,\ldots, \log n)$, i.e.\ a sequential access. 
It is known that \greedy is linear on sequential access, so $\textsc{Greedy}(P_j)  = O( \log n)$.  
Now, by applying Theorem~\ref{thm:main-sh}, we can say that the cost of \agreedy on $P$ is $\textsc{RGreedy}(P) \leq 4 \cdot \textsc{Greedy}(\tilde P) + \sum_{j} \textsc{Greedy}(P_j) + O(n) \leq O(\frac{n}{\log n} \cdot \log (\frac{n}{\log n})) + \frac{n}{\log n}O(\log n) + O(n) = O(n)$. 
This implies the linear optima of this class of sequences.  

\paragraph{Application 2: Simple permutations as the ``core'' difficulty of BST inputs.}  

We prove that, if \greedy is $c$-competitive on simple permutations, then \agreedy is $O(c)$-approximate for all permutations. 

To prove this, we consider a block decomposition tree $\tset$ in which the permutation corresponding to every node is simple. Using Theorem~\ref{thm:main-sh} and the hypothesis that \greedy is $c$-competitive for simple permutations:

\[\textsc{RGreedy}(P) \leq 4c \cdot \sum_{\tilde P \in  N(\tset)} \opt(\tilde P) + c \cdot \sum_{P \in L(\tset)} \opt(P) +3n.\]

Now we decompose $\opt$ into subproblems the same way as done for \agreedy. We prove in \Cref{lem:decomp opt lower-bound} that $\opt(P) \geq \sum_{P \in \tset } \opt(P) - 2n$. Combined with Theorem~\ref{thm:main-sh} this yields $\textsc{RGreedy}(P) \leq 4c \cdot \opt(P) + (8c + 3)n$.  Using Theorem~\ref{thm:perm-all} we can extend this statement to arbitrary access sequences.

\paragraph{Application 3: Recursively decomposable permutations.} 
Let $X$ be a $k$-decomposable permutation. 
Then there is a decomposition tree $\tset$ of $X$ in which each node is a permutation of size at most $k$.
Each block $P \in L(\tset) \cup N(\tset)$ has $\textsc{Greedy}(P) = O(|P| \log |P|)  = O(|P| \log k)$.
By standard charging arguments, the sum $\sum_{P \in N(\tset) \cup L(\tset)} |P| = O(n)$, so the total cost of {\sc RGreedy} is $O(n \log k)$; see the full proof in Appendix~\ref{sec:decomposition}.  

We remark that the logarithmic dependence on $k$ is optimal (Proof in Appendix~\ref{sec:tight-block}), and that \greedy asymptotically matches this bound when $k$ is constant (Theorem~\ref{thm:main-noinitial}).

\newpage 
\section{Discussion and open questions} 

\label{sec:conclusions}

Besides the long-standing open question of resolving dynamic optimality, our work raises several new ones in the context of pattern avoidance. We list some that are most interesting to us.  

Can the bound of Theorem~\ref{thm:pattern-avoidance} be improved? While our input-revealing techniques are unlikely to yield a linear bound (Appendix~\ref{sec:bad-example2}), a slight improvement could come from strengthening Lemma~\ref{lem:forbidden}(ii) for the special kind of light matrices that arise in our proof.

An intriguing question is whether there is a natural characterization of all sequences with linear optimum. How big is the overlap between ``easy inputs'' and inputs with pattern avoidance properties? We have shown that avoidance of a small pattern makes sequences easy.  
The converse does not hold: There is a permutation $X \in S_n$ with $k(X) = \sqrt{n}$ but for which $\textsc{Greedy}(X) = O(n)$; see Appendix~\ref{sec:comparisons}.
Note that our pattern-avoiding properties are incomparable with the earlier parametrizations (e.g.\ dynamic finger); see Appendix~\ref{sec:comparisons}.  
Is there a parameter that subsumes both pattern-avoidance and dynamic finger? 

A question directly related to our work is to close the gap between $\opt = O(n \log k)$ and $n 2^{O(k^2)}$ by \greedy on $k$-decomposable sequences (when $k = \omega(1)$). Matching the optimum (if at all possible) likely requires novel techniques: splay is not even known to be linear on preorder sequences with preprocessing, and with forbidden-submatrix-arguments it seems impossible to obtain bounds beyond\footnote{An $n$-by-$n$ matrix avoiding \emph{all} permutations of size at least $k$ can contain $\Omega(nk)$ ones.} $O(n k)$. 

We proved that if {\sc Greedy} is optimal on simple permutations, then {\sc RGreedy} is optimal on all access sequences. Can some property of simple permutations\footnote{Note that simple permutations can have linear cost. In Appendix~\ref{sec:comparisons} we show such an example.} be algorithmically exploited? 
Can {\sc RGreedy} be made online? 
Can our application for Cole's showcase sequence be extended in order to prove the dynamic finger property for {\sc Greedy}? 

Finally, making further progress on the traversal conjecture will likely require novel ideas. 
We propose a simpler question that captures the barrier for three famous conjectures: {\em traversal}, {\em deque}, and {\em split} (Appendix~\ref{sec:path-all}). 
Given any initial tree $T$, access a preorder sequence defined by a path $P$ (a BST where each non-leaf node has a single child). 
Prove a linear bound for your favorite online BST!

\newpage 
\bibliographystyle{plain}
\bibliography{ref}

\newpage 
\appendix 
\clearpage

\section{Warmup: A direct proof for preorder sequences}
\label{sec:warmup}

In this section we show that the cost of \greedy on an arbitrary preorder access sequence is linear. In other words, we give a direct proof of Corollary~\ref{cor:trav2}. We assume the preprocessing discussed in \S\,\ref{sec:prelim-short}, or equivalently, in geometric view we assume that \greedy runs with no initial tree\footnote{Curiously, in the case of \greedy and preorder sequences this has the same effect as if the initial tree would be exactly the tree that generated the input sequence.}, i.e.\ initially all stairs are empty.
The proof is intended as a warmup, to develop some intuition about the geometric model and to familiarize with the concepts of \emph{wings} and \emph{hidden elements} used in the subsequent proofs. We remark that an alternative proof of the same result (up to the same constants) can be obtained using forbidden submatrix arguments. 

\paragraph{Preorder sequences.} The class ${\sf PreOrd}_n$ of preorder sequences of length $n$ can be defined recursively as follows. A permutation sequence $X = (x_1, \dots, x_n)$ belongs to ${\sf PreOrd}_n$ if (i)$|X| = n = 1$ or (ii) there is a number $n' < n$, such that  
\begin{itemize} 
 \item $L(X) = (x_2,\ldots, x_{n'}) \in \sf{PreOrd}_{n'-1}$, 
 \item $R(X) = (x_{n'+1},\ldots, x_n) \in \sf{PreOrd}_{n - n'}$,
 \item $ \max{\{L(X)\}}  < x_1 < \min{\{R(X)\}}$.
\end{itemize}

We remark again the well-known fact that preorder sequences are exactly the $(2,3,1)$-free permutations, and the easily verifiable fact that preorder sequences are 2-decomposable (see Figure~\ref{fig:preorder-seq} for an illustration).  
\begin{figure}[h]
\centering
\includegraphics[scale=1.4]{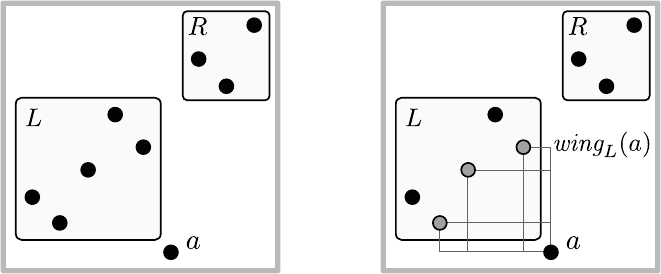}
\caption{Preorder sequence with sets $L$ and $R$, with $\textit{wing}_L(a)$  highlighted. Observe that point $a$ together with $L$ can form a single block.}
\label{fig:preorder-seq}
\end{figure}

We prove the following theorem.

\begin{theorem}~\label{thm:greedy-traversal}
Let $X \in {\sf PreOrd}_n$. Then $\mbox{{\sc Greedy}}(X) \leq 4n$.
\end{theorem}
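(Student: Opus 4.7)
My plan is to prove the theorem by induction on $n$, exploiting the recursive decomposition $X = (x_1, L, R)$, where $L$ and $R$ are preorder subsequences on key ranges $[1, x_1 - 1]$ and $[x_1 + 1, n]$, of lengths $n_L$ and $n_R$. The base case $n = 1$ is immediate. For the inductive step, at time $1$ only $(x_1, 1)$ is touched, and the remaining execution splits into an ``$L$-phase'' at times $[2, n_L + 1]$ and an ``$R$-phase'' at times $[n_L + 2, n]$.

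The central structural observation is an \emph{isolation property}: any touch $(b, t)$ landing in columns of $L$ during the $L$-phase is also produced by standalone \greedy on $L$ with no initial tree. The reason is that for any access $l_t \in L$ and stair candidate $b \in L$, the rectangle $\square_{(l_t, t), (b, \tau(b, t-1))}$ is confined to $L$-columns and is therefore unaffected by touched points in the $x_1$-column. A parallel argument shows that touches in $R$-columns during the $R$-phase match standalone \greedy on $R$. By induction, these in-$L$ and in-$R$ touches together contribute at most $4(n_L + n_R) = 4(n - 1)$. Additionally, via Lemma~\ref{prop:hidden}(ii), every $L$-element $b$ whose last $L$-phase touch satisfies $\tau(b, n_L + 1) \leq \tau(x_1, n_L + 1)$ is hidden in $[1, x_1)$ and is never touched during the $R$-phase, since $R$-accesses all lie in $(x_1, n]$.

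The main obstacle is bounding the remaining ``boundary'' touches --- re-touches of $x_1$ in either phase, and touches on $L$-elements whose last $L$-phase touch is more recent than $\tau(x_1)$ --- so that they fit within the remaining slack of $3$ units (after accounting for the $1$ touch at time $1$ and $4(n-1)$ touches from the two isolated phases). Using Lemma~\ref{prop:hidden}(ii), I would show that $x_1$ is re-touched in the $L$-phase only when the current access $l_t$ exceeds every $L$-element touched since the previous $x_1$-touch; by the preorder structure of $L$, these events coincide precisely with accesses of elements on the right spine of $L$'s associated tree, and a symmetric statement holds in the $R$-phase via the left spine of $R$. A naive bound on these spine charges is linear in the sub-instance size and therefore too weak, so I would strengthen the induction invariant to the form $\textsc{Greedy}(X) + \gamma(X) \leq 4n$ for a suitable structural potential $\gamma(X)$ tracking spine lengths, and verify that $\gamma(L) + \gamma(R) + (\text{boundary charges}) \leq \gamma(X) + 3$ at each step so that the boundary costs telescope across the recursion and yield the desired $4n$ bound.
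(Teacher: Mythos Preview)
Your isolation property and the identification of $x_1$-re-touches with spine accesses are both correct; the latter is essentially the paper's observation that $\wing_L(x_1)$ coincides with the right spine of $L$'s tree. The genuine gap lies in the term you flag as ``touches on $L$-elements whose last $L$-phase touch is more recent than $\tau(x_1)$''. This term is not controlled by spines and can be as large as $n_L - 1$. Take $L = (n_L, n_L{-}1, \ldots, 1)$, so that $L$'s tree is a left path with right spine $\{n_L\}$. Then $\tau(x_1)$ freezes at the time $n_L$ is accessed, while a direct calculation shows that at the end of the $L$-phase one has $\tau(n_L) < \tau(n_L{-}1) < \cdots < \tau(2)$, all strictly above $\tau(x_1)$; hence at the first $R$-access every one of $2,\ldots,n_L$ lies on the stair and is touched. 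Writing $\rho(\cdot),\lambda(\cdot)$ for right- and left-spine lengths and taking the natural potential $\gamma = \rho + \lambda$, one gets $\gamma(L)+\gamma(R)-\gamma(X) = \rho(L)+\lambda(R)-2$, so the inequality you need reads $(\text{boundary}) \le 1 + \rho(L)+\lambda(R)$; but the boundary already contains the $\rho(L)+\lambda(R)$ touches of $x_1$ plus the $n_L-1$ touches just exhibited, with $\rho(L)=1$ here. No potential built from spine data alone can absorb a cost that scales with $|L|$ rather than with spine lengths, and you have not indicated what alternative potential would make the recursion close.

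The paper avoids this entirely by counting globally, per element rather than per level of the decomposition. For each $a$ it defines $\wing_L(a), \wing_R(a)$ as the access points in $L_a, R_a$ forming empty rectangles with $(a,t_a)$, proves that the families $\{\wing_L(a)\}_a$ and $\{\wing_R(a)\}_a$ are each pairwise disjoint (hence $\sum_a(|\wing_L(a)|+|\wing_R(a)|) \le 2n$), and shows that $a$ is touched at most $|\wing_L(a)|+|\wing_R(a)|$ times while $\tset_a$ is being accessed and at most once afterwards. In that accounting your problematic $L$-during-$R$ touches are precisely the single ``after $\tset_a$'' touch of each $a\in L$, charged to $a$ itself; over the whole sequence these sum to at most $n$ and never need to be absorbed at a single step of a recursion.
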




\subsection{Wings and hidden elements}
\label{sec:wing-hidden}

  
Let $X \in {\sf PreOrd_n}$. For an arbitrary element $a \in X$, let $t_a$ denote its access time. Let $L_a$ denote the maximal contiguous subsequence of $X$ immediately after $a$, that consists of elements smaller than $a$. Let $r_a \in X$ be the smallest element that is larger than $a$ and precedes $a$ in $X$. By convention, $r_a$ is $\infty$ if there are no elements larger than $a$ that precede $a$ in $X$. Let $R_a$ denote the maximal contiguous subsequence after $L_a$ of elements larger than $a$ and smaller than $r_a$.


Let $\tset_a$ be the concatenation of $L_a$ and $R_a$. Let $\textit{lca}(b,c)$ denote the last element $a$, such that $b,c \in \tset_a$.

We define $wing_L(a)$ to contain all elements $b \in L_a$ such that the rectangle with corners $(b, t_b)$, $(a,t_a)$ contains no other access point. Symmetrically, we define $wing_R(a)$ to contain all elements $b \in R_a$ such that the rectangle with corners $(b, t_b)$, $(a,t_a)$ contains no other access point (Figure~\ref{fig:preorder-seq}).

\begin{lemma} 
The sets $\set{ wing_L(a)}_{a \in [n]}$ are disjoint.  
Similarly, the sets $\set{wing_R(a)}_{a \in [n]}$ are disjoint.  
\end{lemma}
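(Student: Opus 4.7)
The plan is to argue both statements by contradiction; I will spell out the $wing_L$ case in detail and then point out the one place where the $wing_R$ case genuinely differs. For the family $\set{wing_L(a)}_{a \in [n]}$, suppose some element $b$ lies in both $wing_L(a)$ and $wing_L(a')$ with $a \neq a'$. Unpacking the definitions of $L_{\cdot}$ and $wing_L(\cdot)$ gives $b < a$, $b < a'$, $t_b > t_a$, and $t_b > t_{a'}$, so without loss of generality $t_a < t_{a'} < t_b$. Thus the access point $(a', t_{a'})$ lies in the horizontal strip $(t_a, t_b)$ of the wing rectangle $\square_{(b, t_b),(a,t_a)}$.

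I will then split on the $x$-coordinate of $a'$. If $b < a' < a$, then $(a', t_{a'})$ lies strictly inside the wing rectangle of $b \in wing_L(a)$, contradicting its defining emptiness. If $a' > a$, I invoke the preorder structure directly: by definition of $L_a$, every element accessed strictly between $a$ and $b$ in time is smaller than $a$, yet $a'$ is such an element with $a' > a$, a contradiction. If $a' < b$, then $a' < b$ directly contradicts $b < a'$, which follows from $b \in L_{a'}$. The boundary values $a' = a$ and $a' = b$ are ruled out by $a \neq a'$ and by the strict ordering of the three times, respectively.

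For $\set{wing_R(a)}_{a \in [n]}$ the plan is symmetric, and two of the three sub-cases transfer verbatim after swapping left with right. The single sub-case that needs a new ingredient is the analogue of the middle one: when $t_a < t_{a'} < t_b$ and $a' < a$, the wing rectangle $\square_{(a,t_a),(b,t_b)}$ does not contain $(a', t_{a'})$, so the rectangle-emptiness trick gives nothing. Here I will exploit the cap $r_a$ built into the definition of $R_a$. Because $a' < a$ and $t_{a'} \in (t_a, t_b)$, and because $R_a$ contains only elements larger than $a$, the element $a'$ must lie in the block $L_a$; in particular $a$ precedes $a'$ in $X$ with $a > a'$, so by the very definition of $r_{a'}$ we get $r_{a'} \le a$. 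But $b \in R_{a'}$ forces $b < r_{a'} \le a$, contradicting $b > a$ which follows from $b \in R_a$.

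The hardest point is exactly this last case: the pure ``rectangle emptiness'' reasoning that sufficed everywhere else breaks, and the argument essentially forces the use of the upper cap $r_a$. This asymmetry between $L_a$ (automatically bounded below by the smallest key) and $R_a$ (needing $r_a$ to be meaningful) is precisely the reason the definition of $R_a$ was set up the way it was.
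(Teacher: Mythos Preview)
Your proof is correct. For $wing_L$ it is essentially the paper's argument with the two axes swapped: the paper fixes $a<a'$ and splits on the time order, using rectangle emptiness in one branch and explicit $(2,3,1)$-avoidance of preorder sequences in the other; you fix $t_a<t_{a'}$ and split on the key order, using rectangle emptiness in one branch and the contiguity of $L_a$ (which of course encodes the same preorder structure) in the other. These are two presentations of the same idea.

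For $wing_R$ you actually go further than the paper, which only offers a tree-view picture. Your observation that the argument is not purely symmetric is spot on: mirroring the paper's $(2,3,1)$ branch would produce a $(2,1,3)$ pattern, which preorder sequences need not avoid, so something extra is genuinely required. Your use of the cap $r_{a'}$---deducing $r_{a'}\le a$ from $a'\in L_a$ and hence $b<r_{a'}\le a$, contradicting $b\in R_a$---is exactly the right fix, and it explains why $r_a$ is baked into the definition of $R_a$ in the first place.
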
 
\begin{proof}
Assume for contradiction that there are two elements $a, a' \in [n]$ such that $wing_L(a) \cap wing_L(a') \neq \emptyset$. Assume w.l.o.g.\ that $a<a'$. By definition, $wing_L(a)$ only contains elements in $L_a$, so there must be an element $b \in L_a \cap L_{a'}$, such that $b<a<a'$ and $t_b>t_a$ and $t_b>t_a'$. There are two possibilities: either $t_a > t_a'$, contradicting the fact that the rectangle with corners $(a',t_a')$, $(b,t_b)$ is empty, or $t_a < t_a'$, in which case the subsequence $(a, a', b)$ is order-isomorphic to $(2,3,1)$, contradicting the fact that $X$ is a preorder sequence.
\end{proof} 
The reader might find it instructive to see the corresponding situation in tree-view. Figure~\ref{leftwingpart} shows a proof by picture in the tree corresponding to the preorder sequence $X$.

\begin{figure}[h]
\centering
\includegraphics[width=3.5in]{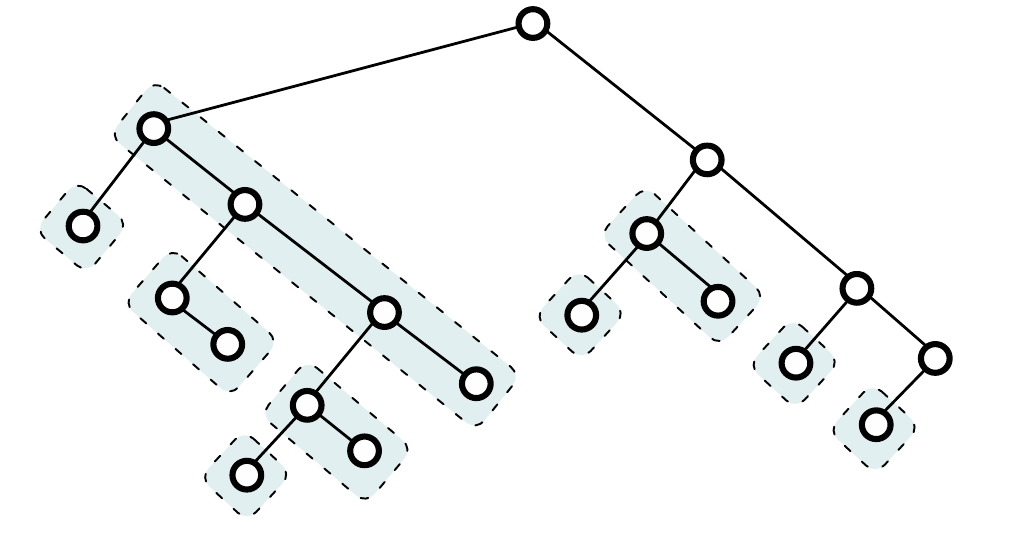}
\caption{BST corresponding to a preorder sequence. Left wings of elements indicated as shaded rectangles.}
\label{leftwingpart}
\end{figure}

\begin{corollary} 
$\sum_a (|wing_L(a)| + |wing_R(a)|) \leq 2n$. 
\end{corollary}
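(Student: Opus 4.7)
The plan is to derive this immediately from the preceding disjointness lemma, observing only that each left wing and each right wing is a subset of the ground set $[n]$ of elements.

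First, I would note that by the definition of $wing_L(a) \subseteq L_a \subseteq [n]$ (and analogously for $wing_R(a)$), each element $b \in [n]$ can belong to at most one $wing_L(a)$, by the just-proved disjointness. Hence
\[
\sum_{a \in [n]} |wing_L(a)| \;=\; \Bigl|\bigsqcup_{a \in [n]} wing_L(a)\Bigr| \;\leq\; n.
\]
By the symmetric argument for right wings (which requires the analogous disjointness statement, obtained by mirroring the proof of the lemma: swap the roles of $L_a, R_a$, and replace the forbidden pattern $(2,3,1)$ by its mirror, which is also forbidden in a preorder sequence since preorder sequences are precisely the $(2,3,1)$-free permutations, and the argument only uses the fact that $(b,a,a')$ cannot appear in the specified order-relation), we likewise get $\sum_a |wing_R(a)| \leq n$. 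Adding the two bounds yields $\sum_a (|wing_L(a)| + |wing_R(a)|) \leq 2n$.

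There is essentially no obstacle: the entire content of the corollary is the bookkeeping observation that a disjoint union of subsets of an $n$-set has total cardinality at most $n$. The only minor subtlety worth stating explicitly in the write-up is that the disjointness lemma as stated covers left wings, and one should check that the exact same argument (with the order-isomorphism to $(2,3,1)$ replaced by the corresponding contradiction on the right side) yields disjointness of right wings; this is routine and can be left as a remark.
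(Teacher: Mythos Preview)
Your proposal is correct and matches the paper's intended (unwritten) argument: the corollary is an immediate consequence of the preceding lemma, since disjoint subsets of $[n]$ have total size at most $n$. One small correction: the lemma as stated in the paper already asserts disjointness for \emph{both} families $\{wing_L(a)\}$ and $\{wing_R(a)\}$ (``Similarly, the sets $\{wing_R(a)\}_{a\in[n]}$ are disjoint''), so you need not supply or justify the symmetric argument yourself.
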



Next we restate and prove Lemma~\ref{prop:hidden} about hidden elements. Recall that $\tau(x,t-1)$ denotes the last time strictly before $t$ when $x$ is touched. 

\begin{definition}[Definition~\ref{def:hidden}]
	For any algorithm $\cA$, an element $x$ is \emph{hidden in} range $[w,y]$ after $t$ if,
	given that there is no access point $a \in [w,y]\times(t,t']$ for any $t'>t$, then
	 $x$ will not be touched by $\cA$ in time $(t,t']$.
\end{definition}

\begin{lemma} [Lemma~\ref{prop:hidden}]
Let $x$ be some element.
\begin{compactenum}[(i)]
\item If there is an element $w<x$ where $\tau(w,t)\ge\tau(x,t)$, then
$x$ is hidden in $(w,n]$ and $(w,x]$ after $t$ for \greedy and \greedyright respectively.
\item If there is an element $y>x$ where $\tau(y,t)\ge\tau(x,t)$, then
$x$ is hidden in $[1,y)$ and $[x,y)$ after $t$ for \greedy and \greedyleft respectively.
\item If there are elements $w,y$ where $w<x<y$, and $\tau(w,t),\tau(y,t)\ge\tau(x,t)$, then
$x$ is hidden in $(w,y)$ after $t$ for \greedy.
\end{compactenum}
\end{lemma}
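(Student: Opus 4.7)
\textbf{Proof plan for Lemma \ref{prop:hidden}.} All three statements have the same underlying shape: I will prove by induction on the time parameter $s \in (t,t']$ that (a) the element $x$ is never touched in $(t,s]$, and (b) the witness elements ($w$, $y$, or both) continue to act as ``shields'' whose last-touch times dominate $\tau(x,\cdot)$. The key geometric observation is that for \greedy to touch $x$ at time $s$ when accessing $a$, the rectangle $\square_{(a,s),(x,\tau(x,s-1))}$ must be unsatisfied; I will exhibit a point inside it coming from the shield, giving a contradiction.

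For \textbf{case (i)}, let $s \in (t,t']$ and assume inductively that $x$ has not been touched in $(t,s-1]$, so $\tau(x,s-1)=\tau(x,t)$. Monotonicity of $\tau(w,\cdot)$ yields $\tau(w,s-1) \ge \tau(w,t) \ge \tau(x,t) = \tau(x,s-1)$. Let $(a,s)$ be the access point; by hypothesis $a \notin (w,n]$, so $a \le w < x$. Then the point $(w,\tau(w,s-1))$ lies in $\square_{(a,s),(x,\tau(x,s-1))}$: its $x$-coordinate is in $[a,x]$ (strictly when $a<w$, on the left boundary when $a=w$) and its $y$-coordinate is in $[\tau(x,s-1),s-1] \subseteq [\tau(x,s-1),s]$. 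Moreover it is distinct from both corners (its $y$-coordinate is strictly less than $s$, and if $x=w$ the hypotheses would be vacuous). This satisfies $\square_{(a,s),(x,\tau(x,s-1))}$, so $x \notin \stair_s(a)$ and \greedy does not touch $x$ at time $s$. The inductive step closes since $\tau(x,s)=\tau(x,s-1)$ and $\tau(w,s) \ge \tau(w,s-1)$. For \greedyright hidden in $(w,x]$, I only additionally note that any access $a>x$ leaves $x$ untouched by definition of \greedyright (since $x<a$), while any access $a \le w$ is handled exactly as above.

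\textbf{Case (ii)} is the mirror image of (i): the shield is $(y,\tau(y,s-1))$, which lies in $\square_{(a,s),(x,\tau(x,s-1))}$ whenever $a \ge y > x$, and the \greedyleft variant uses the fact that accesses $a<x$ leave $x$ untouched.

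For \textbf{case (iii)}, accesses happen in $[1,w] \cup [y,n]$. If $a \le w$ I reuse the argument of (i) with shield $w$; if $a \ge y$ I reuse (ii) with shield $y$. Both shield invariants $\tau(w,\cdot) \ge \tau(x,\cdot)$ and $\tau(y,\cdot) \ge \tau(x,\cdot)$ are maintained by the same monotonicity argument: $x$ remains untouched, hence $\tau(x,\cdot)$ is frozen, while $\tau(w,\cdot)$ and $\tau(y,\cdot)$ can only increase.

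The only subtle points I expect to watch are the boundary degeneracies (the shield point coinciding in $x$-coordinate with the accessed point when $a=w$ or $a=y$, which still satisfies the rectangle since a non-corner point suffices, and the rectangle definition also counts shared coordinates as satisfied), and making sure the shield is strictly distinct from the two corners of the rectangle. No heavy machinery is needed; the entire argument reduces to the rectangle-satisfaction definition of $\stair$ plus monotonicity of $\tau$.
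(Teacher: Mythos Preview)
Your proof is correct and follows essentially the same approach as the paper: both arguments show that the rectangle $\square_{(a,s),(x,\tau(x,s-1))}$ is satisfied by a point in column $w$ (resp.\ $y$), using that $x$ has not yet been touched so $\tau(x,\cdot)$ is frozen. The only cosmetic difference is that the paper argues by minimal counterexample (``let $(p,t_p)$ be the first access that touches $x$'') and uses the fixed original point $(w,\tau(w,t))$ as the blocker, whereas you phrase it as an explicit induction on $s$ and track the updated shield $(w,\tau(w,s-1))$; since $\tau(w,\cdot)$ is monotone and $\tau(x,\cdot)$ is frozen, either choice of blocking point works.
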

\begin{proof}
\begin{compactenum}[(i)]
\item Consider any $t'>t$. For the case of \greedy, assume that there is no access point in $(w,n] \times (t,t']$. Suppose $x$ is touched in the time interval $(t,t']$, and let $(p,t_p)$ be the first access point in this time interval that causes the touching of $x$. Then $p \in [1,w]$, and $t_p \in (t,t']$. As $x$ is not touched in the time interval $(t,t_p)$ by the choice of $p$, we have that $\tau(x,t_p-1) = \tau(x,t)$. If $\tau(w,t)\ge\tau(x,t)$, then the rectangle with corners $(p,t_p)$, and $(x,\tau(x,t_p-1))$ contains the point $(w,\tau(w,t))$, and thus it is satisfied before accessing $p$. Therefore, the accessing of $p$ via \greedy does not touch $x$, a contradiction. It follows that $x$ is hidden in $(w,n]$ for \greedy after $t$.

For the case of \greedyright, assume that there is no access point in $(w,x] \times (t,t']$. Suppose $x$ is touched in the time interval $(t,t']$, and let $(p,t_p)$ be the first access point in this time interval that causes the touching of $x$. Then $p \in [1,w]$, and $t_p \in (t,t']$. (The case $p \in (x,n]$ is not possible for \greedyright, since $p<x$ must hold.) The remainder of the argument is the same as for \greedy.

\item The argument is symmetric to (i). 
\item Consider any $t'>t$. Assume that there is no access point in $(w,y) \times (t,t']$. Suppose $x$ is touched in the time interval $(t,t']$, and let $(p,t_p)$ be the first access point in this time interval that causes the touching of $x$. There are two cases. If $p \in [1,w]$, we use the argument of (i). If $p \in [y,n]$, we use the argument of (ii).
\end{compactenum}
\end{proof}

\subsection{Bounding the cost of \greedy}  

\begin{lemma}
For any element $c\in L_a \setminus \wing_{L}(a)$, $a$ is not touched when accessing $c$ using \greedy. Similarly for $c \in R_a \setminus wing_R(a)$.  
\end{lemma}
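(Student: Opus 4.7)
The plan is to apply Lemma~\ref{prop:hidden}(i) (hidden elements) after carefully choosing a witness inside the rectangle defined by $c$ and $a$. The $R_a$ case will follow by a symmetric argument using Lemma~\ref{prop:hidden}(ii), so I focus on $L_a$.

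Since $c \in L_a \setminus \wing_L(a)$, the rectangle with corners $(c,t_c)$ and $(a,t_a)$ contains some access point $(c',t_{c'})$ distinct from the corners; because it is a permutation, $c < c' < a$ and $t_a < t_{c'} < t_c$. All accesses in the interval $(t_a, t_c)$ lie in $L_a$, so in particular $c' \in L_a$. Enumerate $L_a$ in access order as $c_1, \ldots, c_m$ with $c = c_i$.

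A direct attempt would be to hide $a$ via $c'$ starting at time $t_{c'}$, but this can fail: if $a$ was touched somewhere in $(t_{c'}, t_c)$, then $\tau(a, t_c-1)$ exceeds $t_{c'}$ and the rectangle actually tested by \greedy at time $t_c$ need no longer contain $(c', t_{c'})$. I will circumvent this by choosing the witness more carefully. Let $k$ be the index maximizing $t_{c_k}$ among those $j < i$ with $c_j > c_i$; this set is non-empty since $c'$ belongs to it.

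With $w := c_k$ and $x := a$, the hypotheses of Lemma~\ref{prop:hidden}(i) are trivially met: $c_k < a$, and $\tau(c_k, t_{c_k}) = t_{c_k} \geq \tau(a, t_{c_k})$ because no touch-time can exceed the current time. Hence $a$ is hidden in $(c_k, n]$ after time $t_{c_k}$ for \greedy. To conclude, I verify that no access point lies in $(c_k, n] \times (t_{c_k}, t_c]$: by maximality of $k$, every $c_j$ with $k < j \leq i$ (including $c = c_i$) satisfies $c_j \leq c_i < c_k$, so all such access points lie strictly below $c_k$. Therefore $a$ is not touched in $(t_{c_k}, t_c]$, and in particular it is not touched when $c$ is accessed.

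The main (modest) subtlety is the choice of witness: picking the latest $c_k$ above $c_i$ and hiding from time $t_{c_k}$ onwards bypasses any bookkeeping about whether $a$ may have been re-touched in the meantime. The rest of the argument is just unwinding the definitions of $L_a$, $\wing_L$, and the hidden element lemma.
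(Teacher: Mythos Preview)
Your proof is correct and follows the same template as the paper: pick a witness $b$ with $c<b<a$ and $t_a<t_b<t_c$, invoke Lemma~\ref{prop:hidden}(i) to hide $a$ in $(b,n]$ after $t_b$, and then check that every access in $(t_b,t_c]$ lies outside $(b,n]$. The one genuine difference is how the last step is justified. The paper takes an \emph{arbitrary} such $b$ and appeals (implicitly) to the $(2,3,1)$-freeness of preorder sequences: if some $d$ with $t_b<t_d<t_c$ had $d>b$, then $(b,d,c)$ would realise the forbidden pattern. You instead pick the \emph{latest} witness $c_k$ before $c$, so that by maximality every intermediate access $c_j$ satisfies $c_j\le c<c_k$ automatically, without using the preorder hypothesis at all in this step. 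Your route is thus slightly more self-contained; the paper's is terser but leans on the ambient structure.

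One minor remark: your paragraph explaining why a naively chosen $c'$ ``can fail'' misdiagnoses the obstruction. The hidden-element lemma applies at time $t_{c'}$ regardless of whether $a$ gets touched later; the only thing that can break it is an \emph{access} in $(c',n]$ during $(t_{c'},t_c]$, not a change in $\tau(a,\cdot)$. Your choice of $c_k$ rules out exactly such accesses, so the proof stands, but the motivating sentence could be sharpened.
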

\begin{proof}
There must be an access point $b$, such that $c<b<a$ and $t_a<t_b<t_c$, for otherwise $c$ would be in $\wing_{L}(a)$. Since $\tau(b,t_b) \geq \tau(a,t_b)$, it follows that $a$ is hidden in $(b,n]$ after $t_b$. All accesses in the interval $(t_b, t_c]$ are in $[c,b)$. Hence, $a$ is not touched.
\end{proof}
\begin{corollary}\label{lem_pre1} Let $a$ be any element. 
\greedy touches $a$ when accessing elements in $\tset_a$ at most $|wing_L(a)| + |wing_R(a)|$ times.  
\end{corollary}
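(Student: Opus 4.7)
The plan is to derive the corollary directly from the preceding lemma, which states that for any $c \in L_a \setminus \textit{wing}_L(a)$ (respectively $c \in R_a \setminus \textit{wing}_R(a)$), the element $a$ is not touched by \greedy at time $t_c$. Since by definition $\tset_a = L_a \cup R_a$, the set of accesses during which $a$ could possibly be touched is contained in $\textit{wing}_L(a) \cup \textit{wing}_R(a)$.

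More concretely, I would partition the accesses in $\tset_a$ into four disjoint classes: accesses in $L_a \setminus \textit{wing}_L(a)$, accesses in $\textit{wing}_L(a)$, accesses in $R_a \setminus \textit{wing}_R(a)$, and accesses in $\textit{wing}_R(a)$. The preceding lemma rules out the first and third classes from contributing any touch to $a$. Each access in the remaining two classes can touch $a$ at most once, since at any time step \greedy touches each element at most once. Summing over the two contributing classes then yields the bound $|\textit{wing}_L(a)| + |\textit{wing}_R(a)|$.

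There is no real obstacle here; the entire work was done in the preceding lemma (which applies Lemma~\ref{prop:hidden}(i)/(ii) via the existence, for any $c \in L_a \setminus \textit{wing}_L(a)$, of an element $b$ with $c < b < a$ and $t_a < t_b < t_c$ that hides $a$ in the relevant range). The corollary is essentially a bookkeeping statement that packages this into a per-element bound, which will subsequently be summed over all $a$ together with the earlier fact $\sum_a (|\textit{wing}_L(a)| + |\textit{wing}_R(a)|) \leq 2n$ to conclude Theorem~\ref{thm:greedy-traversal}.
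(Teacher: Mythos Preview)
Your argument is correct and matches the paper's intended reasoning: the corollary is an immediate consequence of the preceding lemma, since $\tset_a = L_a \cup R_a$ and the lemma eliminates all accesses outside $\textit{wing}_L(a) \cup \textit{wing}_R(a)$ as possible touches of $a$. The paper gives no separate proof for this corollary, and your four-class partition is exactly the unpacking one would expect.
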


\begin{lemma}\label{lem_pre2} Let $a$ be any element. \greedy touches $a$ when accessing elements outside $\tset_a$ at most once. \end{lemma}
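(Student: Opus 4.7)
My plan is to partition the accesses to elements outside $\tset_a$ into those strictly before $t_a$ and those strictly after $t^* = t_a + |\tset_a|$; the access of $a$ itself at $t_a$ is a self-access rather than a stair-touch performed by \greedy, and is accounted for separately. Pre-$t_a$ accesses contribute zero stair-touches of $a$: with no initial tree, $a$ is untouched before $t_a$, so no rectangle involving $a$ needs satisfying at any earlier time. The entire work therefore lies in bounding the number of post-$t^*$ stair-touches of $a$ by one.

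The structural observation I would use is that, in a preorder traversal, every element accessed after $t^*$ has value strictly greater than $r_a$: once $a$'s subtree is exhausted, the traversal continues into the right subtree of the lowest still-unvisited ancestor of $a$, which has value $\ge r_a$, and such right subtrees contain only values exceeding the ancestor. Given this, by Lemma~\ref{prop:hidden}(ii) it suffices to exhibit, after at most one touch of $a$, an element $y$ with $a < y \le r_a$ and $\tau(y, t) \ge \tau(a, t)$ thereafter; every subsequent post-$t^*$ access (whose value is $> r_a \ge y$) is then unable to touch $a$.

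Three cases arise. If $R_a \neq \emptyset$, I would take $y$ to be the last element of $R_a$, accessed at time $t^*$: since $y \in (a, r_a)$ and $\tau(y, t^*) = t^*$, the element $a$ is already hidden in $[1, y)$ at the start of the post-$t^*$ phase, and zero post-$t^*$ touches occur. If $r_a = \infty$, the preorder structure makes $a$'s subtree the final region of the traversal, so there are no post-$t^*$ accesses at all. The substantive case is $R_a = \emptyset$ with $r_a < \infty$, where the open interval $(a, r_a)$ contains no element of $X$ (any such element would have to lie in $a$'s right subtree, which is empty, or contradict the minimality of $r_a$ among ancestors exceeding $a$); the only candidate for $y$ is then $r_a$ itself.

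For this last case my plan is to track \greedy's state carefully at three moments: (i) at time $t_a$, the rectangle between $(a, t_a)$ and the last prior touch of $r_a$ has empty interior (since $(a, r_a)$ is elementless), so \greedy is forced to touch $r_a$ at $t_a$; (ii) throughout the $L_a$ phase, the most recent touch of $a$ sits strictly inside every rectangle between an accessed element of $L_a$ and $r_a$, so $r_a$ is not re-touched and $\tau(r_a, t^*) = t_a$, and an analogous shielding keeps $\tau(c, t^*) < t_a$ for every element $c \in (r_a, b)$, where $b$ is the first post-$t^*$ access; (iii) at this first post-$t^*$ access the rectangle $[r_a, b] \times [t_a, t_b]$ is then empty in its interior, so \greedy is forced to touch $r_a$ at $t_b$, yielding $\tau(r_a, t_b) \ge \tau(a, t_b)$. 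From this moment on, Lemma~\ref{prop:hidden}(ii) with $y = r_a$ hides $a$ in $[1, r_a)$, and all further post-$t^*$ accesses (having values $> r_a$) miss $a$, giving at most one post-$t^*$ stair-touch of $a$ in total. The main obstacle is step (ii), which requires careful verification that no element $c \in (r_a, b)$ accumulates a touch recent enough during the pre-$t_a$ or $L_a$ phases to preempt the forcing of $r_a$'s touch at time $t_b$.
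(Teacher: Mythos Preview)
Your proposal is correct, but it takes a more laborious route than the paper's proof. The paper does not case-split on whether $R_a$ is empty, nor does it try to show that the specific element $r_a$ gets touched at the first post-$\tset_a$ access. Instead, letting $c$ denote the first access after $\tset_a$ and $b = \textit{lca}(a,c)$, the paper simply picks $d$ to be the rightmost element in $(a,b]$ with maximal $\tau(d, t_c - 1)$. Since all elements in $(b,c]$ lie in $\tset_c$ and are therefore untouched before $t_c$, this $d$ is also the rightmost maximal-$\tau$ element in $(a,c]$; the rectangle between $(d,\tau(d,t_c-1))$ and $(c,t_c)$ is then automatically empty, so $d$ is touched at $t_c$, and $a$ is hidden in $[1,d)$ from then on. No analysis of the pre-$t_a$ state is needed.

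By contrast, your Case~3 insists on $r_a$ as the hider, which forces you into step~(ii): you must show that no ancestor $c'$ of $a$ with $r_a < c' < b$ has $\tau(c', t_a-1) > \tau(r_a, t_a-1)$, since otherwise $c'$ (not $r_a$) would be on the stair of $(b,t_b)$ and could also be touched at time $t_a$. This does hold---one can prove inductively over $t \in [t_{r_a}, t^*]$ that $\tau(r_a,t) \ge \tau(c',t)$ for every such ancestor $c'$, using that all accesses in this window are $< r_a$---but it is genuine extra work that your write-up only sketches as ``analogous shielding''. The paper's choice of $d$ as the \emph{already} most-recently-touched element in the interval makes this entire verification unnecessary: whatever the pre-$t_a$ history, the maximal-$\tau$ element is by construction on the stair.
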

\begin{proof}
No element is touched before it is accessed. Since the elements in $\tset_a$ are accessed consecutively, we only need to consider accesses after all elements in $\tset_a$ have been accessed. All such accesses are to elements to the right of $\tset_a$. Let $c$ be the first element accessed after the last access to an element in $\tset_a$ and let $b = \mathit{lca}(a,c)$. Then $a \in L_b$, and $c$ is the first element in $R_b$. 

Since $\tset_c$ contains all elements in $(b,c]$, \greedy does not touch any element in $(b,c]$ before time $t_c$. Let $d$ be the element in $(a,b]$ with largest time $\tau(d,t_c-1)$. If there are several such $d$, let $d$ be the rightmost such element. Since no element in $(b,c]$ is accessed before time $t_c$, $d$ is also the rightmost element in $(a,c]$ with largest time $\tau(d,t_c-1)$. The rectangle with corners $(d,\tau(d,t_c-1))$ and $(c,t_c)$ contains no other point and hence $d$ is touched at time $t_c$. 

Clearly, $\tau(a,t_c) \le t_c = \tau(d,t_c)$. Thus $a$ is hidden in $[1,d)$ after time $t_c$ and hence $a$ is not touched after time $t_c$. 
\end{proof}

Corollary~\ref{lem_pre1} and Lemma~\ref{lem_pre2} together imply that each element $a$ is only touched at most $|wing_L(a)| + |wing_R(a)| +1$ times, so the total cost is at most $\sum_{a \in [n]} (|wing_L(a)| + |wing_R(a)| +2) \leq 4n$.
This concludes the proof of Theorem~\ref{thm:greedy-traversal}.

\section{Access sequences with initial trees} 
\label{sec:forbidden} 

\subsection{Initial trees.}
\label{sec:initial-tree}
In the BST model, the \emph{initial tree} is the state of the BST before the first element is accessed. In the tree-view of BST execution, we know the 
structure of the tree exactly in every time step, and hence the structure of
the initial tree as well. This is less obvious, however, in the geometric
view. In fact, according to the reduction of DHIKP, the
initial tree is always a \emph{split tree }\cite{DHIKP09}, hidden from us
in geometric view. We show how to ``force'' an arbitrary initial tree into the geometric view of BST execution.

First, we observe a simple fact. Given a binary search tree $T$, we
denote the search path of an element $a$ in $T$ as ${\sf path}(a,T)$. It is easy to show that two BSTs $T_{1}$ and $T_{2}$,
containing $[n]$, have the same structure iff ${\sf path}(a,T_{1})={\sf path}(a,T_{2})$
as a set, for all $a\in[n]$.

The stair of an element is, in some sense, the geometric view of the
search path. Indeed, ${\sf path}(a,T)=\{b\mid b$ is touched if $a$ is
accessed in $T\}$. Similarly, ${\sf stair}_t(a)=\{b\mid b$
is touched if $a$ is accessed at time $t\}$. Thus, to model an initial
tree $T$ into the geometric view, 
it is enough to enforce that ${\sf stair}_1(a)={\sf path}(a,T)$.

We can achieve this in a straightforward way, by putting stacks of points in each column below the first row, corresponding to the structure of the initial tree $T$. More precisely, in column $x$ we put a stack of height $d-d(x)+1$, where $d(x)$ is the depth of element $x$ in $T$, and $d$ is the maximum depth of an element (Figure \ref{fig:geo init tree}). One can easily show that ${\sf stair_1(a)}={\sf path(a,T)}$ for all $a\in[n]$.

We remark that the points representing the initial tree are not counted in the cost.

\subsection{$k$-increasing sequences}
\label{sub:k inc seq}
We prove Lemma~\ref{lem:input-revealing} (iii) and (iv).   

\begin{lemma}[Lemma~\ref{lem:input-revealing}(iii)]
$\inc_{k+1}$ ($\dec_{k+1})$ is a $k$-increasing ($k$-decreasing)
		gadget for \greedy where $\inc_{k}=(k,\dots,1)$ and \textup{$\dec_{k}=(1,2,\dots,k)$}. 
\end{lemma}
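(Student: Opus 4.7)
The plan is to exhibit the $k$ access points explicitly and verify the required monotonicities. Let $G := \Greedy_T(X)$ and suppose $G$ contains $\inc_{k+1}$, witnessed by touch points $q_1, \dots, q_{k+1}$ ordered so that $q_1.y < \dots < q_{k+1}.y$ and $q_1.x > \dots > q_{k+1}.x$; set $t_i := q_i.y$ and let $B$ be the bounding box. If $B$ contains an access point, condition~(i) of the $k$-increasing gadget already holds, so I will assume otherwise. For each $i \in \{1, \dots, k\}$ I will take $s_i$ to be the earliest time in the interval $(t_i, t_{i+1}]$ at which \greedy touches $q_{i+1}.x$---this is well-defined because $q_{i+1}$ is touched at $t_{i+1}$---and let $p_i$ denote the unique access at time $s_i$. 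Strict $y$-monotonicity $p_1.y < \dots < p_k.y$ is then immediate from $s_i \le t_{i+1} < s_{i+1}$.

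To verify $p_i.x < B.\xmin$, the minimality of $s_i$ yields $\tau(q_{i+1}.x, s_i-1) \le t_i \le \tau(q_i.x, s_i-1)$, and since $q_i.x > q_{i+1}.x$, Lemma~\ref{prop:hidden}(ii) places $q_{i+1}.x$ in the hidden range $[1, q_i.x)$ after time $s_i-1$. The actual touch at $s_i$ forces $p_i.x \in [1, q_i.x)$; combining with $p_i.y \in [B.\ymin, B.\ymax]$ and the no-access-in-$B$ assumption refines this to $p_i.x < B.\xmin$, as needed.

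The delicate step is strict $x$-monotonicity. Suppose toward contradiction that $p_j.x > p_{j+1}.x$ for some $j$ (equality is impossible since $X$ is a permutation). Since $p_j.x < B.\xmin \le q_{j+2}.x$, $p_j.x$ lies in $(p_{j+1}.x, q_{j+2}.x)$. My plan is to invoke Lemma~\ref{prop:hidden}(i) with $w = p_j.x$ and $x = q_{j+2}.x$ to hide $q_{j+2}.x$ in $(p_j.x, n]$ at time $s_{j+1}-1$, contradicting the touch of $q_{j+2}.x$ by the access $p_{j+1}$ whose $x$-coordinate lies below $p_j.x$. This succeeds whenever $\tau(q_{j+2}.x, s_{j+1}-1) \le s_j$, since $\tau(p_j.x, s_{j+1}-1) \ge s_j$. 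The only remaining subcase is that $q_{j+2}.x$ is first touched in $(s_j, t_{j+1}]$ at some time $\tau_0$, which I will resolve by a replacement argument: the triggering access $a_0$ at $\tau_0$ must lie outside $B$, a further application of Lemma~\ref{prop:hidden}(ii) against the fresh touch of $q_{j+1}.x$ at $s_j$ shows $a_0.x < B.\xmin$, and a symmetric application of Lemma~\ref{prop:hidden}(i) (otherwise $p_j.x$ would block $q_{j+2}.x$ in $a_0$'s rightward stair) forces $a_0.x > p_j.x$. Redefining $p_j := a_0$ strictly increases both $s_j$ and $p_j.x$; since $s_j \le t_{j+1}$, the iteration terminates, and at termination either $p_j.x < p_{j+1}.x$ holds outright or the initial blocking argument applies.

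The hard part of the plan is precisely this replacement/iteration step in the $x$-monotonicity argument: one must juggle both directions of Lemma~\ref{prop:hidden} simultaneously to control where the triggering accesses can lie. The symmetric statement that $\dec_{k+1}$ is a $k$-decreasing gadget follows by reflecting $x$-coordinates throughout the argument, swapping ranges of the form $[1, \cdot)$ with $(\cdot, n]$ and exchanging the roles of parts (i) and (ii) of Lemma~\ref{prop:hidden}.
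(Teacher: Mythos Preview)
Your overall strategy differs from the paper's, and the replacement/iteration step for $x$-monotonicity contains a genuine gap. You justify $a_0.x < B.\xmin$ by invoking \Cref{prop:hidden}(ii) ``against the fresh touch of $q_{j+1}.x$ at $s_j$'': this works for the \emph{first} replacement because $s_j$ is, by definition, a time at which $q_{j+1}.x$ is touched. But after you redefine $s_j := \tau_0$, the element $q_{j+1}.x$ need not be touched at the new $s_j$, so you lose the right-side witness needed to force the next $a_0'$ to lie left of $B$. Nothing you have written rules out $a_0'.x > B.\xmax$ in a second iteration, and in that case both the bound $p_j.x < B.\xmin$ and the final blocking argument (which uses $p_j.x < q_{j+2}.x$) collapse. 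The phrase ``since $s_j \le t_{j+1}$, the iteration terminates'' guarantees only finiteness, not that the invariant $p_j.x < B.\xmin$ is maintained throughout. (A separate expository issue: you frame this as a proof by contradiction, but you are really constructing an alternative choice of $p_j$; you also treat only a single index $j$ and do not explain how repairs at different indices interact.)

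The paper avoids all of this with two simple moves you are missing. First, it replaces each $q_i$ (for $i\ge 1$) by the lowest touch point in its column strictly above $q_{i-1}.y$; this costs nothing (the bounding box only shrinks) but guarantees $\tau(q_{i-1}.x,t)\ge\tau(q_i.x,t)$ for every $t\in[q_{i-1}.y,q_i.y)$. Second, it chooses $p_i$ to be the access point of \emph{maximum} $x$-coordinate in the relevant window, and then uses the two-sided hiding of \Cref{prop:hidden}(iii)---with the left wall supplied by this rightmost access and the right wall by $q_{i-1}.x$---to force $p_i.y=q_i.y$. With $p_i$ pinned to row $q_i.y$, strict $y$-monotonicity is immediate, and $x$-monotonicity falls out directly from $p_i\in(p_{i-1}.x,q_{i-1}.x)$, with no iteration needed.
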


\begin{proof}

We only prove that $\inc_{k+1}$ is a $k$-increasing gadget. Let $q_{0},\dots,q_{k}$ be
	touch points in $\Greedy_{T}(X)$ such that $q_{i}.y<q_{i+1}.y$
	for all $i$ and form $\inc_{k+1}$ with bounding box $B$. 
	Note that, for all $1\le i\le k$, we can assume that there is no touch point $q'_i$ where $q'_i.x = q_i.x$ and $q_{i-1}.y<q'_i.y<q_i.y$, otherwise we can choose $q'_i$ instead of $q_i$.
	So, for any $t \in [q_{i-1}.y,q_i.y)$ we have that $\tau(q_{i-1}.x,t)\ge\tau(q_{i}.x,t)$ 
	for all $1\le i\le k$. Suppose that there is no access point in $B$.
	
	By \Cref{prop:hidden}~(ii), $q_{1}.x$ is hidden in $[1,q_{0}.x)$ after $q_0.y$. 
	So there must be an access point $p_1 \in[1,B.\xmin)\times(q_{0}.y,q_{1}.y]$, otherwise $q_1$ cannot be touched.
	We choose such $p_{1}$ where $p_{1}.x$ is maximized. If $p_{1}.y<q_{1}.y$,
	then $\tau(p_{1}.x,p_{1}.y),\tau(q_{0}.x,p_{1}.y)\ge\tau(q_{1}.x,p_{1}.y)$.
	So by \Cref{prop:hidden}~(iii), $q_{1}.x$ is hidden in $(p_{1}.x,q_{0}.x)$ after $p_1.y$ and hence $q_{1}$ cannot be touched, which is a contradiction.
	So we have $p_{1}.y=q_{1}.y$ and $p_{1}.x<B.\xmin$.
	
	Next, we prove by induction from $i=2$ to $k$ that, given the
	access point $p_{i-1}$ where $p_{i-1}.y=q_{i-1}.y$ and $p_{i-1}.x<B.\xmin$,
	there must be an access point $p_{i}$ where $p_{i}.y=q_{i}.y$ and
	$p_{i-1}.x<p_{i}.x<B.\xmin$. Again, by \Cref{prop:hidden}~(iii), there
	must be an access point $p_{i}\in(p_{i-1}.x,q_{i-1}.x)\times(q_{i-1}.y,q_{i}.y]$
	otherwise $q_{i}$ cannot be touched. We choose the point $p_{i}$
	where $p_{i}.x$ is maximized. If $p_{i}.y<q_{i}.y$, then $\tau(p_{i}.x,p_{i}.y),\tau(q_{i-1}.x,p_{i}.y)\ge\tau(q_{i}.x,p_{i}.y)$.
	So by \Cref{prop:hidden}~(iii), $q_{i}.x$ is hidden in $(p_{i}.x,q_{i-1}.x)$ after $p_i.y$ and hence $q_{i}$ cannot be touched.
	So we have $p_{i}.y=q_{i}.y$ and $p_{i-1}.x<p_{i}.x<B.\xmin$. Therefore,
	we get $p_{1}.x<\dots<p_{k}.x<B.\xmin$ which concludes the proof.
\end{proof}

\begin{lemma}[Lemma~\ref{lem:input-revealing}(iv)]
 $\dec_{k+1}$ ($\inc_{k+1})$ is a capture gadget for \greedy that
		runs on $k$-increasing $(k$-decreasing$)$ sequence.
\end{lemma}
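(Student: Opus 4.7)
My plan is to deduce (iv) as an immediate corollary of (iii) together with the defining pattern-avoidance property of $k$-increasing sequences. The key observation is that the ``access witnesses'' supplied by a $k$-decreasing gadget form exactly the pattern $\inc_k=(k,\dots,1)$—which is precisely the pattern forbidden in a $k$-increasing sequence. Hence, if $\dec_{k+1}$ appeared in the touch matrix with an empty bounding box, (iii) would automatically place a forbidden pattern inside $X$ itself, giving a contradiction. So the statement should fall out with essentially no new combinatorial work beyond (iii).

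Concretely, I would argue by contradiction. Suppose touch points of $\Greedy_T(X)$ realize $\dec_{k+1}$ with bounding box $B$ and no access point of $X$ lies in $B$. By (iii), $\dec_{k+1}$ is a $k$-decreasing gadget, so there exist access points $p_1,\dots,p_k\in X$ with $B.\ymin \le p_1.y < \cdots < p_k.y \le B.\ymax$ and $p_1.x > p_2.x > \cdots > p_k.x > B.\xmax$. Reading these points in time order (order of increasing $y$-coordinate), the accessed values $(p_1.x, p_2.x, \dots, p_k.x)$ are strictly decreasing, so they realize the pattern $\inc_k = (k,k-1,\dots,1)$ inside $X$. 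This contradicts the hypothesis that $X$ is $k$-increasing, i.e., avoids $\inc_k$. Therefore $B$ must contain an access point, proving that $\dec_{k+1}$ is a capture gadget for $\Greedy$ on $k$-increasing sequences. The dual statement for $\inc_{k+1}$ on $k$-decreasing sequences is obtained by the horizontally mirrored argument using the other half of (iii): the access witnesses appear to the left of $B$ and realize $\dec_k = (1,\dots,k)$, which is precisely what a $k$-decreasing sequence forbids.

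I expect no real obstacle: once (iii) is invoked in contrapositive form, the rest is a one-line pattern-avoidance contradiction. The only subtlety—really a bookkeeping point—is to pair up each gadget with the correct sequence class: the $k$-decreasing gadget $\dec_{k+1}$ must be matched with the $k$-increasing class (not the $k$-decreasing class), because the access witnesses produced by a $k$-decreasing gadget have decreasing values, and it is $k$-increasing sequences that forbid precisely such decreasing runs of length $k$.
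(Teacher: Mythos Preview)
Your proof is correct and takes essentially the same approach as the paper: invoke part (iii) to obtain the alternative ``either an access point lies in $B$ or $k$ access points form the forbidden monotone pattern outside $B$'', and then rule out the second case by the pattern-avoidance hypothesis on $X$. The paper presents the symmetric half (proving that $\inc_{k+1}$ is a capture gadget for $k$-decreasing sequences), but the argument is otherwise identical.
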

\begin{proof}
We only prove that $\inc_{k+1}$ is a capture gadget for \greedy
	running on $k$-decreasing sequence $X$. Since $\inc_{k+1}$ is an
	increasing gadget, if $\inc_{k+1}$ appears in $\Greedy_{T}(X)$ with
	bounding box $B$, then either there is an access point in $B$ or
	there are access points forming $(1,2,\dots,k)$. The latter case
	cannot happen because $X$ avoids $(1,2,\dots,k)$.
	So $\inc_{k+1}$ is a capture gadget.
\end{proof}	

\section{Access sequences without initial trees: $k$-decomposable permutations}
\label{sec:block_linear}

\subsection{Pattern avoidance and recursive decomposability} 
\label{sec:connection-ext}
We restate and prove Lemma~\ref{lem:connection} that connects  the $k$-decomposability of a permutation with pattern avoidance properties.

\begin{lemma}[Extended form of Lemma~\ref{lem:connection}]
\label{lem:connection-ext}
Let $P$ be a permutation, and let $k$ be an integer. The following statements are equivalent:
\begin{enumerate}[label={{(}\roman*{)}}]
\item $P$ is $k$-decomposable,
\item $P$ avoids all simple permutations of size at least $k+1$,
\item $P$ avoids all simple permutations of size $k+1$ and $k+2$.
\end{enumerate}
\end{lemma}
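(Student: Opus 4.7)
The plan is to establish the cycle $(\text{i})\Rightarrow(\text{ii})\Rightarrow(\text{iii})\Rightarrow(\text{ii})\Rightarrow(\text{i})$, where the ``trivial'' middle step (ii)$\Rightarrow$(iii) is by weakening. For the implication (i)$\Rightarrow$(ii), I would induct on $|P|$. Write the top-level decomposition as $P=\tilde{P}[P_1,\dots,P_\ell]$ with $\ell\le k$, and suppose for contradiction that $P$ contains a simple $\sigma$ with $|\sigma|\ge k+1$. The witnesses of $\sigma$ that fall inside any single $P_i$ occupy a contiguous range of positions whose values form a contiguous interval (because $P_i$ is a block of $P$), hence constitute a block of $\sigma$. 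Simplicity of $\sigma$ forces this block to have size $0$, $1$, or $|\sigma|$. Thus either all $|\sigma|$ witnesses live inside one $P_i$ (and I recurse on $P_i$, which is itself $k$-decomposable), or each $P_i$ contributes at most one witness, giving $|\sigma|\le\ell\le k$, a contradiction. The recursion eventually terminates at a singleton, which cannot contain any pattern of size $\ge 2$.

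For (iii)$\Rightarrow$(ii) I would invoke the classical theorem of Schmerl and Trotter: every simple permutation of size $n\ge 4$ contains a simple permutation of size $n-1$ or $n-2$. If $P$ were to contain some simple pattern of size $m\ge k+3$, iterating this result produces a descending chain of simple patterns contained in $P$ whose sizes cover every integer down to $k+1$ or $k+2$, contradicting (iii). This handles the ``no simples of size $3$'' subtlety automatically, because the chain stays within the set of simple permutations.

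The implication (ii)$\Rightarrow$(i) I would prove by strong induction on $|P|$, using the (unique) substitution decomposition $P=\tilde{P}[P_1,\dots,P_\ell]$ for $|P|\ge 2$, where $\tilde{P}$ is either simple of size $\ge 4$ or monotone of length $\ell\ge 2$. In the simple case, $\tilde{P}$ is realized as a pattern of $P$ by picking one representative from each block, so (ii) forces $\ell=|\tilde{P}|\le k$; I then take $P_1,\dots,P_\ell$ as the top-level decomposition and apply induction to each block (hypothesis (ii) is inherited since pattern-avoidance is preserved under taking sub-permutations corresponding to blocks). In the monotone case, if $\ell\le k$ the same argument works; otherwise I group the $\ell$ blocks into exactly $k$ groups of consecutive blocks. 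Because the skeleton is monotone, any union of consecutive blocks is again a block of $P$, and each resulting super-block has size strictly less than $|P|$, so induction applies.

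The main obstacle is (iii)$\Rightarrow$(ii), as it is the only step that invokes nontrivial external machinery (the Schmerl--Trotter theorem); the rest of the argument is careful bookkeeping on the block structure. A secondary technical point is to verify in (ii)$\Rightarrow$(i) that the monotone regrouping genuinely produces blocks, but this is immediate from the fact that in a direct sum (or skew sum) any contiguous range of summands is itself a summand.
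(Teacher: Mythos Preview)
Your proposal is correct and follows essentially the same approach as the paper: Schmerl--Trotter for (iii)$\Rightarrow$(ii), and the observation that the witnesses of a pattern lying inside a block of $P$ themselves form a block of the pattern for the equivalence (i)$\Leftrightarrow$(ii). The only difference is cosmetic, in (ii)$\Rightarrow$(i): the paper directly invokes the known fact that every permutation admits a block decomposition tree all of whose nodes are simple (so a long monotone skeleton is handled by repeated binary splits with skeleton $(1,2)$ or $(2,1)$), whereas you regroup such a skeleton into $k$ consecutive chunks---both routes rest on the same substitution-decomposition theory.
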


\begin{proof}
$(ii) \implies (iii)$ is obvious.

$(iii) \implies (ii)$ follows from the result of Schmerl and Trotter~\cite{schmerl_trotter} that every simple permutation of length $n$ contains a simple permutation of length $n-1$ or $n-2$, and the simple observation that if $P$ avoids $Q$, then $P$ also avoids all permutations containing $Q$.

$(ii) \implies (i)$ follows from the observation that a permutation $P$ contains all permutations in its block decomposition tree $\tset_P$. Further, it is known~\cite{brignall} that every permutation $P$ has a block decomposition tree $\tset_P$ in which all nodes are simple permutations. If $P$ contains no simple permutation of size $k+1$ or more, it must have a block decomposition tree in which all nodes are simple permutations of size at most $k$, it is therefore, $k$-decomposable.

$(i) \implies (ii)$: we show the contrapositive $\neg(ii) \implies \neg(i)$. 
Indeed, if $P$ contains a simple permutation $Q$ of size at least $k+1$, then any $k$-decomposition of $P$ would induce a $k$-decomposition of $Q$, contradicting the fact that $Q$ is simple.
\end{proof}

\subsection{$k$-decomposable permutations}
\label{sub:k dec seq}
We prove Lemma~\ref{lem:input-revealing} (v) and (vi). We refer to \S\,\ref{sec:input-rev} for the necessary definitions. We show an example of the gadget $\alt_k$: $$\alt_{7} = \left( \begin{smallmatrix}
  &&&\bullet &&&&\\
  &&&&&\bullet &&\\
  &&\bullet &&&&&\\
  &&&&&&\bullet &\\
  &\bullet &&&&&&\\
  &&&&&&&\bullet \\
  &&&&\bullet &&&\\
 \end{smallmatrix}\right).$$

\begin{lemma}[Lemma~\ref{lem:input-revealing}(v)]
		$\alt_{k+1}$ is a $k$-alternating gadget for \greedy where $\alt_{k}=(\left\lfloor (k+1)/2\right\rfloor,k,1,k-1,2,\dots)$. 
\end{lemma}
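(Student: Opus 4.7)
My plan is, for each $i \in \{1, \dots, k\}$, to take $p_i := a_i$, where $a_i$ is the access point at time $q_i.y$, and to argue that $a_i$ must lie outside $B$ on the side dictated by the parity of $i$. First I would normalise the choice of the touch points by assuming that $q_i$ is the \emph{first} touch in column $q_i.x$ strictly after $q_{i-1}.y$: any earlier such touch in the same column could replace $q_i$ while still yielding a valid copy of $\alt_{k+1}$. This normalisation gives $t_i^{\star} := \tau(q_i.x, q_i.y-1) \le q_{i-1}.y$. Assuming case (i) of the gadget definition fails (no access lies in $B$), the access $a_i$ must satisfy $a_i.x \ne q_i.x$, and since \greedy touches $q_i.x$ at time $q_i.y$, the rectangle $R_i := \square_{(a_i.x, q_i.y),(q_i.x, t_i^{\star})}$ must contain no touch point of $O_{q_i.y-1}$ other than its corners.

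The heart of the argument is to use $q_{i-1}$ as a blocker that rules out $a_i$ lying on the wrong side. For odd $i$, I would assume for contradiction that $a_i.x < q_i.x$; the hypothesis $a_i \notin B$ together with $q_i.x \le B.\xmax$ then forces $a_i.x < B.\xmin$. By the extreme-to-middle structure of $\alt_{k+1}$, $q_{i-1}$ lies weakly to the left of the middle column $q_0.x$ (and equals $q_0$ when $i=1$), so $q_{i-1}.x \in [B.\xmin, B.\xmax]$ with $q_{i-1}.x < q_i.x$; hence $q_{i-1}.x \in (a_i.x, q_i.x)$. At the same time $q_{i-1}.y \in (t_i^{\star}, q_i.y)$ by the choice of $q_i$, and $q_{i-1}$ differs from both corners of $R_i$ in both coordinates. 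Thus $q_{i-1}$ is an interior point of $R_i$, contradicting its emptiness. So $a_i.x > q_i.x$, and combined with $a_i \notin B$ this yields $a_i.x > B.\xmax$. The even case is symmetric, with $q_{i-1}$ now lying on the right of $q_0.x$ and blocking any right-side $a_i$.

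Finally, the $p_i := a_i$ obtained this way have strictly increasing $y$-coordinates $q_1.y < \dots < q_k.y$ in $[B.\ymin, B.\ymax]$ and alternate sides as required, so case (ii) of the gadget definition holds. The main subtlety is the base case $i = 1$, where $q_{i-1} = q_0$ lies in the middle column rather than strictly to one side; however, $a_1 \notin B$ still forces $a_1.x < B.\xmin \le q_0.x < q_1.x = B.\xmax$, so $q_0$ lies strictly between $a_1.x$ and $q_1.x$ and continues to serve as a blocker in $R_1$. The alternating ``extreme-to-middle'' layout of $\alt_{k+1}$ is precisely what guarantees, for every $i$, the existence of a valid blocking $q_{i-1}$.
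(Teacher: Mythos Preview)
Your proof is correct and is essentially the paper's argument unpacked: both show that $q_{i-1}$ blocks the \greedy rectangle for $q_i$, forcing the relevant access to the correct side of $B$; the paper merely wraps this step in the ``hidden'' lemma (\Cref{prop:hidden}), thereby obtaining some $p_i$ with $p_i.y\in(q_{i-1}.y,q_i.y]$ rather than pinning $p_i$ to time $q_i.y$ via your normalisation. One small imprecision: it may happen that $t_i^\star = q_{i-1}.y$, so $q_{i-1}$ need not differ from the corner $(q_i.x,t_i^\star)$ in its $y$-coordinate, but since $q_{i-1}.x\neq q_i.x$ it is still distinct from that corner and your contradiction stands.
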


\begin{proof}

Let $q_{0},\dots,q_{k}$ be
	touch points in $\Greedy(X)$ such that $q_{i}.y<q_{i+1}.y$
	for all $i$ and form $\alt_{k+1}$ with bounding box $B$. 
Suppose that there is no access point in $B$.

	We prove for $i=1,\dots,k$, that there exists an access point $p_i \in (B.\xmax,n]\times(q_{i-1}.y,q_{i}.y]$ for odd $i$, and $p_i \in [1,B.\xmin)\times(q_{i-1}.y,q_{i}.y]$ for even $i$.

For odd $i$, by \Cref{prop:hidden}~(i), $q_{i}.x$ is hidden in $(q_{i-1}.x,n]$ after $q_{i-1}.y$. So there must be an access point $p_i \in(B.\xmax,n]\times(q_{i-1}.y,q_{i}.y]$, otherwise $q_i$ cannot be touched.  

For even $i$, by \Cref{prop:hidden}~(ii), $q_{i}.x$ is hidden in $[1,q_{i-1}.x)$ after $q_{i-1}.y$. 
	So there must be an access point $p_i \in[1,B.\xmin)\times(q_{i-1}.y,q_{i}.y]$, otherwise $q_i$ cannot be touched.

Observe that $p_{1},\dots,p_{k}$ satisfy $B.\ymin \le p_{1}.y<\dots<p_{k}.y\le B.\ymax$
		and $B.\xmax<p_{i}.x$ for all odd $i$ and $p_{i}.x<B.\xmin$ for all
		even $i$, fulfilling the condition of Definition~\ref{def:gadget}. Hence, $\alt_{k+1}$ is a $k$-alternating gadget.

\end{proof}

\begin{lemma}[Lemma~\ref{lem:input-revealing}(vi)]
	 $\alt_{k+4}$ is a capture gadget for \greedy that runs on $k$-decomposable
		sequence without initial tree.
 
\end{lemma}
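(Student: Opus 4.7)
I plan to prove the statement by contradiction: assume that an occurrence of $\alt_{k+4}$ in $\Greedy(X)$ has a bounding box $B$ containing no access point of $X$, and exhibit a simple permutation of size at least $k+1$ embedded in $X$. By Lemma~\ref{lem:connection-ext}, this will contradict the $k$-decomposability of $X$.

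The first step is to produce an access point $p_0$ strictly below $B$ whose column lies in the strict interior of the strip $(B.\xmin,B.\xmax)$. Let $q_0$ be the bottom-most touch of the gadget. From the explicit definition $\alt_k=(\lfloor(k+1)/2\rfloor,k,1,k-1,2,\dots)$, the column coordinate of $q_0$ in $\alt_{k+4}$ is $\lfloor(k+5)/2\rfloor$, which lies strictly between $1$ and $k+4$ for every $k\ge 1$, so $q_0.x\in(B.\xmin,B.\xmax)$. Because \greedy runs without an initial tree, an element is first touched only when first accessed; hence the element $q_0.x$ was accessed at some time $t\le q_0.y=B.\ymin$. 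The case $t=B.\ymin$ would give an access point $(q_0.x,B.\ymin)\in B$, contradicting our assumption, so $t<B.\ymin$ and I set $p_0:=(q_0.x,t)$. By the same argument applied to every touch $q_i$ of the gadget, every column of the strip hosts an access point strictly below $B$ (otherwise such an access would have to lie in $B$).

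Next, I invoke Lemma~\ref{lem:input-revealing}(v): since $\alt_{k+4}$ is a $(k+3)$-alternating gadget and $B$ contains no access point, there exist $k+3$ access points $p_1,\ldots,p_{k+3}$ with $B.\ymin<p_1.y<\cdots<p_{k+3}.y\le B.\ymax$ such that $p_i.x>B.\xmax$ for odd $i$ and $p_i.x<B.\xmin$ for even $i$. Together with $p_0$, these give $k+4$ access points of $X$ with strictly increasing $y$-coordinate, in which $p_0$ sits at an interior column of the strip below $B$, while $p_1,\ldots,p_{k+3}$ straddle the strip alternately within $B$'s row range.

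The final and most delicate step is to extract a simple sub-permutation of size at least $k+1$ from this configuration. The $k+4$ points $p_0,p_1,\ldots,p_{k+3}$ alone are not always sufficient: for example when $k=3$, the adversarial choice in which right-side $x$-coordinates increase and left-side $x$-coordinates decrease with $y$ yields the induced permutation $(4,5,3,6,2,7,1)$, which is itself $2$-decomposable and contains no simple sub-pattern of size $\ge 4$. The proof therefore needs to combine these $k+4$ access points with the auxiliary strip-column accesses lying below $B$, exploiting the temporal constraints that \greedy's stair dynamics impose on their relative ordering (each touch $q_i$ forces a specific visibility condition on $q_i.x$ at time $q_i.y$), to isolate a simple sub-pattern of the required size. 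This careful combinatorial extraction, bridging the alternating access structure above $B$ with the strip-column accesses below $B$, is the main obstacle I expect to encounter.
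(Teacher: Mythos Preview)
Your setup is sound: the derivation of $p_0$ below $B$ and the invocation of the $(k+3)$-alternating gadget to produce $p_1,\dots,p_{k+3}$ are both correct. But the obstacle you flag at the end is a real gap, and the route you propose for closing it---extracting a simple sub-pattern of size at least $k+1$ from the $p_i$'s together with the below-$B$ accesses in the gadget columns---does not obviously work. The difficulty is that the \emph{time} order of those below-$B$ accesses among themselves is entirely unconstrained by the gadget (only the fact that each precedes $B.\ymin$ is forced), and the $x$-coordinates of the $p_i$'s on each side are likewise unconstrained beyond ``left of $B$'' or ``right of $B$''. With this much freedom the adversary can arrange the full configuration to remain $2$-decomposable, as your own $(4,5,3,6,2,7,1)$ example already suggests for the $p_i$'s alone; adding further points whose relative order you do not control does not help.

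The paper does not try to convert the situation back into a pattern-containment statement about $X$. Instead it fixes a block decomposition tree $\tset$ of arity at most $k$ and locates the \emph{smallest} block $P^*\in\tset$ containing two of the alternating access points $p_i,p_j$ on opposite sides of $B$. By minimality of $P^*$, each of its at most $k$ children contains at most one of the $p_i$, which forces the time range of $P^*$ to end before $q_{k+1}.y$; hence $P^*$ encloses at most the lowest $k+1$ gadget touches $q_0,\dots,q_k$. The remainder of the proof is a \greedy hiding argument (via \Cref{prop:hidden}): one shows that above $P^*$ at most two specific columns inside the strip can ever be touched again, so at most two further gadget touches are possible---at most $k+3$ in total, one short of the $k+4$ required. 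The argument thus exploits the block structure of $\tset$ and the geometry of \greedy's stairs directly, rather than seeking a forbidden simple pattern in $X$.
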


\begin{proof}

Let $X$ be a $k$-decomposable permutation.
Since $\alt_{k+4}$ is a $(k+3)$-alternating gadget, if $\alt_{k+4}$ appears in $\Greedy(X)$ with bounding box $B$, then either there is an access point in $B$ or there are access points $p_1, p_2, \dots, p_{k+3}$, satisfying the conditions $B.\ymin \leq p_1.y < \cdots < p_{k+3}.y \leq B.\ymax$ and $B.\xmax < p_i.x$ for odd $i$ and $p_i.x < B.\xmin$ for even $i$. Suppose there is no access point in $B$.

Let $q_0, q_1, \dots, q_{k+3}$ denote the points that form $\alt_{k+4}$ (where $q_0.y < q_1.y < \cdots < q_{k+3}.y$), and let us denote $\pset = \{p_1, \dots, p_{k+3}\}$. Let $\tset$ be a block decomposition tree of $X$ of arity at most $k$. We look at each block $P \in \tset$ as a minimal rectangular region in the plane that contains all access points in the block. 

Let $P^*$ be the \emph{smallest} block in $\tset$ that contains two distinct access points $p_i, p_j \in \pset$ such that $i$ is odd, and $j$ is even. (Observe that $p_i$ is to the right of $B$, and $p_j$ is to the left of $B$.) 

Observe first that the bounding box of $P^*$ must contain or intersect both vertical sides of $B$, otherwise it could not contain points on both sides of $B$. 
Furthermore, the bottom horizontal side of $B$ must be contained entirely in the bounding box of $P^*$: If that were not the case, then there would be no accesses below $B$, and $q_0$ (the lowest point of $\alt_{k+4}$) would not be touched by \greedy (since there is no initial tree).
The following is the key observation of the proof.

\begin{lemma}
\label{lem:d5}
Let $k'$ be the largest integer such that $P^*$ contains $q_0, \dots, q_{k'}$. Then, $k' < k+1$. In words, $P^*$ contains at most the $k+1$ lowest points of $\alt_{k+4}$.
\end{lemma}

\begin{proof}
Let $P^* = \tilde P(P_1,\ldots, P_k)$ be the decomposition of $P^*$ into at most $k$ blocks.
We claim that each block $P_j$ can contain at most one point from $\pset$. First, by the minimality of $P^*$, each block $P_j$ never contains two access points from $\pset$ from different sides of $B$; and $P_j$ cannot contain two points from $\pset$ from the same side of $B$ either, because assuming otherwise that $P_j$ contains two points from $\pset$ from the left of $B$, it would follow that there is another access point $(a,t) \in \pset$ on the right of $B$ outside of $P_j$ such that $(P_j).\ymin < t< (P_j).\ymax$, contradicting that $P_j$ is a block.  

Observe further that, except for the block containing $p_1$, each block $P_j$ cannot overlap with $B$. 
Since we have at most $k$ such blocks in the decomposition, and because $q_i.y \ge p_i.y$ for all $i$, it follows that the top boundary of $P^*$ must be below $q_{k+1}$.  
 
\end{proof}

Since the bounding box of $P^*$ contains (at best) the first $k+1$ points of $\alt_{k+4}$, the three remaining points of $\alt_{k+4}$ need to be touched after the last access in $P^*$. In the following we show that this is impossible.

Let $(L, t_L)$ be the topmost access point inside $P^*$ to the left of $B$, such that there is a touch point inside $B$ at time $t_L$. Let $(R, t_R)$ be the topmost access point inside $P^*$ to the right of $B$, such that there is a touch point inside $B$ at time $t_R$. Let $L'$ be the rightmost element inside $B$ touched at time $t_L$, and let $R'$ be the leftmost element inside $B$ touched at time of $t_R$.

\begin{lemma}
\label{lem:onlybetween}
After time $t_L$, within the interval $[B.\xmin, L']$, only $L'$ can be touched. Similarly, after time $t_R$, within the interval $[R', B.\xmax]$, only $R'$ can be touched. 
\end{lemma}

\begin{proof}
Let $L''$ be the rightmost touched element in $[L,B.\xmin)$ at time $t_L$, and let $R''$ be the leftmost touched element in $(B.\xmax, R]$ at time $t_R$. 

We have that $\tau(L'',t_L), \tau(L',t_L) \geq \tau(x,t_L)$, for any $x \in [B.\xmin, L')$. Thus, any $x \in [B.\xmin, L')$ is hidden in $(L'',L')$ after $t_L$.

Above $P^*$ there can be no access to an element in $[L'',L']$, since that would contradict the fact that $P^*$ is a block. Within $P^*$ after $t_L$ there can be no access to an element in $(L'',B.\xmin)$, as the first such access would necessarily cause the touching of a point inside $B$, a contradiction to the choice of $(L,t_L)$. An access within $B$ is ruled out by our initial assumption.

Thus, there is no access in $(L'',L') \times (t_L,n]$, and from Lemma~\ref{prop:hidden}(iii) it follows that any $x \in [B.\xmin, L')$ can not be touched after time $t_L$.

We argue similarly on the right side of $B$.
\end{proof}

As a corollary of Lemma~\ref{lem:onlybetween}, note that above $P^*$ only elements within $[L',R']$ can be touched within $B$. 
If $L' = R'$, then only one element can be touched, and we are done. Therefore, we assume that $L'$ is strictly to the left of $R'$.

We assume w.l.o.g.\ that $t_R > t_L$ (the other case is symmetric). Let $(Q,t_Q)$ be the first access point left of $B$ with $t_Q > t_R$ such that there is a touch point inside $B$ at time $t_Q$. Observe that $(Q,t_Q)$ is outside of $P^*$ by our choice of $(L,t_L)$. If no such $Q$ exists, we are done, since we contradict the fact that $\alt_{k+4}$ is a $(k+3)$-alternating gadget. 

Let $(Z, t_Z)$ denote the last touch point with the property that $Z \in [(P^*).xmin, L')$, and $t_Z \in [t_L, t_Q)$. If there are more such points in the same row, pick $Z$ to be the leftmost. Note that $(Z, t_Z)$ might coincide with $(L, t_L)$. We have $t_Z \le t_R$ because otherwise $(Q,t_Q)$ cannot exist. Note than $t_Z > t_R$ would imply that elements in $[L',R']$ are hidden in $(Z,R)$ after $t_R$. 

Next, we make some easy observations.

\begin{lemma}
\label{lem:emptiness}
The following ranges are empty:
\begin{enumerate}[label={{(}\roman*{)}}]
\item $[B.\xmin, B.\xmax] \times [t_R+1, t_Q-1]$,
\item $[1, Z-1] \times [t_Z+1, t_R]$,
\item $[Q, Z-1] \times [t_Z+1, t_Q-1]$,
\item $[Q, R'-1] \times [t_{R}+1, t_Q-1]$.
\end{enumerate}
\end{lemma} 

\begin{proof}

\begin{enumerate}[label={{(}\roman*{)}}]
\item It is clear that there can be no access point inside $B$ by assumption. Suppose there is a touch point in $[B.\xmin, B.\xmax] \times [t_R+1, t_Q-1]$, and let $(Q',t_{Q'})$ be the first (lowest) such point. Denote the access point at time $t_{Q'}$ as $(Q'', t_{Q'})$. Clearly $t_{Q'}>(P^*).\ymax$ must hold, otherwise the choice of $(L,t_L)$ or $(R,t_R)$ would be contradicted. Also $Q''>(P^*).\xmax$ must hold, otherwise the choice of $(Q,t_Q)$ would be contradicted.

But this is impossible, since $Q' \in [L',R']$, and $\tau(Q',t_{Q'}-1) \leq t_R$, and thus the rectangle with corners $(Q'', t_{Q'})$ and $(Q', \tau(Q',t_{Q'}-1))$ contains the touch point $(R,t_R)$, contradicting the claim that \greedy touches $Q'$ at time $t_{Q'}$. 
\item 
All elements in $[1, Z-1]$ are hidden in $[1,Z-1]$ after $t_Z$. There can be no access points on the left of $P^*$, due to the structure of the block decomposition, and there is no access point in $[(P^*).\xmin, Z-1] \times [t_Z+1, t_Q-1]$ due to the choice of $Z$.  
Hence there cannot be a touch points in $[1,Z-1] \times [t_Z+1, t_R]$.  
\item First there is no access point in $[Q,Z-1] \times [t_Z+1,(P^*).\ymax]$ due to the choice of $(Z,t_Z)$ and the structure of block decomposition. 
Also, there is no access point in $[Q,Z-1] \times ((P^*).\ymax, t_Q-1]$: Assume there were such an access point $(Q',t_{Q'})$. Then it must be the case that $Q < Q' < (P^*).\xmin$ and $(P^*).\ymax < t_{Q'} < t_Q$. 
Any rectangle formed by $(Q,t_Q)$  and a point in $P^* \cap B$ would have contained $(Q',t_{Q'})$, a contradiction to the fact that \greedy touches a point inside $B$ at time $t_Q$. 

Next, we argue that there is no non-access touch point $(a,t) \in [Q,Z-1] \times [t_Z+1,t_Q-1]$. 
There are three cases.  
\begin{compactenum}[--]
\item $(a,t) \in [(P^*).\xmin, Z-1] \times [t_Z+1,t_Q-1]$ contradicts the choice of $(Z,t_Z)$.

\item $(a,t) \in [Q,(P^*).\xmin) \times [t_Z+1, (P^*).\ymax]$ contradicts the fact that all elements in $[1,Z)$ are hidden in $[1,Z)$ after $t_Z$, and there is no access in $[1,Z)$ in the time interval $[t_Z+1,(P^*).\ymax]$, since $P^*$ is a block. 
\item $(a,t) \in [Q,(P^*).\xmin) \times ((P^*).\ymax,t_Q-1]$, contradicts the claim that at time $t_Q$ \greedy touches a point inside $B$, since any rectangle formed by $(Q,t_Q)$  and a point in $P^* \cap B$ would have contained $(a,t)$.
\end{compactenum} 
 
\item
Given the previous claims, it remains only to prove that there is no touch point $(a,t) \in [Z, B.\xmin) \times t \in [t_R+1, t_Q-1]$.
There cannot be such a touch point for $t \leq (P^*).\ymax$ due to the choice of $(Z,t_Z)$. For $t > (P^*).\ymax$, there cannot be souch an access point due to the structure of block decomposition. Remains the case when $(a,t)$ is a non-access touch point in $[Z,B.\xmin) \times ((P^*).\ymax, t_Q-1$. This is also impossible, as any rectangle formed by $(Q,t_Q)$ and a point in $P^* \cap B$ would have contained $(a,t)$, contradicting the choice of $(Q,t_Q)$. 

\end{enumerate} 
\end{proof}

\begin{lemma}
\label{lem:d8}
At time $t_Q$ 
we touch $R'$. Let $L^*$ be the leftmost element touched in $[L',R']$ at time $t_Q$ (it is possible that $L^* = R'$). After time $t_Q$, only $L^*$ and $R'$ can be touched within $[L',R']$. 
\end{lemma}
\begin{proof}
The first part follows from the emptiness of rectangles in Lemma~\ref{lem:emptiness}.
That is, the lemma implies that the rectangle formed by $(Q,t_Q)$ and $(R', t_R)$ is empty, so \greedy touches $R'$ at time $t_Q$.  

The elements $a \in (L^*, R')$ cannot be touched because they are hidden in $(L^*,R')$ after $t_Q$, and there is no access point in this range after $t_Q$, due to the structure of the block decomposition.
Suppose that some element $a \in [L', L^*)$ is touched (for the first time) at some time $t > t_Q$ by accessing element $x$. 
So the rectangle formed by $(x,t)$ and $(a,\tau(a,t-1))$ is empty, and we know that $\tau(a,t-1) < t_R$. Notice that $x<(P^*).\xmin$ for otherwise $(R,t_R)$ would have been in the rectangle formed by $(x,t)$ and $(a,\tau(a,t-1))$, a contradiction.
Furthermore, $\tau(a,t-1) > t_Z$ for otherwise, $(Z,t_Z)$ would have been in the rectangle formed by $(x,t)$ and $(a,\tau(a,t-1))$, a contradiction. 
Now, since $\tau(a,t-1) > t_Z$, Lemma~\ref{lem:emptiness} implies that the rectangle formed by $(Q,t_Q)$ and $(a,\tau(a,t-1))$ must be empty, therefore $a$ is touched at time $t_Q$; this contradicts the choice of $L^*$. 

\end{proof}

Lemma~\ref{lem:d5}, \ref{lem:onlybetween}, and \ref{lem:d8} together imply that in total at best the $k+3$ lowest points of $\alt_{k+4}$ can be touched (out of the $k+4$ needed). This means that our assumption that $B$ is free of access points was false. Therefore $\alt_{k+4}$ is a capture gadget, finishing the proof of Lemma~\ref{lem:input-revealing}(vi). See Figure~\ref{proof_slide} for an illustration.

\begin{figure}
\begin{center}  
\includegraphics[scale=0.7]{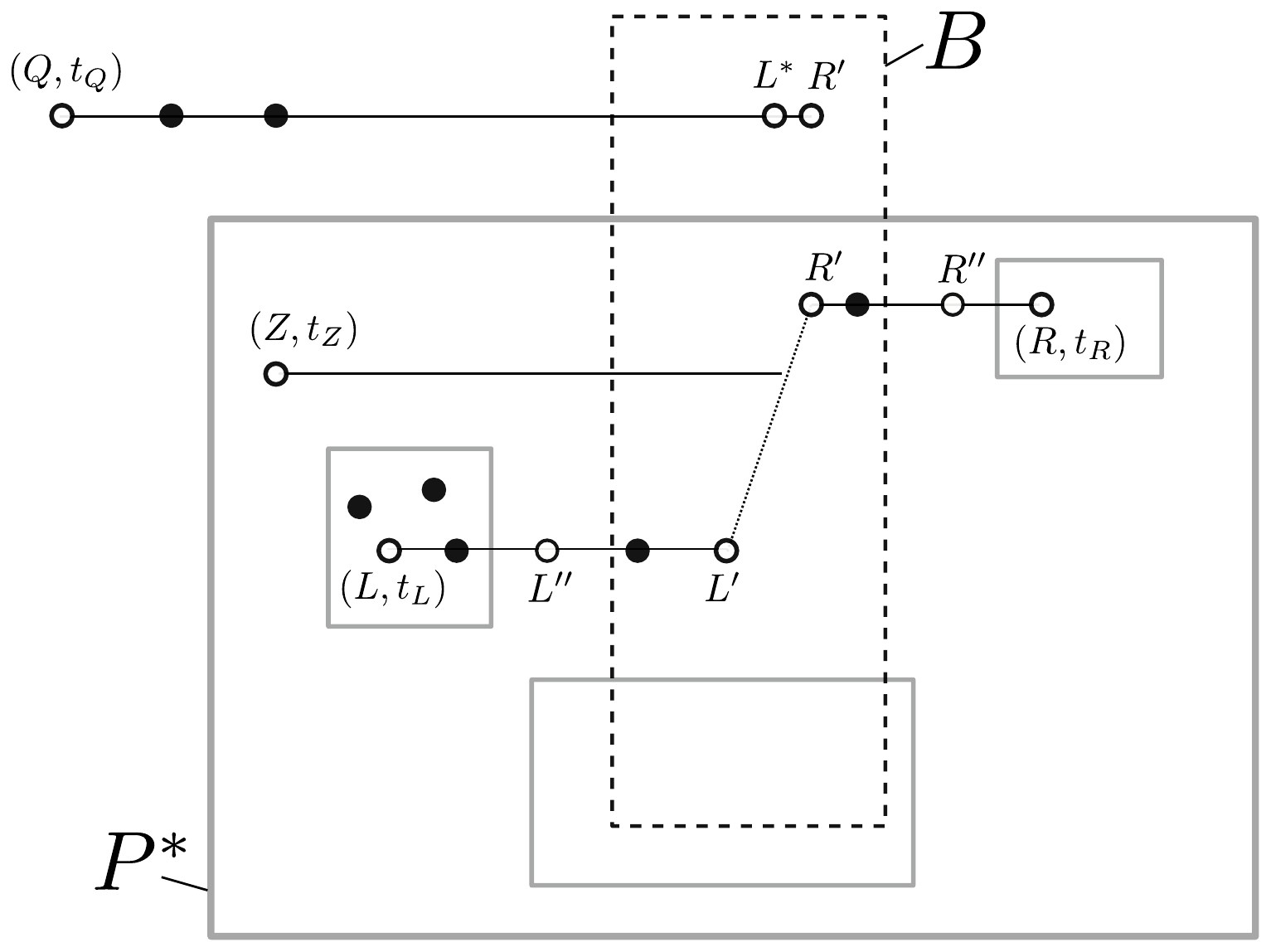}
\end{center}
\caption{Illustration of the proof of Lemma~\ref{lem:input-revealing}(vi).} 
\label{proof_slide}
\end{figure}
\end{proof}

\section{Decomposition theorem and applications}
\label{sec:decomposition}

\subsection{Preliminaries}
\label{subsec:decomp}
Given a permutation $\sigma: [n] \rightarrow [n]$, we call a set $[a,b] \times [c,d]$, with the property that $\{\sigma(a), \sigma(a+1), \dots, \sigma(b)\} = [c,d]$, a \emph{block} of $\sigma$ 
\emph{induced} by the interval $[a,b]$. 
Let $P$ be a block induced by interval $I$. Consider a partitioning of $I$ into $k$ intervals, denoted (left-to-right) as $I_1, I_2, \dots, I_k$, such that each interval induces a block, denoted respectively as $P_1, P_2, \dots, P_k$. Consider the unique permutation $\tilde{P}$ of size $[k]$ that is order-isomorphic to a sequence of arbitrary representative elements from the blocks $P_1, P_2, \dots, P_k$. We denote $P = \tilde{P}[P_1, P_2, \dots, P_k]$ a \emph{deflation} of $P$, and we call $\tilde{P}$ the skeleton of $P$. Visually, $\tilde{P}$ is obtained from $P$ by contracting each block $P_i$ to a single point (Figure~\ref{fig_intro}). Deflation is analogously defined on permutations. Every permutation $\sigma$ has the trivial deflations $\sigma = \sigma[1, \dots, 1]$ and $\sigma=1[\sigma]$. Permutations that have only the trivial deflations are called \emph{simple permutations}.


A \emph{block decomposition tree} $\tset$ of $\sigma$ is a tree in which each node is a block of $\sigma$. We define $\tset$ recursively as follows: The root of $\tset$ is the block $[n] \times [n]$. A node $P$ of $\tset$ is either a leaf, or it has children $P_1, P_2, \dots$ corresponding to a nontrivial deflation $P = \tilde{P}[P_1, P_2, \dots ]$. To every non-leaf block $P$ we associate its skeleton $\tilde{P}$. For convenience, we assume that the blocks $P_1,\ldots, P_k$ are ordered by their $y$-coordinates from bottom to top. We partition $P$ into $k^2$ rectangular regions whose boundaries are aligned with $P_1, \ldots, P_k$, and index them $R(i,j)$ with $i$ going left to right and $j$ going bottom to top. In particular, the region $R(i,j)$ is aligned with $P_j$ (same $y$-coordinates). 
To every leaf block $P = [a,b] \times [c,d]$ we associate the unique permutation that is order-isomorphic to $(\sigma(a), \sigma(a+1), \dots, \sigma(b))$. When there is no chance of confusion, we refer interchangeably to a block and its associated permutation. We denote the leaves and non-leaves of the tree $\tset$ by $L(\tset)$ and $N(\tset)$ respectively.

A permutation is \emph{$k$-decomposable} if it has a block decomposition tree $\tset$ in which every node has an associated permutation of size at most $k$. As an example, we observe that preorder sequences of binary search trees are $2$-decomposable.





\subsection{A robust {\sc Greedy}}

We augment \greedy with some additional steps and we obtain an offline algorithm (called \agreedy) with several desirable properties. We remark that in this section we always assume the preprocessing discussed in \S\,\ref{sec:prelim-short} or equivalently, in geometric view we assume that \agreedy is executed without initial tree (since \agreedy is offline, this is a non-issue).

Before describing the algorithm, we give some simple definitions.

We denote the top left, respectively top right corner point of a rectangular region $R$ as $\textit{topleft}(R)$, respectively $\textit{topright}(R)$. We define $\textit{topwing}(R)$ to be a subset of the points touched in $R$, that are the last touch points of their respective columns. A point $q$ is in $\textit{topwing}(R)$ if at least one of the following properties is satisfied:  
\begin{compactenum}[--]
\item The rectangle with corners $q$ and $\textit{topleft}(R)$ is empty,  
\item The rectangle with corners $q$ and $\textit{topright}(R)$ is empty,  
\item $q$ is in the leftmost or rightmost column of $R$.
\end{compactenum}
Note that the set $\textit{topwing}(R)$ depends on the time point where we are in the execution of the algorithm, but for simplicity of notation, we omit the reference to a specific time (it will always be clear from the context).

We can now describe how \agreedy processes an input point $(a,t)$. First, it acts as \greedy would, i.e.\ for every empty rectangle with corners $(a,t)$ and $(b,t')$, it touches the point $(b,t)$ on the current timeline. Let $Y_t$ be the points touched in this way (including the point $(a,t)$). Second, it touches additional points on the timeline. Let $v_1$, $v_2 = \parent(v_1)$, \ldots, $v_\ell = \parent(v_{\ell - 1})$ be the nodes of the decomposition tree such that the blocks associated with these nodes end at time $t$. Note that $v_1$ is a leaf. Let $v_{\ell +1} = \parent(v_\ell)$; $v_{\ell + 1}$ exists except when the last input is processed. For $h \in [2,\ell + 1]$ in increasing order, we consider the deflation $\tilde{P}[P_1,\ldots,P_k]$ associated with $v_h$. Let $(a,t)$ lie in $P_j$; note that $j = k$ if $h \le \ell$ and $j < k$ for $h = \ell + 1$. For every sibling region $R(i,j)$ of $P_j$, \agreedy touches the projection of $\topwing(R(i,j))$ on the timeline $t$.

Let $Y_{h,t}$ be the points touched. Figure~\ref{fig_rgreedy} illustrates the definition of \agreedy.



We need to argue that the augmentation step does not violate the feasibility of the solution.

\begin{lemma}
\label{lem:feasibility}
The output of \agreedy is feasible.  
\end{lemma}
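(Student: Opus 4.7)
The proof will proceed by induction on time $t$. Assume that after processing all timelines up to $t-1$ the touch matrix is feasible. Processing timeline $t$ first adds $Y_t$ (the basic \greedy touches) and then the augmentation sets $Y_{2,t}, Y_{3,t}, \ldots, Y_{\ell+1,t}$ in this order. Every point added on timeline $t$ lies on row $y=t$, so points added at time $t$ are automatically satisfied with respect to each other. Hence I only need to check, after each substep, that each new point forms a satisfied rectangle with every earlier touch at $t' < t$. The base case is vacuous.

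The greedy portion $Y_t$ is standard: by definition of \greedy, every rectangle between the access point $(a,t)$ and an earlier touch $(b,t')$ that is not already satisfied is closed by adding the new touch $(b,t)$, which is exactly what \greedy does. This is the same argument that certifies feasibility of $\Greedy$'s own output.

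For each augmentation step $Y_{h,t}$, consider a new point $(b,t)$ arising from some $b \in \topwing(R(i,j))$, where $R(i,j)$ is a sibling region of $P_j$ in the deflation of $v_h$. Let $t_b < t$ denote the latest prior touch time in column $b$; this exists because $\topwing(R(i,j))$ is composed of already-touched points. For any earlier touch $(c,t')$ with $c \neq b$ and $t' < t$, I will exhibit a witness in $\square_{(b,t),(c,t')}$. In the easy case $t' \leq t_b$, the point $(b,t_b)$ lies in this rectangle and differs from both corners, so it is a witness. In the remaining case $t_b < t' < t$, I will use the defining emptiness properties of the topwing (empty rectangle to $\mathit{topleft}(R(i,j))$ or to $\mathit{topright}(R(i,j))$, or $b$ at an extreme column of $R(i,j)$): these constrain the possible $x$-coordinate of $(c,t')$ relative to $b$ whenever $(c,t')$ lies inside $R(i,j)$. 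Combined with the fact that the earlier substeps $Y_t, Y_{2,t}, \ldots, Y_{h-1,t}$ have already populated row $y=t$ across the full horizontal span of $v_{h-1}$'s region, these row-$t$ points serve as witnesses for $(c,t')$ that lie outside $R(i,j)$ but inside $v_h$. For $(c,t')$ outside $v_h$ entirely, the rectangle is wide enough to contain one of the points just added within $v_h$, or a witness supplied by the inductive feasibility on earlier timelines.

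The main obstacle is the case analysis for $t_b < t' < t$: the point $(c,t')$ can sit in any of several structurally distinct positions (inside the current $R(i,j)$, inside $P_j$ itself, inside another sibling region of $P_j$, inside $v_h$ but in none of these, or outside $v_h$). The cleanest route, which I expect to adopt, is to carry a strengthened inductive invariant along with feasibility: namely, that immediately after $Y_{h,t}$ is added, the projection of $\topwing$ of $v_h$'s rectangular region onto row $y = t$ is fully realized as touched points. This invariant ``seals'' the top of $v_h$'s region with a horizontal band at time $t$, reducing the hard case for the next augmentation $Y_{h+1,t}$ to checking pairs within a single topwing, which in turn reduces to the emptiness conditions defining topwing together with the feasibility of the sub-output inherited from the outer induction.
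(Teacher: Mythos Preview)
Your inductive framework and the easy case $t' \le t_b$ are correct, but the hard case has a genuine gap. First, you are missing a simplification: since $(b,t_b) \in \topwing(R(i,j)) \subseteq R(i,j)$, we have $t_b \ge \textit{bottom}(P_j)$, so $t' \in (t_b,t)$ forces $(c,t')$ into the row band of $P_j$, hence into some region $R(i',j)$ at the \emph{same} level $h$. This collapses your five-way split to essentially two cases, $i'=i$ and $i'\ne i$; the cases ``inside $v_h$ but in none of these'' and ``outside $v_h$ entirely'' do not need separate treatment.

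The gap is in the case $i' \ne i$ when $R(i',j)$ lies \emph{between} $P_j$ and $R(i,j)$ in the $x$-direction. Then $\square_{(b,t),(c,t')}$ does not meet $P_j$'s column range at all, so the row-$t$ witnesses you invoke from the earlier substeps $Y_t,\ldots,Y_{h-1,t}$ --- all of which live inside $v_{h-1}=P_j$ --- are useless here; your strengthened invariant about $\topwing(v_{h-1})$ being projected onto row $t$ has exactly the same defect, since those points are still confined to $P_j$'s columns. The fix, which is the paper's entire argument, is that the witness must come from the \emph{same} step $Y_{h,t}$: the rectangle $\square_{(b,t),(c,t')}$ contains the rectangle from $(c,t')$ to the near top corner of $R(i',j)$, and either that smaller rectangle is nonempty, or $(c,t')\in\topwing(R(i',j))$ and so $(c,t)$ itself is in $Y_{h,t}$. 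The case $i'=i$ is handled symmetrically via the empty-rectangle property defining $(b,t_b)\in\topwing(R(i,j))$. No level-by-level induction on $h$ and no strengthened invariant are needed.
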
  
\begin{proof}

Suppose for contradiction that at time $t$ we have created an unsatisfied rectangle $r$ with corners $(a,t)$ and $(x,y)$, where $y<t$. The point $(a,t)$ is the result of projecting $(a,t')$ onto the horizontal line $t$, where $(a,t') \in \textit{topwing}(R)$, for some region $R$ processed at time $t$. Clearly $t'<t$. Assume that before adding $(a,t)$, the rectangle formed by the points $(a,t')$ and $\textit{topleft}(R)$ is empty. A symmetric argument works if $(a,t')$ formed an empty rectangle with $\textit{topright}(R)$. We have $y > t'$, for otherwise the point $(a,t')$ would satisfy $r$. Let $R = R(i,j)$. Since $(x,y)$ lies in the time window of $R$, we have $(x,y) \in R(i',j)$ for some $i'$.

Suppose first that $i = i'$. 
If $x \leq a$, then the rectangle with corners $(a,t')$ and $\textit{topleft}(R)$ contains $(x,y)$, contradicting the fact that $(a,t')$ was in $\textit{topwing}(R)$ before the augmentation. Therefore $x>a$ holds. But then, if $r$ is not satisfied, then the rectangle with corners $(x,y)$ and $\textit{topleft}(R)$ had to be empty before the augmentation, and therefore $(x,t)$ must have been touched in the augmentation step, making $r$ satisfied. 

The case $i \not= i'$ remains. If $i'<i$, then $r$ contains the rectangle formed by $(x,y)$ and the top right corner of $R(i',j)$. Either this rectangle is nonempty, or $(x,y)$ was in $\textit{topwing}(R(i',j))$ before the augmentation, in which case $(x,t)$ was touched in the augmentation step. In both cases $r$ must be satisfied. The case when $i'>i$ is symmetric, finishing the proof.

\end{proof}

Observe that \agreedy processes the input and emits the output line-by-line, in increasing order of time. It is, however, not an online algorithm, as it has a priori access to a block decomposition tree of the input.

\subsection{Decomposition theorem }




Our main technical contribution in this section is the following result: 

\begin{theorem} [Decomposition theorem for permutations] 
\label{lem: main} 
For any block decomposition tree $\tset$ of a permutation $P$, the total cost of \agreedy is bounded as \[ \textsc{RGreedy}(P) \leq 4 \cdot \sum_{\tilde P \in N(\tset) }  \textsc{Greedy}(\tilde P) + \sum_{P \in L(\tset) } \textsc{Greedy}(P)+ 3n.\]
\end{theorem}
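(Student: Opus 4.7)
I will decompose the cost of \agreedy on input $P$ as
\[
\textsc{RGreedy}(P) \;=\; n \;+\; \sum_{t=1}^{n} |Y_t \setminus X| \;+\; \sum_{h, t} |Y_{h,t}|,
\]
separating the $n$ access points, the greedy touches $Y_t$ (including the access but already counted, so I only count the new touches), and the augmentation touches $Y_{h,t}$. The plan is to charge greedy touches to the leaves of $\tset$, augmentation touches to the internal nodes of $\tset$, and to absorb access points and boundary contributions into the additive $3n$ term. Note that every time step $t$ lies in a unique leaf-block time window, so the charging of greedy touches to leaves is well defined; and every augmentation event is triggered by the end of some block, whose parent (an internal node) receives the charge.

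For the greedy touches, I plan to prove a \emph{block isolation} invariant: for each leaf block $P_\ell \in L(\tset)$ occupying a time window $I_\ell$, the greedy activity inside $I_\ell$ coincides (point by point) with that of \greedy run in isolation on the subsequence inside $P_\ell$. Two ingredients drive this. First, just before $I_\ell$ begins, the columns of $P_\ell$ contain no touched points, since \greedy never reaches an empty column and no access in $P_\ell$'s columns has yet occurred. Second, during $I_\ell$, greedy rectangles originating inside $P_\ell$ cannot cross the column boundaries of $P_\ell$, because the previously-finished sibling block triggered an augmentation whose projected topwings now seal the top boundary of every region $R(i,j')$ directly below $P_\ell$ that lies in $P_\ell$'s column range. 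Summing the isolated greedy costs over leaves yields the term $\sum_{P \in L(\tset)} \textsc{Greedy}(P)$.

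For the augmentation touches, I will charge all touches projected when any child $P_j$ of an internal node $v$ ends to $v$'s contribution, and show that the total augmentation charged to $v$ is at most $4\cdot \textsc{Greedy}(\tilde P)$, where $\tilde P$ is the skeleton of $v$'s deflation. The key observation is that $\topwing(R(i,j))$ is the union of two monotone staircases (one visible from $\textit{topleft}(R(i,j))$, one from $\textit{topright}(R(i,j))$), plus at most two boundary columns; projecting this to the timeline when $P_j$ ends mirrors exactly one ``greedy access'' at position $(i,j)$ in the skeleton $\tilde P$ viewed as a $k$-by-$k$ permutation access sequence. Relating topwing points column by column to the empty rectangles that \greedy would form on $\tilde P$, summing over all child-ending events at $v$ gives at most $2\cdot\textsc{Greedy}(\tilde P)$ for each of the two visibility directions, hence the factor $4$.

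The main obstacle will be to formalize the sealing property that underlies both charging arguments. Concretely, I need to show that (a) after the topwing of $R(i,j)$ is projected onto a timeline $t$, any potential unsatisfied rectangle whose violation would force a greedy touch crossing this boundary is in fact already satisfied by one of the projected points, and (b) the subsequent greedy rectangles originating from accesses outside the block cannot penetrate the sealed region either. This will require a careful case analysis on where the would-be violating rectangle lies relative to the two staircases of $\topwing$, and leverages the symmetric top-left/top-right definition of $\topwing$ precisely to handle both directions. Finally, the additive $3n$ accounts for the $n$ access points themselves, up to $n$ boundary touches from the leftmost/rightmost-column clauses in the definition of $\topwing$ (which are included regardless of visibility), and up to $n$ further charges from off-by-one counting at the final augmentation event of each internal node $v_{\ell+1}$.
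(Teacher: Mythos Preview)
Your decomposition splits the cost by \emph{type of touch} (the greedy step $Y_t$ versus the augmentation step $Y_{h,t}$) and charges the former wholesale to leaves. The central gap is the block-isolation claim: it is not true that ``during $I_\ell$, greedy rectangles originating inside $P_\ell$ cannot cross the column boundaries of $P_\ell$.'' When an element of a leaf block $P_\ell$ is accessed, the plain greedy step does reach into columns of previously-finished sibling (and ancestor-sibling) blocks; those touches lie in $Y_t$ but outside $P_\ell$, and they are not bounded by $\textsc{Greedy}(P_\ell)$. A two-block example already shows this: take $P=(2,1)[\,(1,2),(1,2)\,]$ on $\{1,2,3,4\}$, i.e.\ the access order $3,4,1,2$; when $1$ is accessed at time~$3$, the greedy step touches element~$3$, which is outside the current leaf's column range $\{1,2\}$. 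The sealing you invoke (projecting topwings of finished regions) only caps the \emph{interior} columns of such a region; it does not, and cannot, prevent that region's two boundary columns from being touched again and again by the greedy step during later blocks. Consequently your scheme leaves all of these outside-leaf greedy touches unaccounted for.

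The paper's proof decomposes the cost \emph{spatially} instead: every touched point is assigned to the smallest block of $\tset$ that contains it, so a greedy touch that lands outside a leaf $P_\ell$ is charged to an ancestor $\tilde P$, not to $P_\ell$. For each non-leaf $\tilde P$ with children $P_1,\dots,P_k$, these outside-child touches are then split by whether they lie in the first row, the last row, or a middle row of their region $R(i,j)$. The key lemma is an exact correspondence: region $R(i,j)$ is touched by \agreedy if and only if the point $(i,j)$ is touched by \greedy running on the skeleton $\tilde P$. Combined with the fact that at most two columns of any non-block region are ever touched, this gives at most $2\cdot\textsc{Greedy}(\tilde P)$ first-row touches and $2\cdot\textsc{Greedy}(\tilde P)$ last-row touches (the latter subsumes the augmentations), yielding the factor~$4$. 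Middle-row touches are handled globally: each access point causes at most two such touches across the whole tree, contributing to the additive $O(n)$. Your augmentation argument is gesturing toward the region/skeleton correspondence, but you apply it only to $Y_{h,t}$; the same correspondence is what actually bounds the greedy-step touches outside leaf blocks, and those are the bulk of the $4\cdot\textsc{Greedy}(\tilde P)$ term.
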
 


We discuss applications of Theorem~\ref{lem: main}.  

\paragraph{Application 1: Cole et al.'s ``showcase'' sequences.}
In Cole et al.~\cite{finger1}, a showcase sequence is defined in order to introduce necessary technical tools for proving the dynamic finger theorem. 
This sequence can be defined (in our context) as a permutation $P = \tilde P[P_1,\ldots, P_{n/\log n}]$ where each $P_j$ is a sequence for which \greedy has linear linear access cost, i.e.\ $|P_j| = \log n$, and $\textsc{Greedy}(P_j) = O(\log n)$. 
Now, by applying Theorem~\ref{lem: main}, we can say that the cost of \agreedy on $P$ is $\textsc{RGreedy}(P) \leq 4 \cdot \textsc{Greedy}(\tilde P) + \sum_{j} \textsc{Greedy}(P_j) + O(n) \leq O(\frac{n}{\log n} \cdot \log (\frac{n}{\log n})) + \frac{n}{\log n}O(\log n) + O(n) = O(n)$. 
This implies that our algorithm has linear access cost on this sequence.

\paragraph{Application 2: Simple permutations as the ``core'' difficulty of BST inputs.}  
We argue that whether \agreedy performs well on all sequences depends only on whether \greedy does well on simple permutation sequences. 
More formally, we prove the following theorem. 

\begin{theorem}
\label{thm:greedy_rgreedy} 
If \greedy is $c$-competitive on simple permutations, then \agreedy is an $O(1)$ approximation algorithm for $\opt(X)$ for all permutations $X$. 
\end{theorem}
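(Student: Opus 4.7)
The plan is to follow the outline sketched in Application~2 of the paper and combine three pieces: (a) the decomposition theorem (Theorem~\ref{lem: main}), (b) a matching lower bound decomposition of $\opt$ (Lemma~\ref{lem:decomp opt lower-bound}), and (c) the reduction of arbitrary access sequences to permutations (Theorem~\ref{thm:perm-all}). First I would fix, for a given permutation $P$, a block decomposition tree $\tset$ in which every internal node is a \emph{simple} permutation. Such a tree exists by the standard substitution decomposition (this is exactly the existence statement used in the proof of Lemma~\ref{lem:connection-ext} to pass from $(ii)$ to $(i)$); decomposing all the way down to singletons, the leaves are trivial permutations with $\textsc{Greedy}(P) = \opt(P) = 1$.

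Applying Theorem~\ref{lem: main} to this particular $\tset$, we get
\[
\textsc{RGreedy}(P) \;\le\; 4 \sum_{\tilde P \in N(\tset)} \textsc{Greedy}(\tilde P) \;+\; \sum_{P' \in L(\tset)} \textsc{Greedy}(P') \;+\; 3n.
\]
Because every $\tilde P \in N(\tset)$ is simple by construction, the hypothesis that \greedy is $c$-competitive on simple permutations gives $\textsc{Greedy}(\tilde P) \le c \cdot \opt(\tilde P)$; for the leaves the inequality $\textsc{Greedy}(P') \le c \cdot \opt(P')$ is trivial. Substituting yields
\[
\textsc{RGreedy}(P) \;\le\; 4c \sum_{\tilde P \in N(\tset)} \opt(\tilde P) \;+\; c \sum_{P' \in L(\tset)} \opt(P') \;+\; 3n.
\]
Next I would invoke Lemma~\ref{lem:decomp opt lower-bound} (stated in the paper as $\opt(P) \ge \sum_{P' \in \tset} \opt(P') - 2n$) to compare the right-hand side with $\opt(P)$. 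This is precisely the place where it matters that the same tree $\tset$ is used both for the decomposition of \agreedy and for the decomposition of \opt; doing so converts the sum over $\tset$ into $\opt(P)$ up to an additive $O(n)$ slack, giving $\textsc{RGreedy}(P) \le 4c \cdot \opt(P) + (8c+3) n$. Since $\opt(P) = \Omega(n)$ for any permutation on $n$ elements (every element must be accessed), the additive $O(cn)$ term is absorbed, and we obtain an $O(c)$-approximation for permutation inputs.

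The final step is to lift the bound from permutations to arbitrary access sequences. For this I would invoke Theorem~\ref{thm:perm-all}: any $c'$-competitive geometric BST on permutations yields a $4c'$-competitive algorithm on all sequences, so by applying this to \agreedy we obtain an $O(c)$-approximation on all inputs. The only real obstacle I anticipate is a bookkeeping one, namely verifying that the reduction of Theorem~\ref{thm:perm-all} is compatible with the offline algorithm \agreedy (the theorem is stated for online algorithms in the main text, but since the reduction is a generic one it goes through offline as well, and the appendix promises an unconditional offline version). Once this is checked the theorem follows.
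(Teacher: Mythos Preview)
Your proposal is correct and follows essentially the same route as the paper: pick a block decomposition tree whose nodes are simple, apply the decomposition theorem (Theorem~\ref{lem: main}), use the $c$-competitiveness hypothesis to replace each $\textsc{Greedy}(\tilde P)$ by $c\cdot\opt(\tilde P)$, and then invoke Lemma~\ref{lem:decomp opt lower-bound} to recombine into $4c\cdot\opt(P)+(8c+3)n$. One small remark: the theorem is stated \emph{for permutations}, so your final step invoking Theorem~\ref{thm:perm-all} is not needed to prove the statement as written---the paper mentions that extension only as a follow-up comment after the proof, not as part of it.
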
 

Using Theorem~\ref{thm:perm-all} we can extend this statement from all permutations to all access sequences.

To prove Theorem~\ref{thm:greedy_rgreedy} we use of the known easy fact~\cite{brignall} that every permutation $P$ has a block decomposition tree $\tset$ in which all permutations in $L(\tset)$ and all permutations in $N(\tset)$ are simple. Applying Theorem~\ref{lem: main} and the hypothesis that \greedy is $c$-competitive for simple permutations, we get the bound: 
\[\textsc{RGreedy}(P) \leq 4c \cdot \sum_{\tilde P \in  N(\tset)} \opt(\tilde P) + c \cdot \sum_{P \in L(\tset)} \opt(P) +3n.\]


Now we need to decompose $\opt$ into subproblems the same way as done for \agreedy. 
The following lemma finishes the proof (modulo the proof of Theorem~\ref{lem: main}), since it implies that $\textsc{RGreedy}(P) \leq 4c \cdot \opt(P) + (8c + 3)n$.  

\begin{lemma} 
\label{lem:decomp opt lower-bound}
$\opt(P) \geq \sum_{P \in \tset } \opt(P) - 2n$.
\end{lemma}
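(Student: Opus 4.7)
The plan is induction on the decomposition tree $\tset$, reducing to a per-node inequality: for any internal node $v$ with children $v_1,\ldots, v_k$,
\[
\opt(P_v) \ge \opt(\tilde P_v) + \sum_{i=1}^k \opt(P_{v_i}) - k.
\]
Iterating this down the tree yields $\opt(P) \ge \sum_{v\in \tset} \opt(\cdot) - \sum_{v\in N(\tset)} k_v$, where the left-hand sum ranges over skeletons $\tilde P_v$ for internal $v$ and leaf permutations $P_v$ for leaves $v$. Since $\sum_{v\in N(\tset)} k_v$ is exactly the number of non-root nodes of $\tset$, and a non-trivial block decomposition tree with $n$ leaves has at most $2n-1$ nodes (each internal node has $\ge 2$ children), this sum is at most $2n-2$, giving $\opt(P) \ge \sum - 2n$.

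For the per-node inequality, fix an optimal solution $M^*$ for $P_v$. Because $M^*\subseteq R_v$, we can split
\[
|M^*| = \sum_i |M^*\cap R_{v_i}| + |M^*\cap \mathrm{Cross}|,
\]
where $\mathrm{Cross}$ denotes the union of the off-diagonal cells $R(i,j)$ in the $k\times k$ partition of $R_v$. Since the restriction of a satisfied set to an axis-aligned rectangle remains satisfied and contains the access points of $P_{v_i}$, we get $|M^*\cap R_{v_i}|\ge \opt(P_{v_i})$. It remains to show $|M^*\cap \mathrm{Cross}| \ge \opt(\tilde P_v) - k$.

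Define the collapsed $k\times k$ matrix $C$ by $C(a,b) = 1$ iff $M^*\cap R(a,b)\ne\emptyset$. Then $C$ contains the access pattern of $\tilde P_v$, since the $k$ sub-block cells are all non-empty, and the off-diagonal $1$-entries of $C$ inject into $M^*\cap\mathrm{Cross}$, so $|M^*\cap\mathrm{Cross}|\ge |C| - k$. Therefore, once $C$ is shown to be satisfied, it becomes a feasible solution for $\tilde P_v$, giving $|C|\ge \opt(\tilde P_v)$ and hence the desired inequality.

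The main obstacle is showing that $C$ is satisfied: a satisfier in $M^*$ of two points drawn from distinct cells may lie in one of those same cells and thus fail to produce a new $1$-entry in $C$. The resolution is to pick extremal representatives per pair: for $C(a_1,b_1) = C(a_2,b_2) = 1$ with $a_1<a_2$ and $b_1<b_2$, take $p_1\in M^*\cap R(a_1,b_1)$ rightmost then topmost, and $p_2\in M^*\cap R(a_2,b_2)$ leftmost then bottommost. Any $r\in M^*$ satisfying $\square_{p_1 p_2}$ must lie in some $R(a_r,b_r)$ with $a_1\le a_r\le a_2$ and $b_1\le b_r\le b_2$, and the extremality of $p_1,p_2$ excludes $r$ from $R(a_1,b_1)$ and from $R(a_2,b_2)$, producing a satisfier of the rectangle in $C$; pairs with a common $x$- or $y$-cell index are trivially satisfied, and the orientation $b_1>b_2$ is handled symmetrically. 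This proves $C$ is satisfied and closes the induction.
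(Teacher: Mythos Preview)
Your proof is correct and follows essentially the same approach as the paper: both establish the per-node inequality $\opt(P_v)\ge \opt(\tilde P_v)+\sum_i \opt(P_{v_i})-k$ by restricting an optimal solution to the sub-blocks and contracting to a $k\times k$ point set, then argue satisfiability of the contraction via extremal representatives in each nonempty cell. The only cosmetic differences are the tie-breaking order you use (rightmost-then-topmost versus the paper's lowest-then-rightmost) and your slightly sharper counting $\sum_v k_v\le 2n-2$; both variants work.
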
  

\begin{proof} 

We show that for an arbitrary permutation $P = \tilde{P}[P_1,\dots,P_k]$, it holds that $\opt(P) \geq \opt(\tilde{P}) + \sum_{i=1}^k{\opt(P_i)} - k$. The result then follows simply by iterating this operation for every internal node of the block decomposition tree $\tset$ of $P$ and observing that the numbers of children of each internal node add up to at most $2n$. 

Let $Y^*$ be the set of points of the optimal solution for $P$. We partition $Y^*$ into sets $Y^*_i$ contained inside each block $P_i$, and we let $Y^*_{out} = Y^* \setminus \bigcup_{i=1}^k{Y^*_i}$. Each $Y^*_i$ must clearly be feasible for $P_i$, so we have $|Y^*_i| \geq \opt(P_i)$. It remains to show that $|Y^*_{out}| \geq OPT(\tilde{P}) - k$. 



Again, consider the partitioning of $P$ into $k^2$ rectangular regions $R(i,j)$, aligned with the blocks $P_1, \dots, P_k$. Consider the contraction operation that replaces each region $R(i,j)$ with a point $(i,j)$ if and only if $Y^* \cap R(i,j)$ is nonempty. We obtain a point set $Y' \in [k] \times [k]$, and clearly $|Y'| \leq |Y^*_{out}| + |\tilde{P}|$. We claim that the obtained point set $Y'$ is a satisfied superset of $\tilde{P}$, and therefore $|Y^*_{out}| \geq \opt(\tilde{P}) - k$. 

Indeed, suppose for contradiction that there is an unsatisfied rectangle with corner points $p,q \in Y'$. Suppose $p$ is to the upper-left of $q$ (the other cases are symmetric). Let $R_p$ and $R_q$ be the regions before the contraction that were mapped to $p$, respectively $q$. Since $Y^* \cap R_p$ is nonempty, we can consider $p'$ to be the lowest point in $Y^* \cap R_p$ and if there are several such points, we take $p'$ to be the rightmost. Similarly, since $Y^* \cap R_q$ is nonempty, we can take $q'$ to be the topmost point in $Y^* \cap R_q$ and if there are several such points, we take $q'$ to be the leftmost. Now, if the rectangle with corners $p$ and $q$ is unsatisfied in $Y'$, then the rectangle with corners $p'$ and $q'$ is unsatisfied in $Y^*$, contradicting the claim that $Y^*$ is a feasible solution.
\end{proof} 

\paragraph{Application 3: Recursively decomposable permutations.}
We generalize preorder access sequences to $k$-decomposable permutations, and show that \agreedy has linear access cost for all such permutations.  
Let $k$ be a constant such that there is a tree decomposition $\tset$ in which each node is a permutation of size at most $k$. Notice that a preorder sequence is $2$-decomposable.

\begin{theorem} 
The cost of \agreedy on any $k$-decomposable access sequence is at most $O(n \log k)$. 
\end{theorem}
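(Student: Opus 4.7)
The plan is to apply the decomposition theorem (Theorem~\ref{lem: main}) to a block decomposition tree $\tset$ of $X$ in which every node has an associated permutation of size at most $k$; such a tree exists by the definition of $k$-decomposability. By the theorem,
\[
\textsc{RGreedy}(X) \leq 4 \sum_{\tilde P \in N(\tset)} \textsc{Greedy}(\tilde P) + \sum_{P \in L(\tset)} \textsc{Greedy}(P) + 3n.
\]
Thus it suffices to bound the two sums by $O(n \log k)$.

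First, I would bound each local term. For any permutation $Q$ of size at most $k$, $\textsc{Greedy}(Q) = O(|Q| \log |Q|) = O(|Q| \log k)$, since \greedy satisfies static optimality (cf.\ the first row of Table~\ref{table-results}), and an equidistribution argument gives the bound for any balanced BST, hence for \greedy up to a constant. So the cost of the theorem is at most
\[
C \log k \cdot \Bigl( \sum_{\tilde P \in N(\tset)} |\tilde P| + \sum_{P \in L(\tset)} |P| \Bigr) + 3n,
\]
for some absolute constant $C$.

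Next, I would argue via a standard charging that $\sum_{v \in \tset} |v| = O(n)$. Every leaf of $\tset$ corresponds to a single element of $X$, so $\sum_{P \in L(\tset)} |P| = n$. For an internal node $v$ with deflation $\tilde v[v_1, \dots, v_\ell]$, we have $|\tilde v| = \ell$, the number of children of $v$ in $\tset$. Therefore $\sum_{\tilde P \in N(\tset)} |\tilde P|$ equals the total number of non-root nodes of $\tset$. Since every internal node has at least two children (non-trivial deflation), the number of internal nodes is at most $|L(\tset)| - 1 = n-1$, so $\sum_{\tilde P \in N(\tset)} |\tilde P| \leq 2n-2$.

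Combining the two bounds yields $\textsc{RGreedy}(X) \leq O(n \log k)$, as desired. There is no real obstacle here beyond invoking Theorem~\ref{lem: main}; the only mildly non-obvious step is the charging argument, which follows from the fact that in a block decomposition every internal node has arity at least two.
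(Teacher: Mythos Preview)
Your proof is correct and follows essentially the same route as the paper: apply the decomposition theorem to a $k$-ary block decomposition tree, use the $O(|Q|\log|Q|)$ amortized bound for \greedy on each node's permutation, and then observe that the sizes of the skeletons sum to $O(n)$ by the standard charging argument on a tree with all internal nodes of arity at least two. The only cosmetic difference is that you take the tree all the way down to singleton leaves (so $\sum_{P\in L(\tset)}|P|=n$ is immediate), whereas the paper allows leaf blocks of size up to $k$ and uses that they partition $[n]$; both versions are fine.
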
  
\begin{proof} 
Applying Theorem~\ref{lem: main}, we obtain that $\textsc{RGreedy}(P) \leq 4 \sum_{\tilde{P} \in N(\tset)} \textsc{Greedy}(\tilde{P}) + \sum_{P \in L(\tset)} \textsc{Greedy}(P) +O(n)$. 
First, notice that the elements that belong to the leaf blocks are induced by disjoint intervals, so we have $\sum_{P \in L(\tset)} |P| = n $. 
For each such leaf block, since \greedy is known to have logarithmic amortized cost, we have $\textsc{Greedy}(P) \leq O(|P| \log |P|) = O(|P| \log k)$. Summing over all leaves, we have $\sum_{P \in L(\tset)} \textsc{Greedy}(P) \leq O(n \log k)$.

Next, we bound the cost of non-leaf blocks. 
Since the numbers of children of non-leaf blocks add up to at most $2n$, we have $\sum_{\tilde{P} \in N(\tset)} |\tilde{P}| \leq 2n$. 
Again, using the fact that \greedy has amortized logarithmic cost, it follows that $\sum_{\tilde{P} \in N(\tset)} \textsc{Greedy}(\tilde{P}) \leq \sum_{\tilde{P} \in N(\tset)} O(|\tilde{P}| \log k) \leq O(n \log k)$.
This completes the proof.      
\end{proof} 

We mention that the logarithmic dependence on $k$ is essentially optimal, since there are simple permutations of size $n$ with optimum cost $\Omega(n \log n)$. In fact, for arbitrary $k$, the bound is tight, as shown in Appendix~\ref{sec:tight-block}. In Theorem~\ref{thm:main-noinitial} we proved that for arbitrary constant $k$, the linear upper bound for $k$-decomposable permutations is asymptotically matched by {\sc Greedy}.

\subsection{Proof of Theorem~\ref{lem: main}}

We are now ready to prove the decomposition theorem. Let $Y$ be the output of \agreedy on a permutation of $P$ of size $n$, and let $\tset$ be the block decomposition tree of $P$. We wish to bound $|Y|$. 

For each leaf block $P \in L(\tset)$, let $Y_P$ be the points in  $P \cap Y$.  
For each non-leaf block $\tP \in N(\tset)$, with children $P_1,\ldots, P_k$, let $\tY$ be the set of points in $Y \cap (\tP \setminus \bigcup_{j} P_j)$. In words, these are the points in block ${\tilde{P}}$ that are not in any of its children (we do not want to overcount the cost). It is clear that we have already accounted for all points in $Y$, i.e. $Y = \bigcup_{P \in \tset} Y_P$, so it suffices to bound $\sum_{P \in \tset} |Y_P|$. 

For each non-leaf block $\tP  \in N(\tset)$, we show later that $|\tY| \leq 4 \cdot \Greedy(\tP) + 5\degree(\tP) + \junk(\tP)$, where $\degree(\tP)$ is the number of children of $\tP$, and the 
term $\junk(\tP)$ will be defined such that $\sum_{\tP \in N(\tset)}\junk(\tP) \leq 2n$.
For each leaf block $P$, notice that apart from the \agreedy augmentation in the last row, the cost is exactly $\Greedy(P)$. Including the extra points touched by \agreedy we have that $|Y_P| = \Greedy(P) + |\topwing(P)|$. Since leaf blocks are induced by a partitioning of $[n]$ into disjoint intervals, we have that $\sum_{P \in L(\tset)}{|\topwing(P)|} \leq n$. So in total, the cost can be written as $4 \cdot \sum_{{\tilde{P}} \in N(\tset)}\textsc{Greedy}({\tilde{P}}) + \sum_{P \in L(\tset)}\textsc{Greedy}(P) + 5n$ as desired.
It remains to bound $\sum_{\tilde{P} \in N(\tset)}{|Y_{\tilde{P}}|}$.

Consider any non-leaf block $\tP$. Define the rectangular regions $R(i,j)$ as before. We split $\abs{\tY}$ into three separate parts. Let $\tY^1$ be the subset of $\tY$ consisting of points that are \emph{in the first row} of some region $R(i,j)$. Let $\tY^3$ be the subset of $\tY$ consisting of points that are \emph{in the last row} of some region $R(i,j)$. Observe that any points possibly touched in an \agreedy augmentation step fall into $\tY^3$. Let $\tY^2 = \tY - (\tY^1 \cup \tY^3)$ be the set of touch points that are not in the first or the last row of any region $R(i,j)$. 

Denote by $\textit{left}(R)$ and $\textit{right}(R)$ the leftmost, respectively rightmost column of a region $R$. Similarly, denote by $\textit{top}(R)$ and $\textit{bottom}(R)$ the topmost, respectively bottommost $y$-coordinate of a region $R$.


\begin{lemma}
\label{lem: onlytwo}
If region $R(i,j)$ has no block $P_q$ below it, \agreedy touches no point in $R(i,j)$. 
If $R(i,j)$ has a block $P_q$ below it, \agreedy touches only points in columns $\textit{left}(R(i,j))$ and $\textit{right}(R(i,j))$.
\end{lemma}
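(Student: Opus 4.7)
The plan is to split into two cases based on the unique block $P_q$ in the decomposition $\tilde P = \tilde P[P_1,\dots,P_k]$ that lies in column $i$ of the $k\times k$ grid; since $\tilde P$ is a permutation, each column of the grid contains exactly one block. Let $c^L := \mleft(R(i,j))$ and $c^R := \mright(R(i,j))$; by construction these coincide with the leftmost and rightmost columns of $P_q$. For the first statement, suppose the block in column $i$ lies above $R(i,j)$, i.e., $q > j$. I would observe that throughout the entire $y$-range of $\tilde P$ up to the top of $R(i,j)$, every access inside $\tilde P$ occurs in $\tilde P$'s $x$-range and in particular in a grid column other than $i$, because $P_q$ has not yet been reached. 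Since \agreedy starts with no initial tree, every column in $[c^L, c^R]$ has undefined last-touch time $\tau$ throughout this period, so \greedy cannot touch any of them (it requires a previous touch to form a non-degenerate empty rectangle), and no augmentation creates a touch inside this range either, since every topwing meeting this range is empty. Accesses strictly outside $\tilde P$ happen only after the top of $R(i,j)$ and cannot affect it.

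For the second statement, let $P_q$ lie below $R(i,j)$, i.e., $q < j$. The plan is to prove by induction over rows $j' = q+1, q+2, \dots, j$ the invariant that during the $y$-range of $P_{j'}$ every touch placed inside $[c^L, c^R]$ lies in column $c^L$ or $c^R$. The engine of the proof will be the following strong post-processing property: for every non-root block $B$ in the decomposition tree, right after $B$ is fully processed,
\[
\tau(\mleft(B), \mtop(B)) \;=\; \tau(\mright(B), \mtop(B)) \;=\; \mtop(B).
\]
This holds because $\mleft(B)$ and $\mright(B)$ are necessarily accessed somewhere inside $B$'s $y$-range (so both have been touched within $B$), and by definition the topwing of $B$ always contains the topmost touches in its leftmost and rightmost columns. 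At time $\mtop(B)$, the augmentation step projects the topwing of $B$ (as one of the regions aligned with the ending block's row in $B$'s parent's grid) to the current timeline, thereby producing touches at $(\mleft(B), \mtop(B))$ and $(\mright(B), \mtop(B))$.

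Applied to $P_q$, this yields $\tau(c^L, \mtop(P_q)) = \tau(c^R, \mtop(P_q)) = \mtop(P_q)$, while for every $c \in (c^L, c^R)$ we trivially have $\tau(c, \mtop(P_q)) \le \mtop(P_q)$. For the inductive step on $j'$, I would consider an access $(a,t)$ in the $y$-range of $P_{j'}$: since $P_{j'}$ lies in a grid column different from $i$, either $a < c^L$ or $a > c^R$. In the first case, for any column $c \in (c^L, c^R)$ the rectangle with corners $(a,t)$ and $(c, \tau(c, t-1))$ contains the point $(c^L, \tau(c^L, t-1))$, because $a < c^L < c$ and by the inductive hypothesis $\tau(c, t-1) \le \mtop(P_q) \le \tau(c^L, t-1) \le t-1$. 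The rectangle is therefore already satisfied and \greedy does not touch $c$; the second case is symmetric with $c^R$. At the augmentation step at $\mtop(P_{j'})$, the topwing of $R(i, j')$ consists only of touches in columns $c^L, c^R$ by the inductive hypothesis applied within row $j'$, so its projection stays in these two columns; augmentations of $P_{j'}$'s own deflation add touches only inside $P_{j'}$'s $x$-range (disjoint from column $i$), and augmentations of outer ancestors project topwings of regions outside $\tilde P$---none of these affect $[c^L, c^R]$.

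The hard part will be to organize this chain of inductions cleanly. One key point to verify is that the augmentation step of \agreedy indeed projects the topwing of $B$ itself at time $\mtop(B)$, not only of $B$'s grid-siblings; this is the engine of the strong post-processing property and, as we read the paper, is made explicit in the informal description of \agreedy and illustrated in Figure~\ref{fig_rgreedy}. One also has to ensure that at times where several nested blocks end simultaneously, augmentations are processed innermost-first (as prescribed by the definition), so that topwing projections at higher levels see the touches deposited by lower-level augmentations and therefore compose consistently across levels.
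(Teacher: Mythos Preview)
Your proposal is correct and follows essentially the same approach as the paper. The paper's proof is considerably terser: for the second statement it simply observes that the augmentation step at time $\mtop(P_q)$ places touches at both upper corners of $P_q$, and then invokes the ``hidden'' concept (Lemma~\ref{prop:hidden}(iii)) in one shot---since no access occurs in $(\mleft(P_q),\mright(P_q))$ after $\mtop(P_q)$, every interior column is hidden in that range and thus never touched again. You re-derive this hidden argument by explicit induction on $j'$ and are more careful than the paper in verifying that later \emph{augmentation} steps (not just \greedy steps) also stay confined to $c^L,c^R$; the paper leaves this implicit.
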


\begin{proof}
If $R(i,j)$ has no block $P_q$ below it, then it is never touched. Assume therefore that $R(i,j)$ is above (and aligned with) block $P_q$ for some $q<j$. When \agreedy accessed the last element in $P_q$, it touched the left and right upper corners of $P_q$ in the augmentation step (since the topmost element in the leftmost and rightmost columns belong to $\topwing(P_q)$). As there are no more accesses after $\textit{top}(P_q)$ inside $[\textit{left}(P_q)+1, \textit{right}(P_q)-1]$, all elements in this interval are hidden in $[\textit{left}(P_q)+1, \textit{right}(P_q)-1]$, and therefore not touched again.
\end{proof}

Let $m_{i,j}$ be the indicator variable of whether the region $R(i,j)$ is touched by \agreedy. 
Consider the execution of \greedy on permutation $\tP$ and define $n_{i,j}$ to be the indicator variable of whether the element $(i,j)$ is touched by \greedy on input $\tilde{P}$. Notice that $\sum_{i,j} n_{i,j} =  \Greedy(\tP)$. We refer to the execution of \greedy on permutation $\tP$ as the ``execution on the contracted instance'', to avoid confusion with the execution of \agreedy on block $\tP$.

\begin{lemma}
\label{lem: greedy-rgreedy} Before the augmentation of the last row of $\tP$ by $\topwing(\tP)$ the following holds:
A rectangular area $R(i,j)$ in block $\tilde{P}$ is touched by \agreedy iff \greedy touches $(i,j)$ in the contracted instance $\tilde{P}$; in other words, $m_{i,j} = n_{i,j}$ for all $i,j$. 
\end{lemma}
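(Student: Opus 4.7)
The plan is to prove the correspondence $m_{i,j} = n_{i,j}$ by induction on $j$, synchronizing the execution of \agreedy restricted to $\tP$ with the execution of \greedy on the skeleton $\tilde{P}$. The invariant to maintain is that, for each $j' \leq j$: (a) $m_{i,j'} = n_{i,j'}$ for all $i$; (b) every sibling region $R(i,j')$ with $i \ne \pi(j')$ that is touched at all has touch points only in its leftmost and rightmost columns, each carrying a touch point at the top boundary $\mtop(R(i,j'))$; and (c) the last-touch time recorded in \agreedy on the boundary columns of the $x$-range assigned to skeleton column $i$ equals $\mtop(R(i, j^{\star}_{i}))$, where $j^{\star}_{i}$ is the last time step at which \greedy touched column $i$ on the skeleton. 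Invariant (b) is a direct consequence of Lemma~\ref{lem: onlytwo} together with the augmentation rule, whereas (c) is what propagates the correspondence between the two executions.

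For the inductive step, I would examine what \agreedy does while processing the access points of $P_{j+1}$ in time order. When \agreedy handles the first internal access $(a_0, t_0)$ of $P_{j+1}$, invariant (c) implies that the relative ordering of the $\tau$-values on the boundary columns of all sibling regions $R(i, j+1)$ matches the relative ordering of \greedy's last-touch times on the skeleton at step $j+1$. Hence, the set of sibling columns touched by \agreedy at time $t_0$ is in bijection with the skeleton columns touched by \greedy at time $j+1$: the touch lands in the right boundary of $R(i, j+1)$ when $i < \pi(j+1)$ and in the left boundary when $i > \pi(j+1)$. For any subsequent internal access of $P_{j+1}$, I would use Lemma~\ref{prop:hidden}(iii) applied to the freshly-touched boundary columns on either side of $P_{j+1}$ to argue that no column outside $P_{j+1}$ remains visible. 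When the last access of $P_{j+1}$ fires the augmentation triggered at $\tP$, the projection of $\topwing(R(i, j+1))$ for each touched sibling region restores invariants (b) and (c) at index $j+1$, thereby completing the induction step.

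The main obstacle is establishing and preserving invariant (c) precisely: I must argue that events taking place \emph{outside} $\tP$ (whether before \agreedy enters $\tP$ or in the initial-tree setup in the geometric view) cannot disturb the \emph{relative} ordering of last-touch times among boundary columns of sibling regions that live inside $\tP$. A secondary technical step is to check that both boundary columns $\mleft(R(i, j'))$ and $\mright(R(i, j'))$ end up with the same $y$-coordinate for their last-touch time after augmentation, so that \agreedy's rectangular-visibility test at $(a_0, t_0)$ returns the same verdict as \greedy's test at $(\pi(j+1), j+1)$ for every skeleton column $i$. Once these bookkeeping facts are in place, the identity $m_{i,j} = n_{i,j}$ follows directly from the equivalence of the two visibility tests at each synchronized step.
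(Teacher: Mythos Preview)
Your inductive framework matches the paper's: both induct on $j$, synchronizing the processing of $P_j$ with step $j$ of \greedy on the skeleton $\tilde{P}$, and both rely on Lemma~\ref{lem: onlytwo} to know that only the two boundary columns of each sibling region can carry points. Your invariant~(c) is correct and would drive the argument; it is an explicit bookkeeping version of what the paper uses implicitly through the induction hypothesis.

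The substantive difference is that the paper never tracks $\tau$-values. For the forward direction it takes the skeleton witness $(i,j')$ on the stair of $(\tilde P_j,j)$, invokes the IH to obtain a point $q\in R(i,j')$, and argues that the rectangle from the first access $p^*\in P_j$ to $q$ is empty: any obstructing point would lie in some $R(i'',j'')$ and, again by IH, would force a skeleton point $(i'',j'')$ obstructing $(i,j')$ from the stair of $(\tilde P_j,j)$. The converse is symmetric, with the augmentation step supplying a point at the top row of $R(i'',j'')$ that blocks $q$ from the stair of $p^*$ whenever the skeleton has a blocker $(i'',j'')$. This is shorter than carrying invariant~(c), though your route would also succeed.

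Your stated ``main obstacle'' is actually a non-issue, and recognizing this would simplify your write-up considerably. \agreedy is explicitly run without initial tree (see the opening of \S\,\ref{sec:decomposition}), and since $\tP$ is a block, no element in $\tP$'s $x$-range is accessed---hence none is touched---before time $\tP.\ymin$. There are therefore no points inside $\tP$ before \agreedy enters it, and nothing from outside can disturb invariant~(c); the paper handles this in one sentence in the base case. Once you use this, your ``secondary technical step'' (both boundary columns of a touched $R(i,j')$ carry a point at $\mtop(R(i,j'))$ after augmentation) is immediate from Lemma~\ref{lem: onlytwo} together with the definition of $\topwing$, which always includes the topmost point of the leftmost and rightmost columns.
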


Lemma~\ref{lem: onlytwo} and Lemma~\ref{lem: greedy-rgreedy} together immediately imply the bounds $|\tY^1| \le 2 \cdot \Greedy(\tP)$, and $|\tY^3| \le 2 \cdot \Greedy(\tP) + |\topwing(\tP)|$, since in the first or last row of a region $R(i,j)$ that is not one of the $P_j$'s at most two points are touched and no point is touched if \greedy does not touch the region. In the case of $\tY^3$, \agreedy also touches the elements in $\topwing(\tP)$ after the last access to a point in $\tP$. 

\begin{proof}
We prove by induction on $j$ that after $P_j$ is processed, the rectangular area $R(i,j)$ is touched by \agreedy if and only if $(i,j)$ is touched by \greedy in the contracted instance.

We consider the processing of $P_j$ and argue that this invariant is maintained. When $j=1$, the induction hypothesis clearly holds.
There are no other touch points below the block $\tP$ (due to the property of the decomposition), so when points in $P_1$ are accessed, no points in $\tilde{P} \setminus P_1$ are touched. Similarly, in the contracted instance when the first point in $\tilde{P}$ is accessed by \greedy, no other points are touched. 

First we prove that \greedy touching $(i,j)$ in the contracted instance implies that \agreedy touches $R(i,j)$.
Suppose \greedy touches point $(i,j)$. 
Let $p_j$ be the access point in the contracted instance at time $j$. There has to be an access or touch point $(i,j')$ where $j'<j$ such that the rectangle with corners $(i,j')$, and $p'$ had no other points before adding $(i,j)$ (the point $(i,j')$ was on the stair of $p_j$ at time $j$). The induction hypothesis guarantees that $R(i,j')$ is touched by \agreedy, so some point $q$ is touched or accessed in this region.
Choose $q$ to be the topmost such point, and if there are more points in the same row, choose $q$ to be as close to block $P_j$ as possible.

We claim that $q$ is in the stair of the first access point $p^*$ of $P_j$, therefore touching $R(i,j)$: If $q$ was not in the stair, there would be another point $q'$ in the rectangle formed by $p^*$ and $q$. 
Observe that $q'$ cannot be in $R(i,j')$, because it would contradict the choice of $q$ being closest to $p^*$. So $q'$ is in some region $R(i'',j'')$ that lies inside the minimal  rectangle embracing $R(i,j')$, and the region immediately below $P_j$. By induction hypothesis, the fact that $R(i'',j'')$ contains point $q'$ implies that $(i'',j'')$ is touched by \greedy in the contracted instance, contradicting the claim that the rectangle with corners $(i,j')$, $p_j$ is empty.

Now we prove the other direction. Suppose \agreedy touches region $R(i,j)$. This also implies that \greedy (without augmentation step) already touches $R(i,j)$.  But then \greedy must touch a point in $R(i,j)$ when accessing the first element $p^*$ in $P_j$. Assume w.l.o.g.\ that $R(i,j)$ is to the right of $P_j$. It means that there is a point $q$ in the stair of $p^*$, such that $q$ is in some region $R(i,j')$ for $j' < j$.
If in the contracted instance $(i,j')$ is in the stair of the access point $p_j$ at time $j$, we would be done (because \greedy would add point $(i,j)$), so asssume otherwise that this is not the case. Let $(i'',j'')$ 
be the point that blocks $(i,j')$, so the corresponding region $R(i'',j'')$ lies inside the minimal rectangle embracing both $P_j$ and $R(i,j')$.  Also note that the region $R(i'',j'')$ cannot be vertically aligned with $P_j$ because $\tilde P$ is a permutation. Thus $i < i''$.  
The fact that $(i'',j'')$ is touched by \greedy implies, by induction hypothesis, the existence of a point $q' \in R(i'',j'')$, and therefore the augmentation step of \agreedy touches some point $q^*$ on the top row of region $R(i'',j'')$. This point would prevent $q$ from being on the stair of $p^*$, a contradiction.
\end{proof}

\begin{lemma}$\abs{\topwing(\tP)} \le 2 \cdot \degree(\tP)$.\end{lemma}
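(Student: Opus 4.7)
The plan is to show that only boundary columns of the children of $\tilde{P}$ can contribute to $\topwing(\tilde{P})$, where I call $c$ a \emph{boundary column} if $c \in \{\textit{left}(P_j), \textit{right}(P_j)\}$ for some child $P_j$ of $\tilde{P}$. Since $\topwing$ contains at most one point per column by definition and there are at most $2\degree(\tilde{P})$ such boundary columns, the desired bound will follow at once.

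To prove the claim, I would fix an arbitrary column $c$ of $\tilde{P}$ that is \emph{interior} to some child $P_{j_c}$ (i.e.\ $\textit{left}(P_{j_c}) < c < \textit{right}(P_{j_c})$) and argue that no point in column $c$ lies in $\topwing(\tilde{P})$. By \Cref{lem: onlytwo}, in every non-child region sharing the columns of $P_{j_c}$ only the leftmost and rightmost columns are ever touched, and these are exactly the boundary columns of $P_{j_c}$; hence the interior column $c$ can only be touched from within $P_{j_c}$ itself. Consequently the last touch in column $c$, call it $(c, y^*)$, satisfies $y^* \leq \textit{top}(P_{j_c})$.

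The crux will be to exhibit two explicit blockers. When $P_{j_c}$ ends at time $\textit{top}(P_{j_c})$, the augmentation step of \agreedy executed at the parent $\tilde{P}$ for its child $P_{j_c}$ projects $\topwing(P_{j_c})$ onto the current timeline. The element occupying column $\textit{left}(P_{j_c})$ is itself accessed inside $P_{j_c}$, so column $\textit{left}(P_{j_c})$ has a touch in $P_{j_c}$, and its topmost such touch lies in $\topwing(P_{j_c})$ by the leftmost/rightmost clause of the definition; the projection therefore creates a touch at $(\textit{left}(P_{j_c}), \textit{top}(P_{j_c}))$, and symmetrically at $(\textit{right}(P_{j_c}), \textit{top}(P_{j_c}))$. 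Since $\textit{left}(\tilde{P}) \leq \textit{left}(P_{j_c}) < c$ and $y^* \leq \textit{top}(P_{j_c}) \leq \textit{top}(\tilde{P})$, the first of these touches lies in the rectangle with corners $(c, y^*)$ and $\textit{topleft}(\tilde{P})$, spoiling the left-staircase condition; symmetrically the other spoils the right-staircase condition; and $c$ is neither the leftmost nor the rightmost column of $\tilde{P}$ since it sits strictly inside $P_{j_c}$. Hence $(c, y^*) \notin \topwing(\tilde{P})$. The only sanity check I expect to need is the harmless fact that $\textit{left}(P_{j_c})$ really does get touched inside $P_{j_c}$, and this holds because the permutation element sitting in that column must be accessed during $P_{j_c}$'s time interval.
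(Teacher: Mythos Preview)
Your argument has a genuine gap, and in fact the central claim---that only boundary columns of the children can lie in $\topwing(\tilde P)$---is false. The problem is in the step asserting that the blocker $(\mleft(P_{j_c}), \mtop(P_{j_c}))$ ``spoils the left-staircase condition''. For a point to spoil emptiness of the rectangle with corners $(c, y^*)$ and $\textit{topleft}(\tilde P)$, it must be distinct from both corners; but when $P_{j_c}$ is simultaneously the last child (so $\mtop(P_{j_c}) = \mtop(\tilde P)$) and the leftmost child (so $\mleft(P_{j_c}) = \mleft(\tilde P)$), your left blocker coincides with $\textit{topleft}(\tilde P)$ and spoils nothing. The symmetric issue arises on the right. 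Concretely, take $\tilde P$ with skeleton $(2,1)$: let $P_1$ occupy columns $\{5,6\}$ (accessed as $5,6$) and $P_2 = P_k$ occupy columns $\{1,2,3,4\}$ (accessed as $4,1,2,3$); a direct check shows that after all augmentations $(2,\mtop(\tilde P)) \in \topwing(\tilde P)$, and column $2$ is interior to $P_2$.

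Your reasoning does go through whenever $j_c < k$, since then $\mtop(P_{j_c}) < \mtop(\tilde P)$ and neither blocker can be a corner of $\tilde P$; what is missing is a separate treatment of the single strip containing $P_k$. The paper's (terse) proof proceeds differently, bounding the number of $\topwing(\tilde P)$ points in each of the $\degree(\tilde P)$ column-strips by $2$, rather than restricting which columns can contribute.
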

\begin{proof} $\topwing(\tP)$ contains at most 2 points in block $R(i,k)$ for every $k$. \end{proof}

The only task left is to bound $|\tY^2|$, the number of touch points in non-first and non-last rows of a region. 
This will be the \textit{junk} term mentioned earlier, that sums to $2n$ over all blocks $\tilde{P} \in N(\tset)$. In particular we show that every access point can contribute at most $2$ to the entire sum $\sum_{\tP \in N(\tset)} {|\tY^2|}$.

Consider any access point $X = (x,t_x)$, and let $P^*$ denote the smallest block that contains $X$, and in which $X$ is \emph{not} the first access point (such a block exists for all but the very first access point of the input). Further observe, that every point touched at time $t_x$ \emph{inside} $P^*$ is accounted for already in the sum $\sum_{\tP \in N(\tset)} |Y_{\tilde{P}}^1|$, since these points are in the first row of some region inside $P^*$. It remains to show that at time $t_x$ at most two points are touched outside of $P^*$, excluding a possible \agreedy augmentation step (which we already accounted for). Let $p^*$ denote the first access point in $P^*$, i.e.\ at time $\textit{bottom}(P^*)$. At the time of accessing $p^*$ several points might be touched. Notice that all these points must be outside of $P^*$, since no element is touched before being accessed. Let $p_{\mleft}$ and $p_{\mright}$ be the nearest of these touch points to $p^*$ on its left, respectively on its right.

We claim that for all the access points in $P^*$ other than the first access point $p^*$, only $p_{\mleft}$ and $p_{\mright}$ can be touched outside of $P^*$. Since, by definition $X$ is a non-first element of $P^*$, it follows that at time $t_x$ at most two points are touched outside of $P^*$, exactly what we wish to prove.

Indeed, elements in the interval $[1, p_{\mleft}-1]$, and in the interval $[p_{\mright}+1, n]$ are hidden in $[1, p_{\mleft}-1]$, respectively in $[p_{\mright}+1, n]$, in the time interval $[\textit{bottom}(P^*)+1, \textit{top}(P^*)]$ and therefore cannot be touched when accessing non-first elements in $P^*$. The case remains when $p_{\mleft}$ or $p_{\mright}$ or both are nonexistent. In this case it is easy to see that if on the corresponding side an element outside of $P^*$ were to be touched at the time of a non-first access in $P^*$, then it would have had to be touched at the time of $p^*$, a contradiction.

This concludes the proof.

\section{Proof of \Cref{thm:perm-all}}
\label{sec:perm-enough}

In this section, we prove the following theorem.

\begin{theorem}[\Cref{thm:perm-all}]
	Assume that there is a $c$-competitive online geometric BST $\cA^{p}$ for
	all permutations. 
	Then there is a $4c$-competitive online geometric
	BST $\cA$ for all access sequences.
\end{theorem}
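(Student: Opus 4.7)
The plan is to reduce general access sequences to permutations via a virtualization of duplicate accesses. Given $X = (x_1, \ldots, x_m) \in [n]^m$, let $k_e$ denote the number of accesses to element $e$ and set $f(e) = \sum_{e' < e} k_{e'}$. I map the $j$-th access to $e$ (in time order) to the virtual key $\phi(e, j) = f(e) + j \in [m]$, yielding a permutation $X^p = (x^p_1, \ldots, x^p_m) \in S_m$ where $x^p_t = \phi(x_t, j_t)$ and $j_t$ is the rank of the current access among accesses to $x_t$. Let $g : [m] \to [n]$ be the natural projection $g(\phi(e, j)) = e$. The algorithm $\cA$ feeds $X^p$ to $\cA^p$ online, and for each touch $(v, t) \in \cA^p(X^p)$ outputs $(g(v), t)$ as a touch of $\cA$; since $g$ is many-to-one, $|\cA(X)| \leq |\cA^p(X^p)|$.

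To prove feasibility of $\cA$, note first that $\cA(X) \supseteq X$ because each access $(x_t, t) = (g(x^p_t), t)$ is the projection of $(x^p_t, t) \in \cA^p(X^p)$. For satisfaction, consider an arbitrary rectangle $\square_{(a,s)(b,t)}$ in $\cA(X)$ with $a < b$ and $s < t$. I lift this rectangle to $\cA^p(X^p)$ by choosing $a^{\star}$ to be the rightmost virtual copy of $a$ touched at time $s$ and $b^{\star}$ the leftmost copy of $b$ touched at time $t$; since all virtual copies of $a$ precede those of $b$, we have $a^{\star} < b^{\star}$. The lifted rectangle is satisfied in $\cA^p(X^p)$ by some $(c^{\star}, u)$, and the extremal choice of $a^{\star}$ and $b^{\star}$ forces $(g(c^{\star}), u)$ to differ from both corners $(a, s)$ and $(b, t)$, yielding the required satisfying point.

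For the cost bound, the plan is to show $\opt(X^p) \leq 4 \opt(X)$ by converting any feasible $Y$ for $X$ into a feasible $Y^p$ for $X^p$ of size at most $4 |Y|$. For each touch $(c, u) \in Y$, I include in $Y^p$ the two bracketing virtual copies of $c$ at time $u$ (the largest copy of $c$ with access time $\leq u$ and the smallest with access time $> u$); in addition, for each pair of consecutive accesses $(p, s), (p+1, t)$ to the same original element in $X^p$, I include one extra touch at $(p, t)$ to handle the narrow unit-width rectangle that is invisible in $Y$ (since it collapses into a single column of $X$, where satisfaction is trivial). The bracket step contributes at most $2|Y|$, the augmentation contributes at most $m \leq \opt(X)$, and absorbing the access set $X^p$ yields the overall bound. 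Composing with $c$-competitiveness of $\cA^p$ gives $|\cA(X)| \leq |\cA^p(X^p)| \leq c \cdot \opt(X^p) \leq 4c \cdot \opt(X)$. The main obstacle is verifying feasibility of $Y^p$: I expect to need a case analysis separating rectangles in $X^p$ whose corners lie in different element-groups (satisfied by the bracketing copies of the corresponding satisfying point in $Y$) from rectangles spanning two copies of the same original element (satisfied either by intermediate accesses to that element or by the narrow-rectangle augmentation), and the online implementation requires a mild assumption such as knowledge of $n$ so that virtual keys can be allocated consistently in real time without waiting for future accesses.
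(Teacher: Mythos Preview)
Your overall plan (virtualize repeated accesses into a permutation, run $\cA^p$, project back by a key-collapsing map, and bound $\opt(X^p)$ by a constant times $\opt(X)$ via column splitting) is exactly the strategy the paper uses. The feasibility-by-projection argument and the column-splitting cost bound are both essentially the paper's Lemmas on $\merge$ and on $\opt(\Split(X))\le 4\opt(X)$.

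There is, however, a real gap in the online part. Your virtual key $\phi(e,j)=f(e)+j$ with $f(e)=\sum_{e'<e}k_{e'}$ requires knowing the multiplicities $k_{e'}$ of \emph{all} elements before the first access is processed; ``knowledge of $n$'' does not supply this. As stated, your $\cA$ is an offline algorithm. The paper resolves this by working over $\mathbb{R}\times\mathbb{Z}$ and perturbing the $t$-th access to $e$ to $x$-coordinate $e+\epsilon t$ (which is computable online), and by treating every non-first access as an \emph{insertion} of a fresh key; a generic augmentation $\cA^{p,\mathit{ins}}$ of $\cA^p$ handles insertions by searching to a leaf and then simulating an access. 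This is where the paper's ``mild assumption'' comes in; your sketch does not provide an analogous mechanism. If you want a discrete variant, you could use $\phi(e,j)=(e-1)m+j$ (which needs only $m$, not the multiplicities), but then you must either feed $\cA^p$ a partial permutation on $[nm]$ or again handle insertions.

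Two smaller points. First, your bracketing step presumes that the optimal $Y$ for $X$ touches only accessed elements; this is true (one can restrict OPT to accessed columns without increasing cost) but deserves a sentence. Second, your feasibility of $Y^p$ is left as an anticipated case analysis; the paper's explicit column-split construction (first and last new columns carry all the old touch rows, intermediate columns carry two points each) makes that verification routine, and you may find it cleaner than the bracket/augmentation description.
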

Given a $c$-competitive online BST $\cA^p$ that can serve a permutation access sequence, we construct 
a $4c$-competitive online BST $\cA$ that can serve any (not necessarily permutation) access sequences.
We first sketch the outline of the proof.

Let $X\in[n]^{m}$ be a (possibly non-permutation) access sequence, and let $T$ be the initial tree on $[n]$. 
First we show that given any online BST $\cB$ that can handle accesses, 
there is a canonical way to augment $\cB$ and obtain an \emph{augmented} $\cB^{\mathit{ins}}$, that can also handle insertions of elements.

Then, we show that given any access-only sequence $X \in [n]^m$, initial tree $T$ and
an online augmented BST ${\cA}^{p,\mathit{ins}}$ that can serve permutation access and insert sequences,
we can construct a permutation sequence (comprising of both accesses and insertions) $X^{p,\mathit{ins}} \in [m]^m$ 
to be fed to $\cA^{p,\mathit{ins}}$ in an online manner. Then there are two steps.
First, we show that the execution of $\cA^{p,\mathit{ins}}$  can define the execution of $\cA$ that runs on $X$ with initial tree $T$.
Second, we show that the cost of $\cA^{p,\mathit{ins}}$ can be bounded by $O(\opt(X))$.
Now we formally describe the proof.

\paragraph{Handling Insertion}
Given an online BST $\cB$, let $\cB^{\mathit{ins}}$ be an algorithm that works as follows. 
To access element $x$, just simulate $\cB$. To insert element $x$, 
try searching for $x$ and miss, i.e.\ touching the search path of predecessor and successor of $x$ in the tree. Then add $x$ as a leaf. Then simulate $\cB$ as if $\cB$ is accessing the leaf $x$.

\subsection{Constructing the algorithm $\cA$}

In the following, we work on $\mathbb{R}\times\mathbb{Z}$ instead
of the integer grid $\mathbb{Z}\times\mathbb{Z}$ in order to simplify
the algorithm and the proof. We extend some notations and definitions
accordingly. We use the words ``$y$-coordinate'' and  ``row'' interchangeably. For any $X,T\subset\mathbb{R}\times\mathbb{Z}$,
by $\left[\begin{smallmatrix}
X\\
T
\end{smallmatrix}\right]$, we mean the point set obtained by ``putting $X$ above $T$''
while the $x$-coordinate of each element does not change. A point
set $X\subset\mathbb{R}\times\mathbb{Z}$ is a \emph{permutation
}if there is at most one point in $X$ for any fixed $x$-coordinate
and any fixed $y$-coordinate. A satisfied set is defined in the
usual no-empty-rectangle way. An online BST $\cB$, given $\left[\begin{smallmatrix}
X\\
T
\end{smallmatrix}\right]\subset\mathbb{R}\times\mathbb{Z}$, outputs a satisfied $\left[\begin{smallmatrix}
\cB_{T}(X)\\
T
\end{smallmatrix}\right]$ in an online manner.

To describe the algorithm $\cA$ for \Cref{thm:perm-all}, we define
two operations: $\Split$ and $\merge$. Given any access sequence
$X$, we define $\Split(X)$ as follows. Let $\epsilon$ be an extremely small positive constant.
For each point $(x,y)\in X$, we have $(x+\epsilon y,y)\in\Split(X)$ except
if $(x,y)$ is the first access point of element $x$ (i.e. there
is no $(x',y)\in X$ where $x'<x$), in which case $(x,y)\in\Split(X)$. The
intention is to ``tilt'' the non-first accesses of all elements
in the simplest way. Note that $\Split(X)$ is a permutation.
We define the reverse operation $\mathit{merge}(S)=\{(\left\lfloor x\right\rfloor ,y)\mid(x,y)\in S\}$
for any set $S$. 
For any (non-permutation) sequence $X\in \mathbb{Z}\times \mathbb{Z}$, let $X^{p,\mathit{ins}},X^p = \Split(X)$ as a point set.  All points in $X^p$ are access points. However, in $X^{p,\mathit{ins}}$, only points with integral $x$-coordinate are access points, the remaining points are insertion points. That is, the first points of each element of $X$ ``become'' access points in $X^{p,\mathit{ins}}$, the remaining points ``become'' insertion points.

Given $\left[\begin{smallmatrix}
X\\
T
\end{smallmatrix}\right] \in \mathbb{Z}\times \mathbb{Z}$ as an input to $\cA$, we define:

$$ \cA{}_{T}(X)=\merge(\cA_{T}^{p,\mathit{ins}}(X^{p,\mathit{ins}}))$$

\begin{lemma}
	\label{lem:A' online}If $\cA^{p}$ is online, then $\cA$ is an online
	process.\end{lemma}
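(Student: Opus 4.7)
The plan is to verify that each of the three transformations composing $\cA$ is itself online, i.e., the row $t$ of its output depends only on rows $\leq t$ of its input (and on the initial tree $T$). Chaining three online transformations yields an online algorithm.

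\textbf{Step 1: $X \mapsto X^{p,\mathit{ins}}$ is online.} At time $t$, algorithm $\cA$ is given row $t$ of $X$, which is a single access point $(x,t)$, together with all previous rows $1,\ldots,t-1$. To produce row $t$ of $X^{p,\mathit{ins}}$ we only need to decide whether $(x,t)$ is the \emph{first} occurrence of element $x$ in $X$; this is determined by scanning the previous rows $1,\ldots,t-1$, which are already available. If it is the first occurrence, we output the access point $(x,t)$; otherwise, we output the insertion point $(x+\epsilon t,\, t)$. No information about future rows is needed.

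\textbf{Step 2: $X^{p,\mathit{ins}} \mapsto \cA^{p,\mathit{ins}}_T(X^{p,\mathit{ins}})$ is online.} The hypothesis of the theorem gives an online $\cA^p$ for permutation access sequences. The canonical augmentation to $\cA^{p,\mathit{ins}}$ described earlier processes an insertion at time $t$ by touching the search path of the predecessor/successor of the new key, inserting it as a leaf, and then invoking $\cA^p$ as if the new leaf were accessed; each of these operations uses only the tree built from rows $\leq t$, so $\cA^{p,\mathit{ins}}$ is online whenever $\cA^p$ is. Feeding it row $t$ of $X^{p,\mathit{ins}}$ (produced online in Step 1) therefore yields row $t$ of $\cA^{p,\mathit{ins}}_T(X^{p,\mathit{ins}})$ using only past information.

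\textbf{Step 3: $S \mapsto \merge(S)$ is online.} The map $(x,y)\mapsto(\lfloor x\rfloor,y)$ acts independently on each point and preserves $y$-coordinates, so row $t$ of $\merge(S)$ depends only on row $t$ of $S$. Composing the three steps shows that row $t$ of $\cA_T(X)=\merge(\cA^{p,\mathit{ins}}_T(X^{p,\mathit{ins}}))$ is determined by $T$ and rows $\leq t$ of $X$, so $\cA$ is online.

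I do not foresee a real obstacle; the only point requiring minor care is that the $\epsilon$-tilting in $\Split$ uses the time coordinate $y=t$ itself, so the $x$-coordinate of the point emitted at time $t$ is available exactly at time $t$ (and not before), which is consistent with online execution.
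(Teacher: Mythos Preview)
Your proof is correct and follows the same approach as the paper's: both argue that the composition $\merge \circ \cA^{p,\mathit{ins}}_T \circ \Split$ can be computed row by row because each component preserves (or is online with respect to) the $y$-coordinate. The paper's proof is a terse one-liner noting that $\Split$ and $\merge$ fix $y$-coordinates, whereas you spell out each of the three stages explicitly; your version is more detailed but not substantively different.
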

\begin{proof}
	Observe that $\Split$ and $\merge$ fix $y$-coordinate. If $\cA^{p}$
	is online, then $\merge(\cA_{T}^{p,\mathit{ins}}(X^{p,\mathit{ins}}))$ can be computed
	row by row.\end{proof}
\begin{lemma}
	\label{lem:A' sat}For any (non-permutation) access sequence $X$,
	and initial tree $T$ it holds that $\cA{}_{T}(X)\supseteq X$ and $\left[\begin{smallmatrix}
	\cA{}_{T}(X)\\
	T
	\end{smallmatrix}\right]$ is satisfied \end{lemma}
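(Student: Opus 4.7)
The inclusion $\cA_T(X)\supseteq X$ will follow immediately from the definitions: each $(x,y)\in X$ corresponds under $\Split$ to $(x,y)$ or $(x+\epsilon y,y)$ in $X^{p,\mathit{ins}}$, which lies in $\cA^{p,\mathit{ins}}_T(X^{p,\mathit{ins}})$ by feasibility of $\cA^{p,\mathit{ins}}$; taking $\epsilon<1/m$ ensures $\lfloor x+\epsilon y\rfloor=x$, so applying $\merge$ recovers $(x,y)$.

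For satisfaction of $\bigl[\begin{smallmatrix}\cA_T(X)\\ T\end{smallmatrix}\bigr]$, I would take two points $p_1=(x_1,y_1)$ and $p_2=(x_2,y_2)$ in this set with $x_1<x_2$ and $y_1<y_2$ (the cases when they share a coordinate are trivially satisfied). The plan is to lift $p_1,p_2$ to preimages $p'_i=(u_i,y_i)\in\bigl[\begin{smallmatrix}\cA^{p,\mathit{ins}}_T(X^{p,\mathit{ins}})\\ T\end{smallmatrix}\bigr]$ with $\lfloor u_i\rfloor=x_i$, invoke satisfaction of the latter to obtain a witness $(u_3,y_3)$ in the lifted rectangle, and project it to $p_3:=(\lfloor u_3\rfloor,y_3)$, which automatically lies in $\cA_T(X)\cup T$ and inside $\square_{p_1p_2}$. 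The extremal choice of lifts is crucial: set $u_i=x_i$ when $p_i\in T$; otherwise let $u_1$ be the \emph{largest} $x$-coordinate of a point of $\cA^{p,\mathit{ins}}_T(X^{p,\mathit{ins}})$ in row $y_1$ whose floor is $x_1$, and let $u_2$ be the \emph{smallest} such $x$-coordinate in row $y_2$ whose floor is $x_2$. Then $x_1<x_2$ forces $u_1<u_2$, so the permutation-world rectangle is non-degenerate and the satisfying point exists by feasibility of $\cA^{p,\mathit{ins}}$.

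The hard part will be ensuring that $p_3\notin\{p_1,p_2\}$: the satisfaction axiom only guarantees $(u_3,y_3)\notin\{p'_1,p'_2\}$, yet $\merge$ can collapse distinct points that share a row to the same merged point. If $y_3\notin\{y_1,y_2\}$ we are done. Suppose $y_3=y_1$ (the case $y_3=y_2$ is symmetric). Extremality of $u_1$ means no point of $\cA^{p,\mathit{ins}}_T(X^{p,\mathit{ins}})\cup T$ lies in the strip $(u_1,x_1+1)\times\{y_1\}$, so $u_3>u_1$ forces $u_3\geq x_1+1$, whence $\lfloor u_3\rfloor>x_1$ and $p_3\neq p_1$; moreover $p_3\neq p_2$ since $y_3=y_1\neq y_2$. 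The cases $p_i\in T$ are subsumed, because points of $T$ live only at rows $y\leq 0$, where every coordinate is already integral, so the bucket $[x_i,x_i+1)$ in row $y_i$ is trivially the singleton $\{x_i\}$ and the extremal-lift argument carries through without change. This will complete the proof.
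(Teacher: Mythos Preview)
Your argument is correct and follows essentially the same route as the paper's proof: both lift $p_1,p_2$ to extremal preimages (rightmost lift for the left point, leftmost lift for the right point), pull a satisfying witness from the feasible set upstairs, project it down via $\merge$, and use the extremality of the lifts to rule out the projected witness collapsing onto $p_1$ or $p_2$. The paper states this as a standalone claim that $\merge$ preserves satisfiability of any point set and then applies it to $\bigl[\begin{smallmatrix}\cA_T^{p,\mathit{ins}}(X^{p,\mathit{ins}})\\T\end{smallmatrix}\bigr]$, whereas you argue directly on that particular set and separate out the $T$-rows, but the mechanics are identical.
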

\begin{proof}
	To show that $\cA{}_{T}(X)\supseteq X$, observe that $X=\merge(\Split(X))$
	as $\epsilon$ is very small. Next, $X^{p,\mathit{ins}} \subseteq{\cal A}_{T}^{ p,\mathit{ins} }( X^{p,\mathit{ins}} )$
	because $\cA^{p,\mathit{ins}}$ is an online BST. Finally, as $\merge$ is monotone
	under inclusion, we have
	\[
	X=\merge(\Split(X)) = \merge( X^{p,\mathit{ins}} ) \subseteq\merge(\cA_{T}^{p,\mathit{ins}}( X^{p,\mathit{ins}} ))=\cA{}_{T}(X).
	\]

	To show that $\left[\begin{smallmatrix}
	\cA{}_{T}(X)\\
	T
	\end{smallmatrix}\right]$ is satisfied, we claim that $\merge$ preserves satisfiability. Therefore,
	as $\left[\begin{smallmatrix}
	\cA_{T}^{p,\mathit{ins}}( X^{p,\mathit{ins}} )\\
	T
	\end{smallmatrix}\right]$ is satisfied, $\merge(\left[\begin{smallmatrix}
	\cA_{T}^{p,\mathit{ins}}( X^{p,\mathit{ins}} )\\
	T
	\end{smallmatrix}\right])=\left[\begin{smallmatrix}
	\cA{}_{T}(X)\\
	T
	\end{smallmatrix}\right]$ is satisfied.
	
	It remains to prove the claim. Let $S\subset\mathbb{R}\times\mathbb{Z}$
	be any satisfied set and $S'=\merge(S)$. Consider any points $p',q'\in S'$
	where $p'.x<q'.x$ and choose the rightmost (leftmost) associated
	point $p$ ($q$) of $p'$ ($q'$) from $S$. As $S$ is satisfied,
	there is a point $r\in S\setminus\{p,q\}$ where $r\in\square_{pq}$.
	In particular, $p.x\le r.x\le q.x$. Let $r'=(\left\lfloor r.x\right\rfloor ,r.y)\in S'$
	be the image of $r$ under $\merge$, then we have $r'\in\square_{p'q'}$
	because $\left\lfloor p.x\right\rfloor \le\left\lfloor r.x\right\rfloor \le\left\lfloor q.x\right\rfloor $
	and $\merge$ does not affect $y$-coordinate. Moreover, $r'\neq p',q'$,
	otherwise $p$ is not the rightmost associated point of $p'$, or
	similarly for $q'$. So we have identified $r'\in S'\setminus\{p',q'\}$
	where $r'\in\square_{p'q'}$ and therefore $S'$ is satisfied.
\end{proof}
By \Cref{lem:A' online} and \Cref{lem:A' sat}, ${\cal A}$ is an online
process where $\cA{}_{T}(X)\supseteq X$ and $\cA{}_{T}(X)$ is satisfied.
Therefore, ${\cal A}$ is an online (geometric) BST for accessing arbitrary (non-permutation) sequences.

\subsection{Competitiveness}
The idea is to bound $\cA^{p,\mathit{ins}}_T(X^{p,\mathit{ins}})$, which is the cost of the online BST 
which does both access and insert by the cost of the online BST running on $X$ that only supports access.

Let $T^p$ be a binary search tree with $m$ elements corresponding to the distinct $x$-coordinates of points in $X^p$.
The elements with integral values in $T^p$ form a BST with exactly same structure as $T$. 
Between any two integral value elements $v$ and $v+1$, there will be a right path (each node has only right child) of size 
equal to the number of points from $X^p$ with $x$-coordinate from $(v,v+1)$.
This tree is defined so that the following claim holds.
\begin{lemma}
	\label{lem:bounded by access-only}$\cA^{p,\mathit{ins}}_T(X^{p,\mathit{ins}}) = \cA^{p}_{T^p}(X^p)$.\end{lemma}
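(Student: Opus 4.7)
}
The plan is to prove the equality by induction on time $t$, maintaining the invariant that after $t$ steps the touched points in rows $1,\ldots,t$ coincide in both executions, and that the current BST of $\cA^{p,\mathit{ins}}_T$ agrees with the BST of $\cA^{p}_{T^p}$ restricted to the elements already accessed or inserted. The base case is immediate from the definition of $T^p$: its integer-valued elements form a BST with exactly the structure of $T$, and the non-integer elements of each interval $(v,v+1)$ sit in a right path whose topmost vertex is attached along the search path to $v+1$.

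For the inductive step at time $t+1$, I would split into two cases depending on the accessed point $p_{t+1}$. If $p_{t+1}$ has integer $x$-coordinate $v$, then both algorithms perform a standard access, so I have to argue that the search paths of $v$ agree in the two current trees. This should follow from the invariant together with the fact that the not-yet-inserted non-integer elements of $T^p$ sit strictly below the right children on certain integer search paths, so they never appear as ancestors of any integer element; hence $\cA^p$ touches the same nodes in both executions, and the restructuring (which only affects touched nodes) preserves compatibility.

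If instead $p_{t+1} = (v + \epsilon y,\, t+1)$ has non-integer $x$-coordinate, then $\cA^p_{T^p}$ accesses it while $\cA^{p,\mathit{ins}}_T$ inserts it. The key observation is that since $\epsilon>0$ and time increases, the non-integer points in a fixed interval $(v,v+1)$ appear in $X^{p,\mathit{ins}}$ in strictly increasing order of $x$-coordinate; therefore each successive insertion extends the right path of inserted non-integer elements by exactly one node at the bottom, which is precisely the position $v+\epsilon y$ occupies in $T^p$ relative to the previously inserted elements. Consequently, the search-and-miss step of the insertion touches exactly the search path of $v+\epsilon y$ in $T^p_{t}$, and the subsequent simulated access by $\cA^p$ is identical to what $\cA^p_{T^p}$ performs, preserving the invariant.

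The main obstacle is the bookkeeping in the inductive step: one must verify that restructurings performed by $\cA^p$ on touched nodes never dislodge a not-yet-inserted non-integer element from its ``hidden'' right-path position in a way that would change either (i) the search path of a subsequent integer access, or (ii) the insertion position of a subsequent non-integer element. This reduces to the structural fact that in $T^p$ the unaccessed non-integer elements of $(v,v+1)$ always hang as a contiguous right-path tail below the deepest already-inserted non-integer element of that interval, so any rotation involving only touched (i.e., already-inserted) nodes leaves this tail intact and correctly positioned for the next insertion.
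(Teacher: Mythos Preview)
Your approach is essentially the paper's: induction on $t$, with a case split on whether the current point is an access (integer $x$-coordinate) or an insertion (non-integer). Your observation that non-integer points in a fixed interval $(v,v+1)$ arrive in increasing $x$-order, so each insertion lands at the bottom of the right path, is exactly the mechanism the paper relies on.

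You correctly identify the main obstacle, but your proposed resolution does not go through. The ``structural fact'' you cite---that rotations involving only already-inserted nodes leave the right-path tail intact---is true, but it is not what you need. The danger is that $\cA^{p}$, running on the full tree $T^{p}$, is free to touch and restructure nodes that are \emph{not} yet inserted in the $\cA^{p,\mathit{ins}}$ execution (in the BST model an algorithm may touch any connected subtree containing the root, not just the search path). If $\cA^{p}$ ever touches one of those dormant non-integer elements, the two executions diverge and the lemma fails. No amount of structural reasoning about $T^{p}$ rules this out for an arbitrary online BST $\cA^{p}$.

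The paper handles this by an explicit hypothesis: it assumes $\cA^{p}$ never touches an element of the initial tree that has not yet appeared on some search path (this is the ``mild assumption'' flagged in the footnote to \Cref{thm:perm-all}). With that assumption in hand, your induction closes immediately: the untouched right-path tails are invisible to $\cA^{p}$, so after each step the two trees agree on the set of ever-touched elements, and the next access/insertion behaves identically. You should state this assumption rather than try to derive it.
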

\begin{proof}
	We assume that $\cA^p$ does not touch elements in the initial tree 
	that have never been in any of the search paths for access or insert.
	By induction on time step $t$, if the point $(x,t)\in X^{p,\mathit{ins}}$ is an access point, then $\cA^{p,\mathit{ins}}$ and $\cA^{p}$ do the same thing by definition. 
	If the point $(x,t)\in X^{p,\mathit{ins}}$ is an insertion point, then after $\cA^{p,\mathit{ins}}$ adds $x$ as a leaf, the structure of the tree excluding elements which are never touched is same. So  $\cA^{p,\mathit{ins}}$ and $\cA^{p}$ do the same thing again.
\end{proof}

\begin{lemma}
	\label{lem:split is safe}For any access sequence (no insertions) $X$, and the corresponding access sequence $\Split(X)$, $\opt(\Split(X))\le4 \opt(X)$.\end{lemma}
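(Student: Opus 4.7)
The plan is to show that any satisfied superset $M$ of $X$ can be converted into a satisfied superset $M'$ of $\Split(X)$ with $|M'|\le 4|M|$; taking $M$ optimal for $X$ then gives $\opt(\Split(X))\le 4\opt(X)$.

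The construction is local. For each $p=(x,t)\in M$ I would form a \emph{cluster} $C(p)$ of at most four points, all on row $t$, consisting of the original $p$ together with up to three \emph{shifted shadows} of the form $(x+\epsilon s,t)$ for special values of $s$ drawn from the access times of column $x$. Concretely, letting $s^{-}$ be the largest non-first access time of column $x$ that is $\le t$ and $s^{+}$ be the smallest one that is $>t$ (when these exist), I include $(x+\epsilon s^{-},t)$ and $(x+\epsilon s^{+},t)$; when $(x,t)\in X$ is itself a non-first access I additionally include $(x+\epsilon t,t)$. Setting $M'=\bigcup_{p\in M}C(p)$ yields $|M'|\le 4|M|$, and the inclusion $\Split(X)\subseteq M'$ is immediate: first accesses $(x,t_1)\in X$ are preserved as $(x,t_1)\in M\subseteq M'$, and non-first accesses $(x,t_i)$ have their shifted image $(x+\epsilon t_i,t_i)$ present as a shadow.

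The bulk of the work is proving that $M'$ is satisfied. Given distinct $p',q'\in M'$ with respective origins $p,q\in M$, I would take a witness $r\in M\setminus\{p,q\}$ for $\square_{pq}$ (which exists by satisfiability of $M$) and exhibit a point of $C(r)\subseteq M'$ inside $\square_{p'q'}$. Because $\epsilon$ is chosen infinitesimally small, the integer part of every $x$-coordinate is preserved under perturbation; hence whenever $r$'s integer column lies strictly between those of $p$ and $q$, the unshifted $r\in C(r)$ already lies in $\square_{p'q'}$. The delicate case is when $r$ shares an integer column with $p$ or $q$: there, the shift of $p'$ (or $q'$) can push the left (or right) edge of the rectangle just past $r$. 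Here the shadow copies earn their keep: a monotonicity argument on the set of non-first access times of the shared column (the set of such times up to row $r_y$ contains or is contained in the analogous set up to row $t$, according to whether $r_y\ge t$ or $r_y<t$) produces an $s\in\{s^{-}_r,s^{+}_r\}$ for which $(r_x+\epsilon s,r_y)\in C(r)$ lies in $\square_{p'q'}$.

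The main obstacle is this final satisfaction check, which requires a careful case analysis on (i) the relative positions of $p,q,r$ in the integer grid, (ii) which shadow shift (if any) produces $p'$ and $q'$, and (iii) which shadow of $r$ to pick. A naive construction --- for instance, copying every touch of column $x$ into every shifted column $x+\epsilon t_i$ --- blows up $|M'|$ super-linearly; choosing $s^{-}$ and $s^{+}$ to be the \emph{nearest} non-first access times above and below $t$ is what makes each shadow shift both minimal and sufficient, so that the constant factor of $4$ is maintained.
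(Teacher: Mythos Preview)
Your construction does not always yield a satisfied set; the ``monotonicity argument'' you sketch breaks down in a simple case. Take $X$ with column $x$ accessed at times $1,10,20$ and column $y>x$ accessed at time $5$, and the satisfied superset $M=\{(x,1),(x,5),(x,10),(x,20),(y,5)\}$. Your clusters give
\[
C((x,5))=\{(x,5),(x+10\epsilon,5)\}\quad(\text{since }s^{+}=10),\qquad
C((x,10))=\{(x,10),(x+10\epsilon,10),(x+20\epsilon,10)\}\quad(s^{-}=10,\ s^{+}=20).
\]
Now take $p'=(x+20\epsilon,10)\in C((x,10))$ and $q'=(y,5)$. With $p=(x,10)$ and $q=(y,5)$, the only witness in $M\cap\square_{pq}\setminus\{p,q\}$ is $r=(x,5)$, but neither shadow of $r$ reaches the column $x+20\epsilon$ (the farthest is at $x+10\epsilon$). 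One checks directly that no other point of $M'$ lies in $[x+20\epsilon,y]\times[5,10]$, so $\square_{p'q'}$ is unsatisfied.

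The underlying problem is that when $p'$ carries the ``forward'' shift $s^{+}_{p}$ and the witness $r$ lies \emph{below} $p$ in the same integer column, you need a shadow of $r$ at offset $\ge s^{+}_{p}$; but $s^{+}_{r}$, being the smallest non-first access time above $r.y$, can be strictly smaller than $s^{+}_{p}$ whenever a non-first access lies in $(r.y,p.y]$ (as $10$ does here). Your monotonicity works when the witness is on the same side of $p$ in time as the shift direction, but not on the opposite side.

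The paper takes a simpler per-column route that avoids this issue entirely: for a column with $m_i$ accesses and $p_i$ total points, it duplicates the \emph{entire} column into the leftmost and rightmost of the $m_i$ new columns, and places just two points in each intermediate column. The full copies at the extremes ensure that any rectangle reaching another integer column sees a verbatim copy of whatever witness worked in $M$; the cost is $2p_i+2(m_i-2)$ per column, summing to at most $2\opt(X)+2m\le 4\opt(X)$. If you want to salvage your per-point approach, the example shows that ``nearest above/below'' shadows are not enough; you would at least need a shadow at the \emph{last} non-first access time of the column, which already pushes you toward the paper's full-duplicate idea.
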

\begin{proof}
	We show that given $\opt(X)$, we can construct a satisfied set
	$S'$ of $\Split(X)$ where $|S'|\le2\opt(X)+2m$. 
	
	We iteratively modify $\opt(X)$ to obtain $S'$. Consider any
	column $c_{i}$ of $\opt(X)$ which contains more than one access
	point, say $m_{i}$ access points. Suppose that $c_{i}$ contains
	$p_{i}\ge m_{i}$ touch points (thus $p_{i} - m_{i}$ non-access touch points), and lies between $c_{i-1}$ and $c_{i+1}$.
	We ``split'' $c_{i}$ into $m_{i}$ columns $c_{i_{1}},c_{i_{2}},\dots c_{i_{m_{i}}}$
	lying between $c_{i-1}$ and $c_{i+1}$. Columns $c_{i_{1}}$ and $c_{i_{m_{i}}}$ have $p_{i}$ points in the same rows as $c_{i}$; in $c_{i_{1}}$ the lowest of them is an access point, and all others above are touched, whereas in $c_{i_{m_{i}}}$ the highest is an access point, and all others below are touched. The remaining columns $c_{i_{2}},\cdots,c_{i_{m_{i}-1}}$ have two points each in them, one access and one touch point. The access point in column $c_{i_{j}}$ is in the same row as the $j$th access in original column $c_{i}$, and the touch point in column $c_{i_{j}}$ is in the same row as the $(j+1)$th access in  $c_{i}$. See \Cref{fig:split}. 
	\begin{figure}
		\centering
		\includegraphics[width=0.6\textwidth]{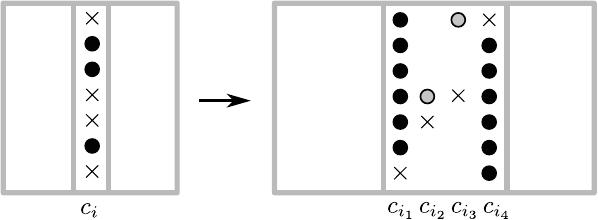} 
		\caption{Illustration of $S'$ in the proof of \Cref{lem:split is safe}. The symbols $\times$ denote accesses and circles are touched points.}
		\label{fig:split}
	\end{figure}

	It is easy to verify that, after placing these points, the set is
	still satisfied. The number of points placed in these columns is exactly
	$2p_{i}+2(m_{i}-2) $. Thus after applying this modification to every
	column $c_{i}$, there are at most $2\opt(X)+2m\le4\opt(X)$ points. 
\end{proof}
Now we are ready to prove the main theorem.

\begin{proofof}{ \Cref{thm:perm-all}}
	First, we have that $ \cA_{T}(X)  \le  \cA_{T}^{p,\mathit{ins}}( X^{p,\mathit{ins}}) $
	because $\merge$ never increases the number of points. 
	We conclude that
	\[
	\cA_{T}(X)  \le  \cA_{T}^{p,\mathit{ins}}( X^{p,\mathit{ins}} ) = \cA_{T^p}^p( X^p )  \leq c\opt(\Split(X)) \le4c\opt(X)
	,\]
	where the second inequality follows from \Cref{lem:bounded by access-only}, 
	the third inequality follows from the assumption that $\cA^p$ is $c$-competitive for only accessing on only permutations, 
	and the last inequality follows from \Cref{lem:split is safe}, and hence $\cA$ is a $4c$-competitive
	online BST for any access sequence.
	\end{proofof}
 
\section{Random permutation is hard}
\label{sec:random-hard}

We prove that the cost of most permutations is $\Omega(n\log n)$ for \emph{all} BST algorithm. A similar result has been shown by Wilber~\cite{wilber} for random access sequences (that might not be permutations).
More formally, we prove the following theorem. 

\begin{theorem} 
\label{thm:random-perm-hard}
There is a positive constant $c$ such that for any $L > 0$ only a fraction $1/2^L$ of the permutations $X$ have
$\opt(X) \le (\log n! - L -n -1)/c$. 
\end{theorem}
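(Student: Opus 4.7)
The plan is to prove this by a counting/encoding argument: I will show that for every $k$, the number of length-$n$ permutations $X$ with $\opt(X) \le k$ is at most $2^{ck + n + 1}$ for some absolute constant $c > 0$; the theorem then follows by comparing this quantity with $n!$ and solving for $k$.

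First, I will fix any offline BST algorithm realizing $\opt$, viewed in the rotation-based tree model. I will describe a canonical way to encode the algorithm's execution on $X$ as two pieces of data: (a) the initial tree $T$ on the $n$ keys, and (b) the execution trace, i.e.\ the sequence of unit-cost operations performed. The initial tree is a BST on $[n]$, so it is one of $C_n \le 2^{2n}$ possibilities --- however, since $\opt$ only depends on the relative comparisons, I only need to store the shape of $T$ (not the keys), which can be done in exactly $n$ bits by encoding the balanced parenthesization / left-right children pattern, plus one bit to signal end-of-shape. The execution trace is a sequence of atomic operations drawn from a fixed constant-size alphabet (move-left, move-right, move-parent, rotate-with-parent, end-of-access), and its length equals the cost $\opt(X)$. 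Thus the whole execution fits in at most $c\cdot \opt(X) + n + 1$ bits for a suitable constant $c$ depending only on the alphabet size.

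Next I will argue that this encoding is injective on permutations: given the initial tree and the trace, one can simulate the algorithm step by step, and at each ``end-of-access'' marker read off the key sitting under the pointer; this recovers $X = (x_1, \dots, x_n)$ uniquely. Hence the number of distinct permutations $X$ with $\opt(X) \le k$ is bounded by the number of bit strings of length at most $ck + n + 1$, which is at most $2^{ck + n + 1}$. Dividing by $n!$, the fraction of permutations with $\opt(X) \le k$ is at most $2^{ck + n + 1}/n!$, and setting this quantity equal to $2^{-L}$ and solving for $k$ gives exactly $k \le (\log n! - L - n - 1)/c$, as claimed.

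The main obstacle --- and the only place where care is required --- is pinning down the correct additive term ``$n + 1$'' rather than the loose bound $O(n)$ that a naive encoding of the initial tree as a generic BST would give; this is why I choose to encode only the \emph{shape} of the initial tree (in exactly $n$ bits, using e.g.\ a DFS-order bit pattern) together with a single terminator bit, and why it is important that the alphabet of atomic operations be truly constant-size and independent of $n$ so that $k$ enters only linearly in the exponent. Everything else is a routine entropy/Stirling-style counting estimate and does not affect the stated bound.
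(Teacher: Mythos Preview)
Your approach is essentially the same as the paper's: an injective encoding of the optimal execution into a short bitstring, followed by a counting argument. The paper phrases it as Kolmogorov complexity and encodes each access by writing down the shapes of the before- and after-subtrees (plus a marker for the accessed node), whereas you encode the trace as a string over a fixed operation alphabet; both yield an encoding of length $O(n) + O(\opt(X))$ from which $X$ can be recovered, and the counting is identical.

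One small slip: a binary tree shape on $n$ nodes cannot be encoded in ``exactly $n$ bits'' --- balanced parentheses or a left/right-child indicator vector uses $2n$ bits (consistent with the Catalan count $C_n \approx 4^n$). This only changes the additive term from $n+1$ to $2n+O(1)$, which is harmless for the result (and the paper itself only claims $O(n)$ here), but your remark that nailing down the precise additive constant is ``the main obstacle'' is misplaced: that constant is immaterial to the theorem, and your claimed value is in fact off by a factor of two.
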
   
\begin{proof}
We use Kolmogorov complexity. Observe that the shape of a tree on $k$ nodes can be encoded by $O(k)$ bits, e.g., by the preorder traversal. If the keys stored in the tree are clear from the context, the assignment of the keys to the nodes is unique. Let $T$ be the initial tree, let $X$ be any (permutation) access sequence, and let $C_X$ be the total cost of $\opt$ on $X$. We encode the shape of the initial tree with $O(n)$ bits.  Consider now an access of cost $k$. It replaces an initial segment of the current search tree, the before-tree, by another tree on the same set of nodes, the after-tree. The before-tree must include the element accessed and has size $O(k)$. We encode the before-tree and the after-tree with $O(k)$ bits and use a special marker to designate the accessed element in the before-tree. In total, we need $c \cdot (n + C_X)$ bits for the encoding for some small constant $c$. Note that $X$ can be reconstructed from the encoding. We first reconstruct the shape of the initial tree. Its labelling by keys is unique. 
Consider now any access. Since the position of the accessed element in the before-tree is marked and since the before- and the after-tree are on the same set of keys, the search tree after the access is determined. We conclude that we have an injective mapping from permutations $X$ to bitstrings of length $c \cdot (n + C_X)$. 
Assume $C_x \le (\log n! - L - n - 1)/c$. Then the encoding of $X$ has at most $\log n! - L - 1$ bits and hence $X$ is in a set of at most $n!/2^L$ permutations.
\end{proof}

\section{Other claims}

\subsection{Counterexamples for straightforward attempts}

\subsubsection*{Avoiding the same pattern is not enough.} 
\label{sec:bad-example0}

It might be tempting to conjecture that if $X$ avoids $P$, then the \greedy touch matrix $\Greedy(X)$ avoids $P$ as well ($P$ refers both to a permutation pattern and its matrix representation). While this holds for the simple case $P=(2,3,1)$ in the ``no initial tree''-case (proof left as an exercise for the reader), in general it is not true. Figure~\ref{fig:pattern-counter} shows a counterexample ($P = (3,2,1)$).

\begin{figure}[h]
\centering
\includegraphics[scale=1.2]{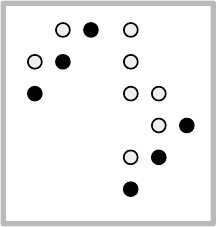}
\caption{Access points are black circles, points touched by \greedy are gray circles. Input avoids (3,2,1), but output contains (3,2,1).}
\label{fig:pattern-counter}
\end{figure}

\subsubsection*{\greedy is not decomposable.}
\label{sec:bad-example1}

One could prove a decomposition theorem similar to Theorem~\ref{lem: main} for \greedy, if \greedy were \emph{decomposable}.
By this we mean that for a $k$-decomposable input sequence, considering a certain level $P=\tilde{P}[P_1,\dots,P_k]$ of the decomposition tree, \greedy touches a region $R(i,j)$, iff the corresponding point $(i,j)$ is touched by \greedy, when executed on the contracted instance $\tilde{P}$. Alas, this property does not hold for \greedy, as shown on Figure~\ref{fig:decomp-counter}.

\begin{figure}[h]
\centering
\includegraphics[scale=1.3]{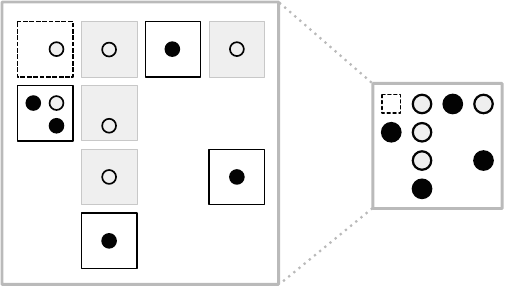}
\caption{Access points are black circles, points touched by \greedy are gray circles. Left: block decomposition of input, and \greedy execution. Top left region is touched. Right: \greedy execution on contracted instance. Top left point not touched.}
\label{fig:decomp-counter}
\end{figure}

\subsubsection*{No permutation input-revealing gadget with initial trees.}
\label{sec:bad-example2}

A permutation input-revealing gadget for \greedy would allow us to strengthen the bound in Theorem~\ref{thm:pattern-avoidance} from quasilinear to linear. In Figure~\ref{fig:gadget-counter} we sketch a construction that shows that such a gadget can not exist, if initial tree is allowed, even if the input is the preorder sequence of a path.

\begin{figure}[h]
\centering
\includegraphics[scale=0.7]{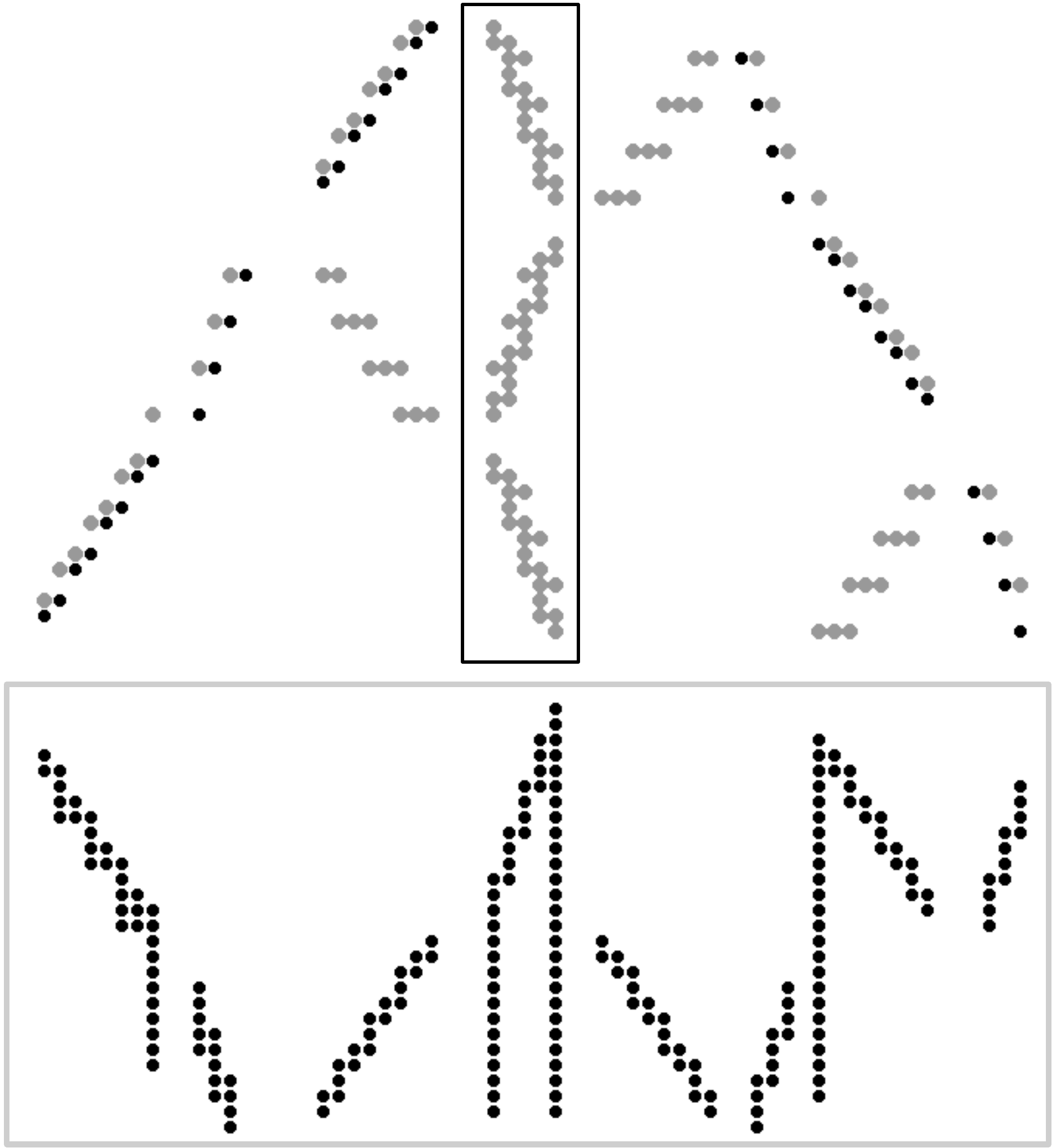}
\caption{A \greedy execution with initial tree. Access points and initial tree are shown with black circles, points touched by \greedy are gray circles. Black rectangle marks a region in which all permutations are contained, but there is no access point inside. Gray rectangle marks initial tree.}
\label{fig:gadget-counter}
\end{figure}

\subsection{Tightness of $O(n \log k)$ for $k$-decomposable permutations}
\label{sec:tight-block} 
In this section, we show that the upper bound of $\opt(X) = O(n \log k)$ for $k$-decomposable permutations is tight, by arguing that there is a sequence that matches this bound. 
We will use the following claim, which is proved in \S\,\ref{lem:decomp opt lower-bound}, and restated here. 

\begin{lemma} 
Let $X$ be any input sequence and $\tset$ be its tree decomposition. Then 
\[\opt(X) \geq \sum_{P \in \tset} \opt(P) -  2n\]
\end{lemma}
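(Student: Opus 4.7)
The plan is to prove the inequality by induction on the block decomposition tree $\tset$. The one-step version I will establish is: for any permutation $P = \tilde{P}[P_1, \dots, P_k]$ with this top-level decomposition into $k$ child blocks, we have
\[
\opt(P) \;\geq\; \opt(\tilde{P}) \;+\; \sum_{i=1}^{k} \opt(P_i) \;-\; k.
\]
Once this is in hand, the lemma follows by iterating at every internal node of $\tset$: a charging argument shows that $\sum_{v \in N(\tset)} \deg(v)$ is at most $2n$ (standard fact for trees whose leaves are in bijection with $[n]$), so the $-k$ slack accumulates to at most $-2n$ total.

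To prove the one-step inequality, I start with an optimal feasible superset $Y^*$ of $P$, and partition it into the restrictions $Y^*_i := Y^* \cap P_i$ and the outside part $Y^*_{\mathrm{out}} := Y^* \setminus \bigcup_i Y^*_i$. Each $Y^*_i$ is clearly a satisfied superset of $P_i$ (satisfaction is preserved under restriction to any axis-aligned rectangular region, since both corners of any unsatisfied rectangle in $Y^*_i$ were already corners of an unsatisfied rectangle in $Y^*$), so $|Y^*_i| \geq \opt(P_i)$. The crux is to extract from $Y^*_{\mathrm{out}}$ a feasible solution for the contracted instance $\tilde{P}$.

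For that, partition the $P$-region into the $k^2$ cells $R(i,j)$ aligned with the child blocks, and define a point set $Y' \subseteq [k]\times[k]$ by placing a point at $(i,j)$ iff $Y^* \cap R(i,j) \neq \emptyset$. The crucial claim is that $Y'$ is a satisfied superset of $\tilde{P}$. Containment is immediate because each access point of $\tilde{P}$ lies in some $P_j$ whose aligned cell $R(j,j)$ is nonempty. For satisfaction, suppose for contradiction that two points $p,q \in Y'$ form an unsatisfied rectangle, say with $p$ upper-left of $q$; let $R_p, R_q$ be their preimage cells. Pick $p'$ to be the bottommost-then-rightmost point of $Y^* \cap R_p$ and $q'$ to be the topmost-then-leftmost point of $Y^* \cap R_q$. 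One checks that the rectangle with corners $p',q'$ is then unsatisfied in $Y^*$, contradicting feasibility; the main subtlety is that any witness point inside that rectangle would either force $p$ or $q$ to be replaced by a better extremal choice, or would lie in some cell $R(i',j')$ strictly between $R_p$ and $R_q$, which would correspond to a point of $Y'$ satisfying the rectangle $pq$ after all. Hence $|Y'| \geq \opt(\tilde{P})$, and since $Y'$ contains at most $|\tilde{P}| = k$ more points than $Y^*_{\mathrm{out}}$ (namely those coming from cells $R(j,j)$ whose nonemptiness was forced by an access point rather than an outside touch), we get $|Y^*_{\mathrm{out}}| \geq \opt(\tilde{P}) - k$.

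Combining, $\opt(P) = |Y^*| = |Y^*_{\mathrm{out}}| + \sum_i |Y^*_i| \geq \opt(\tilde{P}) - k + \sum_i \opt(P_i)$, which is the one-step claim. The main obstacle is the satisfaction argument for the contraction $Y'$; the choice of extremal representatives $p', q'$ must be made carefully so that a would-be witness inside the uncontracted rectangle genuinely maps to a distinct point of $Y'$ strictly inside the contracted rectangle, rather than collapsing onto $p$ or $q$. Finally, iterating the one-step inequality over $\tset$ and using $\sum_{v \in N(\tset)} \deg(v) \leq 2n$ yields $\opt(X) \geq \sum_{P \in \tset} \opt(P) - 2n$ as required.
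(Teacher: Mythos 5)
Your proof matches the paper's own argument essentially step for step: the same one-step inequality $\opt(P) \geq \opt(\tilde{P}) + \sum_i \opt(P_i) - k$ established by partitioning an optimal satisfied superset $Y^*$ into inside/outside parts, contracting the $k^2$ cells, choosing the same bottommost-then-rightmost and topmost-then-leftmost representatives to verify satisfiability of the contracted point set, and iterating using $\sum_v \deg(v) \leq 2n$. (Minor aside: the cell aligned with $P_j$ is not $R(j,j)$ but $R(\tilde{P}^{-1}(j),j)$ under the paper's indexing; this doesn't affect the argument.)
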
 

Let $P'$ be a sequence of size $k$ where $\opt(P') = \Omega(k \log k)$. 
Then we construct the tree $\tset$ which is a complete degree-$k$ tree of depth $d$, with each node $v \in \tset$ having the template $P_v = P'$.  
Then $n = k^d$, and each non-leaf node $v \in N(\tset)$ has $\opt(P_v) = \Omega(k \log k)$.  
There are $k^{d-1} = n/k$ non-leaf nodes, so we have $\opt(X) \geq \Omega(k \log k) \cdot \frac{n}{k} = \Omega(n \log k)$.

\subsection{Decomposition of $k$-increasing sequences}
\label{sec:decomposition_monotone}

Given a permutation $X = (X_1, \dots,X_n)$, a subsequence of $X$ of size $k$ is a sequence $(X_{i_1}, \dots, X_{i_k})$, where $1 \leq i_1 < \cdots < i_k \leq n$.

\begin{lemma}
\label{lem:decomposition_monotone}
Let $X \in S_n$ be an arbitrary permutation. The following statements are equivalent:
\begin{compactenum}[(i)] 
\item $X$ is $(k,\dots,1)$-free,
\item There exist pairwise disjoint subsequences $Y_1, Y_2, \dots, Y_{k-1}$ of $X$, such that $Y_i$ is increasing for all $i=1,\dots,k-1$, and $|Y_1| + \cdots + |Y_{k-1}| = n$.
\end{compactenum}
\end{lemma}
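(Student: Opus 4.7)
The plan is to prove the two directions separately, using a classical argument essentially due to Erdős and Szekeres (a special case of Mirsky's theorem applied to the permutation viewed as a partial order).

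First I would handle the easy direction $(ii)\Rightarrow(i)$ by a pigeonhole argument. Suppose for contradiction that $X$ contains the pattern $(k,\dots,1)$, i.e.\ there exist indices $i_1<i_2<\dots<i_k$ with $X_{i_1}>X_{i_2}>\dots>X_{i_k}$. By the partition assumption, each of these $k$ entries lies in one of the $k-1$ increasing subsequences $Y_1,\dots,Y_{k-1}$, so by pigeonhole two of them, say $X_{i_a}$ and $X_{i_b}$ with $a<b$, belong to the same $Y_\ell$. Then $i_a<i_b$ while $X_{i_a}>X_{i_b}$, contradicting the fact that $Y_\ell$ is increasing.

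For the harder direction $(i)\Rightarrow(ii)$, I would define, for each index $j\in[n]$, the label
\[
r(j) \;=\; \text{length of the longest strictly decreasing subsequence of } X \text{ ending at position } j.
\]
Clearly $r(j)\ge 1$. Since $X$ is $(k,\dots,1)$-free, no decreasing subsequence has length $k$, so $r(j)\in\{1,2,\dots,k-1\}$. Now define $Y_\ell = (X_{j} : r(j)=\ell)$, taken in the order inherited from $X$; these sets partition the indices, so $\sum_\ell |Y_\ell| = n$, and there are at most $k-1$ of them (possibly empty, but we can always pad with empty sequences).

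It remains to show each $Y_\ell$ is increasing, which is the only place where a small verification is needed. Suppose $i<j$ with $r(i)=r(j)=\ell$ but $X_i>X_j$. Then taking a longest decreasing subsequence ending at position $i$ (of length $\ell$) and appending $X_j$ gives a strictly decreasing subsequence ending at $j$ of length $\ell+1$, contradicting $r(j)=\ell$. Hence $X_i<X_j$ whenever $i<j$ share the same label, so $Y_\ell$ is increasing. The main (and really the only) obstacle is this labeling step; once the labels are defined, both the upper bound $r(j)\le k-1$ and the monotonicity of each color class fall out by short direct arguments, completing the equivalence.
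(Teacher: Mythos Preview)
Your proof is correct and follows essentially the same approach as the paper: the pigeonhole argument for $(ii)\Rightarrow(i)$ is identical, and for $(i)\Rightarrow(ii)$ your labeling $r(j)$ produces exactly the same partition as the paper's iterated ``wing'' construction (the $\ell$-th wing consists precisely of those positions $j$ with $r(j)=\ell$). The only difference is presentational---you phrase the construction combinatorially via longest decreasing subsequences, whereas the paper phrases it geometrically via empty rectangles with the top-left corner.
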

\begin{proof}
(ii) $\implies$ (i):\\
Let $Y_1, \dots, Y_{k-1}$ be increasing subsequences of $X$, as in the statement of the lemma, and suppose there exists a subsequence $X' = (X_{i_1}, \dots, X_{i_k})$ of $X$ order-isomorphic to the pattern $(k,\dots,1)$. Since $X'$ is decreasing, no two elements of $X'$ can be in the same subsequence $Y_i$. This is a contradiction, since we have only $k-1$ subsequences. Therefore, such an $X'$ cannot exist, and $X$ is $(k,\dots,1)$-free.

(i) $\implies$ (ii):\\
Assume that $X$ is $(k, \dots, 1)$-free. We construct the decomposition of $X$ into increasing subsequences $Y_1, \dots, Y_{k-1}$ as follows. We look at the plot of permutation $X$, and we refer interchangeably to elements of $X$ and points in the plot of $X$. Let $Y_1$ be the ``wing'', i.e.\ the points in the plot of $X$ that form an empty rectangle with the top left corner (by this definition, we have $X_1 \in Y_1$). Clearly, $Y_1$ forms an increasing subsequence of $X$. We now remove the elements of $Y_1$ from $X$ and similarly find $Y_2$ as the wing of the remaining points. We repeat the process, thereby finding $Y_1, Y_2, \dots$. We claim that we run out of points before reaching $Y_k$, thus constructing a decomposition of $X$ as required by the lemma.

Suppose otherwise that we have reached $Y_k$, and consider an arbitrary point $X_{i_k} \in Y_k$. Since $X_{i_k}$ was not chosen in $Y_{k-1}$, the rectangle with corners $X_{i_k}$ and the top-left corner has some point from $Y_{k-1}$. Pick such a point, and denote it as $X_{i_{k-1}}$. Observe that $X_{i_{k-1}} > X_{i_k}$, and $i_{k-1} < i_k$. Since $X_{i_{k-1}}$ was not chosen in $Y_{k-2}$, we can pick a suitable $X_{i_{k-2}}$ in $Y_{k-2}$. Continuing in this fashion, we construct the subsequence $X_{i_1},...,X_{i_k}$ of $X$ that forms the forbidden pattern $(k,\dots,1)$, a contradiction.
\end{proof}

\subsection{Comparing easy sequences}
\label{sec:comparisons}

\subsubsection*{Easy sequences not captured by any known bounds.}
\label{sec:easy-examples}
We show a sequence $X$ where $UB(X) = \Omega(n \log n)$, $k(X) = \sqrt{n}$, and $\opt(X) = O(n)$. By $UB$ we denote the ``unified bound''~\cite{unified} that subsumes both the working set, dynamic finger and entropy bounds. At this time, however, $UB$ is not known to be matched by any online BST. $UB$ can be seen as the sum of logs of ``$L_1$ distances''. In other words, $X$ is an easy sequence that is not captured by the known classes of easy sequences studied in the literature.  

Consider a point set $\pset$ that is the perturbation of the $\sqrt{n}$-by-$\sqrt{n}$ grid: Let $\ell = \sqrt{n}$, and  
\[\pset = \set{(i \ell + (j-1), j \ell + i-1): i,j \in [\ell] }.\]
Denote the point $(i\ell+(j-1), j \ell +(i-1))$ by $p(i,j)$.  
It is easy to check that there are no two points aligning on $x$ or $y$ coordinates: If $p(i,j)$ agrees with $p(i',j')$ on one coordinate, it implies that $i= i'$ and $j=j'$.   
Therefore, this point set corresponds to a permutation $X \in S_n$. 
Figure~\ref{fig:examples}(i) illustrates the construction. The following lemma shows that $k(X) = \sqrt{n}$. 

\begin{lemma} 
$X$ contains all patterns $\sigma$ of size at most $\ell$. 
\end{lemma}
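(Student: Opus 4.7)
The plan is to exhibit, for any pattern $\sigma = (\sigma_1, \ldots, \sigma_k) \in S_k$ with $k \leq \ell$, an explicit $k$-point subsequence of $X$ that is order-isomorphic to $\sigma$. The main structural observation driving the argument is that the grid-based definition of $\pset$ separates the roles of the two indices: the first index $i$ in $p(i,j)$ controls the coarse $x$-coordinate (via the $i\ell$ term), while the second index $j$ controls the coarse $y$-coordinate (via the $j\ell$ term). The ``fine'' part, of magnitude less than $\ell$, never interferes with the ``coarse'' part.

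First I would formalize this separation as two bracketing facts. For the $x$-axis: the $x$-coordinate of $p(i,j)$ is $i\ell + (j-1) \in [i\ell,\, i\ell + \ell - 1]$, and these half-open intervals are pairwise disjoint for distinct values of $i$; therefore, for any choice of indices with $i_1 < i_2 < \cdots < i_k$, the $x$-coordinates of $p(i_1, j_1), \ldots, p(i_k, j_k)$ are automatically in strictly increasing order, regardless of the $j_t$'s. Symmetrically, the $y$-coordinate of $p(i,j)$ equals $j\ell + (i-1) \in [j\ell,\, j\ell + \ell - 1]$; so for any choice of points $p(i_t, j_t)$ with distinct $j_t$'s (all in $[\ell]$), the relative order of the $y$-coordinates is exactly the relative order of the $j_t$'s.

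With these two facts in hand, the construction is immediate: pick $i_t := t$ and $j_t := \sigma_t$ for $t = 1, \ldots, k$. The hypothesis $k \leq \ell$ guarantees both $i_t \in [\ell]$ and $j_t \in [\ell]$, so all $k$ chosen points lie in $\pset$. The first bracketing fact gives $x_{i_1, j_1} < x_{i_2, j_2} < \cdots < x_{i_k, j_k}$, meaning this is indeed a \emph{subsequence} of $X$ (in the order read off from left to right). The second bracketing fact gives that $y_{i_s, j_s} < y_{i_t, j_t}$ iff $j_s < j_t$ iff $\sigma_s < \sigma_t$, so the subsequence is order-isomorphic to $\sigma$.

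There is no significant obstacle in this proof; the only thing to be careful about is verifying the bracketing intervals (which uses $i-1, j-1 < \ell$) and checking that the constraint $k \leq \ell$ is exactly what is needed to fit $k$ distinct row- and column-indices into the $\ell \times \ell$ grid. The argument also shows something slightly stronger, which may be worth noting in the write-up: the bound $k(X) \geq \ell$ is witnessed by a canonical ``diagonal'' embedding of $S_k$ into $X$, so in particular every simple permutation of size at most $\ell$ appears as a pattern in $X$.
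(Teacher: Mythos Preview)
Your approach is the same as the paper's: exploit the coarse/fine structure of the grid to embed any pattern along a diagonal. Your bracketing facts are correct and more explicit than the paper's one-line justification.

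There is, however, an axis mix-up. In the paper's convention a point $(x,t)\in\pset$ means element $x$ is accessed at time $t$, so the index of the sequence $X$ runs along the $y$-axis, not the $x$-axis. A subsequence of $X$ is therefore read off by increasing $y$-coordinate, and pattern containment means the $x$-coordinates, listed in order of increasing time, are order-isomorphic to $\sigma$. With your choice $p(t,\sigma_t)$, sorting by $y$-coordinate gives $x$-coordinates order-isomorphic to $\sigma^{-1}$, not $\sigma$. For instance, take $\sigma=(2,3,1)$: your three points $p(1,2),p(2,3),p(3,1)$, sorted by time, have values in the relative order $(3,1,2)=\sigma^{-1}$.

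This does not break the lemma---since you quantify over all $\sigma$, exhibiting every $\sigma^{-1}$ is the same as exhibiting every $\sigma$---but the sentence ``the subsequence is order-isomorphic to $\sigma$'' is not what your construction actually shows. The fix is to swap the roles of $i$ and $j$: take $i_t:=\sigma_t$ and $j_t:=t$, i.e.\ use $p(\sigma_t,t)$, which is exactly what the paper does. Then the pattern point $(\sigma_t,t)$ maps to $p(\sigma_t,t)$, and your two bracketing facts give order-isomorphism in both coordinates directly.
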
 
\begin{proof} 
Pattern $\sigma$ corresponds to point set $\set{(\sigma_t, t)}_{t=1}^{k}$, this is order-isomorphic to $p(\sigma_1, 1), p(\sigma_2,2),\ldots, p(\sigma_k, k)$.
\end{proof}

\begin{proposition} 
$UB(X) = \Omega(n \log n)$. 
\end{proposition}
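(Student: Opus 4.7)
\emph{Proof plan.} The plan is to show that for every time step $t$, the per-access contribution $UB_t$ to the unified bound is $\Omega(\log n)$; summing over all $n$ accesses then yields $UB(X) = \Omega(n\log n)$. Recall that $UB_t$ (per the ``$L_1$-distance'' form of~\cite{unified}) takes the shape $\log\!\bigl(\min_{y}\,(w_t(y) + |x_t-y| + O(1))\bigr)$, where $y$ ranges over previously accessed keys, $w_t(y)$ is the number of distinct accesses since $y$ was last touched, and $|x_t-y|$ is the key-space distance. It therefore suffices to prove that for every previously accessed $y$,
\[
w_t(y) + |x_t - y| \;=\; \Omega(\sqrt{n}).
\]

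First I would re-express both quantities in the natural grid coordinates. Suppose $t$ corresponds to the point $p(i,j)$, so $x_t = i\ell + (j-1)$, and let $y = i'\ell + (j'-1)$ correspond to $p(i',j')$ with $(j',i')$ strictly preceding $(j,i)$ lexicographically (so $y$ is indeed a previous access). Because the sequence orders accesses lexicographically in $(j,i)$ and $X$ is a permutation, one checks directly that $w_t(y) = (j-j')\ell + (i-i')$; and also $|x_t - y| = |(i-i')\ell + (j-j')|$.

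Next I would carry out a short case analysis on the sign of $(i - i')$; note that $(j-j') \ge 0$ always. When $i \ge i'$, both summands are non-negative and the sum simplifies to $\bigl((i-i')+(j-j')\bigr)(\ell+1) \ge \ell+1$. When $i < i'$ (forcing $j > j'$), the absolute value flips sign and the sum telescopes to $\bigl((j-j')+(i'-i)\bigr)(\ell-1) \ge 2(\ell-1)$. Either way the sum is $\Omega(\ell) = \Omega(\sqrt{n})$, and therefore $UB_t \ge \log(\ell+2) = \tfrac{1}{2}\log n - O(1)$.

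The main subtlety is merely the sign bookkeeping in the two-case analysis; there is no deeper obstacle, because the perturbed grid is engineered precisely so that no previously accessed key can be simultaneously close to $x_t$ in time and in key-space --- a unit of proximity along one axis costs a full factor of $\ell$ along the other. Boundary cases (the very first access, or the convention for $y$'s never accessed) are absorbed into the additive $O(1)$ inside $UB_t$.
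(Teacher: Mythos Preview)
The paper does not actually supply a proof of this proposition; it is asserted without argument in the appendix. So there is nothing to compare against, and the relevant question is simply whether your proof is correct.

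It is. Your computation is exactly right: with $p(i,j) = (i\ell + j - 1,\, j\ell + i - 1)$ the time-distance is $(j-j')\ell + (i-i')$ and the key-distance is $|(i-i')\ell + (j-j')|$, and your two-case analysis on the sign of $i-i'$ cleanly yields
\[
w_t(y) + |x_t - y| \;=\;
\begin{cases}
\bigl((i-i')+(j-j')\bigr)(\ell+1) \ge \ell+1 & \text{if } i \ge i',\\
\bigl((j-j')+(i'-i)\bigr)(\ell-1) \ge 2(\ell-1) & \text{if } i < i',
\end{cases}
\]
so every per-access term is $\Omega(\log \ell) = \Omega(\log n)$. The ``duality'' you point out---that a unit of closeness in one coordinate costs a factor $\ell$ in the other---is exactly the mechanism behind the construction, and your argument makes it explicit. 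The only cosmetic point is that ``$\log(\ell+2) = \tfrac12\log n - O(1)$'' should perhaps read ``$\ge \log(\ell+1)$'' or similar, but this does not affect the $\Omega(n\log n)$ conclusion.
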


Finally, we show that $\opt(X) = O(n)$ by describing an offline algorithm that achieves this bound.
Our algorithm is similar to \greedy, where we augment some extra steps.
First, partition $[n]$ into $I_1,\ldots, I_{\ell}$ such that $I_j = \set{(j-1)\ell +1, j \ell}$. 
Notice that the access sequence starts by touching elements $p(1,1) \in I_1,p(2,1) \in I_2, \ldots, p(\ell,1) \in I_{\ell}$ and then back to $p(2,1)\in I_1$ and so on. 
This occurs for $\ell$ rounds.    

The algorithm proceeds just like \greedy, except that, when accessing $p(i,j)$,  after touching the minimal elements, it also touches the two boundaries of the interval $I_i$, i.e. $(i-1) \ell +1$ and $i \ell$.   
  
\begin{proposition} 
The algorithm produces a feasible solution. 
\end{proposition}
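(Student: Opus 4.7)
The algorithm produces the standard Greedy touches together with two extra touches per access: when $p(i,j)$ is accessed at time $t$, the points $((i-1)\ell+1,t)$ and $(i\ell,t)$ — the left and right boundary columns of $I_i$ — are added to the output. Since the Greedy portion of the output is already arboreally satisfied by the definition of Greedy, the plan is to show that adding these extra boundary touches does not introduce any unsatisfied rectangle. Concretely, I would process the output row by row and verify, at each access time $t$, that for every extra point $e = (x,t)$ with $x \in \{(i-1)\ell+1, i\ell\}$ and every previously emitted output point $q$ with $q.y < t$, the rectangle $\square_{e,q}$ contains an interior output point.

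The key case split is on the horizontal position of $q.x$ relative to the two extra points at time $t$. In the \emph{inside} case, when $q.x$ lies strictly between $(i-1)\ell+1$ and $i\ell$, i.e.\ within $I_i$, the rectangle $\square_{e,q}$ contains the other extra point at time $t$, or, if $q.x = p(i,j).x$ happens to coincide, it is already satisfied by Greedy's touches at row $t$. In the \emph{outside} case, when $q.x$ lies outside $I_i$, I would use the periodic structure of the access sequence to argue that the column $x$ has been touched (as either an access or an earlier extra boundary touch) at some time $t^{\star}$ with $q.y < t^{\star} < t$; the point $(x,t^{\star})$ then lies inside $\square_{e,q}$ and satisfies it. Specifically, because the permutation $X$ visits a point in each interval $I_{i'}$ every $\ell$ consecutive access times, and because the algorithm re-touches $(i-1)\ell+1$ and $i\ell$ every time it accesses a point of the form $p(i,\cdot)$, any past point in a ``far-away'' column has necessarily been followed by at least one re-touch of the column $x$ before time $t$.

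The main obstacle I anticipate is the outside case, in particular showing that no past point $q$ slips through in the first $\ell$ accesses before the boundary columns have been touched at all. This is where the precise periodicity of the grid permutation $\pset$ enters: the first $\ell$ accesses touch exactly one point in each interval $I_1, \dots, I_\ell$, and a direct check shows that during this warm-up phase the extra boundary touches coincide with or lie on Greedy's natural staircase, so no new rectangle is created. Once this boundary behaviour is established, an induction on $t$, combined with the two cases above, yields that all pairs of output points form satisfied rectangles, and hence the algorithm's output is feasible.
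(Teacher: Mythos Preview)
The paper does not supply a proof of this proposition; it is stated without argument. So there is no paper proof to compare against, and the question is only whether your sketch is correct.

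Your overall framing is sound: after the \greedy step at time $t$ the output up to row $t$ is satisfied, so the only task is to check that the two extra boundary touches $(L,t)$ and $(R,t)$ (with $L=(i-1)\ell+1$, $R=i\ell$, and $L\le a\le R$ for the access point $a$) create no unsatisfied rectangle with any earlier point $q$.

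However, your ``inside'' case is incorrect as written. If $e=(L,t)$ and $L<q.x<R$, the rectangle $\square_{e,q}$ spans the column range $[L,q.x]$; the other extra point $(R,t)$ has column $R>q.x$ and is \emph{not} in this rectangle (and symmetrically for $e=(R,t)$). The access point $(a,t)$ helps only when $a$ lies between $e.x$ and $q.x$; in the sub-case $e=(L,t)$, $L<q.x<a$ (or $e=(R,t)$, $a<q.x<R$) neither the access point nor the other boundary lies in $\square_{e,q}$. What actually rescues this case is the structural fact your outside-case argument already gestures at: the boundaries $L,R$ are re-touched at every visit to $I_i$, and between consecutive visits no element strictly inside $(L,R)$ is touched (the boundaries ``seal'' the interval, cf.\ \Cref{prop:hidden}(iii)). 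Hence for any earlier $q$ with $q.x\in(L,R)$ one has $\tau(L,t-1)\ge q.y$ and $\tau(R,t-1)\ge q.y$, so $(L,\tau(L,t-1))$ or $(R,\tau(R,t-1))$ lies in $\square_{e,q}$ and is distinct from both corners.

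Your outside case is along the right lines but also only a sketch: to make the periodicity argument precise you need to show that whenever a column $y>R$ (resp.\ $y<L$) is touched at some time $s$ with $t_0<s<t$ (where $t_0$ is the previous visit to $I_i$), the boundary $R$ (resp.\ $L$) is also touched at time $s$, so that $\tau(R,t-1)\ge s\ge q.y$. This again follows from the sealing structure, but it requires an explicit induction over the rounds rather than an appeal to periodicity alone. The first-round check you flag is indeed necessary and is where the argument must start.
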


\begin{proposition} 
The cost of this algorithm is $O(n)$. 
\end{proposition}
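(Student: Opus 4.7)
I plan to decompose the total cost into three parts: the $n$ access touches, the $\le 2n$ augmentation touches (two boundaries per access), and the remaining \greedy touches. The first two contribute $O(n)$ trivially, so the task reduces to bounding the \greedy contribution. I will partition the $n=\ell^{2}$ accesses by round---round $j$ consists of $p(1,j),\dots,p(\ell,j)$ at times $(j-1)\ell+1,\dots,j\ell$---and argue that each ``round-initial'' access $p(1,j)$ incurs \greedy cost $O(\ell)$, while each ``in-round'' access $p(i,j)$ with $i\ge 2$ incurs \greedy cost $O(1)$. Summing yields $\ell\cdot O(\ell)+\ell(\ell-1)\cdot O(1)=O(n)$.

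For an in-round access $p(i,j)$ with $i\ge 2$, the plan is to invoke \Cref{prop:hidden} three times. I will maintain as an invariant that immediately before time $(j-1)\ell+i$, the right boundary $(i-1)\ell$ of $I_{i-1}$, the left boundary $(i-1)\ell+1$ of $I_{i}$, and the right boundary $i\ell$ of $I_{i}$ are the freshest touched elements in their respective neighborhoods. This follows from three events: the preceding access $p(i-1,j)$ together with its augmentation refresh the two boundaries of $I_{i-1}$ (and \greedy's right-neighbor touch at $p(i-1,j)$ pulls in $(i-1)\ell+1$), while the previous round's augmentation at $p(i,j-1)$ together with the subsequent right-neighbor touches of $p(i+1,j-1),\dots$ keep $i\ell$ fresh. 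Applying \Cref{prop:hidden}(iii) to these three barriers hides the interior of $I_{i-1}$, the interior of $I_{i}$ apart from the immediate neighbors of the access column $(i-1)\ell+j$, and the interior of $I_{i+1},\dots,I_{\ell}$. Hence \greedy's stair at $p(i,j)$ contains at most a constant number of touched columns: the previous access $(i-1)\ell+j-1$ inside $I_{i}$, the boundary $(i-1)\ell$, the boundary $(i-1)\ell+1$, and the boundary $i\ell$.

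For a round-initial access $p(1,j)$ at column $j\in I_{1}$, the left side of the stair is again $O(1)$ by the barrier argument. The right side may extend across all intervals $I_{2},\dots,I_{\ell}$ because the ``fresh'' right barriers for these intervals in round $j$ have not yet been installed. However, I claim that inside each interval $I_{k}$ ($k\ge 2$) only the right boundary $k\ell$ can appear on the stair: any interior element of $I_{k}$ is hidden between $(k-1)\ell+1$ and $k\ell$ (both of which carry a more recent $\tau$-value), and the left boundary $(k-1)\ell+1$ is blocked by the nearer touched boundary $(k-1)\ell$ of $I_{k-1}$. Since at most one column per interval contributes, the stair has size $\le\ell+O(1)=O(\ell)$, as required.

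The main obstacle will be establishing the $\tau$-value invariant by induction on the round. The base case is round 1, where all accesses occur at the left boundary $(i-1)\ell+1$ of $I_{i}$ and the per-access cost is at most $3$ by direct inspection; round 1 therefore contributes only $O(\ell)$. A similar direct check handles the extreme intervals $I_{1}$ and $I_{\ell}$ and the first one or two accesses of each later round where the ``previous round'' data is partially degenerate---these contribute only an additive $O(\ell)$ per round. Once these base cases are in hand, the invariant propagates from $p(i,j)$ to $p(i+1,j)$ and from round $j$ to round $j+1$, completing the bound.
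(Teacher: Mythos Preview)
The paper states this proposition without proof, so there is nothing to compare against directly. Your overall plan---$n$ accesses, at most $2n$ augmentation touches, and a \greedy contribution split by rounds into $O(\ell)$ for each $p(1,j)$ and $O(1)$ for each $p(i,j)$ with $i\ge2$---is the right structure, and the final arithmetic $\ell\cdot O(\ell)+\ell(\ell-1)\cdot O(1)=O(n)$ is correct.

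There is, however, a genuine gap in your justification of the in-round $O(1)$ bound. You claim that the right boundary $i\ell$ is kept ``fresh'' by the previous round's augmentation at $p(i,j-1)$ together with the left-stair touch at $p(i+1,j-1)$. That mechanism yields $\tau(i\ell)=(j-2)\ell+i+1$; but the \emph{same} mechanism applied one interval to the right gives $\tau\bigl((i+1)\ell\bigr)=(j-2)\ell+i+2$, which is strictly larger, and nothing between $i\ell$ and $(i+1)\ell$ has a $\tau$-value in this window. So with only your stated invariant, $i\ell$ does \emph{not} block $(i+1)\ell$, and the right stair of $p(i,j)$ would climb across $I_{i+1},\dots,I_{\ell-1}$---exactly the $O(\ell)$ behaviour you already allow for $p(1,j)$.

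The fix is to use what $p(1,j)$ actually does. You correctly argue that its right stair picks up precisely the right boundaries $\ell,2\ell,\dots,(\ell-1)\ell$; the consequence you do not exploit is that this \emph{equalises} them: after time $(j-1)\ell+1$ all of $\tau(i\ell),\tau((i+1)\ell),\dots,\tau((\ell-1)\ell)$ equal $(j-1)\ell+1$. Now at $p(i,j)$ the point $(i\ell,(j-1)\ell+1)$ lies inside the rectangle to $((i+1)\ell,(j-1)\ell+1)$ and blocks it, so the right stair stops at $i\ell$. The earlier in-round accesses $p(2,j),\dots,p(i-1,j)$ do not disturb this, since each touches only its own right boundary and leaves the boundaries further right untouched. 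With this corrected invariant your argument goes through. (A minor side remark: your claim that $p(i-1,j)$ ``pulls in $(i-1)\ell+1$'' is also not quite right---$(i-1)\ell+1$ is blocked by $(i-1)\ell$ at that step---but this does not affect the $O(1)$ count.)
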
 

A slightly more involved inductive argument shows that \greedy too has linear cost on $X$.

\begin{proposition} 
$\Greedy(X) = O(n)$. 
\end{proposition}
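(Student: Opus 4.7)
The plan is to analyze \greedy round by round. Partition the $n$ accesses into $\ell = \sqrt{n}$ rounds, where round $j$ (for $j \in [\ell]$) consists of the $\ell$ accesses at times $(j-1)\ell+1, \ldots, j\ell$, visiting $p(i,j)$ at position $(i-1)\ell + j$ for $i = 1, \ldots, \ell$ in left-to-right order. Thus round $j$ accesses the $j$-th leftmost element of each group $I_{1}, \ldots, I_{\ell}$ in order. I would show that each round contributes $O(\ell)$ to the touch count, giving $O(\ell^{2}) = O(n)$ in total.

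Round $1$ is a one-line induction: at step $i \ge 2$ the only recently touched column is that of the previous access, so the stair has size $1$ and the round costs $2\ell - 1$. For round $j \ge 2$ the analysis splits into one expensive step and $\ell - 1$ cheap ones. The first access of round $j$ (at position $j$) has stair of size $\Theta(\ell)$: it touches the leftmost surviving column $(k-1)\ell + 1$ of each group $I_{k}$ for $k = 2, \ldots, \ell$, together with the immediate left neighbour $j-1$ in $I_{1}$. This contributes $O(\ell)$, which is affordable since there are only $\ell$ rounds.

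The core of the proof is showing that for $i \ge 2$ the $i$-th step of round $j$ has stair of size at most three. I would maintain the following invariant on the $\tau$-values just before step $i$ of round $j$: (a) the column $(i-1)\ell + 1$ has just been refreshed to time $(j-1)\ell + 1$ by step $1$ of the current round; (b) the column $(i-1)\ell + (j-1)$, accessed at round $j-1$ step $i$, still has $\tau = (j-2)\ell + i$, untouched since; (c) for every $r$ with $2 \le r \le j-2$, the column $(i-1)\ell + r$ has $\tau = r\ell + i$ and is hidden (via Lemma~\ref{prop:hidden}) by the touch at $((i-1)\ell + (r+1), (r+1)\ell + i)$ performed during round $r+2$ step $i$; (d) every column $i\ell + r$ of the next group with $r \ge 2$ is hidden by the fresh touch at $(i\ell + 1, (j-1)\ell + 1)$ from step $1$ of the current round. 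Consequently the only candidates for the stair are $(i-1)\ell + 1$, $(i-1)\ell + (j-1)$, and $i\ell + 1$, each of which is indeed on the stair by a short direct check. Groups $I_{1}, \ldots, I_{i-1}$ to the left of the current access and groups $I_{i+2}, \ldots, I_{\ell}$ further to the right are handled analogously, with the current-round refreshes of their leftmost columns serving as the hiding touches; natural degeneracies when $j = 2$, $i = \ell$, or $i = 1$ shrink the stair further.

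The main obstacle is verifying this invariant inductively across rounds: one must check carefully that step $i$ of round $j$ touches exactly the three claimed columns and nothing more, so that the required $\tau$-values carry over into round $j+1$. Once this bookkeeping is in place, the per-round cost is $\ell + O(\ell) = O(\ell)$, and summing over the $\ell$ rounds yields $\Greedy(X) = O(n)$.
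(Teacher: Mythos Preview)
The paper does not actually spell out a proof of this proposition; it only remarks that ``a slightly more involved inductive argument'' than the one for the offline variant establishes the bound. Your proposal supplies precisely such an argument---a round-by-round analysis in which the first access of each round has stair size $\Theta(\ell)$ and every subsequent access has stair size at most three---and this is exactly the right approach. The overall structure and the key observations (hiding via the leftmost column of each block, refreshed at step~1) are correct.

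Two small indexing slips in your stated invariant are worth fixing, though neither breaks the argument. In (a), the leftmost column of $I_i$ is touched not only at step~1 of round~$j$ but also at step~$i-1$ of the same round (as the right-stair element of the preceding access), so its $\tau$-value just before step~$i$ is $(j-1)\ell+i-1$ rather than $(j-1)\ell+1$; all you need is that it was refreshed during the current round, and this still holds. In (c), for the boundary case $r=j-2$ the hiding touch you cite at round $r+2=j$ step $i$ has not yet occurred; the element that actually hides $(j-2)$-th of $I_i$ is the $(j-1)$-th column, which was \emph{accessed} at round $j-1$ step $i$ and therefore has $\tau=(j-2)\ell+i$, equal to the $\tau$ of the $(j-2)$-th column. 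Since Lemma~\ref{prop:hidden} only requires a non-strict inequality $\tau(y,t)\ge\tau(x,t)$, the hiding still applies. With these minor corrections your inductive bookkeeping goes through, each round costs $O(\ell)$, and the $O(n)$ bound follows.
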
 

\subsubsection*{Pattern avoidance and dynamic finger.} 
\label{sec:df-comp}
We show examples that $k(X)$ and the dynamic finger bound (denoted by $DF(X)$) are not comparable. 
Figure~\ref{fig:examples}(ii) illustrates a sequence $X$ where $DF(X) = O(n)$ and $k(X) = \Omega(\sqrt{n}/ \log n)$. 
First, the sequence $X$ starts with the perturbed grid construction of size $f(n)$-by-$f(n)$ where $f(n) = \sqrt{n} /\log n$; then the sequence continues with a sequential access of length $n - f(n)^2$. 
Notice that, in the first phase of the sequence, the distances between consecutive points are $|X_{i+1} - X_i| = O(f(n))$, except for at most $f(n)$ pairs, which have distance $O(f(n)^2)$. 
In the second phase, the distances are one. So $DF(X) = O(f(n)^2) \log f(n) + O(n) = O(n)$. 
On the other hand, the $f(n)$-by-$f(n)$ grid contains all permutations of size $f(n)$, so we have $k(X) = \Omega(\sqrt{n}/ \log n)$.

Figure~\ref{fig:examples}(iii) illustrates a $2$-recursively decomposable sequence (thus having $k(X) = 3$) whose dynamic finger bound is $\Omega(n \log n)$; to see this, notice that $|X_{i+1} - X_i| \geq n/2$ for $i = 1,\ldots, n/2$, so we have $DF(X) \geq \frac{n}{2} \log (n/2)$.   

\subsubsection*{Easy non-decomposable sequence.}
Figure~\ref{fig:examples}(iv) shows the non-decomposable input sequence $X = (n/2, n, 1, n-1, 2, \dots)$ on which \greedy achieves linear cost (to see this, observe that \greedy touches at most three elements in each row).

\begin{figure}[h]
\centering
\includegraphics[scale=0.12]{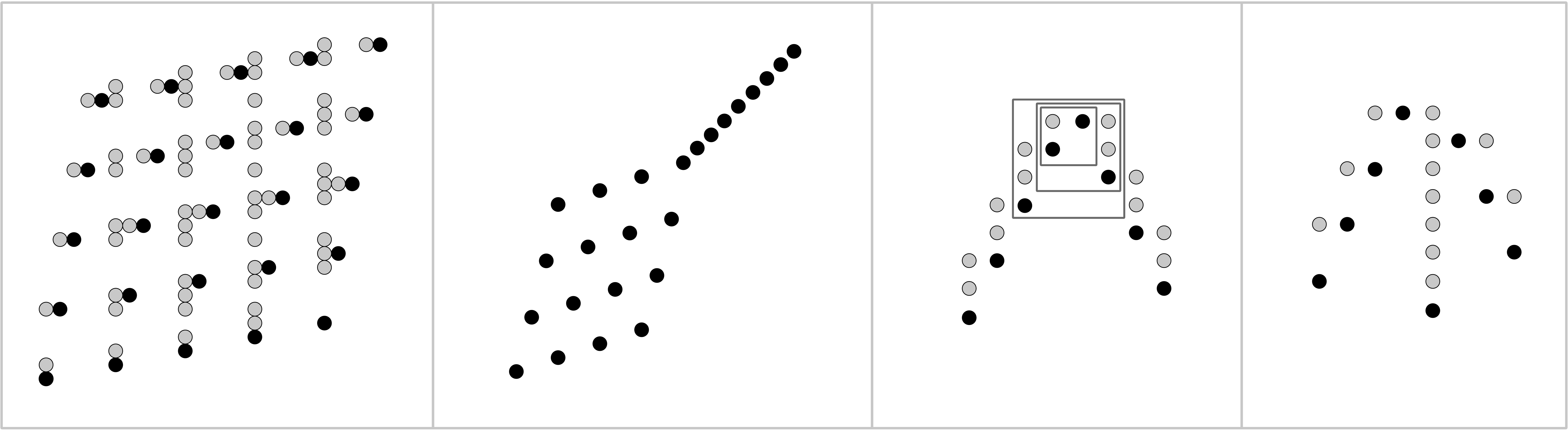}
\caption{From left to right: (i) ``perturbed grid'' construction when $\ell=5$ with \greedy execution, (ii) example with low dynamic finger bound, but high pattern avoidance parameter, (iii) example with low pattern avoidance parameter and high dynamic finger bound, with \greedy execution, (iv) non-decomposable permutation with linear cost \greedy execution.\label{fig:examples}} 
\end{figure}

\subsection{Why is path the roadblock to deque, traversal, split}
\label{sec:path-all}

The deque-, traversal- and split conjectures are originally stated for splay
trees. We first state the conjectures for any online BST $\cA$.
\begin{conjecture}
	[Deque conjecture \cite{tarjan_sequential}]Starting with any initial tree $T$ with $n$ elements,
	the cost of $\cA$ for inserting or deleting the current minimum or maximum elements
	$m$ times is $O(m+n)$
\end{conjecture}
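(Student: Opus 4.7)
The Deque conjecture is a long-standing open problem even for splay trees, where the best known bound is $O(n\alpha^*(n))$, so I do not expect a full proof from scratch; rather, the plan is to reduce it to the simpler path-preorder access question highlighted in this appendix, which the paper identifies as the common roadblock for the deque, traversal, and split conjectures.

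First, I would lift the deque operations into the geometric model. Insertions and deletions of current extremes can be handled via the reduction of \Cref{thm:perm-all} (together with its augmentation to insertions in \S\ref{sec:perm-enough}): an online geometric BST that is $c$-competitive for permutation access sequences yields an $O(c)$-competitive online BST for arbitrary access-and-insert sequences. After this reduction, the deque sequence becomes a permutation whose access point at every time step lies in either the leftmost or the rightmost column of the current ``active'' index range, a very restricted structural property.

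Second, I would try to exhibit a cost-preserving reduction from this extremal-access permutation, paired with an adversarial initial tree $T$, to the problem of accessing the preorder of a path $P$ starting from an arbitrary initial tree. Intuitively, the two ends of the deque play the role of the two monotone endpoints of the path, and the monotone growth of the touched region under deque operations mirrors a preorder traversal of a path. Once inside the path-preorder setting, the access sequence is $(2,3,1)$-free and $2$-decomposable, so the input-revealing machinery of \S\ref{sec:input-rev} becomes applicable in principle.

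The main obstacle is exactly the one flagged in the appendix: with an arbitrary initial tree, no preprocessing is allowed, and the strongest bound available from the input-revealing technique is the quasilinear one of \Cref{cor:trav1}, not the linear bound of \Cref{cor:trav2}. To close this gap one would need to strengthen the capture gadget of \Cref{lem:input-revealing} so that, in the specialized path-preorder setting, a \emph{permutation} (rather than merely \emph{light}) capture gadget exists, after which \Cref{lem:forbidden}(i) would yield the desired linear bound. The impossibility sketch in \Cref{sec:bad-example2} shows that such a strengthening cannot hold in full generality, so the hardest step of the plan is to exploit additional structural properties specific to the monotone path --- for instance, that the initial ``tree'' degenerates to a single chain, and that the access sequence is globally monotone in a controlled sense --- in order to rule out the bad construction and unlock a permutation-shaped gadget.
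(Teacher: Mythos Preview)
The statement you are asked to prove is a \emph{conjecture}, and the paper offers no proof of it; it is listed in Appendix~\ref{sec:path-all} precisely as an open problem. So there is nothing to compare your attempt against, and indeed you yourself concede at the outset that you do not expect a full proof. What you have written is a research outline, not a proof.

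More importantly, the direction of your proposed reduction is backwards. The paper shows that the deque, traversal, and split conjectures each \emph{imply} the path conjecture (Theorem in Appendix~\ref{sec:path-all}). The path conjecture is therefore a \emph{necessary} condition and a common roadblock, but it is not shown to be \emph{sufficient}: resolving path-preorder access in linear time would not, by anything in the paper, yield the deque conjecture. Your plan to ``reduce it to the simpler path-preorder access question'' would require the reverse implication (Path $\Rightarrow$ Deque), which is neither proved nor claimed here. There is also a small confusion in your last paragraph: in the path conjecture the \emph{initial tree} $T$ is arbitrary and the \emph{path} is the tree $T'$ whose preorder defines the access sequence, so the initial tree does not ``degenerate to a single chain'' and the obstruction from Appendix~\ref{sec:bad-example2} applies in full force.
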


\begin{conjecture}
	[Traversal conjecture \cite{ST85}]Starting with any initial tree $T$ with $n$
	elements, the cost of accessing a sequence $X$ defined by the preorder
	sequence of some BST $T'$ is $O(n)$.
\end{conjecture}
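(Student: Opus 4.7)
The plan is to attack the conjecture for the specific case $\cA = \Greedy$, since the paper's input-revealing machinery is designed for this algorithm. Corollary~\ref{cor:trav1} already gives the near-linear bound $n 2^{\alpha(n)^{O(1)}}$ for preorder access with arbitrary initial tree, so the task reduces to eliminating the $2^{\alpha(n)^{O(1)}}$ factor. The source of this factor is the invocation of \Cref{lem:forbidden}(ii) for light (rather than permutation) matrices, which in turn stems from the capture gadget $\capture$ having two ones in its bottom row. Eliminating this extra entry would close the gap, but Appendix~\ref{sec:bad-example2} rules out any fixed \emph{permutation} capture gadget for \Greedy in the presence of an arbitrary initial tree. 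So the approach must become more refined: instead of forbidding a single universal pattern, I would look for a family of patterns indexed by local features of $T$ and the preorder sequence $X$, and show that $\Greedy_T(X)$ avoids a composite pattern whose ``simplicity'' (as measured by the exponent in \Cref{lem:forbidden}) depends only on a structural parameter of $T$.

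A more promising route, directly suggested in Appendix~\ref{sec:path-all}, is to first establish the conjecture in the restricted case where $T'$ is a path. The preorder of a path decomposes into a sequence of maximal monotone runs $X^{(1)}, X^{(2)}, \dots, X^{(r)}$, alternating between right- and left-leaning descents. First I would bound the cost incurred at the \emph{first} element of each run by a direct charging argument against the initial tree $T$, hoping for $O(n)$ in total; then I would try to invoke (a strengthening of) \Cref{thm:monotone} inside each run, where the sequence is $2$-increasing or $2$-decreasing, to pay only $O(|X^{(i)}|)$ per run.

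The main obstacle will be the interaction between the initial tree and the alternating monotone structure: a carefully chosen $T$ can force \Greedy to touch many points during the transitions between consecutive runs, and these transition costs are the ones that must be amortized. Concretely, I anticipate needing to strengthen the hidden-element analysis of \Cref{prop:hidden} to prove a ``forgetfulness'' lemma: after processing a full monotone run, the stair profile of \Greedy converges to a canonical shape essentially independent of $T$, so that each subsequent run behaves as if there were no initial tree, at which point the preprocessing bound of Corollary~\ref{cor:trav2} can be applied run-by-run. Proving such a convergence property seems to require techniques beyond the forbidden-submatrix framework used in the rest of the paper, which is precisely why the authors single out this path-restricted version as the common roadblock to the traversal, deque, and split conjectures; any proof of the full conjecture will almost certainly have to handle this obstacle first.
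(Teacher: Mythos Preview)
The statement you are attempting is a \emph{conjecture}, not a theorem: the paper does not prove it and explicitly leaves it open. There is therefore no ``paper's own proof'' to compare your proposal against. What the paper establishes are the two relaxations you cite (Corollary~\ref{cor:trav1} and Corollary~\ref{cor:trav2}), and it then isolates the path-restricted version in Appendix~\ref{sec:path-all} precisely as an unresolved common obstacle to the traversal, deque, and split conjectures.

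Your write-up is not a proof but a research outline, and you appear to recognize this yourself in the final paragraph. The directions you sketch are reasonable and well-aligned with the paper's framing: you correctly identify that the $2^{\alpha(n)^{O(1)}}$ overhead comes from the light capture gadget $\capture$, that Appendix~\ref{sec:bad-example2} blocks any fixed permutation capture gadget with arbitrary initial tree, and that the path case is the natural first target. However, the ``forgetfulness'' lemma you propose---that after one monotone run the stair profile becomes essentially independent of $T$---is itself speculative and, as you concede, would require techniques outside the forbidden-submatrix framework. Nothing in your outline constitutes an actual argument that this convergence holds, nor do you explain how to amortize the transition costs between runs; these are exactly the gaps that make the conjecture open. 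So the honest assessment is: your proposal accurately surveys the landscape and the known obstructions, but it does not close the gap, and neither does the paper.
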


\begin{conjecture}
	[Split conjecture \cite{split_Luc91}]Starting with any initial tree $T$ with $n$ elements,
	the cost of $\cA$ for splitting $T$ by any sequence of $n$ splits is $O(n)$.
	A split at element $x$ is to delete $x$ and to obtain two trees whose
	elements are smaller resp.\ larger than $x$, each subject to
	further splittings.
\end{conjecture}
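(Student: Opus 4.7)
The plan is to relate the Split conjecture to the access-sequence model of BSTs so that the pattern-avoidance and decomposition tools developed in the paper become applicable. First, I would observe that a sequence of $n$ splits on an initial tree $T$ with $n$ elements can be interpreted as a sequence of accesses $X \in S_n$: each split at $x$ first locates $x$ (touching the search path), and the restructuring cost is dominated by the access cost in any standard BST model. The \opt-cost of executing these accesses then lower-bounds the cost of performing the splits, so it suffices to bound the cost of the induced access problem.

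Next, I would analyze the structure of $X$. A single split at $x$ partitions the tree into two independent subtrees on $[1,x-1]$ and $[x+1,n]$, and all subsequent accesses to one half are order-independent of the other. Therefore the induced access sequence $X$ is naturally $2$-decomposable in the sense of Section~\ref{sec:results}: its block decomposition tree $\tset$ is the BST obtained by recursively taking the first split element of each subrange as the root. By Corollary~\ref{cor:rgreedy-good}, $\opt(X) = O(n \log 2) = O(n)$, which settles the offline version of the conjecture via \agreedy, and by Theorem~\ref{thm:main-noinitial}, \greedy with preprocessing also achieves this linear bound.

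The main obstacle is that the Split conjecture in its original form does not allow preprocessing and requires an online bound for algorithms such as splay trees, starting from an \emph{adversarial} initial tree $T$. The decomposition theorem shows that the offline optimum is linear, but removing the preprocessing step---i.e.\ absorbing the cost of an arbitrary initial tree into the online execution---runs into exactly the same barrier as the traversal conjecture. This is the unifying theme I expect the appendix to formalize: each of the deque, traversal, and split conjectures contains as a special case the problem of accessing the preorder sequence defined by a path starting from an arbitrary initial tree. For instance, an adversarial path-shaped $T$ forces any online BST to ``pay'' for the initial structure on the very first access, and a carefully chosen split sequence then mimics a preorder traversal of a target path. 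Bridging this last gap---proving a linear bound for any online BST on path-preorder inputs without preprocessing---appears to be the fundamental remaining difficulty, consistent with the section's title.
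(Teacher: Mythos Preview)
The statement you are addressing is a \emph{conjecture}, not a theorem: the paper does not prove it. What the paper actually does in this appendix is show that the Split conjecture (like the deque and traversal conjectures) \emph{implies} the Path conjecture, by observing that splitting at a current minimum or maximum is the same as deleting it, so any algorithm satisfying the Split conjecture also satisfies the deque conjecture without inserts, and hence the Path conjecture via the argument for deque. Your final paragraph correctly anticipates this reduction-to-path theme, but the bulk of your proposal is an attempt to prove the conjecture outright, and that attempt contains a genuine error.

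Your central claim --- that the access sequence $X$ induced by a sequence of $n$ splits is $2$-decomposable --- is false. After a split at $x$ the two resulting trees are independent data structures, but the \emph{temporal order} of subsequent splits can interleave arbitrarily between the two halves. For example, on keys $[5]$ the split order $(3,1,5,2,4)$ is perfectly legal (split at $3$; then at $1$ in the left tree; then at $5$ in the right tree; then at $2$; then at $4$), yet $(3,1,5,2,4)$ is a \emph{simple} permutation with no nontrivial block. In fact, every permutation of $[n]$ arises as a split sequence: at step $i$, simply split at $x_i$ in whichever current subtree contains it. So Corollary~\ref{cor:rgreedy-good} and Theorem~\ref{thm:main-noinitial} give you nothing here --- the parameter $k$ can be as large as $n$. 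The offline linear bound you claim does not follow from the paper's tools, and indeed the Split conjecture remains open.
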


Next, we state an easier conjecture which is implied by all three
conjectures.
\begin{conjecture}
	[Path conjecture]Starting with any initial tree $T$ with $n$ elements,
	the cost of $\cA$ for accessing a sequence $X$ defined by the preorder sequence
	of some BST $T'$ which is a path is $O(n)$.
\end{conjecture}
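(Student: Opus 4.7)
The plan is to design an online BST $\cA$ exploiting a strong structural property of path preorder sequences. First, by induction on the length of the path, the root of a path BST on a contiguous range $[a,b]$ must be either $a$ or $b$: the root's unique child's subtree contains all remaining elements, which must lie entirely on one side of the root. Applying this recursively to the single child (which itself roots a path on a smaller contiguous range) yields the characterization: for $X=(x_1,\dots,x_n)$ the preorder of a path, the set of unaccessed elements at time $t$ forms a contiguous interval $[a_t,b_t]\subseteq[n]$, and $x_t\in\{a_t,b_t\}$. Thus the access pattern is ``deque-style'' on a shrinking contiguous range, and the entire sequence is encoded by $n-1$ L/R bits announced online.

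Given this, the target is an online BST maintaining the invariant that after time $t$, the two endpoints $a_t$ and $b_t$ sit at constant depth in the tree: concretely, keep the just-accessed element at the root, with the remaining-range subtree hanging directly below and structured so that its own minimum or maximum (whichever is demanded next) is reachable in $O(1)$ steps; accessed elements below $a_t$ (resp.\ above $b_t$) are pushed into the far-left (resp.\ far-right) subtrees, where they are never touched again. After each access, a constant number of rotations restores the invariant by promoting the new endpoint. An amortized analysis with a potential $\Phi$ measuring ``distance from the target structure'' should then give $O(1)$ amortized cost per access. Since path preorders are in particular $2$-decomposable, a cleaner backup is to invoke Theorem~\ref{thm:main-noinitial} to get $O(n)$ cost for \agreedy \emph{with} preprocessing, and focus the effort on removing the preprocessing in the path case.

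The main obstacle, as the paper itself emphasizes, is the arbitrary initial tree $T$. With no initial tree, Theorem~\ref{thm:main-noinitial} already gives \agreedy linear cost on all $2$-decomposable sequences; with an adversarial $T$, the best known bound for \greedy on preorder sequences is $n\,2^{\alpha(n)^{O(1)}}$ via Corollary~\ref{cor:trav1}. The difficulty is that $T$ can force a single access to cost $\Omega(n)$ (for instance, $T$ a long left spine with $x_1=n$), so the chosen potential $\Phi$ must satisfy $\Phi_0 = O(n)$ on every initial tree and drop by a constant fraction of each expensive access, so that $\sum_t \text{actual}(t) = \sum_t \text{amortized}(t) + \Phi_0 - \Phi_n = O(n)$. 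Designing such a $\Phi$ that simultaneously handles an arbitrary $T$, pays for the bootstrap into the target invariant, and bounds steady-state rotation work is the genuinely hard step; as the paper observes, this is exactly the barrier shared by the traversal, deque, and split conjectures. The existing bounds $O(n\alpha^{*}(n))$ for splay on deques and $O(n 2^{\alpha(n)})$ for \greedy (Table~\ref{table-results}) suggest that rotation-based amortization alone may be insufficient, and resolving the path conjecture likely requires a new algorithmic or analytic idea---possibly combining the input-revealing viewpoint of \S\ref{sec:input-rev} with a decomposition-flavored argument in the spirit of \S\ref{sec:dec-short}, tailored to the specific deque-like structure of path preorders.
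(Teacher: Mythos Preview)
The statement you are trying to prove is a \emph{conjecture}, not a theorem: the paper does not prove it and explicitly poses it as an open problem (see \S\ref{sec:conclusions} and Appendix~\ref{sec:path-all}, which ends with ``Prove a linear bound for your favorite online BST!''). The only result the paper proves about the path conjecture is that it is \emph{implied} by each of the deque, traversal, and split conjectures, not that it holds.

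Your proposal is accordingly not a proof but a research plan, and you yourself acknowledge this in the final paragraph. Your structural observation is correct and useful: the preorder of a path BST is exactly a sequence in which each access is to the current minimum or maximum of the yet-unaccessed interval (this is the pattern-avoidance characterization the paper alludes to: such sequences avoid both $(2,3,1)$ and $(2,1,3)$). Your idea of maintaining both endpoints at constant depth is natural and is essentially the intuition behind why the deque conjecture should imply the path conjecture. But you correctly identify, and do not resolve, the actual obstacle: for an arbitrary initial tree $T$ you need a potential with $\Phi_0=O(n)$ that absorbs all the work of reshaping $T$ into your target structure while also paying for steady-state maintenance. This is precisely the barrier the paper highlights; the near-linear bounds $O(n\alpha^*(n))$ for splay and $O(n2^{\alpha(n)})$ for \greedy on deque sequences are the state of the art, and nothing in your sketch goes beyond them. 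The fallback to Theorem~\ref{thm:main-noinitial} does not help either, since that result requires preprocessing (equivalently, no initial tree), which is exactly the hypothesis you are trying to remove.

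In short: there is no proof in the paper to compare against, and your proposal does not supply one---it restates the difficulty the paper already isolates.
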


To express it in terms of pattern avoidance,
the access sequences in the path conjecture avoid both $(2,3,1)$ and
$(2,1,3)$, while the ones in traversal conjecture only avoid $(2,3,1)$. 

Intuitively, the sequence in path conjecture is a sequence that starts
from minimum or maximum and moves inwards towards the middle, for example,
$(1,2,3,4,10,9,5,6,8,7)$.

\begin{theorem}
	Let $\cA$ be an online BST.
	\begin{enumerate}[label={{(}\roman*{)}}]
\item If $\cA$ satisfies traversal conjecture, then $\cA$ satisfies path
		conjecture. 
		\item If $\cA$ satisfies deque conjecture, then there is an online BST
		$\cA'$ that satisfies path conjecture. 
		\item If $\cA$ satisfies split conjecture, then there is an online BST $\cA'$ that satisfies path conjecture.  		
		
	\end{enumerate}
\end{theorem}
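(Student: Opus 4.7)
The common observation is that the preorder sequence of a path-shaped BST accesses, at each step, the current minimum or maximum of the elements not yet accessed. Indeed, in a path-shaped BST the root has at most one child, so all other elements lie in a single subtree; thus the root is either the global min or the global max of the element set, and the same argument applies recursively to the only child-subtree. Equivalently, after any prefix of a path-preorder sequence the not-yet-accessed elements form a contiguous interval of $[n]$, and the next access is to one of its two endpoints.

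Part (i) is immediate because a path is itself a BST: a path-preorder sequence is a preorder sequence, and the path conjecture for $\cA$ is a special case of the traversal conjecture for $\cA$.

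For (ii), given $\cA$ satisfying the deque conjecture, I would define $\cA'$ to simulate $\cA$ as follows. Starting from the initial tree $T$, when the $t$-th access $x_t$ arrives, $\cA'$ invokes $\cA$'s delete-min or delete-max operation (well-defined by the observation above); the search phase of this operation touches $x_t$ and thereby realizes the access. To keep $\cA'$ a valid online BST on the full set $[n]$, each already-``deleted'' element can be parked as a dummy leaf hanging off the current extremum of $\cA$'s tree, which adds only $O(1)$ extra work per step. Since the resulting $n$ operations are all pops, the deque conjecture bounds the total cost by $O(n)$. Part (iii) is analogous with splits in place of pops: on input $x_t$, $\cA'$ performs a split at $x_t$ in $\cA$'s tree; because $x_t$ is an extremum of the remaining elements, one of the two resulting trees is empty and the other becomes the new working tree (again, the already-accessed elements can be kept as ghosts). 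After $n$ such splits every element has been touched exactly once, and the split conjecture bounds the total cost by $O(n)$.

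The main technical subtlety is to cast the deque- or split-simulation as a bona fide online BST on the full element set $[n]$ throughout. A delete or split in $\cA$ shrinks the underlying structure, whereas the path conjecture is phrased in the access-only model where all $n$ elements remain present. I would handle this by the ghost-leaf bookkeeping sketched above, checking that attaching ghosts does not disturb $\cA$'s search behaviour on the ``live'' elements and costs only $O(1)$ per step to maintain. Once this is done, the access cost incurred by $\cA'$ at step $t$ is bounded above by the cost of the corresponding deque or split operation of $\cA$, and summing over all $n$ steps gives the desired linear bound.
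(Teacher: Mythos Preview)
Your proposal is correct and follows essentially the same approach as the paper. Both proofs rest on the observation that a path-preorder sequence accesses, at each step, the current minimum or maximum of the not-yet-accessed elements, and both reduce the path conjecture to simulating $\cA$'s delete-min/delete-max (respectively split) operations while keeping all $n$ elements present in the tree of $\cA'$.

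The only difference is in the bookkeeping for already-accessed elements: the paper, after simulating $\cA$'s deletion of $x$, places $x$ at the \emph{root} (so the next search path passes through one extra ghost node, namely $x-1$ or $x+1$), whereas you park the ghosts as dummy leaves at the current extremum. Both schemes incur only constant overhead per step, though your ``$O(1)$ extra work'' claim is slightly optimistic in one edge case: after $\cA$ rearranges, locating the new live extremum to re-attach the ghost chain may cost up to the size of the touched subtree rather than $O(1)$. This still at most doubles the cost of the step, so the $O(n)$ conclusion is unaffected. For part (iii), your argument and the paper's are identical: splitting at an extremum is a deletion, so the split conjecture implies the delete-only deque conjecture, and (ii) applies.
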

\begin{proof}
		(i) This is clear from the statement.
		
	(ii) If $\cA$ satisfies deque conjecture, then $\cA$ can delete
	the current minimum or maximum element $n$ times with cost $O(n)$.
	Observe that the sequence of these deletions can be defined by a BST
	which is a path. Given $\cA$, we construct $\cA'$ that satisfies path
	conjecture. To access $x=1$ or $n$, $\cA'$ simulates $\cA$ by deleting
	$x$, but then before finishing, $\cA'$ brings $x$ back by making $x$ the root.
	
	To access other elements $x$ from the ``minimum'' (``maximum'')
	side, $\cA'$ also simulates $\cA$, but we can see that it must touch
	$x-1$ ($x+1$) as well. Again, instead of deleting $x$, $\cA'$
	makes $x$ as a root. For any time, the cost of $\cA'$ is at most the
	cost of $\cA$ plus one.

(iii) Splitting is equivalent to deleting when applied to
	the minimum or maximum element. Thus, if $\cA$ satisfies split conjecture, then $\cA$ satisfies deque conjecture without inserts. We can then use the argument of (ii).
	\end{proof}

\end{document}